\documentclass[letterpaper,10pt]{article}

\usepackage{amsmath,amssymb,amsfonts}
\usepackage{algorithmic}
\usepackage{graphicx}
\usepackage{textcomp}
\def\BibTeX{{\rm B\kern-.05em{\sc i\kern-.025em b}\kern-.08em
    T\kern-.1667em\lower.7ex\hbox{E}\kern-.125emX}}

\usepackage{color}
\usepackage[usenames, dvipsnames]{xcolor}
\usepackage{xspace}
\usepackage{mathpartir}
\usepackage{listings}
\usepackage{amssymb, amsthm,stmaryrd} 
\usepackage{wrapfig}
\usepackage{float}
\usepackage{authblk}

\addtolength{\oddsidemargin}{-.875in}
\addtolength{\evensidemargin}{-.875in}
\addtolength{\textwidth}{1.75in}
\addtolength{\topmargin}{-.875in}
\addtolength{\textheight}{1.75in}

\usepackage{anysize}
\marginsize{1in}{1in}{1in}{1in}

\usepackage[pdftex,colorlinks]{hyperref}
\hypersetup{
    bookmarks=true,         
    unicode=false,          
    pdftoolbar=true,        
    pdfmenubar=true,        
    pdffitwindow=true,      
    pdftitle={},    
    pdfauthor={},     
    pdfsubject={},   
    pdfnewwindow=true,      
    pdfkeywords={}, 
    colorlinks=true,       
    linkcolor=blue,          
    citecolor=blue,        
    filecolor=magenta,      
    urlcolor=cyan           
}

\newcommand{\annotationcolor}{\color{ForestGreen}}
\lstset{
    escapechar=@,
    language=C,
    basicstyle=\fontfamily{pcr}\selectfont\footnotesize,   
    showspaces=false, 
    showstringspaces=false, 
    showtabs=false, 
    tabsize=2, 
    rulesepcolor=\color{gray},
    rulecolor=\color{black},
    captionpos=b, 
    breaklines=true, 
    breakatwhitespace=false, 
    keywordstyle=\color{Blue}\bfseries,
    stringstyle=\color{black},
    commentstyle=\color{olive}\bfseries,
    identifierstyle  = \color{black},
    morecomment=[l][\annotationcolor\bfseries]{\#},
    frame=single,
    framerule=0.2pt,
    numbers=left, 
    numberstyle=\tiny, 
    xleftmargin=2em,
    framexleftmargin=1.5em
}


\newcommand{\codebox}[1]{\vspace{0.15cm} \\ #1 \vspace{0.15cm} \\}

\newcommand{\m}[1]{\mathtt{#1}}
\ifx\lvec\undefined
	\newcommand{\lvec}[1]{\overrightarrow{#1}}
\fi

\makeatletter
\newsavebox{\@brx}
\newcommand{\llangle}[1][]{\savebox{\@brx}{\(\m@th{#1\langle}\)}%
  \mathopen{\copy\@brx\kern-0.5\wd\@brx\usebox{\@brx}}}
\newcommand{\rrangle}[1][]{\savebox{\@brx}{\(\m@th{#1\rangle}\)}%
  \mathclose{\copy\@brx\kern-0.5\wd\@brx\usebox{\@brx}}}
\makeatother

\newcommand{\Paragraph}[1]{\vspace{5pt}\noindent{\bf #1}}
\newcommand{\langname}{\textit{polC}\xspace}
\newcommand{\minic}{$\mu$C\xspace}
\newcommand{\sysname}{$\m{FlowNotation}$\xspace}


\newcommand{\tJoin}[2]{#1 @ #2}

\newcommand{\flnCap}[1]{\textit{#1}}
\newcommand{\flnTag}[1]{\textit{#1}}
\newcommand{\flnKey}[1]{\mbox{{\tt\#}}#1}


\newcommand{\bnfdef}{::=}
\newcommand{\bnfalt}{\,|\,}
\newcommand{\rulename}[1]{\textsc{#1}}

\newcommand{\elet}{\mathsf{let}}
\newcommand{\ein}{\mathsf{in}}

\newcommand{\eif}{\mathsf{if}}
\newcommand{\ethen}{\mathsf{then}}
\newcommand{\eelse}{\mathsf{else}}

\newcommand{\enew}{\mathsf{new}}
\newcommand{\eloc}{\mathit{loc}}
\newcommand{\erelab}{\mathsf{reLab}}
\newcommand{\epair}[2]{\langle #1\bnfalt #2\rangle}
\newcommand{\eupdate}{\mathsf{upd}}
\newcommand{\eread}{\mathsf{rd}}

\newcommand{\bop}{\mathrel{\mathsf{bop}}}
\newcommand{\proj}[2]{\lfloor{#1}\rfloor_{#2}}
\newcommand{\dne}{\mathsf{d\&e}}

\newcommand{\bv}{\mathit{v}}
\newcommand{\lv}{\mathit{lv}}
\newcommand{\lexp}{\mathit{le}}

\newcommand{\extV}{v^{+}}
\newcommand{\extE}{e^{+}}

\newcommand{\tunit}{\mathsf{unit}}
\newcommand{\tpint}{\mathsf{int}}
\newcommand{\tptr}{\mathsf{ptr}}
\newcommand{\tstruct}{\mathsf{struct}}

\newcommand{\lab}{\ell}
\newcommand{\labof}{\mathit{labOf}}
\newcommand{\tpof}{\mathit{tpOf}}
\newcommand{\tmof}{\mathit{tmOf}}
\newcommand{\pc}{\mathit{pc}}
\newcommand{\pol}{\rho}

\newcommand{\join}{\sqcup}

\newcommand{\storetp}{\Sigma}
\newcommand{\tpdefctx}{D}

\newcommand{\hole}{[\,]}
\newcommand{\stepsto}{\longrightarrow}
\newcommand{\sepidx}[1]{\mathrel{/_{#1}}}

\newcommand{\at}{\mathrel{\mathsf{at}}}
\newcommand{\un}{\mathit{U}}
\newcommand{\trans}[1]{\llbracket #1 \rrbracket}
\newcommand{\inscast}[1]{\llangle #1 \rrangle}
\newcommand{\genname}{\mathit{genName}}

\newcommand{\ee}{\mathcal{E}}



\theoremstyle{plain}
\newtheorem{thm}{Theorem}
\newtheorem{lem}[thm]{Lemma}

\newtheorem{defn}[thm]{Definition}

\newenvironment{proofsketch}{\noindent{\it Proof (sketch):}\hspace*{0.25em}}{ \hspace*{\fill} \qed}



\newcommand{\eat}[1]{}
\begin{document}

\title{FlowNotation: Uncovering Information Flow Policy\\ Violations in C Programs}
\date{}
\author[1]{Darion Cassel}
\author[2]{Yan Huang}
\author[1]{Limin Jia}
\affil[1]{Carnegie Mellon University}
\affil[2]{Indiana University}

\maketitle

\begin{abstract}
Programmers of cryptographic applications written in C need to avoid
common mistakes such as sending private data over public
channels, modifying trusted data with untrusted functions, or
improperly ordering protocol steps.
These secrecy, integrity, and sequencing policies can be cumbersome to 
check with existing general-purpose tools.
We have developed a novel means of specifying and uncovering violations of these
policies that allows for a much lighter-weight approach than previous
tools.
We embed the policy annotations in C's type system via a source-to-source
translation and leverage existing C compilers to check for 
policy violations, achieving high performance and scalability.
We show through case studies of recent cryptographic libraries and applications
that our work is able to express detailed policies for large bodies of C
code and can find subtle policy violations.
To gain formal understanding of our policy annotations, we show
formal connections between the policy annotations
and an information flow type system and prove a noninterference
guarantee.
\end{abstract}

\maketitle

\section{Introduction}
\label{sec:intro}

Programs often have complex data invariants and API usage policies
written in their documentation or comments. The ability
to detect violations of these invariants and policies is key
to the correctness and security of programs. This is particularly
important for cryptographic protocols and libraries as the
security of a large system depends on its underlying secure
protocols and primitives. 
%
%
As a result, there has been much interest in checking implementations of cryptographic
protocols~\cite{Jasmine:Almeida:2017,Cryptol:Lewis:2003,
  Blanchet:2001, Vale:Bond:2017, Bhargavan:2010, Bhargavan:2008, Scyther:Cremers:2008, Cortier:2005}.
 These verification systems, while comprehensive in their scope, require expert knowledge
 of both the cryptographic protocols and the verification tool to be used effectively.

 What remains missing is a lightweight and developer-friendly tool to help
 programmers identify programming errors at compile time that violate
 high-level policies on cryptographic libraries and protocols written
 in C.
The policies that
are particularly important are secrecy (e.g., sensitive data is not
given to untrusted functions), integrity (e.g., trusted data is not
modified by untrusted functions), and API call sequencing (e.g., the
ordering of cryptographic protocol steps is maintained). These
policies can be viewed as information flow policies.
 

In this paper, we present a framework called \sysname where C programmers can
add lightweight annotations to their programs to express policy
specifications. These policies are then automatically checked using a C
compiler's type checker, potentially revealing policy violations in the implementation.  
Our annotations are in the same family as {\em type
  qualifiers} (e.g. CQual~\cite{Zhang:2002, Chin:2005, Thesis:Foster:2002}),
where qualifiers such as tainted and trusted are used to identify violations of
integrity properties of C programs; supplying tainted inputs to a function
that requires a trusted argument will cause a type error.  Our work
extends previous results to support more complex and refined sequencing
properties. Consider the following policy: a data object is
initially tainted, then it is sanitized using a \lstinline{encodeURI}
API, then serialized using a \lstinline{serialize} API, and finally written to
disk using a \lstinline{fileWrite} API. Such API sequencing patterns are quite common, but cannot be
straightforwardly captured using previous type qualifier systems.

\sysname extends type qualifiers to include a sequence of labels for specifying
policies similar to the above example. However, rather than implement a new type
system, we develop a source-to-source transformation tool, which translates an
annotated C program to another C program, through which a C compiler's type
checker (indirectly) checks the annotated policies. The key insight is that
qualified C types can be translated to C structures whose fields are the
original C types. For instance, ``\lstinline{trusted int}'' and
``\lstinline{tainted int}'' can be translated to ``\lstinline[]!typedef struct {int x;} int_trusted!''  
and ``\lstinline[]!typedef struct {int x;} int_tainted!'', respectively.  
Even though these two types are structurally
equivalent, C's struct types are nominal types, and thus, attempts to use data
of one type as the other will be reported as a compile-time error by a C type
checker. Consequently, we can directly use C type checkers for policy checking.
The benefit of this approach is that we can leverage performant C
compilers to quickly type-check our policies over large codebases.

To gain a formal understanding of the type of errors that we can uncover
with this system, we model the annotated types as {\em information flow
  types}, which augment ordinary types with security labels. 
%
We define a core language \langname and prove that its information flow type
system enforces noninterference. The novelty of \langname's type system is that
the security labels are sequences of secrecy and integrity labels, specifying
the path under which data can be relabeled. Relabeling corresponds to {\em
  declassification} (marking secrets as public) and {\em endorsement}
(marking data from untrusted source as trusted). 
The type system ensures that relabeling functions are called in the correct order.

We also define \minic, a core imperative language with nominal types
but without information flow labels in order to model a fragment of C. We then
formally define our translation algorithm based on \langname and
\minic. We prove the correctness of our translation algorithm:
If the translated program is accepted by the type checker in \minic,
then the original program is well-typed in \langname. The formalism
not only makes explicit assumptions made by our algorithm, but also
provides a formal account of the properties being checked by the annotations.

To demonstrate the effectiveness of \sysname we implement a prototype for a
subset of C and evaluate the prototype on several cryptographic libraries. Our
evaluation shows that we are able to check useful information flow
policies in a modular way and uncover subtle program flaws.
%

This paper makes the following technical contributions:
\begin{itemize}
\item We propose \sysname, a lightweight tool for finding errors that
  violate information flow policies in C programs.
\item We connect annotations in \sysname to the
  information flow type system \langname. We prove a noninterference theorem for
  \langname's type system, from which the property of correct API sequencing
is a corollary.
\item We define a translation algorithm from \langname types
  to nominal types (modeled by \minic) and prove it correct.
\item We implement a prototype and demonstrate the effectiveness of
  \sysname by evaluating it on several C
  cryptographic libraries and applications.
\end{itemize}

The rest of this paper is organized as follows: Section~\ref{sec:motivation}
presents a motivating example and describes the workflow of \sysname.  Next, we
define \langname (Section~\ref{sec:lang}) and \minic
(Section~\ref{sec:translation}) along with the algorithm for our translation
process.  In Section~\ref{sec:implementation}, we explain how the algorithms are
implemented in C. Our case studies and evaluation results are presented in
Section~\ref{sec:case-studies}.  Finally, we discuss related work in
Section~\ref{sec:related} and conclude in Section~\ref{sec:conclusion}.



\section{Overview and Motivating Examples}
\label{sec:motivation}
We illustrate how \sysname concretely works on the left side of Figure~\ref{fig:overview}.
First, to check an application-specific policy, a programmer writes the policy in C
pragma annotations. Then our source-to-source translator takes the annotated
program as input, and produces a translated C program. The resulting program is
then type-checked using an off-the-shelf C compiler. If the compiler returns a
type error, then this implies the policy is violated in the program.

\begin{figure}[b!]
\centering
\includegraphics[width=0.4\textwidth]{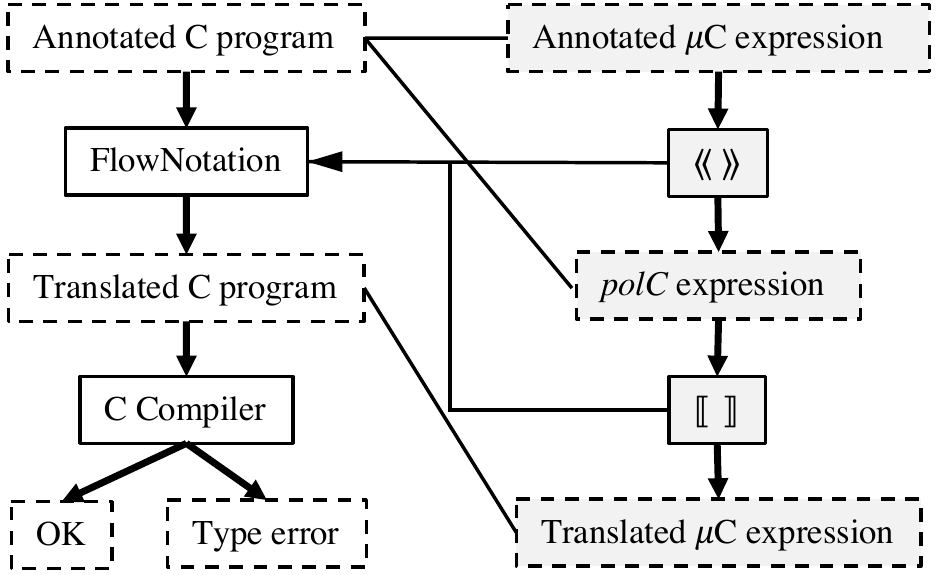}
\caption{Overview of \sysname and connections to the formal model.}
\label{fig:overview}
\end{figure}


Next we show example 
policies in the context of developing cryptographic
applications.

\subsection{Secrecy}
Suppose a team of software developers is working on a large C project that uses
customers' financial data. This project integrates a secure two-party
computation component that allows Alice and Bob to find out which of the two is
wealthier without revealing their wealth to the other or relying on a trusted
third party. Let us assume that the program obtains Alice's balance using the
function \lstinline{get_alice_balance}, then calls function
\lstinline{wealthierA} to see whether Alice is wealthier than
Bob. \lstinline{wealthierA}'s implementation uses a library that provides APIs
for secure computation primitives.

\begin{wrapfigure}[6]{l}{0.4\textwidth}
\begin{lstlisting}[aboveskip=-3pt]
	int bankHandler() {
		int balA;
		balA = get_alice_balance();
		...
		wealthierA(balA);
	}
\end{lstlisting}
\end{wrapfigure}
The variable \lstinline{balA} contains Alice's balance, and therefore should be
handled with care. In particular, the programmer wants to check that the secrecy of
\lstinline{balA} is maintained. One method is to use information flow types
(e.g.~\cite{Smith:1997}), where the information flow
type of \lstinline{balA}  is (\lstinline{int AlicePrivate}), indicating that it is an
integer containing an \lstinline{AlicePrivate} type of secret. In contrast,
variables that do not contain secrets can be given the type (\lstinline{int Public}). The information flow type system then makes sure that read and write
operations involving \lstinline{balA} are consistent with its secrecy label. For
instance, if a function \lstinline{postBalance(int Public)}, which is meant to
post the balance publicly, is called with \lstinline{balA} as the
argument, the type system will reject this program for violating the secrecy
policy.

Our annotations are {\em information flow labels}, each of which has a {\em secrecy}
  component and an {\em integrity} component. 
  Programmers can provide these annotations above the declaration of
  \lstinline{balA} to specify the secrecy policy as follows: %

\begin{wrapfigure}[3]{l}{0.4\textwidth}
\begin{lstlisting}[aboveskip=-3pt]
	#requires AlicePriv:secrecy
	int balA;
\end{lstlisting}
\end{wrapfigure}
In the annotation, \lstinline{#requires} is a directive that
allows our tool to parse this annotation (in practice, \lstinline{#pragma} prefaces it).
\lstinline{AlicePriv} is a secrecy label. Finally, \lstinline{secrecy} is a \textit{projection}; 
it specifies that we only care about the secrecy component of the
label. \lstinline{balA}'s integrity component is automatically 
assigned \lstinline{bot}, the lowest integrity. The information flow type of
\lstinline{balA} corresponding to this annotation is \lstinline{int(AlicePrivate, bot)}.
%
This annotation can be used to 
 check this program for violations of the following policy $P_1$. 
\[
\begin{array}{ll}
P_1: & \mbox{\lstinline{balA}}~\textit{should never be given as input}
\\& \textit{to an untrusted function}.
\end{array}
\]
Here, trusted functions are those trusted by the programmer not to 
 leak \lstinline{balA}. Next, we discuss how a
 programmer can annotate trusted functions. 

Our programmer trusts a secure computation library
that provides secure computation primitives. Let us assume 
the API \lstinline{encodeA} converts an integer argument into a bit 
representation similar to what is used in Obliv-C \cite{Zahur:2015} for use with a garbled circuit. 
The API \lstinline{yao_execA} takes a pointer to a function \lstinline{f} and
an argument for \lstinline{f}, and runs \lstinline{f} as a circuit with Yao's protocol \cite{Yao:1986}. 
Finally, at the end of the application's execution the API \lstinline{reveal}
is invoked to give the result of the function execution to both parties.
The programmer constructs the following code for Alice (Bob's program is symmetric, which we omit):
\begin{lstlisting}
	int compare(int a, int b) { return a > b; }
	int wealthierA(balA) {
		balA2 = encodeA(balA);
		int res = yao_execA(&compare, balA2);
		reveal(&res, ALICE);
	}
\end{lstlisting}
This program first encodes Alice's balance, and then calls
\lstinline{yao_execA} with the comparison function and  Alice's encoded
balance \lstinline{balA2} as arguments, and finally calls \lstinline{reveal}. 

The code as it stands will not type-check after being translated, unless the
programmer also appropriately annotates their trust in the secure computation APIs.
\begin{lstlisting}
	#param AlicePriv:secrecy
	int encodeA(int balA);
	#param(2) AlicePriv:secrecy
	int yao_execA(void* compare, int balA);
\end{lstlisting}
These two annotations state that the functions must accept parameters with 
the label \lstinline{AlicePriv}. In the second annotation,
\lstinline{#param(2)} specifies that the
annotation should only apply to the second parameter.
A violation of $P_1$ will be detected, when \lstinline{balA} is given
to a function that does not have this kind of annotation; e.g. that is
not allowed (by the programmer) to accept
\lstinline{AlicePriv}-labeled data.


\subsection{Integrity and Sequencing}
\label{sec:example-seq}
A programmer can also use \sysname to check the program
for violations of the following, more refined, policy $P_2$. 
\[\begin{array}{ll}
	P_2: & \mbox{\lstinline{balA}}~\textit{should be used by the encoding function}
\\& \textit{and then by the Yao protocol execution}.
\end{array}
\]
The annotation for \lstinline{balA} is as follows.
\begin{lstlisting}
	#requires AlicePriv:secrecy then EncodedBal:integrity
	int balA;
\end{lstlisting}
The keyword \lstinline{then} allows for the
sequencing of labels.  Corresponding changes
are made to the other annotations:

\begin{lstlisting}
	#param AlicePriv:secrecy
	#return EncodedBal:integrity
	int encodeA(int balA);
	#param(2) EncodedBal:integrity
	int yao_execA(void* compare, int balA);
\end{lstlisting}
The \lstinline{encodeA} function, as before, requires the argument to have the
\lstinline{AlicePriv} secrecy label. In addition, the return value from
\lstinline{encodeA} will have the integrity label \lstinline{EncodedBal},
stating that it is endorsed by the \lstinline{encodeA} function to be properly
encoded.  The \lstinline{yao_execA} function requires the argument to have the
same integrity label. If only programmer-approved encoding functions are
annotated with \lstinline{EncodedBal} at their return value, the type system
will check that an appropriate API call sequence (\lstinline{encodeA} followed by
\lstinline{yao_execA}) is applied to the value stored in \lstinline{balA}.

\section{A Core Calculus for Staged Release}
\label{sec:lang}

We formally define the syntax, operational semantics, and the
type system of \langname, which models annotated C programs that 
\sysname takes as input. We show that 
\langname's type system can 
enforce not only secrecy and integrity
policies, but also staged information release and data
endorsement policies. We prove that our type system enforces
noninterference, from which the property of staged information release
is a corollary.

\subsection{Syntax and Operational Semantics}
The syntax of \langname is summarized in Figure~\ref{fig:lang-syntax}. We write $\lab$ to denote
security labels, which consist of a secrecy tag $s$ and an integrity
tag $\iota$. We assume there is a security lattice
$(S, \sqsubseteq_S)$ for secrecy tags and a security lattice
$(I, \sqsubseteq_I)$ for integrity tags. The security lattice
$\mathcal{L} = (L, \sqsubseteq)$ is the product of the
above two lattices. The top element of the lattice is $(\top_S,\bot_I)$
(abbreviated $\top$), denoting data that do not contain any secret
and come from the most trusted source; and the bottom element is $(\bot_S,\top_I)$
(abbreviated $\bot$), denoting data that contain the most secretive
information and come from the least trusted source.
%
\begin{figure}[tbp]
\(
\begin{array}{lcll}
\textit{labels}   & \lab & \bnfdef & (s, \iota)
\\
\textit{policies}    & \pol & \bnfdef &  \bot\bnfalt \top \bnfalt  \lab::\rho
\\
\textit{1st order types} & b & \bnfdef & 
\tpint \bnfalt \tptr(s)  \bnfalt  T 
\\
\textit{simple sec. types} & t & \bnfdef & b \ \rho \bnfalt \tunit
\\
\textit{security types} & s & \bnfdef & t \bnfalt
                                        [\pc](t\rightarrow t)^\pol
\\
\textit{values} & v & \bnfdef & x \bnfalt n \bnfalt ()\bnfalt f \bnfalt T\{v_1,\cdots, v_k\} \bnfalt \eloc
\\
\textit{expressions} & e & \bnfdef &
 v \bnfalt e_1\m{bop}\; e_2\bnfalt v\, e 
\bnfalt \elet\, x=e_1\, \ein\, e_2
\\
& & \bnfalt &  v.i \bnfalt
\eif\, v_1\, \ethen\, e_2\, \eelse \, e_3 \bnfalt v\, :=\, e
\\
& & \bnfalt & \enew\, e \bnfalt {*}v
\bnfalt \erelab(\lab'{::}\bot \leftarrow
                               \lab{::}\top)\,  v
\end{array}
\)
\vspace{-5pt}
\caption{Syntax of \langname}
\label{fig:lang-syntax}
\end{figure}
A policy, denoted $\pol$ is a sequence of labels specifying the
precise sequence of relabeling (declassification and endorsement) 
of the data. The example from Section~\ref{sec:example-seq} uses the
following policy:
\codebox{$(\flnTag{AlicePrivate},\bot_I)::(\bot_S, \flnTag{EncodedBal})::\bot$}
A policy always ends with
either the top element, indicating no further relabeling is allowed, or the bottom
element, indicating arbitrary relabeling is allowed. For our
application domain, the labels provided by programmers are distinct points
in the lattice that are not connected by any partial order relations
except the $\top$ and $\bot$ elements.

A simple (first-order) security type, denoted $t$, is obtained by adding policies to
ordinary types. Our core language supports integers ($\tpint$),
$\tunit$, pointers ($\tptr(s)$), and record types
($\tstruct\ T\ \{t_1,\cdots,t_k\}$ to model C structs). Here $T$ is
the defined name for a record type. To simplify our formalism, we
assume that defined type $T$ is always a record type named $T$.
Unlike ordinary information types, our information flow types use the
policy $\pol$, rather than a single label $\lab$. The meaning of an
expression of type $\tpint\ \pol$ is that this expression is evaluated
to an integer and it induces a sequence of declassification
(endorsement) operations according to the sequence of labels specified by
$\pol$. For instance, $e:\tpint\ H::L::\bot$ means that $e$ initially
is of $\tpint\ H$, then it can be given to a declassification function
to be downgraded to $\tpint\ L$, the resulting expression can be
further downgraded to bottom. $e:\tpint\ H::L::\top$ is similar except
that the last expression cannot be declassified further; i.e.. it stays at $L$ security
level.  The annotated type for \lstinline{balA} in Section~\ref{sec:example-seq}
can be similarly interpreted.

We do not have a labeled $\tunit$ type, because it is inhabited by 
one element $()$ and thus does not
contain sensitive information.
A function type is of the form
$[\pc](t_1\rightarrow t_2)^\pol$, where $t_1$ is the argument's type,
$t_2$ is the return type, $\pol$ is the security label of the function indicating who 
can receive this function, and $\pc$, called the program counter, is
the security label representing where this function can be called. For
instance a function $f$ of type $[L{::}\bot](t_1\rightarrow
t_2)^{H{::}\bot}$ cannot be called in an if branch that branches on
secrets and the function itself cannot be given to an attacker whose
label is $L{::}\bot$. 

Our expressions are reminiscent of A normal forms: all
elimination forms use only values (e.g., $v.i$, instead of
$e.i$). This not only simplifies our proofs, but also the translation
rules (presented in Section~\ref{sec:translation}). The
fragment of C that is checked in our case studies is quite similar to
this form.

Values can be variables, integers, unit, functions, records, and store
locations. Since we are modeling an imperative language, we do not
have first-class functions. Instead, all functions are predefined, and
stored in the context $\Psi$. Expressions include function calls, if
statements, let bindings, and store operations. One special expression
is the relabeling (declassification) operation, written $\erelab(\lab'{::}\bot \leftarrow
                               \lab{::}\top)\ v $. This operation changes the
label of $v$ from $\lab{::}\top$ to $\lab'{::}\bot$. Such an expression should only appear in
trusted declassification functions. For our applications, we further
restrict the relabeling to be between two labels; from one ending with the
top element to one ending with bottom element. We will explain this
later when we explain the typing rules.

The judgement for small step semantics for \langname
 is denoted $\Psi \vdash \sigma\sepidx{} e
\stepsto \sigma' \sepidx{} e'$, where $\Psi$ stores all the function code,
$\sigma$ is the store mapping locations to values and $e$ is the expression to be evaluated. 
Appendix \ref{app:polc} contains a summary of all the
operational semantic rules. 

\subsection{Typing Rules}

The type system makes use of several typing contexts. We write
$D$ to denote the context for all the type definitions. We only
consider type
definitions of record (struct) types, written $T \mapsto
\mathsf{struct}\ T\ \{t_1, \cdots, t_k\}$. 
The typing context for functions is denoted
$F$. We distinguish two types of functions: ordinary functions, and
declassification/endorsement functions whose bodies are
allowed to contain relabeling operations, written $f {:}(\dne)[\pc]t_1\rightarrow
t_2$.  $F$ does not dictate the label of a function $f$. Instead, the
context in which $f$ is used decides $f$'s label. 
\[
\begin{array}{lcll}
\textit{Type def. ctx} & D & \bnfdef &
\cdot\bnfalt D, T \mapsto \mathsf{struct}\ T\ \{t_1, \cdots, t_k\} 
\\  
\textit{Func typing ctx} & F & \bnfdef & \cdot \bnfalt F, f{:}
                                         [\pc]t_1\rightarrow t_2 
\\& & \bnfalt & F, f {:} (\dne)[\pc]t_1\rightarrow t_2
\\ 
\textit{Store Typing} & \storetp & \bnfdef & \cdot \bnfalt \storetp, \eloc: s
\end{array}
\]
We write $\storetp$ to denote the typing context for pointers. It maps
a pointer (heap location) to the type of its content. $\Gamma$ is the typing context
for variables, and $\pc$ is the security label representing the program counter.

Our type system has two typing judgments: $D;F; \Sigma; \Gamma 
\vdash v : t$ for value typing, and $D;F; \Sigma; \Gamma ; \pc
\vdash e : t$ for expression typing. Selected typing rules are shown
in Figure~\ref{fig:typing-rules-selected}; full rules are in 
Appendix~\ref{app:polc:typing}. 

\begin{figure}[t!]
%
\begin{mathpar}
\inferrule*[right=P-T-E-Val]{
  D;F; \Sigma; \Gamma  \vdash v : s
}{ 
  D;F; \Sigma; \Gamma ; \pc \vdash v: s\join\pc
}
\and
\inferrule*[right=P-T-E-Field]{
  D;F; \Sigma; \Gamma \vdash v : T\
  \pol
\\ T\mapsto \tstruct\ \{s_1,\cdots, s_n\} \in D
\\ \pc\sqsubseteq\pol
}{ 
  D;F; \Sigma; \Gamma ; \pc \vdash v.i : s_i \join\ \pol 
}
\and
\inferrule*[right=P-T-E-New]{
  D;F; \Sigma; \Gamma ; \pc \vdash e : s
\\ \pc\rhd \pol
}{ 
  D;F; \Sigma; \Gamma ; \pc \vdash \enew(e): \tptr(s)\ \pol
}
\and
\inferrule*[right=P-T-E-Deref]{
  D;F; \Sigma; \Gamma \vdash v : \tptr(s)\ \pol
\\ \pc\sqsubseteq\pol
}{ 
  D;F; \Sigma; \Gamma ; \pc \vdash *v: s\join \pol
}
\and
\inferrule*[right=P-T-E-Assign]{
  D;F; \Sigma; \Gamma  \vdash v_1 : \tptr(s)\ \pol
 \\  D;F; \Sigma; \Gamma;\pc \vdash e_2 : s
\\  \pol \rhd s
}{ 
  D;F; \Sigma; \Gamma ; \pc \vdash v_1:= e_2: \tunit
}
\and
\inferrule*[right=P-T-E-If]{
  D;F; \Sigma; \Gamma  \vdash v_1 : \tpint\ \pol
\\   D;F; \Sigma; \Gamma ; \pc\sqcup \pol  \vdash e_2 : s
\\   D;F; \Sigma; \Gamma ; \pc\sqcup \pol  \vdash e_3 : s
}{ 
  D;F; \Sigma; \Gamma ; \pc \vdash \eif\ v_1\ \ethen\ e_2\ \eelse\
  e_3 : s
}
\and
\inferrule*[right=P-T-E-DE]{
  D;F; \Sigma; \Gamma \vdash v_f: (\dne)  [\pc'] (b\ \lab_1{::}\top
  \rightarrow b\ \lab_2{::}\bot)^{\rho_f}
\\  D;F; \Sigma; \Gamma;\pc \vdash e_a :  b\ \pol
\\ \pol = \lab_1{::}\lab_2{::}\pol'
\\ \rho_f\sqcup\pc \sqsubseteq \pc'
}{ 
  D;F; \Sigma; \Gamma ; \pc \vdash v_f\ e_a: b\ \lab_2{::}\pol'
}
\and
\inferrule*[right=P-T-E-Relabel]{ 
  D;F; \Sigma; \Gamma  \vdash v : b\ \pol\\
\pc\sqsubseteq \pol'
}{ 
  D;F; \Sigma; \Gamma ; \pc \vdash \erelab(\pol'\Leftarrow\pol)\ v
  : b\ \pol'
}
\and
\inferrule*[right=P-T-E-Sub]{
  D;F; \Sigma; \Gamma ; \pc \vdash e : s'\\
 s'\leq s
}{ 
  D;F; \Sigma; \Gamma ; \pc \vdash e: s
}
\end{mathpar}
\vspace{-15pt}
\caption{Typing rules}
\label{fig:typing-rules-selected}
\end{figure}

We use a number of auxiliary definitions. First, we define the 
meaning of a policy $\pol_1$ being less strict than another, $\pol_2$, written
$\pol_1\sqsubseteq\pol_2$, as
the point-wise lifting of the label operation
$\lab_1\sqsubseteq\lab_2$. When one policy reaches its end, we use 
$\bot\sqsubseteq\pol$ or $\pol\sqsubseteq\top$. $\bot$
represents a policy that can be arbitrarily reclassified and thus is a
subtype of any policy $\pol$. On the other hand, $\top$ is the strictest
policy that forbids any reclassification; so any policy is less
strict than $\top$.


The subtyping relation $s_1\leq s_2$ is standard: most types are covariant
except function argument types, which are contravariant, and pointer content types, which are
invariant. 
$\pol\rhd t$ denotes
 $\pol$ guards $t$. It is defined as $\pol\sqsubseteq\labof(t)$. Here
 $\labof(t)$ is the outermost label of type $t$; for instance, 
 $\bot\rhd\tpint\ (\flnTag{AlicePrivate}, \bot_I)$,  $(\flnTag{AlicePrivate}, \bot_I)\rhd
 \tpint\ (\flnTag{AlicePrivate}, \bot_I)$.
Finally $s\sqcup\pol$ is the type resulting from joining the policy of
$s$ with $\pol$.
%

Most of these typing rules are standard to information flow type systems. These
rules carefully arrange the constraints on policies and the program counter so
that noninterference theorem can be proven. Due to space constraints, we only
explain the rule \rulename{P-T-E-DE}, 
which types the application of a declassification/endorsement function
and is unique to our system.
%
The first premise checks that
$v_f$ relabels data from $\lab_1$ to $\lab_2$. The second premise checks
that $e_a$'s type matches that of the argument of $v_f$; further, $e_a$'s
policy $\pol$ has $\lab_1$ and $\lab_2$ as the first two labels,
indicating that $e_a$ is currently at security level $\lab_1$ and the
result of processing $e_a$ has label $\lab_2$. Finally, the return type
of the function application has the tail of the policy $\pol$. The
policy of $e_a$ does not change; instead, the policy of the
result of the relabeling function inherits the tail of $e_a$'s
policy. Therefore, our type system is not enforcing type states of
variables as found in the Typestate system~\cite{Typestate:Strom:1986}.
These declassification and endorsement functions only rewrite one
label, not a sequence of labels. This allows us to have finer-grained
control over the stages of relabeling.

\subsection{Noninterference}

We prove a noninterference theorem for \langname's type system by adapting
the proof technique used in FlowML~\cite{Pottier:2002}. We extend our language to
include pairs of expressions and pairs of values to simulate two executions
that differ in ``high'' values. 
We only explain the key definitions for the theorem.

We first define equivalences of expressions in terms of an attacker's
observation. We assume that the attacker knows the program and can
observe expressions at the security level $\lab_A$. To be consistent,
when $\lab_A$ is not $\top$ or $\bot$,
the attacker's policy is written $\lab_A{::}\top$. Intuitively, an
expression of type $b\ \pol$ should not be visible to the attacker
if 
existing declassification functions cannot relabel data with label
$\pol$ down to $\lab_A{::}\top$. For instance, if $\pol= H{::}L{::}\bot$ and there
is no declassification function from $H$ to $L$, then an attacker at
$L$ cannot distinguish between two different integers $v_1$ and $v_2$ of type $\tpint\
\pol$. On the other hand, if there is a function
$f:_\dne \tpint\ H{::}\top\rightarrow L{::}\bot$, then $v_1$ and $v_2$ are
distinguishable by the attacker. We define when a policy $\pol$ is in
$H$ with respect to the attacker's label, the function context, and
the relabeling operations, in other words, when values of type
$b\ \pol$ are not observable to the attacker, as follows. $\pol\in H$
if $\pol$ cannot be {\em rewritten} to be a policy that is lower or
equal to the attackers' policy. 
\[\inferrule*{
\forall \pol', 
F; R\vdash \pol\leadsto\pol', \pol'\not\sqsubseteq\pol_A
}{
\pol_A; F; R\vdash \pol\in H
}
\]

Here $F; R\vdash \pol\leadsto\pol'$
holds when $\pol=\lab_1{::}\cdots{::}\lab_i{::}\pol'$ and there is a
sequence of relabeling operations in $F$ and $R$, using which 
$\pol$ can be rewritten to $\pol'$. For instance, when $\lab_A =\bot$
\[
\begin{array}{lcl}
F_1 &= &\text{\lstinline{encodeA}}:(\dne)\tpint\ (\flnTag{AlicePrivate},\bot_I)::\top \\
  ~ & ~ &\qquad\qquad\qquad\rightarrow\tpint\ (\bot_S,\flnTag{EncodedBal})::\bot
\\ 
F_2 & = & F_1, \text{\lstinline{yao\_execA}}:(\dne)
         \tpint\  (\bot_S,\flnTag{EncodedBal})::\top\rightarrow \tpint\ \bot
\end{array}
\]
\[
	\noindent\lab_A; \cdot; \cdot\vdash (\flnTag{AlicePrivate},\bot_I) \in H
\]
\[
	\lab_A; F_1; \cdot\vdash (\bot_S,\flnTag{EncodedBal})\in H
\]
\[
	\lab_A; F_2; \cdot\nvdash (\bot_S,\flnTag{EncodedBal})\in H
\]
 Our noninterference theorem is formally defined below. The theorem states
that given an expression $e$ that is observable by the attacker, and two equivalent
substitutions $\delta_1$ and $\delta_2$ for free variables in $e$,
and both $e\delta_1$ and $e\delta_2$ terminate, then they must evaluate to the
 same value. In other words, the values of sub-expressions that are not
 observable by the attacker do not influence the value of observable
 expressions. The proof can be found in Appendix~\ref{app:polc}.

\begin{thm}[Noninterference]
~\\
  If $D;F; \Gamma; \bot \vdash e : s$, $e$ does not contain any
  relabeling operations, given attacker's
  label $\lab$, and substitution
  $\delta_1$, $\delta_2$ s.t.
  $F \vdash \delta_1\approx_H \delta_2 : \Gamma$, 
and $\lab;F;\cdot\vdash \labof(s)\notin H$ and
$\Psi \vdash \emptyset\sepidx{} e\delta_1 \stepsto^*
\sigma_1\sepidx{} v_1$ and 
$\Psi \vdash \emptyset \sepidx{} e\delta_2 \stepsto^*
\sigma_2\sepidx{} v_2$, then $v_1=v_2$.
\end{thm}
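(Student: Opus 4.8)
The plan is to follow the standard logical-relations / pairs-of-executions technique used for noninterference proofs in the style of FlowML~\cite{Pottier:2002}, as the paper already hints. First I would extend \langname with pair expressions $\epair{e_1}{e_2}$ and pair values $\epair{v_1}{v_2}$ (this is alluded to in the text) and give them operational semantics so that a pair steps componentwise, together with typing rules stating that $\epair{v_1}{v_2} : b\ \pol$ only when $\pol \in H$ with respect to the attacker label — i.e.\ a pair is only well-typed at a ``high'' policy. The key invariant is that bracketed (pair) subterms can only ever occur under a policy in $H$, so the attacker, who observes only expressions whose outermost label is \emph{not} in $H$, can never see a pair directly.

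Second, I would show that two $H$-equivalent substitutions $\delta_1, \delta_2$ (the relation $F\vdash\delta_1\approx_H\delta_2:\Gamma$) can be combined into a single \emph{bracketed substitution} $\delta$ mapping each $x:s$ with $\labof(s)\in H$ to the pair $\epair{\delta_1(x)}{\delta_2(x)}$ and each other $x$ to the common value $\delta_1(x)=\delta_2(x)$; and that $e\delta$ is well-typed in the extended system whenever $e$ is well-typed in \langname. The heart of the argument is then a \textbf{subject reduction / bracket-preservation lemma} for the extended semantics: if $\Psi \vdash \sigma\sepidx{} e \stepsto \sigma'\sepidx{} e'$ and $e$ is well-typed at $\pc = \bot$ with a store typing consistent with $\sigma$, then $e'$ is well-typed at a compatible type and store typing, \emph{and} every bracket in $e'$ and in $\sigma'$ sits under a policy in $H$. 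The constraints built into the typing rules — e.g.\ $\pc\sqsubseteq\pol$ in \rulename{P-T-E-Field} and \rulename{P-T-E-Deref}, $\pc\sqcup\pol$ raising the pc in \rulename{P-T-E-If}, $\pol\rhd s$ guarding assignments, and the careful handling of the declassification-function rule \rulename{P-T-E-DE} (which consumes a prefix $\lab_1{::}\lab_2$ of the policy rather than rewriting variable state) — are exactly what force any branch/read/write that could depend on a pair to land at a policy still in $H$. Crucially, since $e$ contains no relabeling operations and no declassification function is applied to a pair (the hypothesis and the $\leadsto$-closure in the definition of $H$ guarantee no $f$ relabels an $H$-policy down toward $\lab_A$), brackets can never be ``collapsed'' back to a single value.

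Third, I would iterate this preservation lemma along the terminating reductions $e\delta_i \stepsto^* \sigma_i\sepidx{} v_i$. Reassembling, the two executions of $e\delta_1$ and $e\delta_2$ are simulated by a single execution of $e\delta$ that reaches a bracketed value $\epair{v_1}{v_2}$ (with $v_1,v_2$ the componentwise projections of the result, which coincide with the two real runs by a projection lemma). By the invariant, this final bracketed value has a policy in $H$. But by hypothesis $\lab;F;\cdot\vdash\labof(s)\notin H$, and preservation keeps the result type's outermost label below $\labof(s)$'s — more precisely $\notin H$ — so the final value is \emph{not} bracketed, forcing $v_1 = v_2$.

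The main obstacle I anticipate is the bracket-preservation lemma, specifically the mutable-store cases. Assignment through a pointer whose policy is in $H$ can write a bracketed value into $\sigma$, and a later dereference must still see a consistent, well-typed store; keeping the store typing $\storetp$ synchronized with $\sigma$ while tracking which locations hold brackets — and checking that $\pol\rhd s$ plus the $\pc\sqsubseteq\pol$ side conditions really do prevent a low-policy location from ever acquiring a bracketed contents — is the delicate part. The second subtlety is the \rulename{P-T-E-DE} case: I must verify that applying a (non-$\dne$, i.e.\ the hypothesis rules out $\erelab$, but $\dne$ functions may still appear in $F$) declassification function to an argument whose policy is in $H$ yields a result whose policy is \emph{still} in $H$, which follows from the $\leadsto$-closure in the definition of $\pol\in H$ together with $\lab_1{::}\lab_2$ being a prefix of the argument's policy; getting that bookkeeping exactly right, including the interaction with the $R$ component that records relabelings already performed, is where the proof will demand the most care. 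Everything else — subtyping, function calls, let-bindings, binary operations — is routine information-flow reasoning.
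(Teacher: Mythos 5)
Your proposal follows essentially the same route as the paper's proof: the FlowML-style pairing construction with pair typing restricted to policies in $H$, merging the two equivalent substitutions into one bracketed substitution, a preservation lemma for the paired semantics (including the store and \rulename{P-T-E-DE} cases you flag, which the paper resolves exactly via the $\leadsto$-closure in the definition of $H$), soundness/completeness of the paired semantics relative to the projected runs, and a final lemma that a value typed at a policy not in $H$ has equal projections. No substantive differences from the paper's argument.
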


It follows from Noninterference that given 
$D;F; x: \lab_1{::}\cdots{::}\lab_n{::}\bot\ \tpint \vdash e: \tpint\
\lab_n{::}\top$ where the attacker's label is
$\lab_n{::}\top$,  the attacker can only gain knowledge about the
value for $x$ if there is a sequence of declassification/endorsement
functions $f_i$s that remove label $\lab_i$ from the policy to reach $\lab_n{::}\top$. Further, if
$\lab_i\not\sqsubseteq\lab_{i+1}$, then the $f_i$s have to be applied in
the correct order, as dictated by the typing rules.


\section{Embedding in A Nominal Type System} 
\label{sec:translation}

The type system of \langname can encode  
interesting security policies and help programmers identify subtle
bugs during development. However, implementing a feature-rich language with
\langname's  type system requires
non-trivial effort. Moreover, only programmers who are willing to
rewrite their codebase in this new language can
benefit from it. 
%
%
Rather than create a new language, \sysname leverages C's type system 
to enforce policies specified by \langname's types. 

The mapping between the concrete workflow of \sysname,  \langname and
\minic, and the algorithms defined here is shown
in Figure~\ref{fig:overview}.  We first define a
simple imperative language \minic with nominal types and annotations,
which models the fragment of C that \sysname works within. We show how
the annotated types and expressions can be mapped to types and
expressions in \langname in Appendix~\ref{app:translation}. Then in
Section~\ref{sec:trans}, we show how to translate \langname programs
back to \minic. These two algorithms combined describe the core algorithm of \sysname. We prove our
translation correct in Section~\ref{sec:trans-correctness}.


\subsection{\minic and Annotated \minic}
\label{sec:minic}

Expressions in \minic are the same as those in 
\langname. The types in \minic do not have information flow
policies, which are defined below. The names of the typing
contexts remain the same. 
\[
\begin{array}{lcll}
\textit{Basic Types} & \pi & \bnfdef & T \bnfalt \tpint \bnfalt \tunit\bnfalt
                                       \tptr(\tau) 
\\
\textit{Types} & \tau & \bnfdef & \pi \bnfalt \pi_1\rightarrow\pi_2
\\
\textit{Annotation} & a & \bnfdef &
\pi \bnfalt
 T\at\pol \bnfalt \tpint \at \pol
\bnfalt  \tptr(\beta) \at \pol 
\\
\textit{Typ. Annot.} & \beta & \bnfdef & a \bnfalt a_1 \rightarrow a_2
\\
\textit{Expressions} & e & \bnfdef & \cdots\bnfalt
\elet\, x:\beta=e_1\, \ein\, e_2
\\  
\textit{Annot. typedef} & D_a & \bnfdef & \cdot \bnfalt
D_a, T\mapsto \tstruct\ T  \{a_1,\cdots, a_k\}
\\  
\textit{Annot. Func.} & F_a & \bnfdef & \cdot
\bnfalt F_a, f : a_1\rightarrow a_2
\\& & \bnfalt & F_a, f : (\dne) a_1\rightarrow a_2
\end{array}
\]
We assume that programmers will provide policy annotations, denoted
$\beta$. The annotated types $\beta$ are very similar to labeled types $s$. We
keep them separate, as programmers do not need to write out the fully
labeled types. A programmer can annotate defined record types $T\at\pol$,
integers $\tpint \at \pol$, both the content and the pointer itself
$\tptr(\beta)\at \pol$, or the record type
$\tstruct\ T \{\beta_1,\cdots, \beta_k\}$. The last case is
used to annotate type declarations in the context $D$. We extend
expressions with annotated expressions; $\elet\, x:a=e_1\, \ein\,
e_2$. We assume that let bindings, type declarations, and function
types are the only places where programmers provide annotations.
A complete account of syntax and semantics can be found in 
Appendix~\ref{app:minic} and~\ref{app:translation}.

\subsection{Translating Annotated Programs to \minic} 
\label{sec:trans}

Instead of defining an algorithm to translate an annotated \minic
program $e_a$ to another \minic program, we first define an algorithm
that maps $e_a$ into a program $e_l$ in \langname; then an algorithm that
translates $e_l$ to a \minic program.

\vspace{5pt}\noindent{\bf Mapping from annotated \minic to \langname.}
This mapping helps make
explicit all the assumptions and necessary declassification and
endorsement operations needed to interpret those annotations as
proper \langname types and programs. 

We write $\inscast{\beta}$ to denote the mapping of unannotated and
annotated \minic types to \langname types. Unannotated
types are given a special label $\un$ (unlabeled,  defined as $(\bot,\bot)$); annotated types are
translated as labeled types. All 
function types are given the pc label $\bot$, so the function body can
be typed with few restrictions. The mapping from annotated types to
\langname types is summarized in Figure~\ref{fig:inscast-type}. 

\begin{figure}[tbp]
\begin{mathpar}
\inferrule*{ 
 \pi\in\{\tpint, T\}
 }{
  \inscast{\pi}  = \pi\ \un
}
\quad
\inferrule*{   \pi\in\{\tpint, T\}
 }{
  \inscast{\pi\at\pol}  = \pi\ \pol
}
\quad
\inferrule*{ 
  \inscast{\beta} =s
 }{
  \inscast{\tptr(\beta)}  = \tptr(s)\ \un
}
\and
\inferrule*{ 
  \inscast{\beta} =s
 }{
  \inscast{\tptr(\beta)\at \pol}  = \tptr(s)\ \pol
}
\and
\inferrule*{
 \forall i\in[1,2],    \inscast{a_i}  = t_i\\
}{
\inscast{a_1 \rightarrow a_2} = [\bot](t_1\rightarrow t_2)
}
\and
\inferrule*{
 \forall i\in[1,2],    \inscast{a_i}  = t_i\\
}{
\inscast{(\dne)a_1 \rightarrow a_2} = (\dne)[\bot](t_1\rightarrow t_2)
}
\end{mathpar}
\vspace{-20pt}
\caption{Mapping annotations to types}
\label{fig:inscast-type}
\end{figure}

There are two sets of mapping rules for expressions:

$D_a;F_a;\Gamma_a; s\vdash \inscast{e} \Rightarrow\lexp$ and
$D_a;F_a;\Gamma_a \vdash \inscast{v}\Rightarrow \lv$. 

The mapping rules use the annotated typing contexts: $D_a$,
$F_a$, and $\Gamma_a$. The reading of the first judgement is that an
annotated expression $e$ is mapped to a labeled expression $\lexp$
given annotated typing contexts $D_a$, $F_a$, $\Gamma_a$, and \langname
type $s$, which $e$'s type is supposed to be.  The
second judgment is similar, except that it only applies
to values and the type of $v$ is not given. Here $\lexp$ and
$\lv$ are expressions with additional type annotations of form $@s$ to
ease the translation process from \langname to \minic. For instance,
$n@\tpint\ U$ means that $n$ is an integer and it is supposed to have
the type $\tpint\ U$. This way, we can give the same integer different
types, depending on the context under which they are used: 
$n@\tpint\ U$ and $n@\tpint\ \pol$ are translated into different terms.

A value is mapped to itself with its type annotated. For example,
integers are given $\tpint\;\un$ type, since they are unlabeled. 
\[
\inferrule*[right=V-L-Int]{
}{ 
  D_a;F_a;\Gamma_a\vdash \inscast{n} \Rightarrow\ n @\tpint\ \un
}
\]

\begin{figure}[t!]
\flushleft 
\begin{mathpar}
\mprset{flushleft}
\inferrule*[right=L-Field-U]{
D_a;F_a;\Gamma_a \vdash \inscast{v}\Rightarrow \lv
\\ \tpof(\lv) = T\ \pol
\\\\ D_a(T) =  (\tstruct\ T \{\beta_1,\cdots, \beta_n\})
\\ \forall i\in[1,n], \pol = \labof(\inscast{\beta_i})
}{ 
 D_a;F_a;\Gamma_a ; t  \vdash   \inscast{v.i} \Rightarrow \lv.i 
}
\and
\inferrule*[right=L-Field]{
D_a;F_a;\Gamma_a \vdash \inscast{v}\Rightarrow \lv
\\ \tpof(\lv) = T\ \pol
\\\\ D_a(T) =  (\tstruct\ T \{\beta_1,\cdots, \beta_n\})
\\ \exists i\in[1,n], \pol \neq \labof(\inscast{\beta_i})
}{ 
 D_a;F_a;\Gamma_a  ; t\vdash   \inscast{v.i} \Rightarrow
 \elet\ y:T\ \bot = \erelab(\bot\Leftarrow \pol)\ \lv\
~\ein\ (y@ T\ \bot).i 
}
\and
\inferrule*[right=L-Deref]{
 D_a;F_a;\Gamma_a  \vdash   \inscast{v}  \Rightarrow \lv
\\ \tpof(\lv) = b\ \pol
}{ 
    D_a;F_a;\Gamma_a ; t  \vdash \inscast{*v} \Rightarrow 
\elet\ y:b\ \bot = \erelab(\bot\Leftarrow \pol)\ \lv\ 
~\ein\ *(y@b\ \bot)
}
\and
\inferrule*[right=L-Assign]{
 D_a;F_a;\Gamma_a  \vdash   \inscast{v}  \Rightarrow \lv
\\ \tpof(\lv) =\tptr(s)\ \pol
\\ D_a;F_a;\Gamma_a ; s\vdash e \Rightarrow \lexp
}{ 
  D_a;F_a;\Gamma_a ; t\vdash   \inscast{v:= e} \Rightarrow 
\elet\ y: \tptr(s)\ \bot = \erelab(\bot\Leftarrow \rho)\ \lv\ 
\ein\  y@\tptr(s)\ \bot :=\lexp 
}
\and
\inferrule*[right=L-If]{
D_a;F_a;\Gamma_a \vdash \inscast{v_1}\Rightarrow \lv_1
\\ \tpof(\lv_1) = \tpint\ \pol 
 \\ D_a;F_a;\Gamma_a ; t \vdash  \inscast{e_2} \Rightarrow \lexp_2
 \\ D_a;F_a;\Gamma_a ; t \vdash   \inscast{e_3}  \Rightarrow \lexp_3
}{ 
D_a;F_a;\Gamma_a ; t\vdash    \inscast{\eif\ v_1\ \ethen\  e_2\ \eelse\
  e_3}
\\\\\Rightarrow \elet\ x: \tpint\ \bot = (\erelab(\bot\Leftarrow\pol)\ \lv_1)\
\ein\ \eif\ x@\tpint\ \bot\ \ethen\  \lexp_2\ \eelse\ \lexp_3
}
\end{mathpar}
\vspace{-10pt}
\caption{Mapping of expressions}
\label{fig:inscast-exp}
\end{figure}

Expression mapping rules are listed in Figure~\ref{fig:inscast-exp}.
The tricky part is mapping expressions whose typing rules in \langname
require label comparison and join operations. Obviously, the \minic
type system cannot enforce such complex rules. Instead, we add
explicit relabeling to certain parts of the expression to ensure that
the types of the translated \minic program enforce the same property
as types in the corresponding \langname program. 


There are two rules for record field access: one without explicit
relabeling (\rulename{L-Field}) and one with
  (\rulename{L-Field-U}). Rule \rulename{L-Field} applies when all the
  elements in the record have the same label as the record
  itself. Rule \rulename{L-Field-U} explicitly
  relabels the record first, so the record type changes from $T\ \pol$
  to $T\ \bot$, resulting in the field access having the same label as the
  element.  This is because when the labels
  of the elements are not the same as the record, the
  typing rule \rulename{P-T-E-Field} will join the type of the field with
  the label of the record. However, this involves label operations,
  which \minic's type system cannot handle. \rulename{L-deref} and \rulename{L-assign} are
  similar. 
%
The mapping of if statements (\rulename{L-If}) relabels the
conditional $v_1$ to have $\tpint\ \bot$ type, so the branches are
typed under the same program counter as the if expression. 
We write $\erelab(\bot\Leftarrow \pol)$ as a short hand for a sequence of
relabeling operations $\erelab(\lab::\bot\Leftarrow
\lab_n{::}\top)\cdots\erelab(\lab_i{::}\bot\Leftarrow\lab_{i-1}{::}\top)\cdots
\erelab(\lab_2{::}\bot\Leftarrow\lab_{1}{::}\top)$ where
$\pol=\lab_1{::}\cdots{::}\lab_n{::}\lab$ and $\lab$ is either $\top$
or $\bot$. The implications of inserted relabeling
operations are discussed at the end of this section.

\vspace{5pt}\noindent{\bf Translation from \langname to \minic.} The
translation of types is shown in Figure~\ref{fig:trans-type}. It
returns a \minic type and a set of new type definitions. We use a function
$\genname(t,\pol)$ to deterministically generate a string based on $t$
and $\pol$ as the identifier for a record type. It can
simply be the concatenation of the string representation of $t$ and
$\pol$, which is indeed what we implemented for C
(Section~\ref{sec:implementation}). 

\begin{figure}[t!]
\begin{mathpar}
\inferrule*{ 
\pol\in\{\un,\bot\}
 }{
  \trans{\tpint\ \pol}_D  = (\tpint, \cdot)
}
\quad
\inferrule*{ 
\pol\notin\{\un,\bot\}\\
T = \genname(\tpint,\pol)
 }{
  \trans{\tpint\ \pol}_D  = (T, T\mapsto \tstruct\ T\  \{\tpint\})
}
\and
\inferrule*{ 
\pol\notin\{\un,\bot\}
\\ T' = \genname(T,\pol)
\\\\ T\mapsto \tstruct\ T\ \{\tau_1,\cdots,\tau_n\} \in D
 }{
  \trans{T\ \pol}_D  = (T', T'\mapsto \tstruct\ T'\ \{\tau_1,\cdots,\tau_n\})
}
\end{mathpar}
\vspace{-20pt}
\caption{Type translation}
\label{fig:trans-type}
\end{figure}

We distinguish between a type with a label that is $\un$ or $\bot$ and
a meaningful label. The translation of the type $b\ \un$ is simply
$b$. This is because $b\ \un$ is mapped from an unannotated type $b$
to begin with, so the translation merely returns it to its original
type. Similarly $b\ \bot$ is generated by our relabeling operations
during the mapping process, and should be translated to its original
type $b$. On the other hand, a type annotated with a meaningful policy
$\pol$ is translated into a record type to take advantage of
nominal typing. The translation also returns the new type
definition. This would also prevent label subtyping based on the
security lattice. However, this is acceptable given our application
domain because the labels provided by programmers are distinct points
in the lattice that are not connected by any partial order relations
except the $\top$ and $\bot$ elements. Record types are translated to
record types and types for the fields of the labeled record type
$T\ \pol$ are the same as those for $T$, stored in the translated
context $D$. This works because we assume that all labeled instances
of the record type $T$ (i.e., all $T\ \pol$) share the same
definition. 

\begin{figure}[t!]


\flushleft
\begin{mathpar}
\inferrule*[right=T-App-DE]{
\tpof(\lv_f) = (\dne)[\pc](t_1\rightarrow t_2)^{\pol_f}
\\\trans{\lv_f}_D = (v_f, D_f)
\\ \tpof(\lv_a) = b\ \pol
\\ \pol = \lab_1::\lab_2::\pol'
\\    \trans{\erelab(\lab_1::\top\Leftarrow \pol) \lv_a}_D = (e', D_1)
\\  \trans{\erelab(\lab_2::\pol'\Leftarrow \lab_2::\bot) (z@b\
  \lab_2::\bot)}_D= (e'' , D_2)
\\ \trans{t_1}_D = (\tau_1, D_3)
\\ \trans{t_2}_D = (\tau_2, D_4)
}{ 
\trans{\lv_f\ \lv_a}_D = 
 (\elet\ y:\tau_1 = e'\ \ein\ 
 \elet\ z: \tau_2 = v_f\ y\ 
\\\qquad\qquad \qquad\ein\ e'',  D_f\cup D_1\cup D_2\cup D_3\cup D_4)
}
\and
\inferrule*[right=T-ReLab-N1]{
\trans{\lv}_D  = (v, D_1)
~~ \tpof(\lv) = b\ \pol~( b~\mbox{is not a struct type})
\\ \pol'\notin\{\bot,\un\} \\\pol \notin\{\bot,\un\}
\\ \trans{b\ \pol'}_D = (T, D_2)
}{ 
   \trans{\erelab(\pol'\Leftarrow\pol)\lv}_D =(\elet\ x = v.1\ \ein\
   (T)\{x\}, D_1\cup D_2)
}
\and
\inferrule*[right=T-ReLab-N2]{
\trans{\lv}_D  = (v, D_1)
~~\tpof(\lv) = b\ \pol
~ (b~\mbox{is not a struct type})
\\ \pol'\notin\{\bot,\un\} 
\\\pol \in\{\bot,\un\}
\\ \trans{b\ \pol'}_D = (T, D_2)
}{ 
   \trans{\erelab(\pol'\Leftarrow\pol)\lv}_D = ((T)\{v\}, D_1\cup D_2)
}
\and
\inferrule*[right=T-ReLab-N3]{
\trans{\lv}_D= (v, D_1)
\\ \tpof(\lv) = b\ \pol
\\ b~\mbox{is not a struct type}
\\ \pol\notin\{\bot,\un\} \\\pol'\in\{\bot,\un\}
}{ 
   \trans{\erelab(\pol'\Leftarrow\pol)\lv}_D = (v.1 D_1)
}
\and
\inferrule*[right=T-ReLab-same]{
\trans{\lv}_D  = (v, D_1)
\\ \labof(\lv) = b\ \pol
\\ \pol,\pol' \in\{\un,\bot\} 
}{ 
   \trans{\erelab(\pol'\Leftarrow\pol)\lv}_D = (v, D_1)
}
\and
\inferrule*[right=T-ReLab-Struct]{
 \pol \notin\{\bot, \un\} ~\mbox{or}~\pol'\notin\{\bot,\un\}
\\ \tpof(\lv) =  T\ \pol 
\\ \trans{T\ \pol'}_D = (T', D_1)
\\    \trans{\lv}_D  = (v, D_2)
}{ 
\trans{\erelab(\pol'\Leftarrow\pol)\lv}_D =
\elet\ x_1 = v.1\ \ein\ \cdots \elet\ x_n = v.n\ 
\\\qquad\qquad\qquad\qquad~~\ein\
(T')\{x_1,\cdots,x_n\}, D_1\cup D_2)
}
\end{mathpar}
\vspace{-10pt}
\caption{Expression translation}
\label{fig:trans-exp}
\end{figure}

Expression translation rules
recursively translate the sub-expressions. 
We present a few interesting cases in Figure~\ref{fig:trans-exp}. 
The \minic type system is not asked to do complex label checking, so
rule \rulename{T-App-De} has to insert  label conversions. The
label of the argument is cast from $\lab_1::\lab_2::\pol'$ to
$\lab_1::\top$, as required by $f$, and the result of the function is
cast from $\lab_2::\bot$ to $\lab_2::\pol'$. These operations are
different from the ones inserted during the mapping process because they
only exist to help \minic simulate the \rulename{E-App-De} typing rule in
\langname, but do not really have declassification or endorsement effects.

Next, we explain the translation of relabeling operations. Rule
\rulename{T-Relab-N1} relabels a value whose type has a meaningful
label to one with another meaningful label.  The translated expression
is a reassembled record using the fields of the original record. Rule
\rulename{T-Relab-N2} relabels an expression with a $\un$ and $\bot$
label to a meaningful label. In this case, the translated expression
is a record. Rule \rulename{T-Relab-N3} translates an expression
relabeled from a meaningful label to a $\un$ or $\bot$ label to a
projection of the record. The next rule, \rulename{T-Relab-Same}, does
not change the value itself, because we are just relabeling between
$\un$ and $\bot$ labels. The final relabeling rule,
\rulename{T-Relab-Struct}, deals with
records. In this case, we simply return the reassembled record because
record types that only differ in labels have the same types for the
fields, as shown in the last type translation rule in Figure~\ref{fig:trans-type}.


\subsection{Correctness} 
\label{sec:trans-correctness}

We prove a correctness theorem, which states that if our
translated nominal type system declares an expression $e$ well-typed,
then the labeled expression $e_l$, where $e$ is translated from, is
well-typed under \langname's type system. Formally:

\begin{thm}[Translation Soundness (Typing)]If  $D_a;F_a;\Gamma_a; s\vdash\inscast{e}= \lexp$,
$\inscast{D_a} = D_l$, $\inscast{F_a} = F_l$, 
$\inscast{\Gamma_a} = \Gamma_l$,
$\trans{D_l} = D$,
$\trans{\Gamma_l}_D = (\Gamma, D_1)$, $\trans{F_l}_D = (F, D_2)$,
 $\trans{\lexp}_D = (e', D_3)$,
and $D\cup D_1\cup D_2\cup D_3;F;\cdot;\Gamma\vdash e': \tau$
implies $D_l;F_l;\cdot;\Gamma_l \vdash \tmof(\lexp): s$ and $\trans{s} = (\tau,\_)$
\end{thm}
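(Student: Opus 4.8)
The plan is to prove the statement by a single induction over the two mutually recursive mapping judgments $D_a;F_a;\Gamma_a;s\vdash\inscast{e}\Rightarrow\lexp$ and $D_a;F_a;\Gamma_a\vdash\inscast{v}\Rightarrow\lv$ — equivalently, by structural induction on $e$ (resp.\ $v$), since these judgments are syntax-directed. Each mapping rule fixes the shape of the decorated \langname term $\lexp$, and running the type-translation rules of Figure~\ref{fig:trans-type} and the expression-translation rules of Figure~\ref{fig:trans-exp} on $\lexp$ then fixes the shape of the \minic term $e'$. In each case I would: (i) invert the assumed \minic typing $D\cup D_1\cup D_2\cup D_3;F;\cdot;\Gamma\vdash e':\tau$ to recover \minic typings of the immediate subterms of $e'$ — this is precisely where nominality pulls its weight, since the generated struct names $\genname(\cdot,\cdot)$ force each subterm to carry exactly the translated type expected by the corresponding \langname rule; (ii) apply the induction hypothesis to the submappings to obtain \langname typings of the $\tmof(\cdot)$ of those subterms; and (iii) reassemble a \langname derivation for $\tmof(\lexp):s$ with the matching rule of Figure~\ref{fig:typing-rules-selected}, closing the case by checking $\trans{s}=(\tau,\_)$. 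Since every \langname body is typed with program counter $\bot$, and the inserted coercions exist precisely to drag the relevant subterms down to $\bot$ before a \langname rule with a pc/join side condition fires, the hypothesis is applied throughout at $\pc=\bot$, so no generalization over the program counter is needed.

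Two auxiliary lemmas would carry most of the technical weight. The first is a type round-trip lemma: for every annotation $\beta$, if $\inscast{\beta}=s$ then $\trans{s}_D$ returns the erased \minic type underlying $\beta$ together with struct definitions consistent with $D$; in particular $\trans{b\ \un}_D=(b,\cdot)$ and $\trans{b\ \bot}_D=(b,\cdot)$, so the $\un$- and $\bot$-labeled types produced by the mapping and by every inserted coercion round-trip to plain \minic types, and the $\trans{s}=(\tau,\_)$ clause falls out mechanically once $s$ is pinned down. The second is a coercion lemma covering the relabeling sequences $\erelab(\bot\Leftarrow\pol)$ inserted by the coercing field-, dereference-, assignment-, and conditional-mapping rules of Figure~\ref{fig:inscast-exp}, as well as the auxiliary relabelings of \rulename{T-App-DE}: it states (a) each such sequence is well-typed in \langname by iterated use of \rulename{P-T-E-Relabel} — whose only side condition $\pc\sqsubseteq\pol'$ holds because every coerced subterm is typed under $\pc=\bot$ — taking a value of type $b\ \pol$ to one of type $b\ \bot$ (and symmetrically in the reverse direction needed by \rulename{T-App-DE}); and (b) its $\trans{\cdot}_D$-image is exactly the record (de)construction emitted by \rulename{T-ReLab-N1} through \rulename{T-ReLab-Struct}, which is well-typed in \minic against the nominal struct types named by $\genname$. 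A routine weakening lemma — extending $D$ by the incremental definition sets $D_1,D_2,D_3$ preserves all judgments, and the $\genname$-generated definitions introduced along the way are mutually consistent — then lets the inductive step go through uniformly.

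The representative and hardest case is application of a declassification/endorsement function: the mapping yields $\lexp=\lv_f\ \lv_a$ with $\tpof(\lv_a)=b\ \pol$ and $\pol=\lab_1{::}\lab_2{::}\pol'$, and \rulename{T-App-DE} rewrites it into $\elet\ y:\tau_1=e'\ \ein\ \elet\ z:\tau_2=v_f\ y\ \ein\ e''$, where $e'$ and $e''$ are themselves translations of inserted relabelings. I would invert the three nested \minic $\elet$ typings (of $e'$ at $\tau_1$, of $v_f\ y$ at $\tau_2$, and of $e''$ at $\tau$), use the coercion lemma to read back the \langname types $b\ \lab_1{::}\top$ for $y$ and $b\ \lab_2{::}\bot$ for $z$, use the induction hypothesis to type $\tmof(\lv_f)$ and $\tmof(\lv_a)$, and then discharge the premises of \rulename{P-T-E-DE}: the decomposition $\pol=\lab_1{::}\lab_2{::}\pol'$ is exactly what the mapping recorded in the decoration of $\lv_a$, and $\rho_f\sqcup\pc\sqsubseteq\pc'$ holds because the mapping assigns every function the pc label $\bot$ and we are typing under $\pc=\bot$. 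The genuine difficulty — and where I expect to spend the most effort — is the bookkeeping: aligning the inserted $\bot$-coercions inside $\lexp$ with the record constructors and projectors that appear in $e'$, and keeping the $\genname$-generated type names synchronized between the type- and expression-translations, so that the inverted \minic typings really do name the structs the \langname types translate to. Every remaining case — both field-access variants, dereference, assignment, conditional, ordinary application, $\elet$, and the value rules — follows the same three-step pattern with strictly less accounting.
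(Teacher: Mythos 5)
Your proposal matches the paper's own proof: the paper likewise proceeds by induction over the mapping derivation (with a separate Value Translation Soundness lemma playing the role of your mutual value case), inverts the \minic typing of the translated term in each case, and reassembles the \langname derivation at $\pc=\bot$, treating \rulename{L-App-DE}/\rulename{T-App-DE} as the hard case. Your two auxiliary lemmas correspond closely to the paper's relabeling lemmas (Relabel translation soundness and the relabel-has-type lemma) and its translation pre-image uniqueness lemma, so the approach is essentially identical.
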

Here, $\tmof(\lexp)$ denotes an expression that is the same as
$\lexp$, with labels (e.g., $@\tpint\ \un$) removed. 
The proof is by induction over the derivation of
$D_a;F_a;\Gamma_a;s\vdash\inscast{e} \Rightarrow \lexp$. The
proof can be found in Appendix~\ref{app:translation:correctness}.

It not hard to see that the translated program has the same
behavior as the original program, because they have the same program
structure except that the translated program has many indirect record
constructions and field accesses.


\subsection{Discussion}
\label{embedding:discussion}
\Paragraph{Relabeling Precision}.
It is clear from the mapping algorithm that a number of powerful
relabeling operations are added. In all cases (except the if statement) we
could do better by not relabeling all the way to bottom, but to the
label of the sub-expressions. However, that would require a heavy-weight
translation algorithm that essentially does full type-checking. 

\Paragraph{Implicit Flows}.
The security guarantees of programs that require relabeling operations
to be inserted are weakened in the sense that in addition to the
special declassification and endorsement functions, these inserted
relabeling operations allow additional observation by the attacker.
This means that the resulting program can implicitly leak information via
branches, de-referencing, and record field access.


However, for our application domain we aim
to check simple data usage and function call patterns which, as seen in our
case studies, manifest errors with explicit flows.  These
policy violations are still detected if we don't have recursive types. The
reason being those operations only cause relabeling of a
\emph{smaller} type. The API sequences keep the same basic type with
changing labels. If we have recursive types, the above argument would
be invalid. See the following example.
\[\begin{array}{ll}
y: & \tstruct\ T\ \{ \tstruct\ T\ (\bot_s,\flnTag{EncodedBal})::\bot, \tpint\} \\
     &   ~~(\flnTag{AlicePrivate}, \bot_I)::
       (\bot_s,\flnTag{EncodedBal})::\bot
\end{array}
\]
$y.1$ will have the same effect as \lstinline{encodeA}, which violates the API
sequence that we try to enforce using these types. Note that C doesn't
allow this type, but we could use pointers to construct something quite
similar. In our case studies, we do not have such interaction between
policies and recursive types.


\section{Implementation}
\label{sec:implementation}
We explain how
the annotations and translation algorithms of \sysname
are implemented for C. 

\Paragraph{Translation of annotations for simple types.}
Utilizing C's nominal typing via the \lstinline{typedef} mechanism is key to realizing
\langname type system within the bounds of C's type system. The
declaration of the \langname 
type $t\ \pol$ in C will be: \lstinline[mathescape=true]|typedef struct {$t$ d;} $\tJoin{\rho}{t}$;|
Here $\tJoin{\rho}{t}$ is a string representing the type $t\ \pol$
and it is simply a concatenation of the string representation of the
policy $\pol$ and the type $t$. Consider the annotated code snippet.
\begin{lstlisting}
	#requires l1:secrecy then l2:secrecy
	int x;
\end{lstlisting}
In \langname, the type of $x$ is $\m{int}\ (\flnTag{l1}, \bot_I)::
(\flnTag{l2}, \bot_I)::\bot$. The generated C typedef is:
 \lstinline[mathescape=true]|typedef struct {int d;} l1S_l2S_int;|.
This definition contains the original type, which
allows access to the original data stored in $x$ in the transformed program. 


\Paragraph{Structures and unions.}
We allow programmers to annotate structures in two ways: an instance of a 
structure can be annotated with a particular policy, or individual
fields of an instance of a structure can be given annotations.
The names of structures hold a particular significance within C since
they are nominal types, and thus, they need to be properly handled.
Unions are treated in a parallel manner, so we omit the details.

A policy on an instance of a structure is annotated and translated following the same formula as
annotations on simple C types. Suppose we have the following
annotation and code.
\begin{lstlisting}
	#requires l1:secrecy then l2:secrecy
	struct foo x;
\end{lstlisting}
\sysname will produce the following generated type definition:
\lstinline[mathescape=true]|typedef struct {struct foo d;}|\\
\lstinline[mathescape=true]|l1S_l2S_foo;|.
This is different from the algorithm in
Section~\ref{sec:translation}, where structures are not nested and annotations
are applied to structure definitions rather than instances.
This is done in the implementation because the definition of \lstinline{foo}
might be external and therefore may not be known to the translation
algorithm, so we simply nest the entire structure inside.

The second method allows annotations on particular
fields of the structure as follows below.
\begin{lstlisting}
	#requires {f1:int, f2:int} l1:secrecy then l2:secrecy
	struct foo x;
\end{lstlisting}
The following type definition will be generated. 
\begin{lstlisting}[language=C, tabsize=2]
typedef struct { 
	l1S_l2S_int f1; l1S_l2S_int f2; foo d;
} l1S_l2S_foo;
\end{lstlisting}
Fields that have policy annotations are fields of the new struct. To
allow access to other fields in the original struct, a copy of the
original struct is nested inside this new struct. This is for the same
reason as the structure nesting in the previous case. 

 
Finally, we explain how member accesses are handled. Suppose a struct
\lstinline{foo} has members \lstinline{f1} and \lstinline{f2}, and an
annotation of policy \lstinline{p} has been placed on member \lstinline{f1},
but no annotation has been placed on member \lstinline{f2}. The
generated type definition for the structure is as follows:
 \lstinline[mathescape=true]|typedef struct { p_int f1; foo d; } p_foo;|.

Assume \lstinline{x} has type \lstinline{p_foo}. Access to \lstinline{f1} is still
\lstinline{x.f1}, since there is a copy of it in \lstinline{x}. Access to \lstinline{f2} is
rewritten to \lstinline{x.d.f2}. The field initialization is rewritten
similarly: \lstinline[mathescape=true]|foo x={.f1=1,.f2=2};| 
is transformed to this: \lstinline[mathescape=true]|foo x={.f1=1,.d={.f2=2}};|



 
\Paragraph{Pointers.}
%
We provide limited support for pointers.
Below is an example of how
annotations on pointers are handled. 
\begin{lstlisting}
	#requires AlicePriv:secrecy
	int* x;
\end{lstlisting}
The translated code is below; a type definition of struct
\lstinline{AlivePrivS_int} is generated:  \lstinline[mathescape=true]|AlicePrivS_int* x;|
The following function can receive \lstinline{x}  as an argument because the annotation
for its parameter matches that of \lstinline{x}. 
\begin{lstlisting}
	#param AlicePriv:secrecy
	int f(int* x) {...}
\end{lstlisting}

The annotation for pointers only annotates the content of the
pointer. Even though \langname allows policies on the pointer
themselves, we did not implement that feature. We also 
do not support pointer arithmetic, which is difficult to handle for many static
analysis tools, especially lightweight ones like ours. However, our system will flag 
aliasing of pointers across mismatched annotated types.  Our system will also
flag pointer arithmetic operations on annotated types as errors. Programmers can
encapsulate those operations in trusted functions and annotate
them to avoid such errors.

\Paragraph{Typecasts.} The C type system permits typecasts,
allowing one to redefine the type of a variable in unsound ways. Casting
of non-pointer annotated types will be flagged as an error by \sysname. This is because
our types are realized as C structures; type checkers do not allow
arbitrary casting of structures. However, our tool cannot catch typecasts 
made on annotated pointers; a policy on a pointer will be lost if a
typecast is performed.

\label{app:implementation}

\Paragraph{Void.}
\label{app:void}
In this section we will discuss the
handling of functions that have a \lstinline{void} return type. We disallow
the use of the \lstinline{#return} annotation with such
functions. The reasons for doing so will be explained below.
Given that translation and the general purpose of the \lstinline{void} type, it is clear that allowing 
an integrity annotation of a function with a \lstinline{void} return type is not valid. 
Consider the following example:
\begin{lstlisting}
	#return trusted:integrity
	void func() {...}
\end{lstlisting}

If we allowed this translation to proceed naively, the translated version of the code could look like this:
\begin{lstlisting}
	trustedI_void func() {...}
\end{lstlisting}

This is invalid for two reasons. First, as mentioned before this function is not returning anything and
 therefore an annotation on its return type is meaningless. Second, as this translation evidences, if we were to
 allow such an annotation, we would have created an invalid type, ``\lstinline{trustedI_void}". This type is 
invalid because, in order for it to be used in our annotation system, we need to generate functions that 
perform the relabeling operations to and from this type. However, no such operations can be generated,
as they would effectively take nothing and endorse it to a trusted type. 

Another case where \lstinline{void} comes into play is in implicit void pointer conversion. In the case
where a void pointer is being passed to a function for an annotated parameter, this will not be flagged
as an error by our system.

\Paragraph{Variadic Functions.}
We provide partial support for annotations on variadic functions. For example, with the following
function:
\begin{lstlisting}
	int f(int a, int b, ...) { ... }
\end{lstlisting}

Only the first two arguments can have annotations.

\Paragraph{Builtin Qualifiers}
Qualifiers are subsumed into the ``original type" that our processing algorithm extracts 
from the source code. For instance, if we encounter the code:
\begin{lstlisting}
	#requires test:secrecy
	volatile int x;
\end{lstlisting}

\noindent the qualifier \lstinline{volatile} will be considered to be part of the base 
type ``\lstinline{int}". Thus, the translation of the code will be:
\begin{lstlisting}
	__fln__testS_volatile_int x;
\end{lstlisting}

\noindent Rather than:
\begin{lstlisting}
	volatile __fln__testS_int x;
\end{lstlisting}

This approach generalizes to multiple qualifiers on a type.

\Paragraph{Builtin Operators.}
The labels we can add through our system are sometimes applied to
variables with numeric types, e.g. \lstinline{int}, \lstinline{float}, \lstinline{double},
etc. Binary and unary operations on these types are directly supported
by C. After transformation arithmetic operations do not work out of
the box on our transformed types. For instance, \lstinline{x+y} will raise a
type error if \lstinline{x} and \lstinline{y} are annotated because \lstinline{+} 
is being applied to a struct, not
an \lstinline{int}. Programmers would need to define a \lstinline{plus} function for the
annotated type to circumvent this issue.

\Paragraph{Code Generation.}
In addition to the above remarks on how specific C features are handled, we need 
to do some additional code generation and program reconstruction in order for our system
 to be straightforward for the end user to use. When processing a directory of annotated 
source files that includes one ``root" file (typically the file containing the main function), 
our system does the following:
\begin{enumerate}
 \item Recursively find and parse included files from the root
 \item Gather annotations from each file
 \item Generate header definitions for each file
 \item Stitch together the original and generated files
\end{enumerate}

Next, we explain two pieces of this process;
header generation and program reconstruction.

\Paragraph{Header Generation.} ~\\
\label{app:impl:headergen}
Header generation refers to the phase of the program transformation when all of the structure
 and function definitions for the annotated types in a particular file are generated. The generated 
structure and function definitions are collected into a single header file that is included where
 its definitions are needed during the program reconstruction phase.

To explain how the structure and function definitions are generated, let us consider the following code:
\begin{lstlisting}
	#requires AlicePriv:secrecy
	int x;
\end{lstlisting}
Previously, we explained that for a variable definition of the form $\tau~x;$ annotated with a policy 
$\rho$ we need to generate a type $\tJoin{\tau}{\rho}$. In our example, this generated type would 
be \lstinline{AlicePrivS_int}. As we explained before, to give this type concrete meaning within the C type 
system, we instantiate it in the form of a \lstinline{typedef struct}:
\begin{lstlisting}
	typedef struct {int d;} AlicePrivS_int;
\end{lstlisting}
This generated structure contains the original type as a member and interacts with the code as
 described in the subsection on structures (in section \ref{sec:implementation}).

In order to be able to convert between the original type \lstinline{int}, which we call the base 
type, and this new ``type" \lstinline{AlicePrivS_int}, which we call the policy type, two functions 
need to be generated:
\begin{lstlisting}
	privateS_int privateS_int_w(int x) {...};
	int privateS_int_r(privateS_int x) {...};
\end{lstlisting}
The first function, given a regular integer will relabel the integer to the type \lstinline{AlicePrivS_int}. 
The second function, will relabel \lstinline{AlicePrivS_int} back to a regular integer.

Thus, we have the basis for what our header generation needs to accomplish. Each annotated type
 $\tJoin{\tau}{\rho}$ can be viewed as a pair (base type, policy type). For each pair we must:
\begin{enumerate}
  \item Generate a typedef structure that has a base type member and is named $\rho\_\tau$
  \item Generate a function from the base to the policy type
  \item Generate a function from the policy to the base type
\end{enumerate}
In order to prevent the duplication of generated structure or function definitions, we 
deduplicate the list so that it consists of only unique pairs.

\Paragraph{Program Reconstruction.}
\label{app:impl:progrecon}
During the program reconstruction phase, header files that have been generated must be included
 at the right points in the program's dependency graph. If they are not included at the right points, then
 it is possible that a file containing transformed code that makes use of the generated structures and
 functions will be missing the definitions of those structures or functions and thus will not be compilable.
 In order to solve this issue, we recursively traverse the dependency graph starting from the root file. At
 each file that we visit in the graph, we include the generated header file containing the generated
 structures and definitions. 
 
\Paragraph{Pragmas.}
We have presented annotations without the \lstinline{pragma} directive prefixing them for convenience
of presentation. When using the actual implementation of \sysname we write, for instance,
\lstinline{#pragma requires AlicePriv:secrecy}. The use of the pragma directive allows C compilers to ignore
our annotations, thus allowing developers to keep annotations in their codebases without the annotations
interfering with normal compilation of the program.

\eat{
\Paragraph{Limitations.} As previously mentioned, we do not handle
pointer arithmetic. We only provide limited support for function pointers. We do
not support C's builtin operators, such as the unary
\lstinline{++}. We do not support typecasts on pointers, nor can we flag violations
due to implicit void pointer conversion. We provide partial support for variadic functions.
Finally, we do not support using \lstinline{#return} with a
function that has a \lstinline{void} return type. 
These are careful design choices we made so our tool is
lightweight and remains practical; we emphasize that our tool
is not meant for verification.
Limitations of our type system can be found in Section~\ref{embedding:discussion}.
}


\section{Case Studies}
\label{sec:case-studies}

We evaluate the effectiveness of \sysname at discovering violations of 
secrecy, integrity, and sequencing API usage policies 
on several 
open-source cryptographic libraries.
Our results are summarized in Figure~\ref{fig:results}.
We examine: Obliv-C, a compiler for dialect of C directed
at secure computation \cite{oblivc, Zahur:2015}; SCDtoObliv, a set of floating 
point circuits synthesized into C code 
\cite{scdtoobliv}; the Absentminded Crypto Kit, a library of Secure
Computation protocols and primitives \cite{absentminded,
  Doerner:2017}; Secure Mux, a secure multiplexer application
\cite{Pool:Zhu:2017}; the Pool Framework, a secure computation memory management library
\cite{poolframework, Pool:Zhu:2017}; Pantaloons RSA, the top
GitHub result for an RSA implementation in C \cite{githubrsa}; MiniAES, an
AES multiparty computation implementation \cite{miniaes,
  Damgard:2016}; Bellare-Micali OT, an implementation of the
Bellare-Micali oblivious transfer protocol \cite{Bellare:1989}; Kerberos ASN.1 Encoder,
the ASN.1 encoder module of Kerberos \cite{krbasn1}; Gnuk OpenPGP-do, a portion of the OpenPGP
module from gnuk \cite{gnuk}; Tiny SHA3, a reference implementation of SHA3 \cite{tinysha3}.
%
%
\begin{figure*}[tbp]
\centering
\begin{tabular}{ | l | c | c | c | c | c | r | c | c | }
\hline
	 \textbf{Library} & \textbf{\# Policies} & \textbf{Sec.} & \textbf{Int.} & \textbf{Seq.} & \textbf{LoA} & \textbf{$\sim$ LoC} & \textbf{Issues} & \textbf{Runtime (s)} \\
\hline
	 Obliv-C Library						& 2 & 1 & 1 & 0	 	& 11		& 80 		& 0 	& 0.04 \\
	 SCDtoObliv FP Circuits				& 4 & 4 & 0 & 0	 	& 10		& 43,000 	& 1 	& 5.55 \\
	 ACK Oqueue						& 7 & 7 & 7 & 2	 	& 19		& 700	& 0	& 0.17 \\
	 Secure Mux Application				& 4 & 3 & 4 & 0	 	& 11		& 150	& 0 	& 0.06 \\
	 Pool Framework					& 4 & 2 & 4 & 0	 	& 8		& 500	& 1 	& 0.16 \\
	 Pantaloons RSA					& 5 & 2 & 3 & 0		& 12		& 300	& 1 	& 0.11 \\
	 MiniAES							& 9 & 4 & 4 & 1	 	& 13		& 2000	& 0 	& 0.08 \\
	 Bellare-Micali OT					& 5 & 3 & 2 & 0		& 12		& 100	& 2 	& 0.05 \\
	 Kerberos	ASN.1 Encoder				& 2 & 2 & 0 & 1		& 8		& 300	& 0   & 0.12 \\
	 Gnuk OpenPGP-do					& 5 & 0 & 5 & 1		& 11		& 250	& 1   & 0.10 \\
	 Tiny SHA3						& 3 & 3 & 0 & 1		& 6		& 200	& 0   & 0.10 \\	 
\hline
\end{tabular}
\caption{Evaluation Results. Sec, Int, and Seq are the number of secrecy, integrity, and sequencing policies. LoA is lines of annotations, LoC is the lines of code. }
\label{fig:results}
\end{figure*}
We determine application-specific policies
and implement them with our annotations.

\subsection{SCDtoObliv Floating Point Circuits}
First, we show that \sysname can be used to discover flaws
in large, automatically generated segments of code that would be
very difficult for a programmer to manually analyze.

%

SCDtoObliv \cite{scdtoobliv} synthesizes floating point circuit in C via
calls to boolean gate primitives implemented in C.
While this approach produces performant floating point
circuits for secure computation applications, the resulting circuit files are hard to
interpret and debug.
%
The smallest of these generated circuit files is around 4000 lines
of C code while the largest is over 14,000 lines. 
We annotate particular wires based on the circuit function
to check that particular invariants such as which bits should be used in
the output and which bits should be flipped are maintained.
%

\sysname uncovered a flaw in the subtraction circuit. The
Obliv-C subtraction circuit actually uses an addition circuit  
to compute $A + (-B)$. The function
that does the sign bit flipping, \lstinline{__obliv_c__flipBit}, is annotated
so that it can only accept an input with the $\flnTag{needsFlipping}$ label
as follows. 

\begin{lstlisting}
	#param needsFlipping:secrecy
	void __obliv_c__flipBit(OblivBit* src)
\end{lstlisting}


Our tool reports an error; rather than the sign bit
of the second operand being given to
\lstinline{__obliv_c__flipBit} the sign bit of the \emph{first}
operand was given to \lstinline{__obliv_c__flipBit}.
Instead of computing $A + (-B)$ the
circuit computes $(-A) + B$; the result of evaluating the circuit
is negated with respect to the correct answer.

\subsection{A Potential Flaw in the Pool API}
This case study is based on Pool, a Secure Computation
tool~\cite{poolframework, Pool:Zhu:2017} and  demonstrates that \sysname
can help identify cross-module API constraints. 

The Pool framework
provides a set of APIs for users, some of which take function pointers
as arguments. As a result, user-provided functions are called inside
Pool APIs and interact with sensitive data from the framework.
The following function pointer is used-accessible.
\begin{lstlisting}
void (*Gate_Copy)(_, _, _, uint64_t indexs, _)
\end{lstlisting}
We have left most of the parameters opaque as they are unimportant
to the flaw we discovered. According to the signature, the function
pointed to by this pointer can accept \textit{any} unsigned 64-bit integer as its
fourth parameter (an index to a gate used by the Pool API). 

We would like to check the property that only valid gates are being 
used in the protocol execution
and that only trusted functions can use valid gates. We use the
label $\flnTag{valid\_gate}$ as both a secrecy and an integrity policy
to prevent APIs from using invalid gates and untrusted functions from
using valid gates.
Here is an example of that
annotation on a function that is said to produce a valid
gate: 
\begin{lstlisting}
	#return valid_gate:(secrecy, integrity)
	uint64_t Next_Gate_in_Buffer(Pool *dst)
\end{lstlisting}

An error is reported for the following code. 
\begin{lstlisting}
(*(P->Gate_Copy))(_, _, _, 
                  Next_Gate_in_Buffer(P), _);
\end{lstlisting}
Notice that the fourth argument of the \lstinline{Gate_Copy} function is
returned by the \lstinline{Next_Gate_in_Buffer} function. The flaw is
caused by the fact \lstinline{Gate_Copy} is not trusted to take a valid
gate as input, as far as can be told by its type and the project's documentation \cite{pooldocs}. 
This error is similar to bugs found in kernels that give user-supplied callback functions
private kernel data. To allow the translated code to compile, we would
have to explicitly add an annotation to the \lstinline{Gate_Copy} function to
allow it to take a valid gate as input. By doing so, we are knowingly
endorsing potentially dangerous user-supplied callback functions.





\subsection{Gnuk OpenPGP-DO}
The last case study shows that \sysname can uncover a
previously known and patched null-pointer dereferencing bug and another
potential bug in the gnuk OpenPGP-DO file, which handles OpenPGP Smart
Card Data Objects (DO). We explain the latter in the next subsection.

The function \lstinline{w_kdf} handles the reading 
or writing of DOs that support encryption via a Key Derivation
Function (KDF) in the OpenPGP-DO file. 
\begin{lstlisting}
static int rw_kdf (uint16_t tag, int with_tag, 
   const uint8_t *data, int len, int is_write)
\end{lstlisting}

If the data is being read, it is copied out to a buffer via the function \lstinline{copy_do_1}:
\begin{lstlisting}
static void copy_do_1(uint16_t tag, const uint8_t *do_data, int with_tag)
\end{lstlisting}

One invariant is that the \lstinline{do_data} pointer must point to a valid 
segment of data; it must not be null. We provide the following annotation:
\begin{lstlisting}
#param(2) check-valid-ptr:integrity
static void copy_do_1(uint16_t tag, const uint8_t *do_data, int with_tag)
\end{lstlisting}

This annotation states that the second parameter will only be accepted
if it has been endorsed by a function that returns data annotated with
the $\flnTag{check\_valid\_ptr}$ label. We provide such a function and
rewrite all nullity checks to use it.
\begin{lstlisting}
#return check_valid_ptr:integrity
const uint8_t *check_do_ptr(const uint8_t *do_ptr)
\end{lstlisting}

Returning back to the \lstinline{rw_kdf} function, when data is being read, the following call 
of \lstinline{copy_do_1} occurs: 
\begin{lstlisting}
copy_do_1(tag, do_ptr[NR_DO_KDF], with_tag);
\end{lstlisting}
%
Compilation of the transformed code results in this error:
\begin{lstlisting}
error: passing argument 2 of 'copy_do_1' from incompatible pointer type [-Werror=incompatible-pointer-types]
       copy_do_1(tag, do_ptr[1], with_tag);
                      ^~~~~~
\end{lstlisting}
The issue is \lstinline{copy_do_1} is annotated to require
a null-pointer check for parameter two, but that check was not performed.




\subsection{Length Check in Gnuk OpenPGP-DO}
We now demonstrate the discovery of a potential issue with the gnuk \lstinline{copy_do_1} 
function. 

This utility function is responsible for performing a properly sized \lstinline{memcpy} 
given a data array, in the format of a Tag-Length-Value data structure,
that contains the data to by copied as well as metadata such as the size 
of the data to be copied. We focus our analysis on the size metadata, which is captured by 
the variable \lstinline{int len}. We provide the following annotation:

\begin{lstlisting}
	#return check_len:integrity
	int len;
\end{lstlisting}

The purpose of this annotation is to ensure that this length variable will be checked before 
it is given to \lstinline{memcpy} to prevent a buffer overflow. 

The \lstinline{copy_do_1} 
function does two slightly different things depending on the value of a conditional. In the 
first case, the array element \lstinline{do_data[0]} is checked to not exceed its maximum 
size before it is assigned to \lstinline{len}. In the second case, however, no check is made.

Thus, a potential faulting path exists: if the conditional is false and \lstinline{do_data[0]}
was previously assigned a negative value causing an overflow, when \lstinline{len} is used
as the size argument to \lstinline{memcpy}, it could read past the end of the \lstinline{do_data}
array as it may not be null-terminated.

Our system alerts us to this issue:
\begin{lstlisting}
evaluations/gnuk/openpgp-do_snip__fln.c:301:9: error: incompatible types when assigning to type `__fln__check_lenI_int {aka struct <anonymous>}` from type `uint8_t {aka int}`
     len = do_data__fln_p[0];
         ^
\end{lstlisting}

We contacted the maintainer of the library who assured us that every instantiation of the 
\lstinline{do_data} array has the correct length and thus the potential issue we describe
cannot come up in practice. However, we believe that addition of a check that would fulfill
the policy we have described could be useful should a mistake be made with a
\lstinline{do_data} array.

\label{app:case-studies}
\subsection{Secure Multiplexer Application}
Pool is a secure computation framework that was released by Zhu et al. \cite{poolframework, Pool:Zhu:2017}. 
The authors provide an example application, a secure multiplexer, that makes use of the framework. We evaluate
 this application to check that the Pool API usage does not violate the secrecy or integrity properties of the garbler's 
or evaluator's data. 
We check first that the secrecy and integrity of each party's private data is maintained.
\begin{lstlisting}
	#requires AlivePriv:(secrecy, integrity)
	bool* inputA;
\end{lstlisting}

At the next step of the protocol, Alice's input is assigned her private value by way of a helper function
\begin{lstlisting}[language=C, tabsize=2]
	inputA = int2bitsA(0x01AA);
\end{lstlisting}

Given that the \lstinline{int2bitsA} function is Alice's way of assigning a value to her input, we accordingly annotate 
that it is trusted to provide integrity for the \lstinline{AlicePriv} label:
\begin{lstlisting}
	#return AlicePriv:integrity
	bool* int2bitsA(int x) {...}
\end{lstlisting}

On the side of the other party, Bob, parallel annotations can be made.
Since only the functions \lstinline{int2bitsA} and \lstinline{int2bitsB} can provide an integrity endorsement to the \lstinline{AlicePriv} and \lstinline{BobPriv} respectively, our system can check that no other code will 
modify Alice and Bob's private input.

The next annotation we provide is a check on the data structure entities representing Alice and Bob. 
Alice is an instance of a Garbler structure and Bob is an instance of an Evaluator. 
Thus we provide 
a label \lstinline{GarblerProtected} and apply both its secrecy and integrity projections to the Alice instantiation 
of Garbler:
\begin{lstlisting}
	#requires GarblerProtected:(secrecy, integrity)
	Garbler alice;
\end{lstlisting}
All Pool framework functions that need to access the Garbler's (and respectively, the Evaluator's) data thus need 
to be trusted to maintain the secrecy of the Garbler's data.
Thus, the following annotations are 
applied:
\begin{lstlisting}
	#param(1) GarblerProtected:secrecy
	#param(2) PreparedFunction:secrecy
	wire** execA(Garbler alice, wire** func, 
		wire** inpt) {...}
\end{lstlisting}

The annotation \lstinline{GarblerProtected} makes it clear that this function is trusted to read the Garbler 
structure. The annotation \lstinline{PreparedFunction} has not been explained before. Its role is specify that the 
function pointer \lstinline{wire** func} must point to a function that fulfills the policy \lstinline{PreparedFunction}.
No policy violations were found.

\subsection{Checking Initializations in Pool}
Another annotation we provide adds checks to prevent users of the Pool framework from omitting initializations. 
It is an integrity endorsement:
\begin{lstlisting}
	#return initialized_pool:integrity
	Pool* SetupPool(Pool *dst ...);
\end{lstlisting}
The reason for adding this annotation is that as the original framework code stands, there are no checks in 
functions that use the Pool structure that it is actually properly initialized. If a function uses an uninitialized Pool 
structure, the protocol evaluation could fail through an exception or could have some other undesirable behavior 
that may leak information to an attacker. By adding the above annotation as well as annotations of the form 
\lstinline{#param(i) initialized_pool:integrity} to each of the functions that uses the Pool, we are 
able to statically check for cases where an uninitialized Pool structure is used.

A similar annotation checks for initialization of the \lstinline{ServiceConfig} structure.
\begin{lstlisting}
	#return initialized_service:integrity
	ServiceConfig* SetupService(Pool *dst ...);
\end{lstlisting}
We add corresponding annotations to each function that uses the $\m{ServiceConfig}$ to only accept an initialized configuration. 

\subsection{Obliv-C Library}
\label{sec:obliv-c-lib}
We demonstrate checking a secrecy property. The annotation we provide is a 
\lstinline{oblivious} label. This policy is added to the \lstinline{OblivBit}
structure in the Obliv-C library \cite{oblivc}. The secrecy label is used to check
that oblivious data is only being handled by functions that are
trusted not to leak information about the oblivious data within the
Obliv-C library. 
The integrity label is needed to check that only trusted APIs are
allowed to generate oblivious data and update oblivious data
structures. 

We add annotations the \lstinline{OblivBit} data structure as follows:
\begin{lstlisting}
	#requires oblivious:secrecy
	OblivBit* data;
\end{lstlisting}
Functions that are trusted to process oblivious data are given an
 annotation that it is allowed to accept the oblivious data as an
 argument. See the example below.
\begin{lstlisting}
	#param oblivious:secrecy
	void __obliv_c__copyBit(OblivBit* dest,
		const OblivBit* src)
\end{lstlisting}
The use of this secrecy label also enforces the integrity
of oblivious data structures. This is because
unannotated data is assumed to have the special label $\un$, 
so it cannot be used to update structures storing data labeled with $\flnTag{oblivious}$.
We did not find any policy violations in the Obliv-C library.

\subsection{Kerberos ASN.1 Encoder}
This case study concerns enforcing an API sequencing policy in
a widely-used open-source program, Kerberos. More concretely, we consider
the Kerberos ASN.1 Encoder which makes use of two functions \lstinline{free_atype} and
\lstinline{free_atype_ptr} that work in tandem to free
memory allocated to Kerberos C objects. Objects must first be freed by the \lstinline{free_atype}
function before they are freed by the \lstinline{free_atype_ptr} function. We provide annotations for these
functions to check for violations of this sequenced behavior.

The \lstinline{free_atype} function takes as an argument a pointer to an object along
with the struct \lstinline{atype_info} containing a description of the object. We modify
the function to return this \lstinline{atype_info} struct.
\begin{lstlisting}[language=C, tabsize=2]
	const struct atype_info* free_atype(const struct atype_info *a, void *val)
\end{lstlisting}

In the function body, the appropriate freeing routine is called based
on \lstinline{atype_info}'s \lstinline{type} member. The freeing
routine can take the form of recursive calls to
\lstinline{free_atype}, calls to other specialized freeing functions,
or calls to the second freeing function \lstinline{free_atype_ptr}:

\begin{lstlisting}
	static void free_atype_ptr(const struct atype_info *a, void *val)
\end{lstlisting}

This function is constructed similarly to \lstinline{free_atype} except that it works only 
over pointer-type objects and only recursively calls itself.

We add the following annotations to those functions:
\begin{lstlisting}
#param(1) freebase:secrecy
#return freeptr:secrecy
const struct atype_info* free_atype(const struct atype_info *a, void *val)
@$\dots$@
#param(1) freeptr:secrecy
static void free_atype_ptr(const struct atype_info *a, void *val)
\end{lstlisting}

We add the following annotation to \lstinline{atype_info} structs:
\begin{lstlisting}
#requires freebase:secrecy then freeptr:secrecy
const struct atype_info* x;
\end{lstlisting}
The annotations above will check that the calling sequence invariant is maintained; 
no violations were found.

\subsection{Oblivious Queue Data Structure}
\sysname can be used to check granular
invariants of data structures. This case study
emphasizes the modularity of our approach. 
The case study is on an oblivious queue (oqueue) library~\cite{absentminded}.
The data structure is hierarchical and operations on this data structure
should maintain the following invariants \cite{Circuits:Zahur:2013}:
(1) The buffer at level $i$ has $5 \times 2^i$ data blocks. (2) The
number of non-empty blocks at buffer level $i$ is a multiple of
$2^i$. (3) Each level maintains a counter storing the next available
empty block. (4) When the buffer at level $i$ is full the last block is shifted down to level $i+1$.

Invariants (1) and (3) can be violated through incorrect modification
to the counter or the oqueue, so we should check that
modifications are only done by trusted functions. Therefore, we use the labels
$\flnTag{push\_protect}$,
$\flnTag{pop\_protect}$, and $ \flnTag{oqueue\_tail}$. 
To modify where the next element is placed in the oqueue, only functions
that are trusted to modify data labeled with $\flnTag{oqueue\_tail}$
can do so. Likewise, the binary counters \lstinline{push_time} and
\lstinline{pop_time} should only be modified within the context of the
push and pop operations. 

Each field is given an integrity label to protect its access. 
One example annotation on the oqueue data structure is:
\begin{lstlisting}
	#requires {.push_time:int} push_protect:integrity
	oqueue* this_layer;
\end{lstlisting}

Invariants (2) and (4) are checked at run time by conditional statement in the
API code. We add two sets of annotations (symmetric
for the push and pop functions) to model the checks for
ensuring that data is shifted to a lower level or raised to a
higher level in the queue when the current oqueue level is full or
empty.  We use the following label sequence policy: 
$\flnTag{oqueue\_has\_child} \rightarrow
  \flnTag{oqueue\_push\_ready}$. 
Considering just the conditional push case below,
these labels form an endorsement sequence on the oqueue data
structure. First, we endorse that the oqueue has a child via the
\lstinline{has_child} helper function that can check for the existence of a
child, then we endorse that the oqueue is ready to be pushed to via the
\lstinline{is_push_time} helper function that is trusted to access the
oqueue's $\flnTag{push\_protect}$-labeled variable. 
\begin{lstlisting}[language=C, tabsize=2]
// oqueue @$\rightarrow$@ oqueue_has_child
layer1 = has_child(layer);
if (layer1) {
	// oqueue_has_child @$\rightarrow$@ oqueue_push_ready
	layer2 = is_push_time(layer1);
	if (layer2) {
		// oqueue_push_ready @$\rightarrow$@ oqueue
		layer3 = tail_is_full(layer2);
...
\end{lstlisting}
This illustrates an
instance of a compositional check that our annotations are
providing; not only are we checking for the existence of a particular
endorsement sequence, but we also check that along the way, the
functions that act on our oqueue to provide those endorsements are only
the functions that we trust. Finally, if all of these conditions are
met, data is allowed to be shifted down to the lower level of the
oqueue.

For the case of the conditional pop, the sequence of endorsement
operations is similar, but we have another, higher-level,
compositional guarantee. We add an annotation to the pop function
itself labeling its \lstinline{layer} parameter (which is an instance of
the oqueue) as $\flnTag{oqueue\_check\_empty}$. Thus, we enforce that
only an oqueue that has been checked for emptiness can be used with
the pop operation. 

This also demonstrates the modularity of \sysname. We are able to
provide annotations at many ``levels" of the source code; in the above
example there is a general check that the oqueue is
non-empty before the \lstinline{oqueue_pop} function is entered. Then
within the body of the \lstinline{oqueue_pop} function there are additional
annotations that ``refine" our knowledge about the state of the
oqueue. These functions could come from the same library or across
several libraries from different developers. \sysname allows
policies to be collectively checked across different modules.

\subsection{Performance Evaluation}

\begin{figure}
	\includegraphics[width=0.5\textwidth]{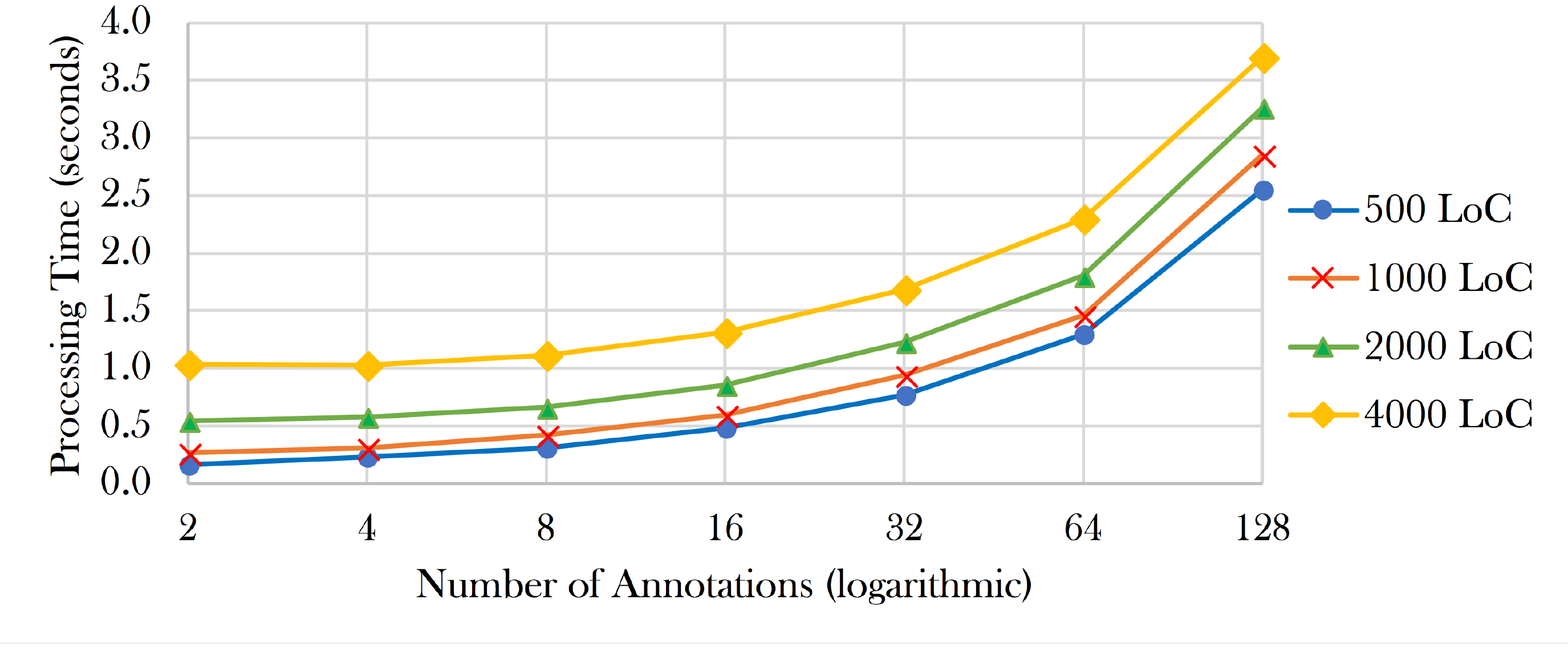}
	\includegraphics[width=0.5\textwidth]{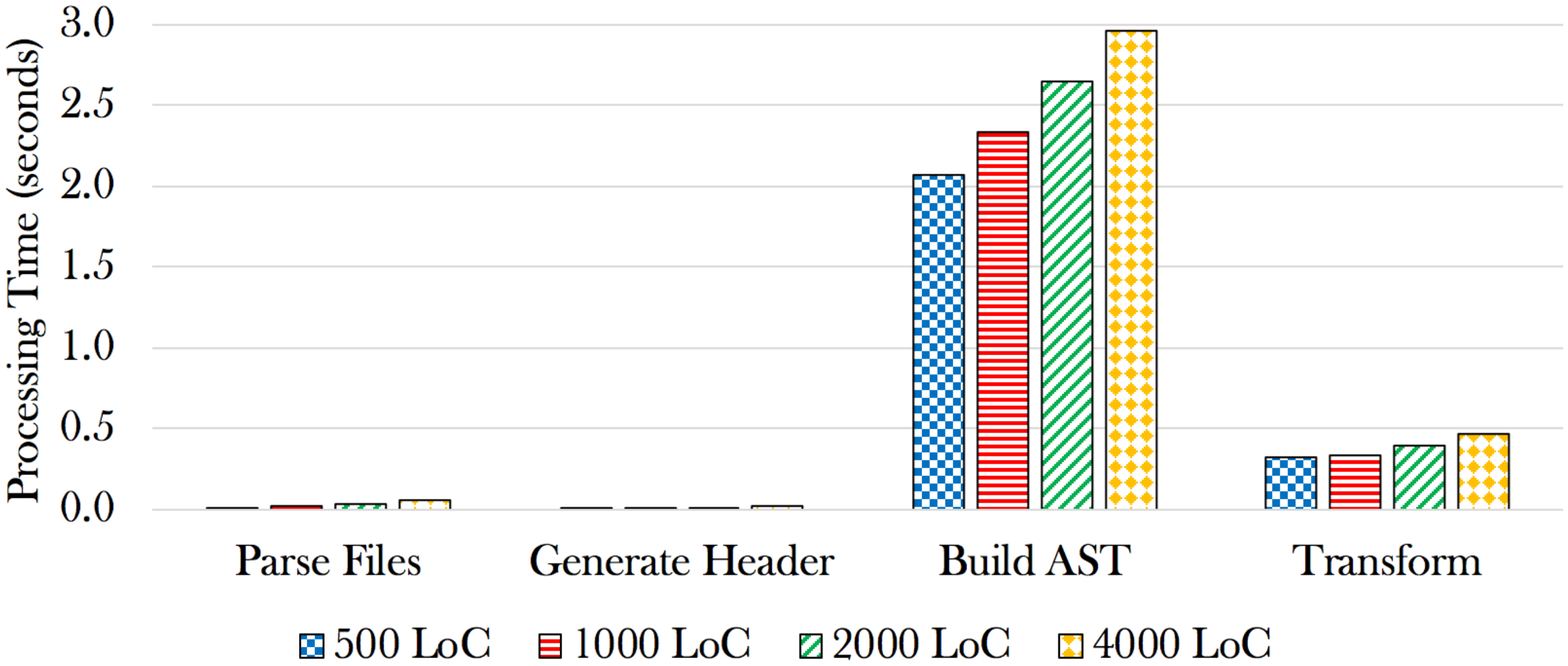}
	\caption{Processing Time vs Number of Annotations and Time Per Processing Stage}
	\label{rt-vs-loa}
	\label{t-vs-st}
\end{figure}

We evaluate the performance of \sysname on synthetically generated C
programs and annotations. The generation algorithm targets a specific
number of lines of C code and annotations. The generated annotations
include all three types of policies, \lstinline{#requires},
\lstinline{#param}, and \lstinline{#return}, in combination with
different primitive types, pointers, and structures. To elicit
worse-case behavior, the generated annotations are predominantly
sequencing annotations constructed from a set of templates
representative of common API patterns from our
case studies. The C programs are similarly generated from 
templates of our case studies. Experiments were run on a single-core 
Ubuntu 18.04 VM with 1GB of RAM, on a 2.7 GHz Intel Core i7 machine. 

First, we evaluate how the runtime of \sysname is affected by the
program size and the number of annotations. The results are summarized
in Figure~\ref{rt-vs-loa}. We evaluate the runtime of four C programs,
with 500, 1000, 2000, and 4000 lines of code respectively. For each
program, we increase the number of annotations, up to 128
annotations. 
\sysname is efficient: all the the experiments finish within
4 seconds. \sysname is intended to be run on individual modules
(libraries) that rarely exceed a couple thousand lines of code
unless they are automatically generated, like the SCDtoObliv circuit
file (14,000 LoC). Even then, \sysname finishes
within 6 seconds. 

To better understand how each component of \sysname contributes to the
processing time, we profile execution time for each part. The
results are summarized in Figure~\ref{t-vs-st}, which shows a
cross-section of Figure~\ref{rt-vs-loa} with only the samples with 128
annotations.  The four stages
of \sysname  are: ``Parse Files," where annotations are retreived;
 ``Generate Header," where the
header file containing type and structure
definitions corresponding to the transformed types is generated;
``Build AST," where the C parsing library, pycparser
\cite{pycparser} builds an abstract syntax tree from the source code;
``Transform," where the implementation of the translation
algorithm of \sysname runs.  
Most of the stages take a negligible amount of time compared to the stages Build AST and 
Transform. The majority of the overhead is due to the C parsing library we use.

We do not present the overhead added to compilation of transformed programs
because developers do not need to compile the transformed programs. Once
the transformed programs have been checked they can be discarded.


\section{Related Work}
\label{sec:related}
Related work for \sysname spans four research areas: C program
analysis tools, information flow types, linear types (type states),
and cryptographic protocol verification. 

\Paragraph{Tools for Analyzing C Programs.}
Many vulnerabilities stem from poorly written C
programs. As a result, many 
C program analysis tools have been built. Several C model
checkers~(e.g. \cite{SLAM:Ball:2002,Blast:Beyer:2007, CBMC:Clarke:2004, SeaHorn:Gurfinkel:2015, CPAChecker:Beyer:2011}) 
and program analysis
tools~\cite{Astree:Cousot:2005, FramaC:Cuoq:2012, KLEE:Cadar:2008, DrC:Machiry:2017}
are open source and readily downloadable.
Our policies can be encoded as state machines and checked by some of
the tools mentioned above, which are general purpose and
more powerful than ours but are not tuned for analyzing API usage
patterns like ours. Further, our tool is  backed by a
sophisticated information flow type system.

Closest to our work is CQual~\cite{CQual:Foster:1999}. Both 
theoretical foundations and practical applications of type qualifiers
have been investigated~\cite{Evans:1996,Scrash:Broadwell:2003,Zhang:2002,Chin:2005,Thesis:Foster:2002}. 
%
%
Our annotations are type qualifiers and our work and prior work on
type qualifiers share the same goal of producing a lightweight tool
to check simple secrecy and integrity properties. We additionally
support sequencing of atomic qualifiers, which is a novel
contribution. Further, we prove noninterference of our core calculus,
which other systems did not.  Another difference is that CQual relies
on a custom type checker, while our policies are translated
and checked using C's type system. 
%
Finally, CQual supports qualifier inference, which can reduce the
annotation burden on programmers. We do not have general qualifier
inference because to do so would be tantamount to constructing a type
checker for our system, which would defeat our goal of relying on a C
compiler's type checker.

\Paragraph{Information Flow Type Systems.}
Information flow type systems is a well-studied field. Several
projects have extended existing languages to include information flow
types (e.g.,~\cite{Pottier:2002, JFlow:Myers:1999}). Sabelfeld et
al. provided a comprehensive summary in their survey
paper~\cite{Survey:Sabelfeld:2003}. Most information flow type systems
do not deal with declassification. At most, they will include a
``declassify" primitive to allow information downgrade, similar to our
relabel operations. However, we have not seen work where the sequence
of labels is part of the information flow type like ours, except for
JRIF~\cite{Kozyri:2016}. As a result, we are able to prove a
noninterference theorem that implies API sequencing.
JRIF uses finite state automata to enforce
sequencing policies, which can entail a large runtime overhead.
%
%
%

Other projects that target enforcement of sequencing policies
similar to those we have presented rely on runtime
monitoring, not types~\cite{Chong:2008, Vachharajani:2004, Beringer:2012, Barany:2017, 
Broberg:2013, Broberg:2010}.

\Paragraph{Linear Types and Typestate.}
Our sequencing policies are tangentially related to other type systems
that aim to enforce API contracts. This line of work includes typestate and linear
types~\cite{Typestate:Strom:1986,Aldrich:2009,Vault:DeLine:2001}. The
idea is that by
using typestate/linear types one can model and check behaviors such as
files being opened and closed in a balanced manner \cite{Aldrich:2009}. However, unlike in typestate
the types on variables don't change in our system; when a part of a 
policy is fulfilled there is a new variable that ``takes on" the rest of the
policy. 

\Paragraph{Cryptographic Protocol Verification.}
Several projects have proposed languages to make verification of
cryptographic programs more feasible: Jasmine, Cryptol, Vale, Dafny,
F*, and Idris~\cite{Jasmine:Almeida:2017,Cryptol:Lewis:2003,
  Vale:Bond:2017,Dafny:Leino:2010,Swamy:2011, Idris:Brady:2013}, to
name a few.  There are also general tools for verifying
cryptographic protocols~\cite{Bhargavan:2010,Bhargavan:2008,
  Blanchet:2001,Scyther:Cremers:2008,Bengtson:2011,Cortier:2005, 
  Costanzo:2016}.
These languages and tools are general purpose and more powerful than
ours. However, none of these tools directly support checking properties of C
implementations of cryptographic libraries like we do.
Bhargavan et al.'s work uses refinement types to achieve
similar goals as ours~\cite{Bhargavan:2010}.  The annotated types can be
viewed as refinement types: $\{x:\tau\,|\, \pol\}$, where the policy is
encoded as a predicate. Their system is more powerful, however it only supports F\# code.

\section{Conclusion}
\label{sec:conclusion}
We have described \sysname, a lightweight annotation system for C that
allows programmers to specify secrecy, integrity, and sequencing
policies for their applications.  \sysname is particularly useful in
identifying errors at compile time that violate high-level 
policies in cryptographic libraries and
applications.  We have modeled our system formally and
proved a noninterference guarantee.  Finally, we have shown through a
set of detailed case studies that \sysname can express and check
complex policies for large bodies of C code and finds subtle
implementation bugs.

\bibliographystyle{plain}
\bibliography{paper}

\appendix

\section{Summary of \minic: A Core Calculus with Nominal Typing}
\label{app:minic}
We summarize the syntax, operational semantics, and typing rules for
\minic in this section. \minic represents the fragment of C that
\sysname works with.
\subsection{Syntax}
\[
\begin{array}{lcll}
\textit{Basic Types} & \pi & \bnfdef & T \bnfalt \tpint \bnfalt \tunit\bnfalt
                                       \tptr(\tau) 
\\
\textit{Types} & \tau & \bnfdef & \pi \bnfalt \pi_1\rightarrow\pi_2
\\
\textit{Values} & v & \bnfdef & x \bnfalt n \bnfalt ()\bnfalt
                                (T)\{v_1,\cdots, v_k\} \bnfalt \eloc
                                \bnfalt f
\\
\textit{Expressions} & e & \bnfdef &
 v \bnfalt e_1 \bop e_2\bnfalt  v\, e 
\bnfalt \elet\, x=e_1\, \ein\, e_2
\bnfalt v.i 
\\
& & \bnfalt & \eif\, v~ \ethen\, e_1\, \eelse \, e_2
\bnfalt  \enew(e) \bnfalt v\, :=\, e \bnfalt *v
\\
\textit{Type def. ctx} & D & \bnfdef &
\cdot\bnfalt D, T \mapsto \mathsf{struct}\ T\ \{\pi_1, \cdots, \pi_k\} 
\\  
\textit{Func typing ctx} & F & \bnfdef & \cdot \bnfalt F, f : \pi_1\rightarrow\pi_2
\\ 
\textit{Code ctx} & \Psi & \bnfdef & \cdot \bnfalt \Psi, f(x) = e
\\
\textit{Typing ctx} & \Gamma & \bnfdef & \cdot \bnfalt \Gamma, x: \tau
\\ 
\textit{Store} & \sigma & \bnfdef & \cdot \bnfalt \sigma, \eloc\mapsto v
\\ 
\textit{Store Typing} & \storetp & \bnfdef & \cdot \bnfalt \storetp, \eloc: \tau \\
\textit{Eval Ctx} & E & \bnfdef & \elet\, x=\hole \, \ein\,  e
  \bnfalt \enew(\hole)    \bnfalt v\, \hole
   \bnfalt v\, :=\,\hole \bnfalt \hole \bop e
\bnfalt v \bop \hole
\end{array}
\]

\subsection{Operational Semantics}



\begin{mathpar}
\inferrule*[right=N-E-Context]{
\Psi \vdash \sigma\sepidx{} e \stepsto  \sigma' \sepidx{}e'
}{
\Psi \vdash \sigma\sepidx{} E[e] \stepsto  \sigma' \sepidx{}E[e']
}
\and
\inferrule*[right=N-E-Bop]{
  v_1 \bop v_2 = v
}{
\Psi \vdash \sigma\sepidx{i}  v_1 \bop v_2 \stepsto  \sigma\sepidx{i}
v
}
\and
\inferrule*[right=N-E-Deref]{
}{
\Psi \vdash \sigma\sepidx{}  *\eloc \stepsto  \sigma\sepidx{}
\sigma(\eloc) 
}
\and
\inferrule*[right=N-E-Assign]{
}{
\Psi \vdash \sigma\sepidx{} \eloc :=  v 
  \stepsto  \sigma[\eloc\mapsto  v]\sepidx{} ()
}
\and
\inferrule*[right=N-E-New]{
\eloc \ \mathit{fresh}
}{
\Psi \vdash \sigma\sepidx{} \enew( v) 
\stepsto  \sigma[\eloc\mapsto v]\sepidx{} \eloc
}
\and
\inferrule*[right=N-E-Field]{
}{
\Psi \vdash \sigma\sepidx{} (\{ v_1,\cdots,  v_n\}).i \stepsto  \sigma \sepidx{}  v_i
}
\and
\inferrule*[right=N-E-App]{
\Psi = \Psi', f(x)=e
}{
\Psi \vdash \sigma\sepidx{} f\  v \stepsto  \sigma\sepidx{}
e[v/x]
}
\and
\inferrule*[right=N-E-Let]{
}{
\Psi \vdash \sigma\sepidx{} \elet\ x=  v\ \ein\ e \stepsto  \sigma\sepidx{} e[v/x]
}
\and
\inferrule*[right=N-E-If-True]{
n > 0 
}{
\Psi \vdash \sigma\sepidx{} \eif\ n\ \ethen\  e_1\ \eelse\  e_2  \stepsto  \sigma\sepidx{}   e_1
}
\and
\inferrule*[right=N-E-If-False]{
}{
\Psi \vdash \sigma\sepidx{} \eif\ 0\ \ethen\  e_1\ \eelse\  e_2  \stepsto  \sigma\sepidx{}  e_2
}
\end{mathpar}

\subsection{Typing Rules}


\noindent\framebox{$D;F;\Sigma; \Gamma \vdash e : \tau$}
\begin{mathpar}
\inferrule*[right=N-T-Int]{
}{ 
  D;F; \Sigma; \Gamma  \vdash n : \tpint
}
\and
\inferrule*[right=N-T-Loc]{
}{ 
  D;F; \Sigma; \Gamma  \vdash \eloc: \storetp(\eloc) 
}
\and
\inferrule*[right=N-T-Var]{
}{ 
  D;F; \Sigma; \Gamma  \vdash x : \Gamma(x) 
}
\and
\inferrule*[right=N-T-Fun]{
}{ 
  D;F; \Sigma; \Gamma  \vdash f : F(f) 
}
\and
\inferrule*[right=N-T-Bop]{
  D;F; \Sigma; \Gamma \vdash e_1:  \tpint
\\   D;F; \Sigma; \Gamma \vdash e_2:  \tpint
}{ 
  D;F; \Sigma; \Gamma \vdash e_1\bop e_2 : \tpint
}
\end{mathpar}
\begin{mathpar}
\inferrule*[right=N-T-Struct]{
T\mapsto \tstruct\ T\ \{\tau_1,\cdots, \tau_n\} \in D
\\\forall i,  D; F;\Sigma; \Gamma  \vdash v_i: \tau_i
}{ 
  D;F; \Sigma; \Gamma \vdash (T)\ \{v_1,\cdots,v_n\} :
  T
}
\and
\inferrule*[right=N-T-Field]{
  D;F; \Sigma; \Gamma  \vdash v : T
\\ T \mapsto \tstruct\ T \{\tau_1,\cdots, \tau_n\} \in D
}{ 
  D;F; \Sigma; \Gamma  \vdash v.i : \tau_i 
}
\and
\inferrule*[right=N-T-New]{
  D;F; \Sigma; \Gamma  \vdash e : \tau
}{ 
  D;F; \Sigma; \Gamma  \vdash \enew(e): \tptr(\tau)
}
\and
\inferrule*[right=N-T-Deref]{
  D;F; \Sigma; \Gamma  \vdash v : \tptr(\tau)
}{ 
  D;F; \Sigma; \Gamma  \vdash *v: \tau
}
\and
\inferrule*[right=N-T-Assign]{
  D;F; \Sigma; \Gamma  \vdash v : \tptr(\tau)
 \\  D;F; \Sigma; \Gamma  \vdash e : \tau
}{ 
  D;F; \Sigma; \Gamma  \vdash v:= e: \tunit
}
\and
\inferrule*[right=N-T-App]{
  D;F; \Sigma; \Gamma  \vdash v :  \pi_1 \rightarrow \pi_2
\\  D;F; \Sigma; \Gamma  \vdash e:  \pi_1
}{ 
  D;F; \Sigma; \Gamma  \vdash v\ e: \pi_2
}
\and
\inferrule*[right=N-T-Let]{
  D;F; \Sigma; \Gamma  \vdash e_1 : \tau_1
\\   D;F; \Sigma; \Gamma, x:\tau_1  \vdash e_2 : \tau_2
}{ 
  D;F; \Sigma; \Gamma  \vdash \elet\ x:\tau_1=e_1\ \ein\ e_2: \tau_2
}
\and
\inferrule*[right=N-T-If]{
  D;F; \Sigma; \Gamma  \vdash v : \tpint
\\   D;F; \Sigma; \Gamma  \vdash e_1 : \tau
\\   D;F; \Sigma; \Gamma \vdash e_2 : \tau
}{ 
  D;F; \Sigma; \Gamma  \vdash \eif\ v\ \ethen\ e_1\ \eelse\
  e_2 : \tau
}
\end{mathpar}

\section{Definitions and Meta-theory for \langname}
\label{app:polc}

\subsection{\bf \langname Operational Semantics via Pairs}

The operational semantic rules for \langname include all the rules for
\minic and the following rule for relabeling. 
\[
\inferrule*[right=P-E-Relab]{
}{
\Psi \vdash \sigma\sepidx{}  \erelab(\pol'\Leftarrow\pol) \bv
\stepsto  \sigma\sepidx{} \bv
}
\]

\subsection{Extension of Syntax with Pairs}

To prove noninterference, we define a set of operational semantic
rules that allow expression pairs, which effectively represent two
executions differing in secrets. 
The syntax for the extended values and expressions are summarized below.
\[
\begin{array}{llcl}
\textit{Ext. Values} & \extV & \bnfdef & 
v \bnfalt (T)\{\extV_1,\cdots, \extV_k\} \bnfalt \epair{v_1}{v_2} 
\\
\textit{Ext. Exprs.} & \extE & \bnfdef &  \extV \bnfalt \extV\, \extE
					\bnfalt \elet\, x=\extE_1\, \ein\, \extE_2 \bnfalt \extV.i \\
					& & \bnfalt & \eif\, \extV_1\, \ethen\, \extE_2\, \eelse \, \extE_3 \\
					& & \bnfalt & \enew(\extE) \bnfalt \extV_1\, :=\, \extE_2 \bnfalt *\extV
 					\bnfalt \epair{e_1}{e_2}
\\
\textit{Stored Values} & v^s & \bnfdef & 
\extV\bnfalt \epair{\bullet}{v_2}  \bnfalt \epair{v_1}{\bullet} 
\\
\textit{Store} & \sigma &\bnfdef & \cdot \bnfalt \sigma, \eloc\mapsto v^s
\end{array}
\]
We write $\extV$ to denote values that may include pairs and $\extE$
to denote expressions that may include pairs. The definitions disallow
nested pairs. For the rest of this section, when convient and
clear from the context, we
will write $v$ and $e$ to denote values and expressions that may
contain pairs respectively.

\subsection{Paired Operational Semantics} 
The operational semantics is summarized in
Figure~\ref{fig:op-semantics-extended}. Below are auxiliary
definitions used by those rules.

\[
\begin{array}{lcl}
	\enew~ v = v & \enew_1~ v = \epair{v}{\bullet} & \enew_2~ v = \epair{\bullet}{v} \\
	\eread ~v = v & \eread_1 ~v = \proj{v}{1} & \eread_2 ~v = \proj{v}{2} \\
	\eupdate ~v~v' = v' & \eupdate_1~v~v' = \epair{v'}{\proj{v}{2}} & 
		\eupdate_2~v~v' = \epair{\proj{v}{1}}{v'} \\
\end{array}
\]

\[
  \begin{array}{r l}
  \proj{x}{i} =				& x \\
  \proj{n}{i} =				& n \\
  \proj{()}{i} =				& () \\
  \proj{(T)\{v_1,\cdots, v_k\}}{i} =
  						& (T)\{\proj{v_1}{i},\cdots, \proj{v_k}{i}\} \\
  \proj{\eloc}{i} = 			& \eloc \\
  \proj{f}{i} = 				& f \\
  \proj{(T)\{v_1^+,\cdots, v_k^+\}}{i} =
  						&  (T)\{\proj{v_1^+}{i},\cdots, \proj{v_k^+}{i}\} \\
  \proj{\epair{v_1}{v_2}}{i} =	& v_i \\
  \proj{v^+~e^+}{i} =			&  \proj{v^+}{i}~\proj{e^+}{i} \\
  \proj{\elet\, x=e^+_1\, \ein\, e^+_2}{i} = 
  						&  \elet~x=\proj{e^+_1}{i}~\ein~\proj{e^+_2}{i} \\
  \proj{v^+.j}{i} =				&  \proj{v^+}{i}.j \\
  \proj{\eif~v^+~\ethen~e^+_1~\eelse~e^+_2}{i} =	
  						&  \eif~\proj{v^+}{i}~\ethen~\proj{e^+_1}{i}~\eelse~\proj{e^+_2}{i} \\
  \proj{\enew(e^+)}{i} =		&  \enew(\proj{e^+}{i}) \\
  \proj{v^+ := e^+}{i} =		&  \proj{v^+}{i} := \proj{e^+}{i} \\
  \proj{*v^+}{i} =				&  *\proj{v^+}{i} \\
  \proj{\epair{e_1}{e_2}}{i} =	&  e_i\\
  \end{array}
\]

\[
  \begin{array}{r l}
  x[x \Leftarrow \extV] = 				& \extV \\
  y[x \Leftarrow \extV] = 				& y \\
  n[x \Leftarrow \extV] = 				& n \\
  ()[x \Leftarrow \extV] = 				& () \\
  ((T)\{v_1, \cdots, v_n\})[x \Leftarrow \extV] =  
  								& (T)\{v_1[x \Leftarrow \extV], \cdots, 
									v_n[x \Leftarrow \extV]\} \\
  \eloc[x \Leftarrow \extV] =  			& \eloc \\
  f [x \Leftarrow \extV] =  				& f \\
  ((T)\{\extV_1, \cdots, \extV_n\})[x \Leftarrow \extV] =  
  								& (T)\{\extV_1[x \Leftarrow \extV], \cdots, 
									\extV_n[x \Leftarrow \extV]\} \\
  \epair{v_1}{v_2}[x \Leftarrow \extV] = 	& \epair{v_1[x \Leftarrow \proj{\extV}{1}]}
  										{v_2[x \Leftarrow \proj{\extV}{2}]} \\
  (\extV~\extE)[x \Leftarrow \extV_1] =  	& \extV[x \Leftarrow \extV_1] ~ \extE[x \Leftarrow \extV_1]\\
  (\elet\, y=\extE_1\, \ein\, \extE_2)[x \Leftarrow \extV] =  
  								& \elet\, y=\extE_1[x \Leftarrow \extV]\, \ein\, \extE_2[x \Leftarrow \extV]~ \\
  (\extV.i)[x \Leftarrow \extV_1] =  				& (\extV[x \Leftarrow \extV_1]).i \\
  (\eif~\extV~\ethen~\extE_1~\eelse~\extE_2)[x \Leftarrow \extV_1] =  
  								& \eif~\extV[x \Leftarrow \extV_1]~\ethen~
									\extE_1[x \Leftarrow \extV_1] \\
  								~ &\eelse~\extE_2[x \Leftarrow \extV_1] \\
  (\enew(\extE))[x \Leftarrow \extV_1] =  	& \enew(\extE[x \Leftarrow \extV_1]) \\
  (\extV := \extE)[x \Leftarrow \extV_1] =  
  								& \extV[x \Leftarrow \extV_1] := 
  								\extE[x \Leftarrow \extV_1] \\
  (*\extV)[x \Leftarrow \extV_1]  =		&  *(\extV[x \Leftarrow \extV_1]) \\
  \epair{e_1}{e_2}[x \Leftarrow \extV] =	& \epair{e_1[x \Leftarrow \proj{\extV}{1}]}
  										{e_2[x \Leftarrow \proj{\extV}{2}]}
  \end{array}
\]

\subsection{Soundness and Completeness of the Paired Semantics}
We first define projection relations.

\begin{mathpar}
\inferrule
	{~}
	{\proj{\cdot}{i} = \cdot}
\and
\inferrule
	{\proj{v^s}{i} = \bullet}
	{\proj{\sigma, \eloc \mapsto v^s}{i} = \proj{\sigma}{i}}
\and
\inferrule
	{\proj{v^s}{i} \neq \bullet}
	{\proj{\sigma, \eloc \mapsto v^s}{i} = \proj{\sigma}{i}, \eloc \mapsto \proj{v^s}{i}}
\end{mathpar}

\begin{mathpar}
\proj{\sigma \sepidx{} \extE}{i} = \proj{\sigma}{i} \sepidx{} \proj{\extE}{i}
\end{mathpar}

Next we define a number of well-formedness invariants of the runtime
configuration $\sigma \sepidx{i} e$. 

\begin{defn}[Defined Pointers]
We say that $\eloc$ is defined in $\sigma$ for execution $i$ if the
following holds 
\begin{itemize}
\item $i=\bullet$ implies $\forall j\in\{1,2\}$,
  $\proj{\sigma(\eloc)}{j} = v$
\item $i\in\{1,2\}$ implies $\proj{\sigma(\eloc)}{i} = v$
\end{itemize}
\end{defn}

\begin{defn}[In Scope Pointers]
We say that $\eloc$ is scope of $\sigma \sepidx{i} \extE$ where 
$i\in\{1,2\}$ \\if
$\eloc \in (
			(
				\bigcup_{\eloc'\in\m{dom}(\sigma)} \m{fl}(\proj{\sigma(\eloc')}{i})
			) 
			\cup 
				\m{fl}(\proj{\extE}{i})
		)$
\end{defn}

\begin{defn}[Closed Configurations]
We say that $\sigma \sepidx{i} \extE$  is closed if all of the
following holds
\begin{itemize}
\item $i=\bullet$ implies  $\forall i\in\{1,2\}$, for all $\eloc$
s.t. $\eloc$ is in sope of $\sigma \sepidx{i} \extE$, $\eloc$ is
defined in $\sigma$ for execution $i$.

\item $i\in\{1,2\}$ implies for all $\eloc$
s.t. $\eloc$ is in sope of $\sigma \sepidx{i} \extE$, $\eloc$ is
defined in $\sigma$ for execution $i$.
\end{itemize}
\end{defn}

\begin{figure}[t!]
\flushleft
\noindent{\framebox{$\Psi \vdash \sigma\sepidx{i} e \stepsto
    \sigma'\sepidx{i} e'$}}
\begin{mathpar}
\inferrule*[right=P-E-Context]{
\Psi \vdash \sigma\sepidx{i} e \stepsto  \sigma' \sepidx{i}e'
}{
\Psi \vdash \sigma\sepidx{i} E[e] \stepsto  \sigma' \sepidx{i}E[e']
}
\and
\inferrule*[right=P-E-Pair]{
\Psi \vdash \sigma\sepidx{i} e_i \stepsto  \sigma' \sepidx{i}e'_i
\\ e_j = e'_j
\\ \{i,j\} = \{1,2\}
}{
\Psi \vdash \sigma\sepidx{} \epair{e_1}{e_2} \stepsto  \sigma'
\sepidx{} \epair{e'_1}{e'_2}
}
\and
\inferrule*[right=P-E-Lift-App]{
}{
\Psi \vdash \sigma\sepidx{} \epair{v_1}{v_2}\, v \stepsto
\sigma\sepidx{} \epair{v_1\, \proj{v}{1}}{v_2\, \proj{v}{2}}
}
\and
\inferrule*[right=P-E-Lift-Deref]{
}{
\Psi \vdash \sigma\sepidx{} *\epair{v_1}{v_2} \stepsto  \sigma\sepidx{} \epair{*v_1}{*v_2}
}
\and
\inferrule*[right=P-E-Lift-Assign]{
}{
\Psi \vdash \sigma \sepidx{}\epair{v_1}{v_2} := v 
 \stepsto  \sigma\sepidx{} \epair{v_1  :=\proj{v}{1} }{v_2  := \proj{v}{2} }
}
\and
\inferrule*[right=P-E-Lift-Field]{
}{
\Psi \vdash \sigma\sepidx{} \epair{v_1}{v_2}.j \stepsto
\sigma\sepidx{}  \epair{v_1.j}{v_2.j}
}
\and
\inferrule*[right=P-E-Lift-If]{
}{
\Psi \vdash \sigma\sepidx{} \eif\ \epair{v_1}{v_2}\ \ethen\ v_t\
\eelse\ v_f
\\\\\stepsto  \sigma\sepidx{} 
\langle\eif\ v_1\ \ethen\ \proj{v_t}{1}\ \eelse\ \proj{v_f}{1}
~|~ \eif\ v_2\ \ethen\ \proj{v_t}{2}\ \eelse\ \proj{v_f}{2}\rangle
}
\and
\inferrule*[right=P-E-Lift-Relab]{
}{
\Psi \vdash \sigma\sepidx{}
\erelab(\pol'\Leftarrow\pol)\epair{v_1}{v_2} 
\\ \stepsto  \sigma\sepidx{} \epair{\erelab(\pol'\Leftarrow\pol)v_1}{\erelab(\pol'\Leftarrow\pol)v_2}
}
\and
%
\inferrule*[right=P-E-Bop]{
  v_1 \bop v_2 = v
}{
\Psi \vdash \sigma\sepidx{}  v_1 \bop v_2 \stepsto  \sigma\sepidx{}
v
}
\and
\inferrule*[right=P-E-Relab]{
}{
\Psi \vdash \sigma\sepidx{i}  \erelab(\pol'\Leftarrow\pol) v
\stepsto  \sigma\sepidx{i} v
}
\and
\inferrule*[right=P-E-Deref]{
}{
\Psi \vdash \sigma\sepidx{i}  *\eloc \stepsto  \sigma\sepidx{i}
\eread_i\ \sigma(\eloc) 
}
\and
\inferrule*[right=P-E-Assign]{
}{
\Psi \vdash \sigma\sepidx{i} \eloc:= v 
  \stepsto  \sigma[\eloc\mapsto \eupdate_i\ \sigma(\eloc)\
  v]\sepidx{i} ()
}
\and
\inferrule*[right=P-E-New]{
\eloc \ \mathit{fresh}
}{
\Psi \vdash \sigma\sepidx{i} \enew(v) 
\stepsto  \sigma[\eloc\mapsto \enew_{i}~v]\sepidx{i} \eloc
}
\and
\inferrule*[right=P-E-Field]{
}{
\Psi \vdash \sigma\sepidx{i} (\{v_1,\cdots, v_n\}).j \stepsto  \sigma \sepidx{i} v_j
}
\and
\inferrule*[right=P-E-App]{
\Psi = \Psi', f(x)=e
}{
\Psi \vdash \sigma\sepidx{i} f\ v \stepsto  \sigma\sepidx{i}
e[x\Leftarrow v][f\Leftarrow f(x)=e]
}
\quad
\inferrule*[right=P-E-Let]{
}{
\Psi \vdash \sigma\sepidx{i} \elet\ x= v\ \ein\ e \stepsto
\sigma\sepidx{i} e[x\Leftarrow v]
}
\and
\inferrule*[right=P-E-If-True]{
n > 0 
}{
\Psi \vdash \sigma\sepidx{i} \eif\ n\ \ethen\ v_1\ \eelse\ v_2  \stepsto  \sigma\sepidx{i}  v_1
}
\quad
\inferrule*[right=P-E-If-False]{
}{
\Psi \vdash \sigma\sepidx{i} \eif\ 0\ \ethen\ v_1\ \eelse\ v_2  \stepsto  \sigma\sepidx{i} v_2
}
\end{mathpar}
\caption{Operational Semantics of Extended \langname}
\label{fig:op-semantics-extended}
\end{figure}

\begin{lem}[Preservation of  Well-formednness]\label{lem:pres-wf}
~\\
\begin{enumerate}
\item For $i \in \{1, 2, \bullet\}$ if $\sigma_1 \sepidx{i} \extE_1$ is closed and 
  $\Psi \vdash \sigma_1 \sepidx{i} \extE_1 \stepsto x \sepidx{i} y$ then 
exists $\sigma_2$ and $\extE_2$ s.t. $x=\sigma_2$ and $y = \extE_2$ and
$\sigma_2 \sepidx{i} \extE_2$ is closed. 
\item For $i \in \{1, 2\}$ if $\sigma_1 \sepidx{i} e_1$ is closed and 
  $\Psi \vdash \sigma_1 \sepidx{i} e_1 \stepsto x \sepidx{i} y$ then 
exists $\sigma_2$ and $e_2$ s.t. $x=\sigma_2$ and $y = e_2$ and
$\sigma_2 \sepidx{i} e_2$ is closed. 
\end{enumerate}
\end{lem}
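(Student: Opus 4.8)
The plan is to prove both parts simultaneously by induction on the derivation of $\Psi \vdash \sigma_1 \sepidx{i} \extE_1 \stepsto x \sepidx{i} y$, handling part~2 (index $i \in \{1,2\}$, where only the non-pair rules of Figure~\ref{fig:op-semantics-extended} can fire) first and then part~1, whose only new cases are \rulename{P-E-Pair} and the \rulename{P-E-Lift-$\ast$} rules, and whose \rulename{P-E-Pair} case appeals to part~2. That $x$ is some store $\sigma_2$ and $y$ some expression $\extE_2$ is immediate, since by construction every rule relates configurations of the form $\sigma \sepidx{i} \extE$; the real content is preservation of closedness, and this is what the induction establishes.

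I would first isolate three auxiliary facts and prove them as standalone lemmas. (1) \emph{Store monotonicity}: if $\Psi \vdash \sigma_1 \sepidx{i} \extE_1 \stepsto \sigma_2 \sepidx{i} \extE_2$ then $\m{dom}(\sigma_1) \subseteq \m{dom}(\sigma_2)$, and for every $\eloc \in \m{dom}(\sigma_1)$ and $j \in \{1,2\}$, if $\proj{\sigma_1(\eloc)}{j}$ is a proper value then so is $\proj{\sigma_2(\eloc)}{j}$; the only store-changing rules are \rulename{P-E-New} (adds a fresh $\eloc \mapsto \enew_i\,v$) and \rulename{P-E-Assign} (rewrites $\eloc \mapsto \eupdate_i\,\sigma_1(\eloc)\,v$), and by the definitions of $\enew_i$ and $\eupdate_i$ an existing definedness on side $j$ is never destroyed; moreover a fresh allocation performed for execution $i \in \{1,2\}$ carries $\bullet$ on side $j \ne i$, hence is simply dropped from $\proj{\sigma_2}{j}$ and imposes no new obligation there. (2) \emph{Free-location monotonicity}: $\m{fl}$ of any subterm of an expression is contained in $\m{fl}$ of the expression, $\m{fl}(E[e'']) \subseteq \m{fl}(E[\,]) \cup \m{fl}(e'')$, and the substitution $e[x \Leftarrow \extV]$ satisfies $\m{fl}(\proj{e[x\Leftarrow\extV]}{j}) \subseteq \m{fl}(\proj{e}{j}) \cup \m{fl}(\proj{\extV}{j})$, which follows by a routine induction using the projection-distributing clause for substitution into pairs; we also use that function bodies stored in $\Psi$ contain no free store locations, so \rulename{P-E-App} introduces no locations beyond those of the argument. (3) \emph{Projection invariance of the lift rules}: for each \rulename{P-E-Lift-$\ast$} rule one checks directly that $\proj{\extE_1}{j} = \proj{\extE_2}{j}$ for both $j \in \{1,2\}$ and that the store is unchanged, so the in-scope pointer set and its definedness are literally unchanged and closedness is preserved for free.

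With these in hand the induction is short. For \rulename{P-E-Context}, closedness of $\sigma_1 \sepidx{i} E[e]$ gives closedness of $\sigma_1 \sepidx{i} e$, the IH yields closedness of $\sigma_2 \sepidx{i} e'$, and by fact~1 the free locations of the frame $E$ remain defined in $\sigma_2$, so $\sigma_2 \sepidx{i} E[e']$ is closed. For the purely local reductions \rulename{P-E-Bop}, \rulename{P-E-Relab}, \rulename{P-E-Field}, \rulename{P-E-If-True/False}, \rulename{P-E-Let}, \rulename{P-E-App} the store is unchanged and the result has no free location not already in scope (fact~2). For \rulename{P-E-Deref}, $\eloc$ is in scope hence defined, $\eread_i\,\sigma(\eloc)$'s free locations already lie in the store's free-location set, and the store is unchanged. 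For the store-changing rules \rulename{P-E-New} and \rulename{P-E-Assign}, fact~1 gives that previously-defined in-scope pointers stay defined, the fresh/reassigned cell is defined on the relevant side(s), and the stored value's free locations were already in scope; the result ($\eloc$, resp.\ $()$) adds nothing. For \rulename{P-E-Pair} one uses that closedness of $\sigma_1 \sepidx{\bullet} \epair{e_1}{e_2}$ is exactly the conjunction over $j \in \{1,2\}$ of ``every pointer in scope of $\proj{\sigma_1}{j} \sepidx{j} e_j$ is defined on side $j$''; part~2 applied to the active component discharges the $j=i$ conjunct for $\sigma_2 \sepidx{\bullet} \epair{e'_1}{e'_2}$, and the $j\ne i$ conjunct follows from fact~1, since the step touches only side $i$ of the store and $e_j$ is unchanged. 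The \rulename{P-E-Lift-$\ast$} cases are fact~3.

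The main obstacle is precisely the $\bullet$-bookkeeping concentrated in facts~1 and~3 and in the \rulename{P-E-Pair} case: one must verify (a) that closedness of a $\bullet$-configuration decomposes into the two side-$j$ closedness statements, so part~2 can be reused componentwise; (b) that a fresh allocation or single-sided update performed for execution $i$ imposes no obligation on execution $j$ — this is where the asymmetry of $\enew_i$ and $\eupdate_i$ (producing $\bullet$, resp.\ preserving the other component) is essential, and where index errors are easy to make; and (c) that the substitution lemma is used in its projection-distributing form, since substituting $\epair{v_1}{v_2}$ pushes $\proj{\cdot}{1}$ and $\proj{\cdot}{2}$ into the two branches. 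Once facts~1--3 are proved as explicit lemmas, each case of the main induction is a one- or two-line calculation.
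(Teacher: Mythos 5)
Your proposal is correct and follows essentially the same route as the paper, which proves Lemma~\ref{lem:pres-wf} only as a sketch ``by induction over the structure of the operational semantic rules''; you simply make explicit the bookkeeping that sketch leaves implicit (store monotonicity of $\enew_i$/$\eupdate_i$, free-location containment under substitution and evaluation contexts, and projection invariance of the lift rules), handling the core-indexed part first and reusing it componentwise in the \rulename{P-E-Pair} case. The only addition beyond the paper is the explicit (and reasonable) assumption that function bodies in $\Psi$ contain no free locations, which the paper also needs but does not state.
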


\begin{proofsketch}
By induction over the structure of the operational semantic rules.
\end{proofsketch}

\begin{lem}[Distributivity of Projection for Expressions]
\label{lem:proj-distributed}
$\proj{\extE[x\Leftarrow \extV]}{i}=\proj{\extE}{i}[x\Leftarrow \proj{\extV}{i}]$
\end{lem}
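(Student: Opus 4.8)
The plan is to prove the identity by \emph{structural induction on the extended expression $\extE$}, with value subterms handled inside the same induction (the grammar of $\extE$ mentions $\extV$, which in turn mentions records of values and pairs, so a single induction on term structure suffices). The key observation that organizes the whole argument is that the substitution operation $[\,x\Leftarrow\cdot\,]$ and the projection $\proj{\cdot}{i}$ are defined by the same recursive shape — one defining clause per syntactic form — so for every syntactic form other than a variable or a pair, both operations simply commute with the head constructor and recurse into exactly the same immediate subterms; there the induction hypothesis closes the case mechanically.

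The base cases are the variable and constant clauses. If $\extE = x$ (the variable being substituted), the left-hand side is $\proj{\extV}{i}$, while the right-hand side is $\proj{x}{i}[x\Leftarrow\proj{\extV}{i}] = x[x\Leftarrow\proj{\extV}{i}] = \proj{\extV}{i}$, so the two agree. If $\extE = y$ with $y\neq x$, or $\extE$ is one of $n$, $()$, $\eloc$, or $f$, then neither substitution nor projection changes it, so both sides equal $\extE$ itself.

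The only genuinely interesting cases are the two pair forms, and this is where the proof uses something nontrivial. For $\extE = \epair{v_1}{v_2}$, the definition of substitution on pairs dispatches the two halves of $\extV$ to the two components: $\epair{v_1}{v_2}[x\Leftarrow\extV] = \epair{v_1[x\Leftarrow\proj{\extV}{1}]}{v_2[x\Leftarrow\proj{\extV}{2}]}$. Projecting this at index $i$, using $\proj{\epair{w_1}{w_2}}{i} = w_i$, yields $v_i[x\Leftarrow\proj{\extV}{i}]$, which is exactly $\proj{\epair{v_1}{v_2}}{i}[x\Leftarrow\proj{\extV}{i}]$; the case $\extE = \epair{e_1}{e_2}$ is handled identically. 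No appeal to the induction hypothesis on the components is needed here, since the grammar forbids nested pairs, so $v_1,v_2$ (resp. $e_1,e_2$) are pair-free; the content of the step is precisely that ``select component $i$, then substitute $\proj{\extV}{i}$'' coincides with ``substitute, dispatching $\proj{\extV}{1}$ and $\proj{\extV}{2}$, then select component $i$.''

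For all the remaining forms — records $(T)\{v_1,\dots,v_n\}$ and $(T)\{\extV_1,\dots,\extV_n\}$, applications $\extV\,\extE$, let-bindings $\elet\ y = \extE_1\ \ein\ \extE_2$, field projections $\extV.j$, conditionals $\eif\ \extV\ \ethen\ \extE_1\ \eelse\ \extE_2$, allocations $\enew(\extE)$, assignments $\extV := \extE$, and dereferences $*\extV$ — both operations commute with the head constructor and recurse structurally into the same subterms, so applying the induction hypothesis to each subterm and reassembling gives the claim. (For the let case one follows the stated definition of substitution, which descends into both the bound expression and the body; the bound variable is inert in this computation.) The main obstacle, insofar as there is one, is merely bookkeeping: in the pair clauses one must keep the indices straight between the outer projection index $i$ and the inner dispatch to components $1$ and $2$. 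Once that alignment is set up, the rest is a routine exhaustive case analysis.
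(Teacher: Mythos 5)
Your proof is correct and follows essentially the same route as the paper: structural induction on $\extE$, with the variable and pair cases worked explicitly (the pair case resolved by the dispatch of $\proj{\extV}{1}$ and $\proj{\extV}{2}$ in the substitution clause, matching the outer projection index) and all remaining constructors handled by commuting with the head symbol and applying the induction hypothesis. The paper's proof is exactly this induction, showing only the variable and pair cases and omitting the rest as routine.
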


\begin{proof}
By induction over the structure of $\extE$. Most cases can be proven
by straightforward application of I.H., which we omit.
\begin{description}

\item [Case:] $\extE = x$
\begin{tabbing}
~~\=(1)~~~~\= $\proj{x[x \Leftarrow \extV]}{i} = \proj{\extV}{i}$
\\\>(2)\> $\proj{x}{i}[x \Leftarrow \proj{\extV}{i}] = x[x \Leftarrow
\proj{\extV}{i}] = \proj{\extV}{i}$
\\By (1) and (2)
\\\>(3) \>$\proj{x[x \Leftarrow \extV]}{i} = \proj{x}{i}[x \Leftarrow
\proj{\extV}{i}]$
\end{tabbing} 

\item [Case:] $\extE = \epair{e_1}{e_2}$
\begin{tabbing}
~~\=(1)~~~~\= $\proj{\epair{e_1}{e_2}[x \Leftarrow \extV]}{i}= 
\proj{\epair{e_1[x \Leftarrow \proj{\extV}{1}]}{e_2[x \Leftarrow
    \proj{\extV}{2}]}}{i}=  e_i[x \Leftarrow \proj{\extV}{i}]$ 
\\\>(2)\>$\proj{\epair{e_1}{e_2}}{i}[x \Leftarrow \proj{\extV}{i}] = 
     e_i[x \Leftarrow \proj{\extV}{i}]$
\\By (1) and (2)
\\\>(3)\>$\proj{\epair{e_1}{e_2}[x \Leftarrow \extV]}{i}= \proj{\epair{e_1}{e_2}}{i}[x \Leftarrow \proj{\extV}{i}]$
\end{tabbing}
\end{description}
\end{proof}

\begin{lem}\label{lem:i-exec-proj}
If for all $i\in\{1, 2\}$, $\ee::\Psi \vdash \sigma_1\sepidx{i} e_1 \stepsto
\sigma_2\sepidx{i}e_2$ where 
$\sigma_1 \sepidx{i} e_1$ is closed 
\\then $\Psi \vdash \proj{\sigma_1}{i}\sepidx{} e_1\stepsto
\proj{\sigma_2}{i}\sepidx{} e_2$ 
and  $\proj{\sigma_1}{j} =
\proj{\sigma_2}{j}$, where $\{i,j\}= \{1,2\}$
\end{lem}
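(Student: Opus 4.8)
The plan is to induct on the derivation $\ee$ of $\Psi \vdash \sigma_1 \sepidx{i} e_1 \stepsto \sigma_2 \sepidx{i} e_2$, fixing $i \in \{1,2\}$ and writing $j$ for the complementary index (so the hypothesis is read as ``for each such $i$''). Throughout I use the auxiliary, inspection-level invariant that configurations reachable in $\sepidx{i}$ mode carry \emph{pair-free} expressions: no $\sepidx{i}$ rule of Figure~\ref{fig:op-semantics-extended} ever places a pair inside an expression, since $\eread_i$ returns $\proj{\cdot}{i}$ and $\enew_i$ only pushes a pair into the \emph{store} while returning $\eloc$; consequently $\proj{e_k}{i}=e_k$ and the statement's reuse of $e_1,e_2$ on both sides is literal. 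In particular the \textsc{P-E-Pair} and \textsc{P-E-Lift-$\ast$} rules, which require a pair in the expression, never fire here. The case split is on the last rule of $\ee$.

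For every rule that leaves the store unchanged --- \textsc{P-E-Bop}, \textsc{P-E-Relab}, \textsc{P-E-Field}, \textsc{P-E-App}, \textsc{P-E-Let}, \textsc{P-E-If-True}, \textsc{P-E-If-False} --- the equality $\proj{\sigma_1}{j}=\proj{\sigma_2}{j}$ is immediate since $\sigma_2=\sigma_1$, and $\Psi \vdash \proj{\sigma_1}{i}\sepidx{} e_1 \stepsto \proj{\sigma_1}{i}\sepidx{} e_2$ follows by re-applying the same rule in the unindexed semantics: each of these rules is stated uniformly in the index and its reduct is the same pair-free term (for \textsc{P-E-App}, the function body and argument are pair-free, so the substitution $e[x\Leftarrow v][f\Leftarrow f(x)=e]$ is unaffected by projection). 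For \textsc{P-E-Context}, I peel off the evaluation context $E$, observe that $\sigma_1 \sepidx{i} e$ is again closed (the free locations of $\proj{e}{i}$ are among those of $\proj{E[e]}{i}$ and the store is unchanged), invoke the induction hypothesis to obtain $\Psi \vdash \proj{\sigma_1}{i}\sepidx{} e \stepsto \proj{\sigma_2}{i}\sepidx{} e'$ together with $\proj{\sigma_1}{j}=\proj{\sigma_2}{j}$, and close with \textsc{N-E-Context}, using that $E$ is pair-free and hence projects to itself.

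The substantive work is in the three store-touching rules. In \textsc{P-E-Deref}, closedness forces $\eloc$ to be \emph{defined in $\sigma_1$ for execution $i$}, so $\proj{\sigma_1(\eloc)}{i}$ is an ordinary value; the store-projection rules then give $\proj{\sigma_1}{i}(\eloc)=\proj{\sigma_1(\eloc)}{i}=\eread_i\,\sigma_1(\eloc)=e_2$, matching \textsc{N-E-Deref}, while $\sigma_2=\sigma_1$ handles the $j$-projection. In \textsc{P-E-Assign} and \textsc{P-E-New} I compute $\sigma_2(\eloc)=\eupdate_i\,\sigma_1(\eloc)\,v$, resp.\ $\sigma_2(\eloc)=\enew_i\,v$, and read off both projections from the definitions of $\eupdate_i,\enew_i$: the $i$-component of the new stored value is exactly $v$, so $\proj{\sigma_2}{i}=\proj{\sigma_1}{i}[\eloc\mapsto v]$ for assign, resp.\ $\proj{\sigma_1}{i}$ extended with $\eloc\mapsto v$ for new (with $\eloc$ still fresh, since $\m{dom}(\proj{\sigma_1}{i})\subseteq\m{dom}(\sigma_1)$), matching \textsc{N-E-Assign}/\textsc{N-E-New}; and the $j$-component is left as $\proj{\sigma_1(\eloc)}{j}$ for assign, resp.\ set to $\bullet$ for new, so $\eloc$ contributes identically to --- or is absent from --- both $\proj{\sigma_2}{j}$ and $\proj{\sigma_1}{j}$, yielding $\proj{\sigma_1}{j}=\proj{\sigma_2}{j}$.

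The main obstacle is precisely this bookkeeping for \textsc{P-E-Assign} and \textsc{P-E-New}, in its interaction with the partiality marker $\bullet$ in stored values: I must invoke closedness to exclude the pathological case where execution $i$ writes to a location whose $i$-component is $\bullet$ (one created only by the other execution), and I must track carefully when the store projection $\proj{\cdot}{j}$ drops a location versus retains it, since that is what makes the two $j$-projections coincide on the nose. Everything else amounts to re-deriving the corresponding unindexed step verbatim; note that Lemma~\ref{lem:proj-distributed} is \emph{not} needed here, exactly because $\sepidx{i}$-mode expressions never contain pairs.
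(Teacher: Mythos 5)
Your proposal is correct and follows essentially the same route as the paper's proof: induction on the derivation, replaying the identical rule in the unindexed semantics for the store-preserving cases, and in the memory cases computing with $\eread_i$, $\eupdate_i$, $\enew_i$ and the store-projection rules (including the case split on whether $\eloc$ is defined for the other execution), with closedness invoked exactly where the paper needs it in \textsc{P-E-Deref}. The only difference is that you state explicitly the pair-freeness of $\sepidx{i}$-mode expressions, which the paper relies on silently (e.g.\ its step ``$v=\proj{v}{1}$'' in the \textsc{P-E-Assign} case), so this is a clarification rather than a divergence.
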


\begin{proof}
Proof by induction on the structure of $\ee$. For most cases, the
store will not be updated. The proof follows directly by applying the
same rule. We will present cases of memory operations.

\begin{description}

\item [Case:] $\ee$ ends in \rulename{P-E-Deref}
\begin{tabbing}
By assumption
\\~~\=(1)~~~~\= $\Psi \vdash \sigma\sepidx{i}  *\eloc \stepsto  \sigma\sepidx{i}
(\eread_i\ \sigma(\eloc))$ 
\\ By \rulename{deref}
\\\> (2)\> $\Psi \vdash \proj{\sigma}{i} \sepidx{} *\eloc \stepsto 
				\proj{\sigma}{i} \sepidx{} \eread\
                                \proj{\sigma}{i}(\eloc)$
\\ By $\sigma \sepidx{i} *\eloc$ is closed
\\\> (3)\> $\proj{\sigma(\eloc)}{i} = v$, where $v$ is a \langname
value
\\ By definition of $\eread$:
\\\>(4)\>$\eread_i~\sigma(\eloc)=\proj{\sigma(\eloc)}{i} =v$
\\\>(5)\>$\eread\ \proj{\sigma}{i}(\eloc) =(\proj{\sigma}{i})(\eloc)=v$
\\By (4) and (5)
\\\> (6)\> $\proj{\sigma}{i}(\eloc)= \proj{\sigma(\eloc)}{i} = v$
\end{tabbing}

\item [Case:]  $\ee$ ends in \rulename{P-E-Assign} 
\begin{tabbing}
By assumption
\\~~\=(1)~~~~\= $\Psi \vdash \sigma\sepidx{i} \eloc:= v 
  \stepsto  \sigma_2 \sepidx{i} ()$ and $\sigma_2=\sigma[\eloc\mapsto \eupdate_i\ \sigma(\eloc)\
  v]$
\\ By \rulename{assign}:
\\\> (2)\>  $\Psi \vdash \proj{\sigma}{i} \sepidx{} \eloc := v \stepsto 
                                \sigma'\sepidx{} ()$ and 
$\sigma'=  	\proj{\sigma}{i}[\eloc \mapsto\eupdate~\proj{\sigma}{i}(\eloc)\ v] $
\\We show the case for when $i=1$, the case for $i=2$ can be proven similarly
\\By definition of $\eupdate$
\\\>(3)\> $\sigma' = \proj{\sigma}{i}[\eloc \mapsto v]$
\\\>(4)\> $\sigma_2
 =\sigma[\eloc \mapsto \epair{\proj{v}{1}}{\proj{\sigma(\eloc) }{2}}]$
\\By $v$ is a valid extended \langname expression
\\\>(5)\> $v$ does not contain $\bullet$ and $v=\proj{v}{1}$
\\By the definition of projection
\\\>(6)\>$\proj{\sigma_2}{1} = \proj{\sigma}{1}[\eloc \mapsto
\proj{v}{1}] = \sigma'$
\\ There are two subcases 
\\\>{\bf Subcase a.} $\eloc$ is defined in $\sigma$ for execution $2$
\\\>~~\=(a)~~\= $\proj{\sigma_2}{2}= \proj{\sigma}{2}[\eloc\mapsto
\proj{\sigma(\eloc) }{2}] = \proj{\sigma}{2}$
\\\>{\bf Subcase b.} $\eloc$ is not defined in $\sigma$ for execution
$2$
\\\>~~\=(b)~~\= $\proj{\sigma_2}{2}= \proj{\sigma[\eloc
  \mapsto \epair{\proj{v}{1}}{\proj{\sigma(\eloc) }{2}}]}{2} =
\proj{\sigma[\eloc \mapsto \epair{\proj{v}{1}}{\bullet}]}{2}= 
 \proj{\sigma}{2}$

\end{tabbing}

\item [Case:] $\ee$ ends in \rulename{P-E-New} 
\begin{tabbing}
By assumption
\\~~\=(1)~~~~\= $\Psi \vdash \sigma\sepidx{i} \enew(v) 
\stepsto  \sigma_2\sepidx{i} \eloc$ and $\sigma_2=\sigma[\eloc\mapsto \enew_i~v]$
\\By \rulename{new}
\\\>(2)\>$\Psi \vdash \proj{\sigma}{i}\sepidx{} \enew(v)
\stepsto  \sigma' \sepidx{} \eloc$ and
$\sigma'=\proj{\sigma}{i}[\eloc\mapsto \enew\ v]$
\\We show the case for when $i=1$, the case for $i=2$ can be proven similarly
\\By definition of $\enew$
\\\>(3)\> $\sigma_2 = \sigma[\eloc \mapsto\enew_1~v]  =\sigma[\eloc \mapsto \epair{v}{\bullet}]$
\\\>(4)\> $\sigma'= \proj{\sigma}{1}[\eloc\mapsto v]$
\\By $v$ is a valid extended \langname expression and the definition of projection
\\\>(6)\>$\proj{\sigma_2}{1} =\sigma'= \proj{\sigma}{1}[\eloc\mapsto
v] $
\\\>(7)\>$\proj{\sigma_2}{2} = \proj{\sigma[\eloc \mapsto
  \epair{v}{\bullet}]}{2} = \proj{\sigma}{2} $
\end{tabbing}
\end{description}
\end{proof}

\begin{lem}\label{lem:proj-context}
$\proj{E}{i}[\proj{e}{i}] = \proj{E[e]}{i}$
\end{lem}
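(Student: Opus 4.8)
The plan is to prove the identity by a direct case analysis on the shape of the evaluation context $E$. Recall from Appendix~\ref{app:minic} that evaluation contexts here are \emph{shallow}: each of the six forms $\elet\, x=\hole\, \ein\, e$, $\enew(\hole)$, $v\, \hole$, $v:=\hole$, $\hole \bop e$, and $v\bop \hole$ places the hole at a fixed position one constructor below the root. Consequently no induction is needed; it suffices to check each form against the corresponding clause of the definition of $\proj{\cdot}{i}$. We read $\proj{E}{i}$ as the context obtained from $E$ by projecting every non-hole subterm (a value $v$ becomes $\proj{v}{i}$, an expression $e$ becomes $\proj{e}{i}$) and leaving $\hole$ in place; since projecting a value yields a value (by inspection of the projection definition), $\proj{E}{i}$ is again a well-formed evaluation context, so the notation $\proj{E}{i}[\,\cdot\,]$ makes sense.

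First I would dispatch the cases with no embedded value. For $E = \elet\, x=\hole\, \ein\, e'$ we have $E[e] = \elet\, x=e\, \ein\, e'$, and the let-clause of the projection definition gives $\proj{E[e]}{i} = \elet\, x=\proj{e}{i}\, \ein\, \proj{e'}{i}$, which is exactly $\proj{E}{i}[\proj{e}{i}]$ because $\proj{E}{i} = \elet\, x=\hole\, \ein\, \proj{e'}{i}$. The case $E=\enew(\hole)$ is identical using the $\enew$-clause. For the binary-operator cases $E=\hole\bop e'$ and $E=v\bop\hole$, the $\bop$-clause of the projection definition pushes the projection into both operands, matching $\proj{E}{i}[\proj{e}{i}]$ componentwise.

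Next I would handle the two cases with an embedded value, $E=v\,\hole$ and $E=v:=\hole$. These go through in the same way: for instance $\proj{(v\,e)}{i} = \proj{v}{i}\,\proj{e}{i}$, and since $\proj{(v\,\hole)}{i} = \proj{v}{i}\,\hole$, plugging $\proj{e}{i}$ into the latter reproduces $\proj{v}{i}\,\proj{e}{i} = \proj{E[e]}{i}$; the assignment case is analogous using the $:=$-clause. As the six forms are exhaustive, this completes the argument.

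I expect no genuine obstacle: the statement is essentially that projection is a homomorphism on the expression structure restricted to evaluation-context positions, and the only thing requiring care is to state precisely what $\proj{E}{i}$ denotes and to note it is still an evaluation context. This lemma is then used together with Lemmas~\ref{lem:proj-distributed} and~\ref{lem:i-exec-proj} to match a step of the paired semantics taken under \rulename{P-E-Context} with the corresponding single-execution step.
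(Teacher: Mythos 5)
Your proposal is correct and matches the paper's argument: the paper only sketches ``by induction on the structure of $E$,'' which for this non-recursive (shallow) context grammar amounts to exactly the case-by-case check you carry out. Your added observations---that $\proj{E}{i}$ is again a well-formed evaluation context and that the $\bop$ cases follow componentwise---are the same routine verifications the paper leaves implicit.
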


\begin{proofsketch}
Proof by induction on the structure of $E$.
\end{proofsketch}

\begin{thm}[Soundness] 
\label{thm:pair-op-soundness}
If $\ee::\Psi \vdash \sigma_1\sepidx{} \extE_1 \stepsto
\sigma_2\sepidx{}\extE_2$ where 
$\sigma_1 \sepidx{} \extE_1$ is closed 
\\then for all $i\in\{1, 2\}$, $\Psi \vdash \proj{\sigma_1\sepidx{} \extE_1}{i} \stepsto
\proj{\sigma_2\sepidx{}\extE_2}{i}$; or $\proj{\sigma_1\sepidx{} \extE_1}{i} =
\proj{\sigma_2\sepidx{}\extE_2}{i}$.
\end{thm}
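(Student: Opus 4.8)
The plan is to induct on the derivation $\ee$ of $\Psi \vdash \sigma_1\sepidx{}\extE_1 \stepsto \sigma_2\sepidx{}\extE_2$, with a case analysis on the last rule applied. Because the conclusion carries the index $\bullet$ and \rulename{P-E-Context} propagates its index unchanged to its premise, the induction stays entirely among $\sepidx{}$-steps, so I only ever touch $\sepidx{1}$/$\sepidx{2}$ derivations through the already-proved Lemma~\ref{lem:i-exec-proj}. Throughout I rely on the projection definitions for expressions, stores, and substitutions, on Lemma~\ref{lem:proj-distributed} (projection commutes with substitution), on Lemma~\ref{lem:proj-context} (projection commutes with plugging into an evaluation context), and on the closedness hypothesis on $\sigma_1\sepidx{}\extE_1$.

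\textbf{Lifting rules and pairs.} For \rulename{P-E-Lift-App}, \rulename{P-E-Lift-Deref}, \rulename{P-E-Lift-Assign}, \rulename{P-E-Lift-Field}, \rulename{P-E-Lift-If}, and \rulename{P-E-Lift-Relab}, a direct unfolding of $\proj{\cdot}{i}$ shows that for both $i\in\{1,2\}$ we have $\sigma_1=\sigma_2$ and $\proj{\extE_1}{i}=\proj{\extE_2}{i}$; e.g.\ $\proj{(\epair{v_1}{v_2}\,v)}{i}=v_i\,\proj{v}{i}=\proj{(\epair{v_1\,\proj{v}{1}}{v_2\,\proj{v}{2}})}{i}$, and likewise for the others. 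So these cases land in the equality disjunct. For \rulename{P-E-Pair}, let $k$ be the execution that steps and $j$ the one unchanged ($\{j,k\}=\{1,2\}$); closedness of $\sigma_1\sepidx{}\epair{e_1}{e_2}$ yields closedness of $\sigma_1\sepidx{k}e_k$ (the in-scope locations coincide, since $\proj{\epair{e_1}{e_2}}{k}=e_k$), so Lemma~\ref{lem:i-exec-proj} gives both $\Psi\vdash\proj{\sigma_1}{k}\sepidx{}e_k\stepsto\proj{\sigma_2}{k}\sepidx{}e'_k$ — i.e.\ a step on component $k$ — and $\proj{\sigma_1}{j}=\proj{\sigma_2}{j}$, which together with $\proj{\epair{e_1}{e_2}}{j}=e_j=e'_j$ gives the equality disjunct on component $j$.

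\textbf{Context and base rules.} For \rulename{P-E-Context} with premise $\Psi\vdash\sigma_1\sepidx{}e\stepsto\sigma_2\sepidx{}e'$, the sub-configuration $\sigma_1\sepidx{}e$ is closed (since $\m{fl}(e)\subseteq\m{fl}(E[e])$, its in-scope locations are among those of $\sigma_1\sepidx{}E[e]$), so the induction hypothesis applies and, for each $i$, either $\Psi\vdash\proj{\sigma_1}{i}\sepidx{}\proj{e}{i}\stepsto\proj{\sigma_2}{i}\sepidx{}\proj{e'}{i}$ or $\proj{\sigma_1}{i}\sepidx{}\proj{e}{i}=\proj{\sigma_2}{i}\sepidx{}\proj{e'}{i}$; I then reassemble with Lemma~\ref{lem:proj-context} (noting $\proj{E}{i}$ is again an evaluation context) and, in the step case, rule \rulename{N-E-Context}. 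For the remaining base rules applied at index $\bullet$ — \rulename{P-E-Bop}, \rulename{P-E-Relab}, \rulename{P-E-Deref}, \rulename{P-E-Assign}, \rulename{P-E-New}, \rulename{P-E-Field}, \rulename{P-E-App}, \rulename{P-E-Let}, \rulename{P-E-If-True}, \rulename{P-E-If-False} — the step is realized in the projected configuration by the corresponding single-execution rule. The memory rules are where closedness does the work: at index $\bullet$, $\eread$, $\eupdate$, and $\enew$ act as the identity, and closedness guarantees that every in-scope location holds a genuine extended value $\extV$ (never a half-defined $\epair{\bullet}{-}$ or $\epair{-}{\bullet}$), so the store-projection clause that drops $\bullet$-entries never removes a reachable location, $\proj{\sigma}{i}(\eloc)=\proj{\sigma(\eloc)}{i}$, and a location fresh for $\sigma$ is still fresh for $\proj{\sigma}{i}$. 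For \rulename{P-E-App} and \rulename{P-E-Let} I additionally push the projection through the substitution via Lemma~\ref{lem:proj-distributed}, using that function bodies stored in $\Psi$ are pair-free so projection acts trivially on them.

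The main obstacle is the store bookkeeping in the third paragraph: matching the paired store $\sigma_2$ obtained after a deref, assignment, or allocation at index $\bullet$ against the single store one would get by stepping the projected configuration, and checking that the $\bullet$-elision in the definition of $\proj{\cdot}{i}$ never discards a location the projected expression can reach. This ultimately rests on the closedness invariant, whose preservation across steps is Lemma~\ref{lem:pres-wf}; once that is in hand the rest is a mechanical unfolding of the projection and substitution definitions together with the two distributivity lemmas.
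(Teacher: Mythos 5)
Your proposal is correct and follows essentially the same route as the paper's proof: induction on the step derivation, with Lemma~\ref{lem:i-exec-proj} for the \rulename{P-E-Pair} case, Lemma~\ref{lem:proj-context} for \rulename{P-E-Context}, Lemma~\ref{lem:proj-distributed} for the substitution cases, closedness of the configuration for the store operations, and the equality disjunct absorbing the lift rules and the non-stepping component. Your explicit closedness bookkeeping (and the appeal to Lemma~\ref{lem:pres-wf}) is slightly more careful than the paper's write-up but does not change the argument.
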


\begin{proof}
Proof by induction on the structure of $\ee$. Most cases are
straightforward. We show a few key cases below. 
\begin{description}
\item [Case:] $\ee$ ends in \rulename{Context}
\begin{tabbing}
By assumption:
\\~~\=(1)~~~~\= $\Psi \vdash \sigma\sepidx{} E[e] \stepsto  \sigma'
\sepidx{}E[e']$ 
\\\>(2)\> $\ee':: \Psi \vdash \sigma\sepidx{} e \stepsto  \sigma' \sepidx{}e'$
\\ By I.H. on $\ee'$
\\\> (3)\> $\Psi \vdash \proj{\sigma \sepidx{} e}{i} \stepsto \proj{\sigma'
  \sepidx{} e'}{i}$, $i\in\{1,2\}$
\\ By definition of projection
\\\> (4)\> $\Psi \vdash \proj{\sigma}{i} \sepidx{} \proj{e}{i} \stepsto \proj{\sigma'}{i} \sepidx{} \proj{e'}{i}$
\\ By (Context) and (4)
\\\> (5)\> $\Psi \vdash \proj{\sigma}{i} \sepidx{} \proj{E}{i}[\proj{e}{i}] \stepsto
					\proj{\sigma'}{i} \sepidx{} \proj{E}{i}[\proj{e'}{i}]$
\\ By Lemma \ref{lem:proj-context}, (5):
\\\> (6)\> $\Psi \vdash \proj{\sigma}{i} \sepidx{} \proj{E[e]}{i} \stepsto
					\proj{\sigma'}{i} \sepidx{} \proj{E[e']}{i}$
\end{tabbing}

\item [Case:] $\ee$ ends in \rulename{pair}
\begin{tabbing}
By assumption
\\~~\=(1)~~~~\= $\Psi \vdash \sigma\sepidx{} \epair{e_1}{e_2} \stepsto  \sigma'
\sepidx{} \epair{e'_1}{e'_2}$
\\\>(2)\> $\ee'::\Psi \vdash \sigma\sepidx{k} e_k \stepsto  \sigma' \sepidx{k}e'_k$, $e_j = e'_j$,
 $\{k,j\} = \{1,2\}$
\\ By Lemma~\ref{lem:i-exec-proj} and (2)
\\\> (3)\> $\Psi \vdash \proj{\sigma}{k}\sepidx{} e_k\stepsto
\proj{\sigma'}{k}\sepidx{}e'_k$ 
and 
\\\>(4)\> $\proj{\sigma}{j}=\proj{\sigma'}{j}$
\\ \>{\bf Subcase a.} $i=k$
\\\> By (3), the conclusion holds
\\ \>{\bf Subcase b.} $i=j$
\\\> By (2), 
\\\>\> (5)~~\=$\proj{\epair{e_1}{e_2} }{j} = \proj{\epair{e'_1}{e'_2}
}{j}$
\\\> By (4) and (5), the conclusion holds
\end{tabbing}

\item [Case:] $\ee$ ends in \rulename{deref}
\begin{tabbing}
By assumption
\\~~\=(1)~~~~\= $\Psi \vdash \sigma\sepidx{}  *\eloc \stepsto  \sigma\sepidx{}
(\eread\ \sigma(\eloc))$ 
\\ T.S. $\Psi \vdash \proj{\sigma \sepidx{} *\eloc}{i} \stepsto \proj{\sigma \sepidx{} \eread~\sigma(\eloc)}{i}$
\\ By \rulename{deref}
\\\> (2)\> $\Psi \vdash \proj{\sigma}{i} \sepidx{} *\eloc \stepsto 
				\proj{\sigma}{i} \sepidx{} \eread\ (\proj{\sigma}{i})(\eloc)$
\\ T.S. $\proj{\eread~\sigma(\eloc)}{i} =\eread\ (\proj{\sigma}{i} \eloc)$
\\ By definition of $\eread$:
\\\>(3)\>$\proj{\eread~\sigma(\eloc)}{i}=\proj{\sigma(\eloc)}{i}$
\\\>(4)\>$\eread\ (\proj{\sigma}{i})(\eloc) =\proj{\sigma}{i}(\eloc)$
\\ By $\sigma \sepidx{} *\eloc$ is closed:
\\\> (5)\> $\proj{\sigma(\eloc)}{i} = v$, where $v$ is a \langname
value
\\By projection definitions 
\\\> (6)\> $\proj{\sigma}{i}(\eloc)= \proj{\sigma(\eloc)}{i} = v$
\end{tabbing}

\item [Case:]  $\ee$ ends in \rulename{assign} 
\begin{tabbing}
By assumption
\\~~\=(1)~~~~\= $\Psi \vdash \sigma\sepidx{} \eloc:= v 
  \stepsto  \sigma[\eloc\mapsto \eupdate\ \sigma(\eloc)\
  v]\sepidx{} ()$
\\ T.S. $\Psi \vdash \proj{\sigma \sepidx{} \eloc := v}{i} \stepsto
			\proj{\sigma[\eloc \mapsto
                          \eupdate~\sigma(\eloc)~v] \sepidx{} ()}{i}$
\\ By \rulename{assign}:
\\\> (2)\>  $\Psi \vdash \proj{\sigma}{i} \sepidx{} \eloc := \proj{v}{i} \mapsto 
				\proj{\sigma}{i}[\eloc \mapsto \eupdate~\proj{\sigma}{i}(\eloc)~\proj{v}{i}] \sepidx{} ()$
\\ T. S. $\proj{\sigma[\eloc \mapsto \eupdate~\sigma(\eloc)~v]}{i} =
\proj{\sigma}{i}[\eloc \mapsto \eupdate~\proj{\sigma}{i}(\eloc)~\proj{v}{i}] $
\\By definition of $\eupdate$
\\\>(3)\> $\sigma[\eloc \mapsto \eupdate~\sigma(\eloc)~v] =
\sigma[\eloc \mapsto v]$
\\\>(4)\> $\proj{\sigma}{i}[\eloc \mapsto
\eupdate~\proj{\sigma}{i}(\eloc)~\proj{v}{i}] 
 =\proj{\sigma}{i}[\eloc \mapsto \proj{v}{i}] $
\\By $v$ is a valid extended \langname expression
\\\>(5)\> $v$ does not contain $\bullet$
\\By the definition of projection
\\\>(6)\>$\proj{\sigma[\eloc \mapsto v]}{i} = \proj{\sigma}{i}[\eloc \mapsto \proj{v}{i}] $
\end{tabbing}

\item [Case:] $\ee$ ends in \rulename{new} 
\begin{tabbing}
By assumption
\\~~\=(1)~~~~\= $\Psi \vdash \sigma\sepidx{} \enew(v) 
\stepsto  \sigma[\eloc\mapsto \enew~v]\sepidx{} \eloc$
\\T.S. $\Psi \vdash \proj{\sigma \sepidx{} \enew(v)}{i} \stepsto 
					\proj{\sigma[\eloc \mapsto
                                          \enew~v] \sepidx{}
                                          \eloc}{i}$
\\By \rulename{new}
\\\>(2)\>$\Psi \vdash \proj{\sigma}{i}\sepidx{} \proj{\enew(v)}{i} 
\stepsto  \proj{\sigma}{i}[\eloc\mapsto \enew~\proj{v}{i}]\sepidx{}
\eloc$
\\T.S.  $\proj{\sigma[\eloc \mapsto\enew~v]}{i} =
\proj{\sigma}{i}[\eloc\mapsto \enew~\proj{v}{i}]$
\\By definition of $\enew$
\\\>(3)\> $\sigma[\eloc \mapsto\enew~v]  =\sigma[\eloc \mapsto v]$
\\\>(4)\> $\proj{\sigma}{i}[\eloc\mapsto \enew~\proj{v}{i}] =
\proj{\sigma}{i}[\eloc\mapsto \proj{v}{i}] $
\\By $v$ is a valid extended \langname expression
\\\>(5)\> $v$ does not contain $\bullet$
\\By the definition of projection
\\\>(6)\>$\proj{\sigma[\eloc \mapsto v]}{i} = \proj{\sigma}{i}[\eloc \mapsto \proj{v}{i}] $
\end{tabbing}

\item [Case:] $\ee$ ends in \rulename{Let}
\begin{tabbing}
By assumption:
\\~~\=(1)~~~~\= $\Psi \vdash \sigma\sepidx{} \elet\ x= v\ \ein\ e \stepsto
\sigma\sepidx{} e[x\Leftarrow v]$
\\ By \rulename{Let} 
\\\> (2)\> $\Psi \vdash \proj{\sigma}{i} \sepidx{} \elet\ x= \proj{v}{i}\ \ein\ \proj{e}{i} \stepsto
\proj{\sigma}{i}\sepidx{} \proj{e}{i}[x\Leftarrow \proj{v}{i}]$
\\ By Lemma \ref{lem:proj-distributed}, 
\\\> (3)\> $\proj{e[x\Leftarrow v]}{i}=\proj{e}{i}[x\Leftarrow \proj{v}{i}]$
\end{tabbing}
\end{description}
\end{proof}

\begin{lem}[Projected run]
\label{lem:i-proj-step}
If $\ee::\Psi \vdash \proj{\sigma}{i}\sepidx{} e \stepsto \sigma'\sepidx{} e'$ where
$e$ is a core \langname constructs, $i\in\{1,2\}$, then 
$\Psi \vdash \sigma\sepidx{i} e \stepsto \sigma''\sepidx{i} e'$ and 
$\proj{\sigma''}{i}=\sigma'$. 
\end{lem}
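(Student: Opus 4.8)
The plan is to prove the lemma by induction on the derivation $\ee$ of $\Psi \vdash \proj{\sigma}{i}\sepidx{} e \stepsto \sigma'\sepidx{} e'$. The first move is a classification step: because the store $\proj{\sigma}{i}$ is pair-free by construction and $e$ is a core \langname construct, the redex contracted by $\ee$ contains no syntactic pair $\epair{v_1}{v_2}$, so the last rule of $\ee$ is \emph{not} any of the lifting rules (\rulename{P-E-Lift-App}, $\ldots$, \rulename{P-E-Lift-Relab}) nor \rulename{P-E-Pair}. Hence $\ee$ ends in one of \rulename{P-E-Context}, \rulename{P-E-Bop}, \rulename{P-E-Relab}, \rulename{P-E-Deref}, \rulename{P-E-Assign}, \rulename{P-E-New}, \rulename{P-E-Field}, \rulename{P-E-App}, \rulename{P-E-Let}, \rulename{P-E-If-True}, or \rulename{P-E-If-False}, and the case analysis becomes a more-or-less one-to-one matching of each such rule against its $i$-indexed counterpart in Figure~\ref{fig:op-semantics-extended}.

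\textbf{Congruence and functional cases.} For \rulename{P-E-Context}, write $e = E[e_0]$; since $e$ is pair-free so is $E$, hence $\proj{E}{i} = E$ (a degenerate instance of Lemma~\ref{lem:proj-context}), and the inner step $\Psi \vdash \proj{\sigma}{i}\sepidx{} e_0 \stepsto \sigma'\sepidx{} e_0'$ with $e' = E[e_0']$ is smaller, so by the induction hypothesis $\Psi \vdash \sigma\sepidx{i} e_0 \stepsto \sigma''\sepidx{i} e_0'$ with $\proj{\sigma''}{i} = \sigma'$; then the $i$-indexed \rulename{P-E-Context} repackages this into a step on $E[e_0]$. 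The induction is well-founded because \rulename{P-E-Context} peels off a nonempty evaluation context. For the purely functional cases (\rulename{P-E-Bop}, \rulename{P-E-Field}, \rulename{P-E-App}, \rulename{P-E-Let}, \rulename{P-E-If-True}, \rulename{P-E-If-False}, \rulename{P-E-Relab}) the store is unchanged, so we take $\sigma'' = \sigma$ and observe that the reduct produced by the $i$-indexed rule is literally the same as in the projected run; for \rulename{P-E-App} and \rulename{P-E-Let} this is because the substituted value $v$ is pair-free, so $v = \proj{v}{i}$ and the two substitutions coincide (here Lemma~\ref{lem:proj-distributed} degenerates to an identity).

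\textbf{Memory cases.} For \rulename{P-E-Deref} with $e = {*}\eloc$, the projected step uses $\proj{\sigma}{i}(\eloc)$, so $\eloc \in \m{dom}(\proj{\sigma}{i})$; by the definition of $\proj{\cdot}{i}$ on stores this forces $\proj{\sigma(\eloc)}{i}\neq\bullet$ and $\proj{\sigma(\eloc)}{i} = \proj{\sigma}{i}(\eloc)$, which is exactly $\eread_i\ \sigma(\eloc)$, so the $i$-indexed rule produces the same reduct with the store untouched. For \rulename{P-E-Assign} with $e = \eloc := v$, the projected store becomes $\proj{\sigma}{i}[\eloc\mapsto v]$ while the $i$-indexed rule yields $\sigma[\eloc\mapsto\eupdate_i\ \sigma(\eloc)\ v]$; unfolding $\eupdate_i$ (for $i=1$, $\eupdate_1\ \sigma(\eloc)\ v = \epair{v}{\proj{\sigma(\eloc)}{2}}$) and using that $v$ is pair-free, the $i$-projection of the updated store is precisely $\proj{\sigma}{i}[\eloc\mapsto v]$; the $i=2$ subcase is symmetric. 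The case \rulename{P-E-New} with $e = \enew(v)$ is analogous: $\enew_1\ v = \epair{v}{\bullet}$, so $\proj{\sigma[\eloc\mapsto\enew_1\ v]}{1} = \proj{\sigma}{1}[\eloc\mapsto v]$, matching the projected run.

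\textbf{Main obstacle.} The delicate point is the \rulename{P-E-New} case, namely reconciling the notion of \emph{freshness}: the projected run allocates $\eloc$ fresh for $\proj{\sigma}{i}$, but $\m{dom}(\proj{\sigma}{i})$ can be a strict subset of $\m{dom}(\sigma)$ — exactly the locations carrying $\bullet$ in execution $i$ are dropped by projection — so $\eloc$ need not be fresh for the paired store $\sigma$. The intended fix is that the statement is insensitive to which fresh name is chosen: either one stipulates that the allocator picks globally fresh names, or one reads the lemma (and its uses) up to renaming of the freshly allocated location, after which the case goes through verbatim. Two minor bookkeeping remarks: the deref and assign cases tacitly use that $\eloc\in\m{dom}(\proj{\sigma}{i})$ implies $\eloc\in\m{dom}(\sigma)$, which is immediate from the store-projection definition (and is additionally guaranteed when $\sigma\sepidx{i} e$ is closed in the sense of the earlier definitions); and no separate well-formedness hypothesis beyond this is needed for the functional cases.
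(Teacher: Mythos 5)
Your proof is correct and follows essentially the same route as the paper's (which is only a one-line sketch: induct on the derivation and apply the identically-named $i$-indexed rule in each case), with your case analysis simply filling in the details the paper leaves to the reader. Your remark on the \rulename{P-E-New} case is a genuine subtlety the sketch glosses over—freshness with respect to $\proj{\sigma}{i}$ need not imply freshness with respect to $\sigma$—and your proposed reading (globally fresh allocation, or the lemma taken up to renaming of the newly allocated location) is the standard and adequate fix.
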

\begin{proofsketch}
By induction over the structure of $\ee$. For all the cases, we can
apply the same evaluation rule of $\ee$. 
\end{proofsketch}

\begin{lem}[Projected execution completeness]
\label{lem:i-exec-step}
If  $\Psi \vdash \proj{\sigma\sepidx{} e}{i} \stepsto
\sigma'\sepidx{} e'$ where $i\in\{1, 2\}$,  then exists $\sigma_1$,
$e_1$, and $k\in\{1,2\}$ s.t. $\Psi \vdash \sigma\sepidx{} e \stepsto^k
\sigma_1 \sepidx{}e_1$ and $\proj{\sigma_1\sepidx{}e_1}{i} = \sigma'\sepidx{}e'$.
\end{lem}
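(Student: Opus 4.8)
The plan is to prove this by induction on the structure of the paired expression $e$, case-splitting on where the redex of the projected configuration sits relative to the pair structure of $e$; recall that $\proj{\sigma\sepidx{}e}{i} = \proj{\sigma}{i}\sepidx{}\proj{e}{i}$, so the hypothesis is an ordinary \langname step on the core configuration $\proj{\sigma}{i}\sepidx{}\proj{e}{i}$. There are three essential scenarios. First, when the active subterm of $e$ is an ordinary, $\bullet$-synchronized redex (i.e.\ $e = E[r]$ for a paired evaluation context $E$ with $r$ a core redex not headed by a pair), the corresponding $\bullet$-indexed rule of Figure~\ref{fig:op-semantics-extended} --- \rulename{P-E-Bop}, \rulename{P-E-Relab}, \rulename{P-E-Deref}, \rulename{P-E-Assign}, \rulename{P-E-New}, \rulename{P-E-Field}, \rulename{P-E-App}, \rulename{P-E-Let}, or one of the \rulename{P-E-If} rules --- embedded via \rulename{P-E-Context} supplies a single paired step whose $i$-projection is $\sigma'\sepidx{}e'$; here $k=1$. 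Second, when the redex of $\proj{e}{i}$ lies inside the $i$-th component of a top-level pair $\epair{e_1}{e_2}$ appearing in $e$ (so $e=E[\epair{e_1}{e_2}]$ with $e_i$ not a value), the given step is a core step on $e_i=\proj{e_i}{i}$; I would lift it to a $\sepidx{i}$-step using Lemma~\ref{lem:i-proj-step}, apply \rulename{P-E-Pair} (whose side condition $e_j=e'_j$ is immediate), and embed via \rulename{P-E-Context}; again $k=1$. Third, when the active subterm of $e$ is a top-level pair of values $\epair{v_1}{v_2}$ in an elimination position (dereference, assignment, field projection, application, $\eif$-scrutinee, or relabeling argument), I would take two paired steps: first the matching $\rulename{P-E-Lift-*}$ rule (under \rulename{P-E-Context}), then a \rulename{P-E-Pair} step on the $i$-th component exactly as in the second scenario, so $k=2$.

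I would use Lemma~\ref{lem:i-proj-step} to convert the given core step into a $\sepidx{i}$-step whenever a \rulename{P-E-Pair} step is invoked, Lemma~\ref{lem:proj-distributed} for the substituting redexes (\rulename{Let}, \rulename{App}), and Lemma~\ref{lem:proj-context} to push projection through evaluation contexts. The observation that keeps $k\le 2$ is that every $\rulename{P-E-Lift-*}$ rule merely distributes the eliminator over the two components, so taking it does not change the $i$-projection (e.g.\ $\proj{\epair{v_1}{v_2}.j}{i}=v_i.j=\proj{\epair{v_1.j}{v_2.j}}{i}$, and similarly for the other lift rules and, via $\eread_i$, $\eupdate_i$, $\enew_i$, for the memory operations); a single further $\sepidx{i}$-step on the $i$-component then lands exactly on the core step's result, while the $j$-component is left behind --- which is harmless, because \rulename{P-E-Pair} only asks that the $j$-component be unchanged and the conclusion constrains only the $i$-projection. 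Expressions that are already values (plain or paired) give vacuous cases, and recursion into proper subterms (for instance the subexpression being evaluated inside $\enew$ or a $\bop$) is discharged by the induction hypothesis, using that a non-value subterm of $e$ projects to a non-value.

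The main obstacle I anticipate is bookkeeping rather than any conceptual difficulty: one must drive the case analysis by the structure of $e$ rather than of $\proj{e}{i}$, because a pair occupying an argument/value position in $e$ disappears under projection and can alter the apparent shape of the core evaluation context, so the decomposition $e=E[\cdot]$ has to be read off from $e$ directly and then related to the core context via Lemma~\ref{lem:proj-context}; and in each of the two-step cases one has to check explicitly, using the definitions of $\proj{-}{i}$ and of $\eread_i$, $\eupdate_i$, $\enew_i$, that the composite of the lift step and the \rulename{P-E-Pair} step projects to the single core step. I also expect to carry the closed-configuration invariant (as in Lemma~\ref{lem:pres-wf}) along, so that the reads, writes, and allocations performed by the $\bullet$-indexed rules are well defined and yield genuine extended values; with that in hand the degenerate cases collapse and what remains is a routine, if lengthy, structural induction.
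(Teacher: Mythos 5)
Your proposal is correct and follows essentially the same route as the paper's proof: a structural induction on the paired expression $e$ with the same three-way case split (an ordinary reduction or context/I.H. case with $k=1$, a step inside a pair component handled via Lemma~\ref{lem:i-proj-step} followed by \rulename{P-E-Pair} with $k=1$, and a paired value in elimination position handled by the matching \rulename{P-E-Lift-*} rule followed by a \rulename{P-E-Pair} step with $k=2$). Your key observation that the lift rules leave the $i$-projection unchanged, so the subsequent component step lands exactly on the core step's result, is precisely the argument the paper uses in its application subcase.
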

\begin{proof}
By induction over the structure of $e$. For most cases, we consider
one of the following three cases: the \rulename{Context} rule applies,
a reduction applies, or a lift rule applies. We show one example case below. We
also show the special case when $e$ is a pair. 
\begin{description}
\item[Case:] $e=v\; e_1$ 
\begin{tabbing}
By assumption
\\~~\=(1)~~~\= $\proj{e}{i} = v_2\;e_2$ and $v_2=\proj{v}{i}$,
$\proj{e_1}{i}=e_2$
\\
\\\>{\bf Subcase a:} \rulename{context} applies
\\\>By assumption
\\\>~~\=(a1)~~~\= $\Psi \vdash \proj{\sigma}{i}\sepidx{} v_2\;e_2 \stepsto
 \proj{\sigma}{i}\sepidx{} v_2\; e'_2$ and 
\\\>\>(a2)\> $\Psi \vdash \proj{\sigma}{i}\sepidx{} e_2 \stepsto
 \proj{\sigma}{i}\sepidx{} e'_2$
\\\> By I.H. on $e_1$
\\\>\>(a3)\> exists $k$, $\sigma_2$, $e'_1$, s.t. 
 $\Psi \vdash \sigma\sepidx{} e_1 \stepsto^k
\sigma_2 \sepidx{}e'_1$ and $\proj{\sigma_2\sepidx{}e'_1}{i} =
\proj{\sigma}{i} \sepidx{}e'_2$.
\\\>By applying \rulename{context}
\\\>\>(a4)\> $\Psi \vdash \sigma\sepidx{} v\;e_1 \stepsto^k
\sigma_2 \sepidx{}v\;e'_1$
\\\>By projection and (1), (a3)
\\\>\>(a5)\>  $\proj{\sigma_2\sepidx{} v\;e'_1}{i} =
\proj{\sigma}{i} \sepidx{}v_2\;e'_2$.
\\
\\\>{\bf Subcase b:} \rulename{app} applies
\\\>By assumption
\\\>\>(b1)\>$v_2 = f$, $e_2 = v_3$, and exists $v_1$, $e_1=v_1$
\\\>\>(b2)\> $\Psi \vdash \proj{\sigma}{i}\sepidx{} v_2\;e_2 \stepsto
 \proj{\sigma}{i}\sepidx{} e_3[x\Leftarrow v_3]$ and 
\\\>\>(b3)\> $\Psi=\Psi', f(x)=e_3$
\\\>There are two cases: (I) $v=\epair{f_1}{f_2}$ and (II) $v=f$
\\\> For (II), we can apply the \rulename{app} rule, and use
Lemma~\ref{lem:proj-distributed}. 
\\\> We show details of proof of (I) below. 
\\\> By \rulename{Lift-App}  rule
\\\>\>(b4)\> $\Psi \vdash \sigma\sepidx{} \epair{f_1}{f_2}\;v_1 \stepsto
\sigma \sepidx{} \epair{f_1 \;\proj{v_1}{1} }{f_2 \;\proj{v_1}{2} }$
\\\> We show the case $i=1$ and the other case can be proven
similarly.
\\\>By (1) and (b1)
\\\>\>(b5)\> $f_1=f$ and $v_1 = v_3$
\\\>By \rulename{app}
\\\>\>(b6)\> $\Psi \vdash \sigma \sepidx{i} f\;v_3 \stepsto
 \sigma\sepidx{i} e_3[x\Leftarrow v_3]$
\\\>By \rulename{pair}
\\\>\>(b7)\> $\Psi \vdash \sigma\sepidx{} \epair{f_1 \;\proj{v_1}{1}
}{f_2 \;\proj{v_1}{2}}
\stepsto \sigma \sepidx{} \epair{ e_3[x\Leftarrow v_3]}{f_2
  \;\proj{v_1}{2}}$
\\By (b4) and (b7)
\\\> the conclusion holds
\end{tabbing}

\item[Case:] $e=\epair{e_1}{e_2}$
\begin{tabbing}
By assumption
\\~~\=(1)~~~\= $\Psi \vdash \proj{\sigma}{i}\sepidx{} e_i \stepsto
\sigma'\sepidx{} e'$
\\We prove the case when $i=1$, the other case is similar
\\By $e_1$ is a core \langname construct and Lemma~\ref{lem:i-proj-step}
\\\>(2)\>$\Psi \vdash \sigma\sepidx{1} e_1 \stepsto
\sigma''\sepidx{1} e'$ and $\sigma'= \proj{\sigma''}{1}$
\\By \rulename{pair} and (2)
\\\>(3)\> $\Psi \vdash \sigma \sepidx{} \epair{e_1}{e_2} \stepsto
\sigma''\sepidx{} \epair{e'}{e_2}$
\end{tabbing}
\end{description}
\end{proof}

\begin{thm}[Completeness] \label{thm:polc-pair-op-completeness}
If for all $i\in\{1, 2\}$, $\Psi \vdash \proj{\sigma\sepidx{} \extE}{i} \stepsto^{n_i}
\sigma_i\sepidx{} v_i$, then exists $\sigma'$, $v'$, s.t. $\Psi \vdash \sigma\sepidx{} e \stepsto^*
\sigma' \sepidx{}v'$ and for all $i\in\{1, 2\}$, $\proj{\sigma'\sepidx{}v'}{i} = \sigma_i\sepidx{}v_i$.
\end{thm}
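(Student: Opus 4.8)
The plan is to prove this by strong induction on $n_1 + n_2$, the combined length of the two projected runs. The engine of the argument is Lemma~\ref{lem:i-exec-step}, which pushes a single projected step of one component back into one or two paired steps, together with Theorem~\ref{thm:pair-op-soundness}, which lets me track what those paired steps do to the \emph{other} component. Throughout the induction I maintain the invariant that $\sigma \sepidx{} \extE$ is closed, which is needed to invoke Soundness and is preserved by Lemma~\ref{lem:pres-wf}.

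For the base case $n_1 = n_2 = 0$, both $\proj{\extE}{1}$ and $\proj{\extE}{2}$ are already values of core \langname. A routine structural induction on $\extE$ shows $\extE$ is itself an extended value: the only shapes of $\extE$ whose two projections can both be values are core values, records $(T)\{\extE_1,\dots,\extE_k\}$ all of whose components have this property, and pairs $\epair{v_1}{v_2}$ of core values — and each of these is an extended value. Taking $\sigma' = \sigma$ and $v' = \extE$ finishes this case.

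For the inductive step, assume without loss of generality that $n_1 \ge 1$; the case $n_1 = 0, n_2 \ge 1$ is symmetric, since Lemma~\ref{lem:i-exec-step} and Theorem~\ref{thm:pair-op-soundness} treat the two components symmetrically. Decompose the component-$1$ run as $\proj{\sigma \sepidx{} \extE}{1} \stepsto \sigma_1'\sepidx{}e_1' \stepsto^{n_1-1} \sigma_1\sepidx{}v_1$. By Lemma~\ref{lem:i-exec-step} there are $\bar\sigma, \bar e$ and $k \in \{1,2\}$ with $\Psi \vdash \sigma\sepidx{}\extE \stepsto^k \bar\sigma\sepidx{}\bar e$ and $\proj{\bar\sigma\sepidx{}\bar e}{1} = \sigma_1'\sepidx{}e_1'$. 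Applying Theorem~\ref{thm:pair-op-soundness} to each of those $k$ paired steps (keeping configurations closed via Lemma~\ref{lem:pres-wf}) shows that component $2$'s projection either stalls or steps at each one, so $\Psi \vdash \proj{\sigma\sepidx{}\extE}{2} \stepsto^{k_2} \proj{\bar\sigma\sepidx{}\bar e}{2}$ for some $0 \le k_2 \le k \le 2$. Using determinism of core \langname reduction — up to renaming of the location freshly chosen by $\enew$, or after fixing a deterministic allocator — and the fact that a value cannot step, the $k_2$-step reduction must be a prefix of the given $n_2$-step run, hence $k_2 \le n_2$ and $\proj{\bar\sigma\sepidx{}\bar e}{2} \stepsto^{n_2-k_2} \sigma_2\sepidx{}v_2$. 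Since $(n_1 - 1) + (n_2 - k_2) < n_1 + n_2$, the induction hypothesis applied to $\bar\sigma\sepidx{}\bar e$ (with the component runs of lengths $n_1 - 1$ and $n_2 - k_2$) yields $\sigma', v'$ with $\Psi \vdash \bar\sigma\sepidx{}\bar e \stepsto^* \sigma'\sepidx{}v'$ and $\proj{\sigma'\sepidx{}v'}{i} = \sigma_i\sepidx{}v_i$ for $i\in\{1,2\}$; prepending the $k$ paired steps gives the theorem.

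The step I expect to be the main obstacle is the bookkeeping in the inductive step that ties component $2$'s behavior under the manufactured paired steps back to its given terminating run: concretely, establishing (or citing) determinism of core \langname reduction modulo $\alpha$-renaming of store locations, so that the $k_2$-step reduction produced by Soundness is genuinely an initial segment of the hypothesized $n_2$-step run, giving both $k_2 \le n_2$ and agreement of the two at the continuation configuration $\proj{\bar\sigma\sepidx{}\bar e}{2}$. By contrast, the structural base-case fact, the closedness threading, and the choice of well-founded measure are comparatively routine, and the rest of the argument is driven mechanically by Lemma~\ref{lem:i-exec-step} and Theorem~\ref{thm:pair-op-soundness}.
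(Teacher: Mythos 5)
Your proposal is correct and follows essentially the same route as the paper's proof: induction on $n_1+n_2$, using Lemma~\ref{lem:i-exec-step} to lift one projected step of one component into $k\in\{1,2\}$ paired steps, Theorem~\ref{thm:pair-op-soundness} to see that the other component's projection stalls or advances, and determinism of core \langname reduction to align that advance with a prefix of the given run before invoking the induction hypothesis. Your handling of closedness via Lemma~\ref{lem:pres-wf} and of fresh-location nondeterminism is, if anything, slightly more explicit than the paper's.
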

\begin{proof}
By induction over $n_1+n_2$.
\begin{description}
\item[Base case] $n_1+n_2=0$ 
\begin{tabbing}
By assumption
\\~~\=(1)~~~\=$\proj{\extE}{i} = v_i$ for $i\in\{1,2\}$
\\By (1) and the definition of projection
\\\>(2)\> $\extE$ is a value.
\end{tabbing} 
\item[Inductive case] $n_1+n_2=k+1$
\begin{tabbing}
By assumption, at least one of the projections takes a step. 
\\We show one
case and the other can be proven similarly. 
\\~~\=(1)~~~~\=$\Psi\vdash \proj{\sigma \sepidx{} \extE}{1} \stepsto\sigma'_1
\sepidx{}e'_1\stepsto^{n_1-1}\sigma_1 \sepidx{}v_1$
\\\>(2)\> $\Psi\vdash \proj{\sigma \sepidx{} \extE}{2}\stepsto^{n_2}\sigma_2 \sepidx{}v_2$
\\By Lemma~\ref{lem:i-exec-step}, 
\\\>(3)\>exists $k\in\{1,2\}$, $\sigma'$ and $\extE_1$ s.t. $\Psi\vdash \sigma
\sepidx{} \extE \stepsto^k \sigma' \sepidx{} \extE_1$ 
\\
\\\>{\bf Subcase I:}  $k=1$
\\\>By the evaluation of a core \langname term is deterministic
\\\>~~\=(I1)~~~ $\sigma'_1 \sepidx{}e'_1 = \proj{\sigma' \sepidx{} \extE_1}{1}$
\\\>By Theorem~\ref{thm:pair-op-soundness} and (2), we have two cases
\\\>\>{\bf Subcase a:}  
\\\>\>~~\=(a1)~~\=$\Psi\vdash \proj{\sigma \sepidx{} \extE}{2} 
   \stepsto \proj{\sigma' \sepidx{} \extE_1}{2}$ 
\\\>\>By the evaluation of a core \langname term is deterministic
\\\>\>\>(a2)\> $\Psi\vdash \proj{\sigma \sepidx{} \extE}{2} \stepsto\sigma'_2
\sepidx{}e'_2\stepsto^{n_2-1}\sigma_2 \sepidx{}v_2$
\\\>\>\>(a3)\>$\sigma'_2 \sepidx{}e'_2= \proj{\sigma' \sepidx{}
  \extE_1}{2}$
\\\>\>By I.H. (1), (I1), (a2), (a3)
\\\>\>\>(a4)\> $\sigma''$, $v'$, s.t. $\Psi \vdash \sigma'\sepidx{} \extE_1 \stepsto^*
\sigma'' \sepidx{}v'$ 
\\\>\>\>(a5)\>and for all $i\in\{1, 2\}$,
$\proj{\sigma''\sepidx{}v'}{i} = \sigma_i\sepidx{}v_i$
\\\>\>By (a4) and (3), the conclusion holds
\\\>\>{\bf Subcase b:}   
\\\>\>\> (b1)\>and $\proj{\sigma \sepidx{} \extE}{2} 
   =\proj{\sigma' \sepidx{} \extE_1}{2}$
\\\>\> By I.H. (1), (I1), (2), (b1)
\\\>\>\>(b2)\> $\sigma''$, $v'$, s.t. $\Psi \vdash \sigma'\sepidx{} \extE_1 \stepsto^*
\sigma'' \sepidx{}v'$ 
\\\>\>\>(b3)\>and for all $i\in\{1, 2\}$,
$\proj{\sigma''\sepidx{}v'}{i} = \sigma_i\sepidx{}v_i$
\\\>\>By (b2) and (3), the conclusion holds
\\
\\\>{\bf Subcase II:}  $k=2$
\\\> \parbox{.7\textwidth}{The proof is similar to the previous case. We need to case on
 whether the projection of the configuration to the right execution makes a
step or remains the same. Finally invoke I.H. }
\end{tabbing}
\end{description}
\end{proof}

\subsection{Summary of Typing Rules for Paired \langname}
\label{app:polc:typing}
First we define subtyping relations and policy operations below. 
\begin{mathpar}
\inferrule
	{  S_1 \sqsubseteq_S S_2 \\
          I_1 \sqsubseteq_I I_2}
	{(S_1, I_1) \sqsubseteq (S_2, I_2)}
\and
\inferrule
	{~}
	{\bot \sqsubseteq \rho}
\and
\inferrule
	{~}
	{\rho\sqsubseteq \top}
\and
\inferrule{
  \lab_1\sqsubseteq \lab_2\\
    \pol_1 \sqsubseteq \pol_2
} {
  \lab_1::\pol_1 \sqsubseteq \lab_2::\pol_2
}
\end{mathpar}

\begin{mathpar}
%
\inferrule
	{~}
	{\rho_1 \sqcup \rho_2= \rho_2\sqcup\rho_1}
\and
\inferrule
	{~}
	{\rho \sqcup \bot= \rho}
\and
\inferrule
	{~}
	{\rho \sqcup \top = \top}
\and
\inferrule{
  \lab = \lab_1\sqcup \lab_2\\
  \pol =  \pol_1 \sqcup \pol_2
} {
  \lab_1::\pol_1 \sqcup \lab_2::\pol_2 = \lab::\pol
}
\end{mathpar}

\noindent\framebox{$b\leq b'$}~~\framebox{$t\leq t'$}~~\framebox{$s\leq s'$}
\begin{mathpar}
\inferrule*[right=$\leq$Refl]{ 
}{b \leq b} 
\and
\inferrule*[right=$\leq$Trans]{ 
  b \leq b'\\
  b'\leq b''
}{
  b \leq b''
} 
%
%
\end{mathpar}
\begin{mathpar}
\inferrule*[right=$\leq$Unit]{ 
}{
  \tunit\leq \tunit
}
%
\and
\inferrule*[right=$\leq$Pol]
	{b \leq b' \\
	 \pol_1 \sqsubseteq \pol_2}
	{b\ {\pol_1} \leq b'\ {\pol_2}}
\and
\inferrule*[right=$\leq$Fun]{
  \pc' \sqsubseteq \pc \\
 t'_1  \leq t_1\\
 t_2 \leq t'_2\\
\rho\sqsubseteq \rho'
 }{
    [\pc] (t_1 \rightarrow t_2)^\rho\leq  [\pc'] (t'_1 \rightarrow t'_2)^{\rho'}
}\end{mathpar}

\noindent\framebox{$\rho\rhd s$}
\begin{mathpar}
\inferrule*{  }{
  \rho\rhd\tunit
}
\and
\inferrule*{
  \pol'\sqsubseteq \pol
}{
  \pol' \rhd  b\ {\pol}
}
\and
\inferrule*{
\rho\sqsubseteq \rho'
 }{
    \rho\rhd  [\pc] (t_1 \rightarrow t_2)^{\rho'}
}\end{mathpar}

Figure~\ref{fig:val-typing} and~\ref{fig:exp-typing}  summarize typing rules for extended \langname.

\begin{figure*}[t!]
\flushleft
\noindent\framebox{$D;F;\Sigma; \Gamma \vdash v : s$}
\begin{mathpar}
\inferrule*[right=P-T-V-Int]{
}{ 
  D; F;\Sigma; \Gamma  \vdash n : \tpint\ \pol
}
\and
\inferrule*[right=P-T-V-Loc]{
}{ 
  D; F;\Sigma; \Gamma  \vdash \eloc: \storetp(x)\ \pol 
}
\and
\inferrule*[right=P-T-V-Var]{
}{ 
  D; F; \Sigma; \Gamma  \vdash x : \Gamma(x) 
}
\and
\inferrule*[right=P-T-V-Fun]{
}{ 
  D; F; \Sigma; \Gamma  \vdash f : F(f)\ \pol
}
\and
\inferrule*[right=P-T-V-Struct]{
T\mapsto \tstruct\ T\ \{s_1,\cdots, s_n\} \in D
\\\forall i,  D; \Sigma; \Gamma  \vdash v_i: s_i
}{ 
  D; F; \Sigma; \Gamma \vdash (T)\ \{v_1,\cdots,v_n\} :
  T\ \pol
}
\and
\inferrule*[right=P-T-V-Sub]{
  D; F;\Sigma; \Gamma  \vdash v : s'\\
 s'\leq s
}{ 
  D; F;\Sigma; \Gamma \vdash v: s
}
\and
\inferrule*[right=P-T-V-Pair]{ 
  D; F;\Sigma; \Gamma  \vdash v_1 : s
\\   D; F;\Sigma; \Gamma  \vdash v_2 : s
\\ \pol \rhd s 
\\ \pol\in\ H
}{ 
  D; F;\Sigma; \Gamma  \vdash \epair{v_1}{v_2} : s
}
\end{mathpar}
\caption{Typing Rules for Values in Extended \langname}
\label{fig:val-typing}
\end{figure*}

\begin{figure*}[t!]
\flushleft
\noindent\framebox{$D;F; \Sigma; \Gamma ; \pc \vdash e : s$}
\begin{mathpar}
\inferrule*[right=P-T-E-Val]{
  D;F; \Sigma; \Gamma  \vdash v : s
}{ 
  D;F; \Sigma; \Gamma ; \pc \vdash v: s\join\pc
}
\and
\inferrule*[right=P-T-E-Field]{
  D;F; \Sigma; \Gamma \vdash v : T\
  \pol
\\ T\mapsto \tstruct\ \{s_1,\cdots, s_n\} \in D
\\ \pc\sqsubseteq\pol
}{ 
  D;F; \Sigma; \Gamma ; \pc \vdash v.i : s_i \join\ \pol 
}
\and
\inferrule*[right=P-T-E-New]{
  D;F; \Sigma; \Gamma ; \pc \vdash e : s
\\ \pc\rhd \pol
}{ 
  D;F; \Sigma; \Gamma ; \pc \vdash \enew(e): \tptr(s)\ \pol
}
\and
\inferrule*[right=P-T-E-Deref]{
  D;F; \Sigma; \Gamma \vdash v : \tptr(s)\ \pol
\\ \pc\sqsubseteq\pol
}{ 
  D;F; \Sigma; \Gamma ; \pc \vdash *v: s\join \pol
}
\and
\inferrule*[right=P-T-E-Assign]{
  D;F; \Sigma; \Gamma  \vdash v_1 : \tptr(s)\ \pol
 \\  D;F; \Sigma; \Gamma;\pc \vdash e_2 : s
\\  \pol \rhd s
}{ 
  D;F; \Sigma; \Gamma ; \pc \vdash v_1:= e_2: \tunit
}
\and
\inferrule*[right=P-T-E-App]{
  D;F; \Sigma; \Gamma \vdash v_f :  [\pc'](t_1 \rightarrow t_2)^\pol
\\  D;F; \Sigma; \Gamma;\pc \vdash e_a :  t_1
\\ \rho\sqcup\pc \sqsubseteq \pc'
}{ 
  D;F; \Sigma; \Gamma ; \pc \vdash v_f\ e_a: t_2
}
\and
\inferrule*[right=P-T-E-Let]{
  D;F; \Sigma; \Gamma ; \pc \vdash e_1 : s_1
\\   D;F; \Sigma; \Gamma, x:s_1 ; \pc \vdash e_2 : s_2
}{ 
  D;F; \Sigma; \Gamma ; \pc \vdash \elet\ x:s_1=e_1\ \ein\ e_2: s_2
}
\and
\inferrule*[right=P-T-E-If]{
  D;F; \Sigma; \Gamma  \vdash v_1 : \tpint\ \pol
\\   D;F; \Sigma; \Gamma ; \pc\sqcup \pol  \vdash e_2 : s
\\   D;F; \Sigma; \Gamma ; \pc\sqcup \pol  \vdash e_3 : s
}{ 
  D;F; \Sigma; \Gamma ; \pc \vdash \eif\ v_1\ \ethen\ e_2\ \eelse\
  e_3 : s
}
\and
\inferrule*[right=P-T-E-DE]{
  D;F; \Sigma; \Gamma \vdash v_f: (\dne)  [\pc'] (b\ \lab_1{::}\top
  \rightarrow b\ \lab_2{::}\bot)^{\rho_f}
\\  D;F; \Sigma; \Gamma;\pc \vdash e_a :  b\ \pol
\\ \pol = \lab_1{::}\lab_2{::}\pol'
\\ \rho_f\sqcup\pc \sqsubseteq \pc'
}{ 
  D;F; \Sigma; \Gamma ; \pc \vdash v_f\ e_a: b\ \lab_2{::}\pol'
}
\and
\inferrule*[right=P-T-E-Relabel]{ 
  D;F; \Sigma; \Gamma  \vdash v : b\ \pol\\
\pc\sqsubseteq \pol'
}{ 
  D;F; \Sigma; \Gamma ; \pc \vdash \erelab(\pol'\Leftarrow\pol)\ v
  : b\ \pol'
}
\and
\inferrule*[right=P-T-E-Sub]{
  D;F; \Sigma; \Gamma ; \pc \vdash e : s'\\
 s'\leq s
}{ 
  D;F; \Sigma; \Gamma ; \pc \vdash e: s
}
\and
\inferrule*[right=P-T-E-Pair]{ 
  D;F; \Sigma; \Gamma;\pc\sqcup\pol'  \vdash e_1 : s
\\   D;F; \Sigma; \Gamma;\pc\sqcup\pol' \vdash e_2 : s
\\ \pol \rhd s 
\\ \pol\in\ H
\\ \pol'\in\ H
}{ 
  D;F; \Sigma; \Gamma; \pc  \vdash \epair{e_1}{e_2} : s
}
\end{mathpar}
\caption{Typing Rules for Expressions in Extended \langname}
\label{fig:exp-typing}
\end{figure*}

\subsection{Preservation}

Next we present the lemmas and proofs for the Preservation Theorem. We define $\Sigma\leq\Sigma'$ as $\Sigma' = \Sigma, \Sigma''$.

\begin{lem}\label{lem:guard-weakening}
If $\pol'\sqsubseteq \pol$, $\pol\rhd s$, $s\leq s'$, then $\pol'\rhd s'$
\end{lem}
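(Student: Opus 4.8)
The plan is to prove this by case analysis on the shape of the security type $s$, since both the guard relation $\pol\rhd s$ and the subtyping judgment $s\leq s'$ are defined by inversion on the structure of $s$. Concretely, $s$ is either $\tunit$, a policed base type $b\ \pol_1$, or a function type $[\pc](t_1\rightarrow t_2)^{\rho_1}$, and in each case the applicable subtyping rule forces $s'$ to have the same outer shape, so the analysis stays aligned.

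First, if $s=\tunit$, inversion on $s\leq s'$ forces $s'=\tunit$, and $\pol'\rhd\tunit$ holds by the first guard rule for any $\pol'$. If $s=b\ \pol_1$, inversion on $s\leq s'$ gives $s'=b'\ \pol_2$ with $b\leq b'$ and $\pol_1\sqsubseteq\pol_2$, while inversion on $\pol\rhd b\ \pol_1$ gives $\pol\sqsubseteq\pol_1$. Chaining the three inequalities $\pol'\sqsubseteq\pol$, $\pol\sqsubseteq\pol_1$, $\pol_1\sqsubseteq\pol_2$ yields $\pol'\sqsubseteq\pol_2$, hence $\pol'\rhd b'\ \pol_2 = s'$. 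If $s=[\pc](t_1\rightarrow t_2)^{\rho_1}$, inversion on $s\leq s'$ gives $s'=[\pc'](t_1'\rightarrow t_2')^{\rho_2}$ with $\rho_1\sqsubseteq\rho_2$ (among other constraints), and inversion on $\pol\rhd s$ gives $\pol\sqsubseteq\rho_1$; chaining again gives $\pol'\sqsubseteq\rho_2$, so $\pol'\rhd s'$. Note that the covariant/contravariant subcomponent constraints from $\leq$Fun ($\pc'\sqsubseteq\pc$, $t_1'\leq t_1$, $t_2\leq t_2'$) and the base-type constraint $b\leq b'$ are irrelevant, since the guard relation inspects only the outermost policy.

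The only nontrivial ingredient is transitivity of the policy ordering $\sqsubseteq$, used once in each non-unit case; I would either cite it or establish it as a short auxiliary lemma. Since $\sqsubseteq$ on policies is the pointwise lifting of the label order together with the rules $\bot\sqsubseteq\pol$ and $\pol\sqsubseteq\top$, transitivity follows by a routine induction on the two derivations: the base cases are discharged by the $\bot$/$\top$ rules, and the inductive case combines transitivity of the label lattice order with the induction hypothesis on the tails. This is the main (and very modest) obstacle; the rest is a mechanical three-way case split.
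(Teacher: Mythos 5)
Your proposal is correct and matches the paper's own (one-line) proof sketch, which likewise argues by examining the definitions of $\pol\rhd s$ and $s\leq s'$; your three-way case split on the shape of $s$, chaining $\pol'\sqsubseteq\pol\sqsubseteq\labof(s)\sqsubseteq\labof(s')$ via transitivity of $\sqsubseteq$, is exactly the intended argument. The only point worth noting is that your inversion step ("subtyping forces $s'$ to have the same outer shape") is the content of the paper's Lemma~\ref{lem:subtyping-form}, so if transitivity of $\leq$ is admitted at the level of security types you would cite or reprove that small induction rather than a single-rule inversion.
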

\begin{proofsketch}
By examining $\pol\rhd s$ and $s\leq s'$. 
\end{proofsketch}

\begin{lem}\label{lem:polc-pc-weakening}
If $\ee::D;F;\Sigma; \Gamma; \pc \vdash  e: s$ and $\pc'\sqsubseteq \pc$
then $D;F;\Sigma; \Gamma; \pc' \vdash  e: s$.
\end{lem}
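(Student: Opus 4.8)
The plan is to prove Lemma~\ref{lem:polc-pc-weakening} by structural induction on the derivation $\ee :: D;F;\Sigma;\Gamma;\pc \vdash e : s$. Note first that the value typing judgment $D;F;\Sigma;\Gamma \vdash v : s$ carries no program counter, so it is unaffected by replacing $\pc$ with $\pc'$; hence all the work is in the expression typing rules, and for each the goal is to rebuild a derivation of the \emph{same} type $s$ with $\pc$ lowered to $\pc'$.

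Before the induction I would record three monotonicity facts, all immediate from the pointwise definitions of $\sqsubseteq$ and $\sqcup$ (checking the degenerate cases where a policy is $\bot$ or $\top$, using $\bot\sqsubseteq\rho$, $\rho\sqsubseteq\top$, $\rho\sqcup\bot=\rho$, $\rho\sqcup\top=\top$) together with the subtyping rule $\leq$Pol: (i) $\pc'\sqsubseteq\pc$ implies $\pc'\sqcup\pol \sqsubseteq \pc\sqcup\pol$ for every $\pol$; (ii) $\pc'\sqsubseteq\pc$ implies $s\join\pc' \leq s\join\pc$; and (iii) $\pc'\sqsubseteq\pc$ preserves every side condition that mentions $\pc$ in the typing rules, namely conditions of the form $\pc\sqsubseteq\pol$, conditions of the form $\pc\rhd\pol$ (this instance of (iii) is Lemma~\ref{lem:guard-weakening}), and conditions of the form $\rho\sqcup\pc\sqsubseteq\pc_f$.

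The induction is then mechanical. In \rulename{P-T-E-Val} --- the only rule where $\pc$ enters the conclusion --- I re-apply the rule at $\pc'$ to obtain $v : s_0\join\pc'$ (where the original $s$ was $s_0\join\pc$), and then use (ii) and \rulename{P-T-E-Sub} to recover $v : s$. In \rulename{P-T-E-Field}, \rulename{P-T-E-Deref}, and \rulename{P-T-E-Relabel} the conclusion type is independent of $\pc$; the side condition $\pc\sqsubseteq\pol$ (resp. $\pc\sqsubseteq\pol'$) survives by (iii), so the rule re-applies verbatim. In \rulename{P-T-E-New} the condition $\pc\rhd\pol$ becomes $\pc'\rhd\pol$ by (iii), the body is retyped at $\pc'$ by the induction hypothesis, and the rule re-applies; \rulename{P-T-E-Assign} and \rulename{P-T-E-Let} are similar (their remaining premises are pc-free, and each subexpression is retyped at $\pc'$ by the induction hypothesis). \rulename{P-T-E-App} and \rulename{P-T-E-DE} retype the argument at $\pc'$ by the induction hypothesis, and their constraint $\rho\sqcup\pc\sqsubseteq\pc_f$ (resp.\ $\rho_f\sqcup\pc\sqsubseteq\pc_f$) survives by (iii). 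For \rulename{P-T-E-If} and \rulename{P-T-E-Pair} the subderivations type the branches/components under $\pc\sqcup\pol$ (resp.\ $\pc\sqcup\pol'$); by (i) the new label $\pc'\sqcup\pol$ is $\sqsubseteq$ the old one, so the induction hypothesis applies to those subderivations, and since the remaining premises ($v_1:\tpint\ \pol$, the $\in H$ side conditions, and $\pol\rhd s$) do not mention $\pc$, the rule re-applies at $\pc'$. Finally \rulename{P-T-E-Sub} just recurses into its single premise and re-applies.

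There is no deep obstacle here: the statement is a routine structural induction. The one case that genuinely changes the synthesized type is \rulename{P-T-E-Val}, which is why the lemma is phrased with a fixed $s$ and why subtyping must be invoked to land back on it; correspondingly the only thing that needs care is verifying facts (i)--(iii), in particular that the pointwise join and order behave monotonically even when the end markers $\bot$ or $\top$ are involved. Everything else follows by directly re-applying the rule after appealing to the induction hypothesis on the subexpressions.
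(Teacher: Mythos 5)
Your proposal is correct and follows essentially the same route as the paper: a structural induction over the typing derivation, invoking Lemma~\ref{lem:guard-weakening} (your fact (iii)) wherever $\pc$ occurs in a premise. The paper only gives this as a one-line sketch, and your write-up fills in the details it leaves implicit --- monotonicity of $\sqcup$ with respect to $\sqsubseteq$ and the use of \rulename{P-T-E-Sub} in the \rulename{P-T-E-Val} case to land back on the original type $s$ --- all of which are sound.
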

\begin{proofsketch}
By induction over the structure of $\ee$. We use
Lemma~\ref{lem:guard-weakening} in cases where $\pc$ is used in the
premises. 
\end{proofsketch}

\begin{lem}\label{lem:type-store-weakening}
\begin{enumerate}
\item If $\ee::D;F;\Sigma; \Gamma \vdash  v: s$ and $\Sigma\leq \Sigma'$
then $D;F;\Sigma'; \Gamma \vdash  v: s$.
\item If $\ee::D;F;\Sigma; \Gamma; \pc \vdash  e: s$ and $\Sigma\leq \Sigma'$
then $D;F;\Sigma'; \Gamma; \pc\vdash  e: s$.
\end{enumerate}
\end{lem}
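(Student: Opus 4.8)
The plan is to prove parts~(1) and~(2) simultaneously by induction on the structure of the typing derivation $\ee$, since the value-typing and expression-typing judgments are mutually defined (for instance \rulename{P-T-E-Val} has a value-typing premise, and \rulename{P-T-V-Struct} has value-typing premises). The crucial observation is that the store typing $\Sigma$ is consulted in exactly one rule, the axiom \rulename{P-T-V-Loc}, which assigns $\eloc$ the type $\Sigma(\eloc)\ \pol$. Since $\Sigma \le \Sigma'$ unfolds to $\Sigma' = \Sigma, \Sigma''$, the two contexts agree on $\mathrm{dom}(\Sigma)$; in particular any $\eloc$ already typed under $\Sigma$ satisfies $\Sigma'(\eloc) = \Sigma(\eloc)$, so re-applying \rulename{P-T-V-Loc} gives $D;F;\Sigma';\Gamma \vdash \eloc : \Sigma'(\eloc)\ \pol$, i.e.\ the same type.

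For the remaining axioms \rulename{P-T-V-Int}, \rulename{P-T-V-Var}, \rulename{P-T-V-Fun}, the conclusion follows from the identical axiom, as none mention $\Sigma$. For every rule with subderivations --- on the value side \rulename{P-T-V-Struct}, \rulename{P-T-V-Sub}, \rulename{P-T-V-Pair}, and on the expression side \rulename{P-T-E-Val}, \rulename{P-T-E-Field}, \rulename{P-T-E-New}, \rulename{P-T-E-Deref}, \rulename{P-T-E-Assign}, \rulename{P-T-E-App}, \rulename{P-T-E-Let}, \rulename{P-T-E-If}, \rulename{P-T-E-DE}, \rulename{P-T-E-Relabel}, \rulename{P-T-E-Sub}, \rulename{P-T-E-Pair} --- I would apply the appropriate induction hypothesis to each premise to replace $\Sigma$ by $\Sigma'$, note that all side conditions (subtyping $s' \le s$, guards $\pol \rhd s$, the orderings on $\pc$ and on policies, membership $\pol \in H$, struct lookups in $D$) involve neither $\Sigma$ nor $\Sigma'$, and then re-apply the same rule. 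The only point requiring a word of care is \rulename{P-T-E-Let}, whose second premise is typed under $\Gamma, x{:}s_1$: this context extension is orthogonal to $\Sigma$, so the induction hypothesis still applies (with $\Gamma$ instantiated to $\Gamma, x{:}s_1$).

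I do not expect a genuine obstacle here: this is a standard store-typing weakening lemma, and its entire content is the bookkeeping of the mutual induction plus the single non-trivial \rulename{P-T-V-Loc} case. The one thing to get right is that the induction must be genuinely simultaneous over the disjoint union of the two derivation forms --- one cannot finish part~(1) and only then start part~(2), because an expression derivation may bottom out in a value derivation --- but since neither judgment's extension to a larger $\Sigma$ depends on anything beyond the other judgment at strictly smaller derivations, there is no circularity.
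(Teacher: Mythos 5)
Your proposal is correct and matches the paper's own argument, which is simply a one-line sketch: induction over the structure of the typing derivation, with the store typing consulted only at the location axiom so that extending $\Sigma$ to $\Sigma'$ preserves every case. Your extra care about running the induction simultaneously over the mutually defined value and expression judgments is sound bookkeeping and exactly what the paper's sketch implicitly assumes.
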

\begin{proofsketch}
By induction over the structure of $\ee$. We use
Lemma~\ref{lem:guard-weakening} in cases where $\pc$ is used in the
premises. 
\end{proofsketch}

\begin{lem}[Projection well-typed]\label{lem:polc-projection-wf}
If $\ee::D; \Sigma;\Gamma \vdash v: s$ then $\forall i\in\{1,2\}$, $D; \Sigma;\Gamma \vdash \proj{v}{i}: s$ 
\end{lem}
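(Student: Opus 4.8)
The plan is to proceed by induction over the derivation $\ee$ of the value typing judgment $D;F;\Sigma;\Gamma \vdash v : s$, using the rules of Figure~\ref{fig:val-typing}. The guiding observation is that projection $\proj{\cdot}{i}$ acts on a well-typed value in one of only three ways: it leaves the value unchanged (for variables, integers, unit, locations, and function names), it commutes with the record constructor, or it selects one of the two sides of a pair; in each of these situations the type $s$ is evidently preserved, so the corresponding typing rule can simply be re-applied to the (sub)results.

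Concretely: for \rulename{P-T-V-Int}, \rulename{P-T-V-Loc}, \rulename{P-T-V-Var}, and \rulename{P-T-V-Fun}, the value $v$ is a core \langname value with $\proj{v}{i} = v$, so the very same derivation applies verbatim. For \rulename{P-T-V-Struct} with $v = (T)\{v_1,\ldots,v_n\}$ and premises $D;F;\Sigma;\Gamma\vdash v_j : s_j$, the definition of projection gives $\proj{v}{i} = (T)\{\proj{v_1}{i},\ldots,\proj{v_n}{i}\}$; applying the induction hypothesis to each subderivation yields $\proj{v_j}{i} : s_j$, and re-applying \rulename{P-T-V-Struct} reassembles the record at type $T\ \pol$ (the extended record form $(T)\{\extV_1,\ldots,\extV_k\}$ is treated identically). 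For \rulename{P-T-V-Sub} we apply the induction hypothesis to obtain $\proj{v}{i} : s'$ and then re-apply subtyping with $s'\leq s$. Finally, for \rulename{P-T-V-Pair} with $v = \epair{v_1}{v_2}$, we have $\proj{v}{i} = v_i$, and $D;F;\Sigma;\Gamma\vdash v_i : s$ is literally one of the two premises of the rule, so the conclusion is immediate and the side conditions $\pol\rhd s$ and $\pol\in H$ are simply discarded.

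I do not expect a genuine obstacle here: the statement is essentially a bookkeeping fact asserting that the paired semantics is a conservative extension of \langname, and the argument is a one-case-per-line structural induction with no interaction between the cases. The only point deserving a moment of care is that the grammar of extended values forbids nested pairs, so after projecting a pair we land back in the ordinary (unpaired) value fragment where no further projection is required inside the result; since this is already enforced syntactically, it causes no trouble.
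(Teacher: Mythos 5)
Your proposal is correct and matches the paper's own argument, which is simply a sketch reading ``by induction over the structure of $\ee$''; your case analysis (identity on core values, commutation with the record constructor, premise selection for pairs, and re-application of subsumption) is exactly the intended elaboration of that induction. Nothing further is needed.
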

\begin{proofsketch}
By induction over the structure of $\ee$. 
\end{proofsketch}

\begin{lem}[Substitution]\label{lem:polc-substitution}
~\\
\begin{enumerate}
\item If  $\ee::D;F;\Sigma;\Gamma, x{:} s \vdash v' : s'$ and 
$D;;F;\Sigma;\Gamma\vdash v : s$ 
then  $D;F;\Sigma;\Gamma \vdash v'[x\Leftarrow v] : s'$ 
\item 
If  $\ee::D;F;\Sigma;\Gamma,x{:}s ; \pc \vdash e : s'$
and $D;F;\Sigma;\Gamma\vdash v : s$ 
then  $D;F;\Sigma;\Gamma; \pc \vdash e[x\Leftarrow v] : s'$
\end{enumerate}
\end{lem}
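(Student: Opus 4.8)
The plan is to prove both parts simultaneously by mutual induction on the structure of the typing derivations $\ee$ (the value-typing derivation in part~(1) and the expression-typing derivation in part~(2)), reading the statements as schematic in $\Gamma$, $\Sigma$, and $\pc$. The contexts $D$, $F$, and $\Sigma$ and the program counter $\pc$ are untouched by substitution, so they carry through verbatim; only $\Gamma, x{:}s$ changes, becoming $\Gamma$ once $x$ has been eliminated. As usual we work up to $\alpha$-renaming and adopt the convention that bound variables (the $x'$ of $\elet\ x'{:}s_1 = e_1\ \ein\ e_2$) are chosen distinct from $x$ and not free in $v$, so that substitution commutes with the binder, the context extension $\Gamma, x'{:}s_1$ can be reordered past $x{:}s$, and the (routine) weakening that moves $D;F;\Sigma;\Gamma \vdash v : s$ into the extended context $\Gamma, x'{:}s_1$ is available.

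The base case is the variable rule. If $v' = x$ is typed by \rulename{P-T-V-Var} at $\Gamma(x) = s$ then $s' = s$ (any mismatch is absorbed by an application of \rulename{P-T-V-Sub}, handled in the inductive step), and $x[x\Leftarrow v] = v$ has type $s$ by hypothesis. If $v' = y$ with $y \neq x$, then $y[x\Leftarrow v] = y$ and the derivation stands after dropping $x$ from the context. The expression form of a variable goes through \rulename{P-T-E-Val}, which lifts the value typing of $x$ and joins with $\pc$; after substitution we simply re-derive $D;F;\Sigma;\Gamma;\pc \vdash v : s\join\pc$ from the hypothesis by \rulename{P-T-E-Val}. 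All remaining non-pair cases are routine: substitution commutes with the relevant term constructor (for $v_f\ e_a$, $v_1 {:=} e_2$, $v.i$, $\enew(e)$, ${*}v$, conditionals, $\elet$, declassification-function application, and $\erelab$), so we apply the induction hypothesis to each immediate subterm and re-apply the same typing rule, leaving untouched every side condition on policies, the guard relation $\rhd$, joins, and the pc (e.g., in \rulename{P-T-E-If} the branches are retyped under $\pc\sqcup\pol$, exactly as before). The subsumption rules \rulename{P-T-V-Sub} and \rulename{P-T-E-Sub} follow immediately by invoking the induction hypothesis and re-applying subsumption.

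The only genuinely interesting cases are the pair rules \rulename{P-T-V-Pair} and \rulename{P-T-E-Pair}, because substitution on a pair distributes through the projections of the replacement value, $\epair{v_1}{v_2}[x\Leftarrow v] = \epair{v_1[x\Leftarrow\proj{v}{1}]}{v_2[x\Leftarrow\proj{v}{2}]}$ (consistent with Lemma~\ref{lem:proj-distributed}). The hypothesis $D;F;\Sigma;\Gamma \vdash v : s$ does not directly type $\proj{v}{1}$ or $\proj{v}{2}$, so we first invoke Lemma~\ref{lem:polc-projection-wf} to obtain $D;F;\Sigma;\Gamma \vdash \proj{v}{i} : s$ for $i\in\{1,2\}$; applying the induction hypothesis to each component with the corresponding projected substituend then yields $D;F;\Sigma;\Gamma \vdash v_i[x\Leftarrow\proj{v}{i}] : s'$ (for \rulename{P-T-E-Pair}, the expression judgment under $\pc\sqcup\pol'$). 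The remaining premises $\pol\rhd s'$, $\pol\in H$ (and $\pol'\in H$) mention only types and policies, not $x$, so re-applying the pair rule closes the case. I expect this pair case — threading the projection lemma through the substitution on pairs — to be essentially the sole obstacle; every other case is a mechanical commutation of substitution with a constructor.
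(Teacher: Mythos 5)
Your proposal is correct and follows essentially the same route as the paper, which proves this lemma by structural induction on the typing derivation $\ee$ (the paper gives only a one-line sketch of exactly this induction). Your additional detail on the pair cases --- distributing the substitution through projections and invoking Lemma~\ref{lem:polc-projection-wf} to type $\proj{v}{i}$ at $s$ before re-applying \rulename{P-T-V-Pair}/\rulename{P-T-E-Pair} --- is the right way to discharge the only non-mechanical step.
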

\begin{proofsketch}
By induction over the structure of $\ee$.
\end{proofsketch}

\begin{lem}\label{lem:subtyping-form}
~\\
\begin{enumerate}
\item If $s\leq T\ \pol$ then $s=T\ \pol'$
\item If $s\leq \tptr(s')\ \pol$ then $s=\tptr(s')\ \pol'$
\item If $s\leq  [\pc_f](t_1\rightarrow t_2)^\pol$ then $s=
  [\pc'_f](t'_1\rightarrow t'_2)^{\pol'}$ and $\pc_f\sqsubseteq
  \pc'_f$, $t_1\leq t'_1$ and $t'_2\leq t_2$.
\end{enumerate}
\end{lem}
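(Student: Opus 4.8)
The plan is to prove each of the three parts by induction on the derivation of the corresponding subtyping judgment; the only real subtlety is the transitivity rule \rulename{$\leq$Trans}, which blocks a one-step inversion and is exactly what forces the argument to be inductive. (I will treat the general situation in which \rulename{$\leq$Refl} and \rulename{$\leq$Trans} may be used at the base-, simple-, and security-type levels; the reading in which they are restricted to base types is a strict special case and only makes the argument shorter.) As a preliminary, I would first establish the auxiliary fact that $b \leq b'$ implies $b = b'$ --- that is, base-type subtyping is the identity relation. This is immediate by induction on the derivation of $b \leq b'$: the only applicable rules are \rulename{$\leq$Refl}, where $b = b'$ holds directly, and \rulename{$\leq$Trans}, where two appeals to the induction hypothesis give $b = b'' = b'$. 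In particular $\tptr(s)\leq\tptr(s')$ forces $s=s'$, since no rule introduces variance under $\tptr$.

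For parts (1) and (2), note that $T\ \pol$ and $\tptr(s')\ \pol$ are simple security types of the shape $b\ \rho$, so the last rule in the derivation of $s \leq T\ \pol$ (resp.\ $s \leq \tptr(s')\ \pol$) must be \rulename{$\leq$Pol} or \rulename{$\leq$Trans} (or a trivial \rulename{$\leq$Refl}, if permitted at this level), since the conclusions of \rulename{$\leq$Unit} and \rulename{$\leq$Fun} have the wrong form. If the last rule is \rulename{$\leq$Pol}, inversion yields $s = b''\ \pol_1$ with $b'' \leq T$ (resp.\ $b'' \leq \tptr(s')$) and $\pol_1 \sqsubseteq \pol$; the auxiliary fact forces $b'' = T$ (resp.\ $b'' = \tptr(s')$), so $s$ has the required form with $\pol' = \pol_1$. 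If the last rule is \rulename{$\leq$Trans}, say $s \leq t'' \leq T\ \pol$, then the induction hypothesis applied to the right premise gives $t'' = T\ \pol''$, and a second appeal to the induction hypothesis, now on the left premise, gives $s = T\ \pol'$.

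For part (3), the target $[\pc_f](t_1 \rightarrow t_2)^\pol$ is a function type, so the last rule is \rulename{$\leq$Fun} or \rulename{$\leq$Trans} (again a degenerate \rulename{$\leq$Refl} case, if permitted, is trivial). In the \rulename{$\leq$Fun} case, inversion directly yields $s = [\pc'_f](t'_1 \rightarrow t'_2)^{\pol'}$ together with $\pc_f \sqsubseteq \pc'_f$, $t_1 \leq t'_1$, and $t'_2 \leq t_2$. In the \rulename{$\leq$Trans} case, $s \leq s'' \leq [\pc_f](t_1 \rightarrow t_2)^\pol$: the induction hypothesis on the right premise shows $s''$ is a function type $[\pc''_f](t''_1 \rightarrow t''_2)^{\pol''}$ with $\pc_f \sqsubseteq \pc''_f$, $t_1 \leq t''_1$, $t''_2 \leq t_2$, and a further appeal on the left premise gives $s = [\pc'_f](t'_1 \rightarrow t'_2)^{\pol'}$ with $\pc''_f \sqsubseteq \pc'_f$, $t''_1 \leq t'_1$, $t'_2 \leq t''_2$; composing via transitivity of $\sqsubseteq$ on labels and via the \rulename{$\leq$Trans} rule for simple types then yields the stated conditions.

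The main obstacle is not deep: it is simply that the presence of \rulename{$\leq$Trans} at every level of the type hierarchy precludes a single-step inversion and forces the inductive formulation (and one must not forget the possibility of chained transitivity steps sitting above the ``real'' structural step). The one background fact to confirm along the way is that $\sqsubseteq$ is transitive on labels and on policies, which follows directly from the transitivity of the secrecy and integrity lattice orders, the pointwise definition on equal-length prefixes, and the clauses $\bot \sqsubseteq \pol$ and $\pol \sqsubseteq \top$.
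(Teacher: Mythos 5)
Your proposal is correct and takes essentially the same route as the paper, whose proof is precisely an induction over the derivation of $s\leq s'$. The auxiliary observation that $b\leq b'$ forces $b=b'$ and the explicit handling of chained transitivity steps are exactly the details the paper's proof sketch leaves implicit.
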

\begin{proofsketch}
By induction over the derivation $s\leq s'$.
\end{proofsketch}

\begin{lem}[Inversion]\label{lem:polc-inversion}
~\\
\begin{enumerate}
\item If $D;F;\cdot \vdash (T)\{v_1,\cdots,v_n\}: T\ \pol$, then  
 $D(T) = \{s_1,\cdots, s_2\}$ and $\forall i\in[1,n]$, $D;F;\cdot
 \vdash v_i: s_i$.
\item If  $D;F;\cdot \vdash \eloc: \tptr(s)\ \pol$, $\eloc\in\m{dom}(\sigma)$,
  and $D;F\vdash \sigma: \Sigma$, then $D;F;\cdot; \pc \vdash \sigma(\eloc): s$.
\item If  $D;F;\cdot\vdash f: [\pc_f](t_1\rightarrow t_2)^\pol$, $f(x)=e\in\m{dom}(\Psi)$
  and $D;F\vdash \Psi$, then $D;F;x: t_1; \pc_f \vdash e: t_2$.
\item If $D;F;\cdot\vdash \epair{v_1}{v_2}: s$ then $\forall
  i\in[1,2]$, $D;F;\cdot \vdash v_i: s$ and $\exists \pol$,
  s.t. $\pol\rhd s$ and $\pol\in H$.
\end{enumerate}
\end{lem}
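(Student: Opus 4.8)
The plan is to prove all four parts simultaneously by induction on the height of the given value-typing derivation. The key structural observation is that \rulename{P-T-V-Sub} is the only value-typing rule that is not syntax-directed, so any derivation of one of the four judgments decomposes into a (possibly empty) chain of \rulename{P-T-V-Sub} steps sitting atop the unique structural rule matching the head constructor of the value --- \rulename{P-T-V-Struct}, \rulename{P-T-V-Loc}, \rulename{P-T-V-Fun}, or \rulename{P-T-V-Pair}, respectively. In the base case, where the last rule is that structural rule, every desired conclusion can be read off its premises directly: for part~1, \rulename{P-T-V-Struct} literally hands us $D(T)=\{s_1,\dots,s_n\}$ together with $D;F;\cdot\vdash v_i:s_i$; for part~2, it gives $s$ as the content type recorded for $\eloc$ in the ambient store typing, and well-formedness $D;F\vdash\sigma:\Sigma$ then yields $D;F;\cdot;\pc\vdash\sigma(\eloc):s$, using Lemma~\ref{lem:type-store-weakening} if the store typing has since been extended; for part~4, \rulename{P-T-V-Pair}'s own premises supply both $D;F;\cdot\vdash v_i:s$ and a witnessing $\pol$ with $\pol\rhd s$ and $\pol\in H$.

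For the inductive case, where the last rule is \rulename{P-T-V-Sub} deriving $s$ from some $s'\leq s$, I would first call Lemma~\ref{lem:subtyping-form} to pin down the shape of $s'$. For part~1 it must be $T\ \pol'$, and since struct subtyping only weakens the policy the field types are unchanged, so the induction hypothesis already closes the goal. For part~2, invariance of pointer content forces $s'=\tptr(s)\ \pol'$ with the \emph{same} $s$, so the induction hypothesis gives $D;F;\cdot;\pc\vdash\sigma(\eloc):s$ outright. For part~4, $s'$ is unconstrained; the induction hypothesis yields $D;F;\cdot\vdash v_i:s'$ and some $\pol_0$ with $\pol_0\rhd s'$ and $\pol_0\in H$, and then one \rulename{P-T-V-Sub} step lifts each $v_i$ to $s$ while Lemma~\ref{lem:guard-weakening} --- instantiated with $\pol'=\pol=\pol_0$, old type $s'$, new type $s$ --- gives $\pol_0\rhd s$, the $H$-membership being type-independent. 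Concatenating the successive \rulename{P-T-V-Sub} steps is routine.

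The only place I expect genuine friction is part~3. There the base case gives, via well-formedness of the code store $D;F\vdash\Psi$, a typing $D;F;x{:}t'_1;\pc'_f\vdash e:t'_2$ at the argument/return types and pc declared for $f$; after a \rulename{P-T-V-Sub} step at the root, Lemma~\ref{lem:subtyping-form}(3) relates these to the goal by $\pc_f\sqsubseteq\pc'_f$, $t_1\leq t'_1$, and $t'_2\leq t_2$. Lemma~\ref{lem:polc-pc-weakening} lowers the pc from $\pc'_f$ to $\pc_f$ and \rulename{P-T-E-Sub} with $t'_2\leq t_2$ raises the result type; what is left is to reconcile the assumption $x{:}t'_1$ with the $x{:}t_1$ demanded by the statement, where $t_1\leq t'_1$ by contravariance. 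This is the standard narrowing step for a subtyping system; alternatively, since every invocation of this lemma (inside Preservation, through \rulename{P-E-App}) proceeds to substitute an argument value $v$ with $D;F;\cdot\vdash v:t_1$ --- hence also $D;F;\cdot\vdash v:t'_1$ by \rulename{P-T-V-Sub} --- one can equivalently phrase the conclusion at $x{:}t'_1$ and apply the Substitution Lemma~\ref{lem:polc-substitution} there, sidestepping narrowing entirely. Either route closes part~3, and the remaining bookkeeping (the structural base cases not spelled out above, and the chaining of subsumption steps) is mechanical.
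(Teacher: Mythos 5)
Your proof is correct and takes essentially the same route as the paper, which establishes this lemma only by the one-line sketch ``by induction over the typing derivation'': your write-up is simply that induction made explicit, with the \rulename{P-T-V-Sub} layers handled via Lemma~\ref{lem:subtyping-form}, Lemma~\ref{lem:guard-weakening}, and Lemma~\ref{lem:polc-pc-weakening}. The only ingredient you need beyond the paper's stated lemmas is the routine context-narrowing step in part~3 (or your substitution-based workaround), which is standard for this system and does not change the approach.
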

\begin{proofsketch}
By induction over the typing derivation. 
\end{proofsketch}

\begin{lem}[Value is typed w/o PC]
\label{lem:val-no-pc}
If $\ee::D;F;\Sigma; \Gamma;\pc \vdash  v: s$ 
then $D;F;\Sigma; \Gamma \vdash  v: s$.
\end{lem}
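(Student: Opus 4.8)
The plan is to argue by induction on the derivation $\mathcal{E}$ of $D;F;\Sigma;\Gamma;\pc \vdash v : s$. The first step is to observe that, given the value syntax of (extended) \langname, only three expression-typing rules from Figure~\ref{fig:exp-typing} can have concluded with a value as their subject: \rulename{P-T-E-Val}, \rulename{P-T-E-Sub}, and \rulename{P-T-E-Pair}. Every remaining rule types a syntactically non-value expression (a field projection $v.i$, an allocation $\enew(e)$, a dereference ${*}v$, an assignment $v_1 := e_2$, an application $v_f\,e_a$, a $\mathsf{let}$, or a conditional), so those cases cannot arise and are discharged by inspection of the syntax.

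For the \rulename{P-T-E-Val} case, the premise is already a value-typing judgment $D;F;\Sigma;\Gamma \vdash v : s_0$, and the conclusion has $s = s_0 \join \pc$. Here I would use the fact that $s_0 \leq s_0 \join \pc$: joining the policy of $s_0$ with $\pc$ only raises it in the $\sqsubseteq$ order, so the inequality follows from $\leq$Pol (with reflexivity on the base type) when $s_0$ is a labeled base type, from $\leq$Unit when $s_0 = \tunit$, and from $\leq$Fun (with reflexivity on the pc and on the argument/result types, together with $\pol \sqsubseteq \pol \sqcup \pc$ on the function's outer label) when $s_0$ is a function type. Applying \rulename{P-T-V-Sub} then gives $D;F;\Sigma;\Gamma \vdash v : s_0 \join \pc = s$. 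For \rulename{P-T-E-Sub}, the induction hypothesis applied to the premise yields $D;F;\Sigma;\Gamma \vdash v : s'$ with $s' \leq s$, and one more use of \rulename{P-T-V-Sub} closes the case.

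For \rulename{P-T-E-Pair}, the subject is $\epair{v_1}{v_2}$ with both components values; the premises type $v_1$ and $v_2$ as expressions, under the pc $\pc \sqcup \pol'$, at a common type $s$, together with the side conditions $\pol \rhd s$ and $\pol \in H$. Instantiating the induction hypothesis at $\pc \sqcup \pol'$ for each of the two premises produces value-typing derivations $D;F;\Sigma;\Gamma \vdash v_i : s$, and \rulename{P-T-V-Pair} then reassembles the pair at type $s$, reusing $\pol \rhd s$ and $\pol \in H$ unchanged.

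The only point requiring real care is the subtyping inequality $s_0 \leq s_0 \join \pc$ used in the \rulename{P-T-E-Val} case, since it must be checked uniformly against the definition of $s \join \pol$ over all shapes of $s_0$ (notably the contravariant-pc function case); this is a short structural case analysis and is the main, though minor, obstacle.
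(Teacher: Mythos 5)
Your proposal is correct and follows essentially the same route as the paper's own proof: induction on the typing derivation, noting only \rulename{P-T-E-Val}, \rulename{P-T-E-Sub}, and \rulename{P-T-E-Pair} can apply to a value, closing the \rulename{E-Val} case with $s_0 \leq s_0 \join \pc$ and \rulename{P-T-V-Sub}, and handling \rulename{E-Sub}/\rulename{E-Pair} by the induction hypothesis followed by the identically named value-typing rule. Your extra case analysis of the subtyping fact $s_0 \leq s_0 \join \pc$ is just a more explicit spelling-out of what the paper leaves implicit in its sketch.
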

\begin{proofsketch}
By induction over the structure of $\ee$. 
In the cases of \rulename{E-Sub} and \rulename{E-Pair}, we directly apply
I.H. and then apply the rule with the same name in value typing. 
In the case of \rulename{E-Val}, we apply \rulename{V-Sub}.
\end{proofsketch}

\begin{lem}[Store]\label{lem:store-ops}
  For all $i\in\{1,2,\bullet\}$, if $D;F;\Sigma; \Gamma \vdash v: s$
  and $i\in\{1,2\}$ implies exists $\pol\in H$ s.t. $\pol\rhd s$; then
  $D;F;\Sigma; \Gamma \vdash \enew_i\ v: s$ and and for all $v'$
  s.t. $D;F;\Sigma; \Gamma \vdash v': s$,
  $D;F;\Sigma; \Gamma \vdash \eupdate_i\ v\ v': s$.
\end{lem}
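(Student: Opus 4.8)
The plan is to proceed by a straightforward case analysis on $i \in \{1,2,\bullet\}$, treating the two operations $\enew_i\ v$ and $\eupdate_i\ v\ v'$ separately within each case, and in each case simply unfolding the definition of the operation and then applying the appropriate value typing rule. No induction is needed; the lemma is a direct consequence of the definitions together with Lemma~\ref{lem:polc-projection-wf} (projection preserves typing).

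For $i = \bullet$: by definition $\enew_i\ v = v$ and $\eupdate_i\ v\ v' = v'$, so $D;F;\Sigma;\Gamma \vdash \enew_i\ v : s$ is exactly the hypothesis $D;F;\Sigma;\Gamma \vdash v : s$, and $D;F;\Sigma;\Gamma \vdash \eupdate_i\ v\ v' : s$ is exactly the hypothesis $D;F;\Sigma;\Gamma \vdash v' : s$; notice that the statement imposes no $H$ or $\rhd$ side condition in this case, consistent with the fact that none is used. For $i = 1$ (the case $i = 2$ is symmetric), the extra hypothesis supplies a witness $\pol$ with $\pol \in H$ and $\pol \rhd s$. By definition $\enew_1\ v = \epair{v}{\bullet}$; from $D;F;\Sigma;\Gamma \vdash v : s$ and the witnessed side conditions $\pol \rhd s$, $\pol \in H$, the typing rule for stored half-pairs (which, like \rulename{P-T-V-Pair}, requires the non-$\bullet$ component to be typed at $s$ and such a $\pol$ to exist) gives $D;F;\Sigma;\Gamma \vdash \epair{v}{\bullet} : s$. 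For the update operation, $\eupdate_1\ v\ v' = \epair{v'}{\proj{v}{2}}$; here $D;F;\Sigma;\Gamma \vdash v' : s$ holds by hypothesis, and from $D;F;\Sigma;\Gamma \vdash v : s$ Lemma~\ref{lem:polc-projection-wf} yields $D;F;\Sigma;\Gamma \vdash \proj{v}{2} : s$, so with the same witness $\pol \in H$, $\pol \rhd s$ the rule \rulename{P-T-V-Pair} gives $D;F;\Sigma;\Gamma \vdash \epair{v'}{\proj{v}{2}} : s$.

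The only real obstacle is bookkeeping around the placeholder $\bullet$: for $\enew_i$ with $i \in \{1,2\}$ one of the two pair components is $\bullet$, so \rulename{P-T-V-Pair} does not literally apply and one must invoke the typing rule for stored half-pairs $\epair{v}{\bullet}$ and $\epair{\bullet}{v}$, checking that its side conditions on the ``high'' witness $\pol$ coincide exactly with those provided by the $i \in \{1,2\}$ hypothesis. Once that rule is pinned down, every case is immediate from unfolding $\enew_i$ and $\eupdate_i$ plus a single appeal to Lemma~\ref{lem:polc-projection-wf}.
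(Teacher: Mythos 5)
Your proposal is correct and matches the paper's approach: the paper proves this lemma with a one-line sketch ("by examining the definitions of these operations"), and your case analysis on $i$ with unfolding of $\enew_i$/$\eupdate_i$, the appeal to Lemma~\ref{lem:polc-projection-wf}, and the pair-typing rule is exactly that examination carried out in detail. Your observation that the $\bullet$ half-pair cases need a stored-value analogue of \rulename{P-T-V-Pair} is a fair bookkeeping point that the paper leaves implicit.
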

\begin{proofsketch}
By examining the definitions of these operations.
\end{proofsketch}

\begin{lem}[Value Has Flexible Label]
\label{lem:val-flex-lab}
Given a set of high labels $H$,
if $\ee::D;F;\Sigma; \cdot \vdash  v: b\ \pol$ 
and $\pol\in H$ iff $\pol' \in H$
then $D;F;\Sigma; \cdot \vdash  v: b\ \pol'$.
\end{lem}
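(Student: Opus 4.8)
\emph{Proof plan.} I would proceed by induction on the structure of the value $v$ (equivalently, by induction on the derivation $\ee$, using inversion to absorb occurrences of \rulename{P-T-V-Sub}). The invariant that makes this work is that, for the first-order base types $b\in\{\tpint,\tptr(s),T\}$, the value-typing rules are completely \emph{policy-agnostic} except for the side condition of \rulename{P-T-V-Pair}; so for every $v$ that can be typed at $b\ \pol$ at all, the same typing structure re-derives a judgement at $b\ \pol'$, and the only place the hypothesis $\pol\in H \Leftrightarrow \pol'\in H$ is ever needed is in the pair case.

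First I would clear the vacuous cases. If $v$ is $()$, a variable $x$, or a function $f$, it cannot carry a type of shape $b\ \pol$ with $b$ first-order: $()$ has only type $\tunit$, $x$ cannot be typed under the empty variable context, and $f$ has only a function type; by the shape of the subtyping relation (Lemma~\ref{lem:subtyping-form}) none of $\tunit$ or a function type is a subtype of any $b\ \pol$, so \rulename{P-T-V-Sub} does not help. Then the ``flat'' cases: if $v=n$ then $b=\tpint$, and \rulename{P-T-V-Int} directly gives $D;F;\Sigma;\cdot\vdash n:\tpint\ \pol'$; if $v=\eloc$ then $b$ is the pointer type recorded for $\eloc$, and \rulename{P-T-V-Loc} directly gives $D;F;\Sigma;\cdot\vdash\eloc: b\ \pol'$; if $v=(T)\{v_1,\dots,v_n\}$ then $b=T$, inversion (Lemma~\ref{lem:polc-inversion}(1)) yields $D(T)=\{s_1,\dots,s_n\}$ with $D;F;\Sigma;\cdot\vdash v_i:s_i$, and \rulename{P-T-V-Struct} re-assembles $D;F;\Sigma;\cdot\vdash(T)\{v_1,\dots,v_n\}:T\ \pol'$ with the field typings untouched. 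In all of these the hypothesis on $\pol,\pol'$ is not even used.

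The one interesting case is $v=\epair{v_1}{v_2}$. Inversion (Lemma~\ref{lem:polc-inversion}(4)) gives $D;F;\Sigma;\cdot\vdash v_i:b\ \pol$ for $i\in\{1,2\}$ together with a witness policy $\pol_0$ satisfying $\pol_0\rhd b\ \pol$ (i.e.\ $\pol_0\sqsubseteq\pol$) and $\pol_0\in H$. Applying the induction hypothesis to the strictly smaller subvalues $v_1,v_2$ with target policy $\pol'$, using $\pol\in H\Leftrightarrow\pol'\in H$, produces $D;F;\Sigma;\cdot\vdash v_i:b\ \pol'$. To conclude via \rulename{P-T-V-Pair} at $b\ \pol'$ it then remains to exhibit a policy $\pol_1$ with $\pol_1\sqsubseteq\pol'$ and $\pol_1\in H$; the hypothesis is precisely what should let us transport the witness $\pol_0$ for $\pol$ into a witness for $\pol'$ (in particular, whenever $\pol'\in H$ we may simply take $\pol_1=\pol'$, using reflexivity of $\sqsubseteq$).

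I expect the main obstacle to be exactly this last step: showing that the property ``$\pol$ guards some $H$-policy'' is preserved when passing from $\pol$ to $\pol'$ under the hypothesis $\pol\in H\Leftrightarrow\pol'\in H$. This forces one to unfold the definition of $\cdot\in H$, which quantifies over all rewrites $F;R\vdash\pol\leadsto\pol''$, and to argue from the chain of relabeling operations available in $F$ and $R$ that a witness below $\pol$ that lies in $H$ gives rise to a witness below $\pol'$ in $H$. Everything else — the case analysis on $v$, the appeals to inversion (Lemma~\ref{lem:polc-inversion}) and to the subtyping-shape lemma (Lemma~\ref{lem:subtyping-form}), and the re-derivations with the policy-agnostic value rules — is routine bookkeeping.
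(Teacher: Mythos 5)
Your proposal is correct and takes essentially the same route as the paper's (sketched) proof: induction on the structure of $v$, observing that the core value-typing rules assign an arbitrary policy, with the hypothesis $\pol\in H \Leftrightarrow \pol'\in H$ used only to reapply \rulename{P-T-V-Pair} in the pair case. The step you flag as the main obstacle is in fact immediate from Lemma~\ref{lem:polc-h-closed} (upward closure of $H$ under $\sqsubseteq$): the inversion witness $\pol_0\sqsubseteq\pol$ with $\pol_0\in H$ gives $\pol\in H$, hence $\pol'\in H$ by the hypothesis, and $\pol'$ itself then serves as the required guard.
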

\begin{proofsketch}
By induction on the structure of $v$. The value typing rules assign
an arbitrary $\pol$ to the type of core \langname values. In the
case of pairs, the assumption that $\pol\in H$ iff $\pol' \in H$
allows us to apply \rulename{V-Pair} rule.
\end{proofsketch}

 \begin{lem}\label{lem:pc-lower}
 If $\ee::D;F;\Sigma; \Gamma; \pc \vdash  e: s$ then $\pc\rhd s$.
 \end{lem}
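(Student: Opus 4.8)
The plan is to induct on the derivation $\ee$, chasing the defining clauses of $\rhd$, $\sqcup$, and $\sqsubseteq$ case by case. Two facts do nearly all the work: $\rho\rhd\tunit$ holds unconditionally, and $\sqcup$ is monotone while $\sqsubseteq$ is transitive, so whenever a premise already furnishes $\pc'\sqsubseteq\pol$ (or $\pc'\rhd t$) for some $\pc'$ with $\pc\sqsubseteq\pc'$, and the conclusion's type has outermost policy at least $\pol$, the case closes. Concretely: \rulename{P-T-E-Assign} and any rule whose result type is $\tunit$ are immediate; \rulename{P-T-E-Val} uses $\pc\sqsubseteq s\sqcup\pc$; \rulename{P-T-E-Field} and \rulename{P-T-E-Deref} combine the premise $\pc\sqsubseteq\pol$ with $\pol\sqsubseteq s_i\sqcup\pol$; \rulename{P-T-E-New} uses the premise $\pc\rhd\pol$ directly; \rulename{P-T-E-Let} applies the induction hypothesis to the body $e_2$; \rulename{P-T-E-If} and \rulename{P-T-E-Pair} apply the induction hypothesis to a subexpression typed at $\pc\sqcup\pol$ (respectively $\pc\sqcup\pol'$) and then discard the extra join via $\pc\sqsubseteq\pc\sqcup\pol$; and \rulename{P-T-E-Sub} is Lemma~\ref{lem:guard-weakening} instantiated with $\pol'=\pol=\pc$.

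The only cases requiring information beyond $\ee$ itself are \rulename{P-T-E-App} and \rulename{P-T-E-DE}. In both, the result type is (a policy-suffix variant of) the declared return type $t_2$ of the applied function $v_f$, whose type is $[\pc'](t_1\rightarrow t_2)^{\rho_f}$; the sole link between $\pc$ and the callee is the premise $\rho_f\sqcup\pc\sqsubseteq\pc'$, which gives $\pc\sqsubseteq\pc'$. What remains is to know $\pc'\rhd t_2$ --- that a function's $\pc$ guards its return type. This is an invariant of well-formed function contexts: it follows from inversion of $\vdash\Psi$, which types each body $e_f$ as $D;F;x{:}t_1;\pc'\vdash e_f:t_2$, together with this lemma applied to that derivation (legitimate, since $e_f$ is not a subderivation of $\ee$ and the property is taken as a side-condition on $F$ rather than rederived circularly). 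Given $\pc'\rhd t_2$, transitivity of $\sqsubseteq$ closes \rulename{P-T-E-App}; for \rulename{P-T-E-DE} one additionally uses $b\ \lab_2{::}\bot\leq b\ \lab_2{::}\pol'$ (valid since $\bot\sqsubseteq\pol'$) to move from the declared return type to the actual one.

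I expect the bookkeeping cases to be entirely routine. The single point deserving care is making the function-application argument non-circular --- pinning down exactly which well-formedness hypothesis on $F$ (or $\Psi$) guarantees that every function's $\pc$ guards its return type, so that the appeal to it in \rulename{P-T-E-App} and \rulename{P-T-E-DE} is justified. Everything else reduces to monotonicity and transitivity of the lattice orderings plus the three defining clauses of $\rhd$.
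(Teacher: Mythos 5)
Your overall skeleton coincides with the paper's: the paper justifies this lemma only by the one-line remark ``by induction over the structure of $\ee$,'' and your case analysis is exactly that induction spelled out. The structural cases are handled correctly --- the $\tunit$-typed rules and \rulename{P-T-E-New}, \rulename{P-T-E-Relabel} are immediate from their premises, \rulename{P-T-E-Field}/\rulename{P-T-E-Deref}/\rulename{P-T-E-Val} follow from $\pc\sqsubseteq\pol\sqsubseteq\labof(s)\sqcup\pol$, and \rulename{P-T-E-Let}, \rulename{P-T-E-If}, \rulename{P-T-E-Pair}, \rulename{P-T-E-Sub} go through with the induction hypothesis plus Lemma~\ref{lem:guard-weakening}.

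The concern you raise about \rulename{P-T-E-App} and \rulename{P-T-E-DE} is not merely a point of care, however: it is a genuine gap in the lemma as literally stated, which the paper's sketch silently passes over. Nothing in the rules ties $\pc$ to a callee's return type: with $F(f)=[\top](\tpint\ \top\rightarrow\tpint\ \bot)$, rule \rulename{P-T-V-Fun} types $f$ at any label, the side condition $\rho_f\sqcup\pc\sqsubseteq\top$ is vacuous, and \rulename{P-T-E-App} then types $f\,e_a$ at an arbitrarily high $\pc$ with result $\tpint\ \bot$, falsifying $\pc\rhd s$. So the invariant you need --- every function type $[\pc_f](t_1\rightarrow t_2)^{\rho}$ in scope satisfies $\pc_f\rhd t_2$ --- must be \emph{assumed} as well-formedness of the typing contexts; it cannot be recovered inside the induction. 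Two corrections to your proposed discharge: (i) deriving it from $\vdash\Psi$ by ``applying this lemma to the body derivation'' is circular, since proving the lemma for the body again requires the invariant for the very same $F$ (and functions may be recursive --- \rulename{P-E-App} substitutes $f$ into its own body); only your fallback, taking it as a side condition on the contexts, is sound. (ii) The condition must cover not just $F$ but also $\Gamma$ and $\Sigma$, since function types reach \rulename{P-T-E-App} through \rulename{P-T-V-Var}, \rulename{P-T-V-Loc}, and \rulename{P-T-V-Sub}; it is preserved by function subtyping (which only lowers $\pc_f$ and raises the return label), so \rulename{P-T-V-Sub} is then harmless. This assumption is benign for the paper's actual uses --- preservation instantiates $\Gamma=\cdot$, and the translation gives every function the pc label $\bot$, which guards every type --- but it should be made explicit, both in your proof and, arguably, in the lemma statement itself. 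Your treatment of \rulename{P-T-E-DE} via $\lab_2{::}\bot\sqsubseteq\lab_2{::}\pol'$ is fine once that invariant is in hand.
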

\begin{proofsketch}
By induction over the structure of $\ee$.
\end{proofsketch}

\begin{lem}[Preservation]~\label{lem:polc-i-preservation}
If $\Psi \vdash \sigma\sepidx{i} e \stepsto \sigma'\sepidx{i} e'$,
$D;F\vdash\Psi$, $D;F\vdash \sigma:\Sigma$ and
$D;F;\Sigma; \cdot; \pc \vdash  e: s$, and $i\in\{1,2\}$ implies $\pc\in H$ then
exists $\Sigma'\geq\Sigma$ s.t. $D;F\vdash \sigma':\Sigma'$ and 
$D;F;\Sigma'; \cdot; \pc \vdash  e': s$.
\end{lem}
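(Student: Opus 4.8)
I would prove this by induction on the derivation of $\Psi \vdash \sigma\sepidx{i} e \stepsto \sigma'\sepidx{i} e'$. In every case the first move is to apply the inversion lemma (Lemma~\ref{lem:polc-inversion}) together with Lemma~\ref{lem:subtyping-form} to the typing derivation $D;F;\Sigma;\cdot;\pc\vdash e:s$: this absorbs any trailing uses of \rulename{P-T-E-Sub} and exposes the typing rule whose conclusion matches the head of $e$, the residual subtyping being re-applied at the end to land back at $s$. When the step grows the store (the $\enew$ case) I would take $\Sigma' = \Sigma,\eloc{:}s_0$ for the content type $s_0$ and use store weakening (Lemma~\ref{lem:type-store-weakening}) to re-type everything under $\Sigma'$; otherwise $\Sigma' = \Sigma$.

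The core-reduction cases (\rulename{P-E-Bop}, \rulename{P-E-Field}, \rulename{P-E-Deref}, \rulename{P-E-Assign}, \rulename{P-E-New}, \rulename{P-E-App}, \rulename{P-E-Let}, \rulename{P-E-If-True}, \rulename{P-E-If-False}, \rulename{P-E-Relab}) are the routine part. \rulename{P-E-Let} and \rulename{P-E-App} use the substitution lemma (Lemma~\ref{lem:polc-substitution}) plus inversion part~3 to type the callee's body, and pc weakening (Lemma~\ref{lem:polc-pc-weakening}) to bring the result from the function's $\pc'$ down to $\pc$ (valid since $\rho_f\sqcup\pc\sqsubseteq\pc'$); a $(\dne)$ callee is identical, with subtyping swallowing the gap between $b\ \lab_1{::}\lab_2{::}\pol'$ and $b\ \lab_1{::}\top$ on the argument and between $b\ \lab_2{::}\bot$ and $b\ \lab_2{::}\pol'$ on the result. \rulename{P-E-Deref}, \rulename{P-E-Assign}, \rulename{P-E-New} use inversion part~2 and Lemma~\ref{lem:store-ops} to type $\eread_i$, $\eupdate_i$, $\enew_i$, plus Lemma~\ref{lem:polc-projection-wf} for the projected value when $i\in\{1,2\}$; the $\pc\sqsubseteq\pol$ premises of the pointer rules make the leftover pc joins collapse into the outer label of the type, and Lemma~\ref{lem:pc-lower} does the same for \rulename{P-E-If-True}/\rulename{P-E-If-False} (the branch is already at $s$ under $\pc\sqcup\pol$, hence at $\pc$ by Lemma~\ref{lem:polc-pc-weakening}). \rulename{P-E-Relab} re-types $v$ at the target policy via Lemma~\ref{lem:val-flex-lab}, whose hypothesis holds because the relabeling's two policies agree on $H$-membership under the well-formedness maintained here. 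The \rulename{P-E-Context} case is a finite case split over \langname's single-frame evaluation contexts: every frame hands the \emph{same} $\pc$ and the same index $i$ to its hole, so the induction hypothesis applies verbatim and I would just re-assemble the frame after re-typing the surrounding values under $\Sigma'$ with Lemma~\ref{lem:type-store-weakening}.

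The substantive cases are \rulename{P-E-Pair} and the lift rules \rulename{P-E-Lift-App}, \rulename{P-E-Lift-Deref}, \rulename{P-E-Lift-Assign}, \rulename{P-E-Lift-Field}, \rulename{P-E-Lift-If}, \rulename{P-E-Lift-Relab}, all of which fire only at $i=\bullet$. For \rulename{P-E-Pair}, inversion gives $e_1,e_2:s$ under $\pc\sqcup\pol'$ with $\pol'\in H$, so the hypothesis ``$i\in\{1,2\}$ implies the pc is in $H$'' holds for the step on the active component; the induction hypothesis yields $\Sigma'\geq\Sigma$ and $e'_k:s$ under $\pc\sqcup\pol'$, I reuse $e_j=e'_j$ for the other component (after store weakening), and \rulename{P-T-E-Pair} re-forms the pair. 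For each lift rule, inversion part~4 applied to the eliminated pair value hands me a witness policy $\pol_0\in H$ with $\pol_0\rhd$ its type, together with the fact that each of its projections has that same type; I would then re-derive each pushed-in or projected component (e.g.\ $v_k.j$, $*v_k$, $v_k:=\proj{v}{k}$, $\eif\ v_k\ \ethen\ \proj{v_t}{k}\ \eelse\ \proj{v_f}{k}$) under the raised program counter $\pc\sqcup\pol_0$ --- legal because $\pol_0$ sits below the outer label of the eliminated type, so the $\pc\sqsubseteq\pol$-style premises of the elimination rules still hold, and by Lemma~\ref{lem:pc-lower} the extra $\pol_0$ is absorbed into the result type --- and finally repackage the two components with \rulename{P-T-E-Pair} using $\pol_0$ both as the guard witness and as the rule's ``$\pol'$'', landing exactly at the type $s$ the elimination received.

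The hard part will be precisely this last family: I must keep the \rulename{P-T-V-Pair}/\rulename{P-T-E-Pair} invariant ``the guarding policy is in $H$'' alive as pairs are driven under every elimination form and as they pass into and out of the store --- which is exactly why the statement threads $\pc\in H$ through single-execution steps and why Lemma~\ref{lem:store-ops} carries the side condition that, for $i\in\{1,2\}$, the stored value is guarded by a high policy. The remaining friction is a small collection of stability facts about $H$ (that $H$-membership survives joining with a policy that already lies below the type in question, and is preserved by the relabelings that occur here), which I would state and prove separately from the definitions of $\leadsto$ and of $H$; once those are in hand the lift cases go through mechanically.
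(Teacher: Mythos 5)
Your proposal is correct and follows essentially the same argument as the paper's: the same supporting lemmas (inversion, substitution, pc-weakening, store weakening, store-ops, projection well-typedness, pc-lower, value relabeling, and $H$-upward-closure) and the same treatment of the pair and lift cases, re-deriving the projected components under a pc raised by an $H$-guard obtained from pair inversion and re-forming the pair with \rulename{P-T-E-Pair}. The only difference is organizational --- you induct on the step derivation while the paper inducts on the typing derivation and inverts the step within each case --- which yields the same case analysis.
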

\begin{proof}
  By induction over the structure of $\ee$. The proofs are mostly
  standard and use Lemma~\ref{lem:polc-substitution} and 
  \ref{lem:polc-inversion}. We only show cases where
  information flow labels or pairs are involved.
\begin{description}

\item[Case:] $\ee$ ends in \rulename{E-ReLabel}
\begin{tabbing}
By assumption
\\~~\=(1)~~~~\= $D;F; \Sigma; \Gamma ; \pc \vdash \erelab(\pol'\Leftarrow\pol)\ v
  : b\ \pol'$
\\\>(2)\>$\ee'::D;F; \Sigma; \Gamma  \vdash v : b\ \pol$ and $\pc\sqsubseteq\pol'$
\\By examining the operational semantic rules, there are two subcases
\\{\bf Subcase a:} $v$ is not a pair
\\\>~~\=(a3)~~~~\= $\Psi \vdash \sigma\sepidx{i}  \erelab(\pol'\Leftarrow\pol) v
\stepsto  \sigma\sepidx{i} v$
\\\>\>By the definition of $H$
\\\>\>(a4)\>  $\pol\in H$ iff $\pol';\in H$
\\By Lemma~\ref{lem:val-flex-lab}, $\ee'$, and (a4)
\\\>\>(a5)\> $D;F; \Sigma'; \Gamma \vdash v: b\ \pol'$
 \\{\bf Subcase b:} $v=\epair{v_1}{v_2}$
\\\>~~\=(b3)~~~~\= $\Psi \vdash \sigma\sepidx{}  \erelab(\pol'\Leftarrow\pol) \epair{v_1}{v_2}
\stepsto  \sigma\sepidx{} \epair{ \erelab(\pol'\Leftarrow\pol)  v_1}{ \erelab(\pol'\Leftarrow\pol)  v_2}$
\\\>By Lemma~\ref{lem:polc-inversion} and $\ee'$
\\\>\>(b4)\>$\forall  i\in[1,2]$, $D;F;\cdot \vdash v_i: b\ \pol$ and
\\\>\>(b5)\> $\exists \pol''$,
  s.t. $\pol''\rhd b\ \pol$ and $\pol''\in H$ 
\\\>By (b4) and \rulename{E-Relab}
\\\>\>(b6)\> $\forall  i\in[1,2]$, $D;F;\cdot;\pc\join\pol' \vdash \erelab(\pol'\Leftarrow\pol)  v_i: b\ \pol'$ 
\\\>By the definition of $H$ and (b5)
\\\>\>(b7)\> $\pol\in H$ and $\pol'\in H$
\\\>By \rulename{E-Pair}, (b6), (b7)
\\\>\>(b8)\> $D;F; \Sigma; \Gamma ; \pc \vdash \epair{
  \erelab(\pol'\Leftarrow\pol)  v_1}{ \erelab(\pol'\Leftarrow\pol)
  v_2}: b\ \pol'$ 
\end{tabbing}

\item[Case:] $\ee$ ends in \rulename{E-If}
\begin{tabbing}
By assumption
\\~~\=(1)~~~~\= $D;F; \Sigma; \Gamma ; \pc \vdash \eif\ v\ \ethen\ e_2\ \eelse\  e_3 : s$
\\\>(2)\>$\ee':: D;F; \Sigma; \Gamma  \vdash v : \tpint\ \pol $
\\\>(3)\> and $\ee_2::D;F; \Sigma; \Gamma ; \pc\sqcup \pol  \vdash e_2
: s$
\\\>(4)\> and $\ee_3::D;F; \Sigma; \Gamma ; \pc\sqcup \pol  \vdash e_3 : s$
\\\parbox{.8\textwidth}{By examining the operational semantic rules,
  there are two subcases: $v$ is not a pair and $v$ is a pair.
We only show the case when  $v=\epair{v_1}{v_2}$}
\\\>(5)\>$\Psi \vdash \sigma\sepidx{}  \eif\ \epair{v_1}{v_2}\ \ethen\ e_2\ \eelse\  e_3 
\stepsto  \sigma\sepidx{} \epair{ \eif\ v_1\  \ethen\proj{\ e_2}{1}\ \eelse\
 \proj{e_3}{1} }{ \eif\ v_2\  \ethen\ \proj{e_2}{2}\ \eelse\  \proj{e_3}{2} }$
\\By Lemma~\ref{lem:polc-inversion} and $\ee'$
\\\>(6)\>$\forall  i\in\{1,2\}$, $D;F;\cdot\vdash v_i: \tpint\ \pol$
and 
\\\>(7)\> $\exists \pol''$,
  s.t. $\pol''\rhd \tpint\ \pol$ and $\pol''\in H$ 
\\By (7) and the definition of $H$
\\\>(8)\> $\pol\in H$
\\By Lemma~\ref{lem:pc-lower}, \ref{lem:guard-weakening} and $\ee_2$
\\\>(9)\> $\pol\rhd s$ 
\\By \rulename{E-Val} and (6)
\\\>(10)\>$\forall  i\in[1,2]$, $D;F;\cdot; \pc\join\pol \vdash v_i:
\tpint\ \pc\join\pol$
\\By Lemma~\ref{lem:polc-projection-wf} and (3), (4)
\\\>(11)\>$D;F; \Sigma; \Gamma ; \pc\sqcup \pol  \vdash \proj{e_k}{m}
: s$ where $k\in\{2,3\}$ and $m\in\{1,2\}$
\\By \rulename{E-If} and (3), (4), and (11)
\\\>(12)\>  $D;F; \Sigma; \Gamma ; \pc\join\pol \vdash \eif\ v_i\
\ethen\proj{\ e_2}{i}\ \eelse\ \proj{e_3}{i} : s$ where   $i\in\{1,2\}$
\\By \rulename{E-Pair}, (8), (9), and (12)
\\\>(13)\> $D;F; \Sigma; \Gamma ; \pc \vdash \epair{ \eif\ v_1\  \ethen\proj{\ e_2}{1}\ \eelse\
 \proj{e_3}{1} }{ \eif\ v_2\  \ethen\ \proj{e_2}{2}\ \eelse\  \proj{e_3}{2} }: s$

\end{tabbing}

\item[Case:] $\ee$ ends in \rulename{E-DE}
\begin{tabbing}
By assumption
\\~~\=(1)~~~~\= $D;F; \Sigma; \Gamma ; \pc \vdash v_f\ e_a: b\ \lab_2{::}\pol'$
\\\>(2)\>$\ee':: D;F; \Sigma; \Gamma\vdash v_f: (\dne)  [\pc'] (b\ \lab_1{::}\top
  \rightarrow b\ \lab_2{::}\bot)^{\rho_f}$
\\\>(3)\>$\ee'':: D;F; \Sigma; \Gamma;\pc \vdash e_a :  b\ \pol$
\\\>(4)\> and $\pol = \lab_1{::}\lab_2{::}\pol'$, $\rho_f\sqcup\pc
\sqsubseteq \pc'$
\\By examining the operational semantic rules, there are three
subcases
\\{\bf Subcase a:} $e_a$ is not a value. This is a standard case and
we omit. 
\\{\bf Subcase b:} $e_a=v_a$ and $v_f$ is not a pair
\\\>~~\=(b1)~~~~\= $\Psi \vdash \sigma\sepidx{i} v_f\ v_a
\stepsto  \sigma\sepidx{i} e[x\Leftarrow v_a][v_f\Leftarrow v_f(x)=e]$
\\\>By Lemma~\ref{lem:polc-inversion} and $D;F\vdash \Psi$
\\\>\>(b2)\> $ D;F; \Sigma; x:b\ \lab_1{::}\top, v_f: (\dne)  [\pc'] (b\ \lab_1{::}\top
  \rightarrow b\ \lab_2{::}\bot)^{\rho_f} ; \pc'\vdash e:  b\
  \lab_2{::}\bot$
\\\>By $\ee''$, Lemma~\ref{lem:val-no-pc} and \rulename{V-Sub}
\\\>\>(b3)\> $\ee'':: D;F; \Sigma; \Gamma \vdash v_a :  b\ \lab_1{::}\top$
\\\> By Lemma~\ref{lem:polc-substitution} (b2) and (b3)
\\\>\>(b4)\>$ D;F; \Sigma; \cdot; \pc'\vdash e[x\Leftarrow
  v_a][v_f\Leftarrow v_f(x)=e]:  b\
  \lab_2{::}\bot$
\\\>Lemma~\ref{lem:polc-pc-weakening} and (b4)
\\\>\>(b5)\>$ D;F; \Sigma; \cdot; \pc\vdash e[x\Leftarrow
  v_a][v_f\Leftarrow v_f(x)=e]:  b\  \lab_2{::}\bot$
\\\>By (b5) and \rulename{V-Sub}
\\\>\>(b6)\>$ D;F; \Sigma; \cdot; \pc\vdash e[x\Leftarrow
  v_a][v_f\Leftarrow v_f(x)=e]:  b\
  \lab_2{::}\pol'$
 \\{\bf Subcase c:} $e_a=v_a$ and $v_f=\epair{v_1}{v_2}$
\\\>~~\=(c1)~~~~\= $\Psi \vdash \sigma\sepidx{}  \epair{v_1}{v_2}\ v_a
\stepsto  \sigma\sepidx{} \epair{ v_1\ \proj{v_a}{1}}{v_2\ \proj{v_a}{2}}$
\\\>By Lemma~\ref{lem:polc-inversion} and $\ee'$
\\\>\>(c2)\>$\forall  i\in[1,2]$, $D;F;\cdot \vdash v_i: (\dne)  [\pc'] (c\ \lab_1{::}\top
  \rightarrow b\ \lab_2{::}\bot)^{\rho_f}$ and
\\\>\>(c3)\> $\exists \pol''$,
  s.t. $\pol''\sqsubseteq \rho_f$ and $\pol''\in H$ 
\\\>By  Lemma~\ref{lem:val-no-pc}, Lemma~\ref{lem:polc-projection-wf} and $\ee''$
\\\>\>(c4)\>$D;F; \Sigma; \Gamma \vdash \proj{v_a}{i} :  b\ \pol$
where $i\in\{1,2\}$
\\\>By (4) and (c3)
\\\>\>(c5)\>$\rho_f\sqcup\pc\join\pol'' 
\sqsubseteq \pc'$
\\\>By (c2), (c4), and (c5) and \rulename{E-De}
\\\>\>(c6)\> $\forall  i\in[1,2]$, $D;F;\cdot;\pc\join\pol'' \vdash
v_i\ \proj{v_a}{i}: b\ \lab_2{::}\pol'$ 
\\\>By the definition of $H$ and (c3)
\\\>\>(c7)\> $\pol_f\in H$ 
\\\>By \rulename{E-Pair}, (c6), (c7)
\\\>\>(c8)\> $D;F; \Sigma; \Gamma ; \pc \vdash \epair{ v_1\
  \proj{v_a}{1}}{v_2\ \proj{v_a}{2}}: b\ \lab_2{::}\pol'$
\end{tabbing}

\item[Case:] $\ee$ ends in \rulename{E-Pair}
\begin{tabbing}
By assumption
\\~~\=(1)~~~~\= $D;F; \Sigma; \Gamma; \pc  \vdash \epair{e_1}{e_2} :
s$
\\\>(2)\>$\ee'::D;F; \Sigma; \Gamma;\pc\sqcup\pol'  \vdash e_i : s$,
$i\in\{1,2\}$
\\\>(3)\> $\pol \rhd s$,  $\pol\in\ H$, and  $\pol'\in\ H$
\\By examining the operational semantic rules
\\\>(4)\> $\Psi \vdash \sigma\sepidx{} \epair{e_1}{e_2} \stepsto  \sigma'
\sepidx{} \epair{e'_1}{e'_2}$
\\\>(5)\>$\Psi \vdash \sigma\sepidx{i} e_i \stepsto  \sigma'
\sepidx{i}e'_i$, $e_j = e'_j$, and $\{i,j\} = \{1,2\}$
\\By I.H. on $\ee'$
\\\>(6)\> exists $\Sigma'\geq\Sigma$ s.t. $D;F\vdash \sigma':\Sigma'$ 
\\\>(7)\> and $D;F;\Sigma'; \cdot; \pc\sqcup\pol'  \vdash  e'_i: s$
\\By Lemma~\ref{lem:type-store-weakening} (2) and (6)
\\\>(8)\> $D;F;\Sigma'; \cdot; \pc\sqcup\pol'  \vdash  e'_j: s$
\\By \rulename{pair} (3), (7), and (8)
\\\>(9)\> $D;F; \Sigma'; \Gamma; \pc  \vdash \epair{e'_1}{e'_2} : s$
\end{tabbing}
\end{description}
\end{proof}

\begin{thm}[Preservation]~\label{thm:polc-preservation}
If $\Psi \vdash \sigma\sepidx{} e \stepsto \sigma'\sepidx{} e'$
and $\vdash \Psi; \sigma; e$ then $\vdash \Psi; \sigma'; e'$.
\end{thm}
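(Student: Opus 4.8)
The statement is the ``packaged'' preservation theorem for full (unindexed) configurations, and the plan is to derive it almost immediately from the single-execution preservation lemma (Lemma~\ref{lem:polc-i-preservation}) together with preservation of configuration well-formedness (Lemma~\ref{lem:pres-wf}). First I would unfold the judgement $\vdash \Psi;\sigma;e$ into its constituents: there exist contexts $D,F,\Sigma$ and a type $s$ such that $D;F\vdash\Psi$, $D;F\vdash\sigma:\Sigma$, the configuration $\sigma\sepidx{}e$ is closed, and $D;F;\Sigma;\cdot;\pc\vdash e:s$ for the top-level program counter (which we may take to be $\bot$). The goal is then to produce the same four ingredients for $\sigma'\sepidx{}e'$.

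Next I would invoke Lemma~\ref{lem:polc-i-preservation} at index $i=\bullet$. The hypothesis ``$i\in\{1,2\}$ implies $\pc\in H$'' is vacuous in this case, so from $\Psi\vdash\sigma\sepidx{}e\stepsto\sigma'\sepidx{}e'$, $D;F\vdash\Psi$, $D;F\vdash\sigma:\Sigma$ and $D;F;\Sigma;\cdot;\pc\vdash e:s$ the lemma yields a store typing $\Sigma'\geq\Sigma$ with $D;F\vdash\sigma':\Sigma'$ and $D;F;\Sigma';\cdot;\pc\vdash e':s$. Since the step relation never alters the code store $\Psi$ (it only substitutes $f(x)=e$ back into the term), $D;F\vdash\Psi$ continues to hold. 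Then I would apply Lemma~\ref{lem:pres-wf}(1), the $i=\bullet$ case, to the same step and to the closedness of $\sigma\sepidx{}e$, obtaining closedness of $\sigma'\sepidx{}e'$. Re-packaging $D;F\vdash\Psi$, $D;F\vdash\sigma':\Sigma'$, closedness of $\sigma'\sepidx{}e'$, and $D;F;\Sigma';\cdot;\pc\vdash e':s$ gives exactly $\vdash \Psi;\sigma';e'$.

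There is no genuine obstacle at the level of this theorem: it is a bookkeeping corollary of the per-index lemma, whose proof (by induction on the operational-semantics derivation, using the substitution lemma, the inversion lemma, and, for the pair and relabel cases of the paired semantics, Lemmas~\ref{lem:val-flex-lab}, \ref{lem:val-no-pc} and \ref{lem:store-ops}) is where the real content lies. The one subtlety worth flagging even at the theorem level is the store typing: because $\enew$ allocates a fresh location, $\Sigma$ may grow, so the configuration invariant must be, and is, read as existentially quantified over $\Sigma$; the monotonicity $\Sigma'\geq\Sigma$ delivered by Lemma~\ref{lem:polc-i-preservation}, together with store-weakening (Lemma~\ref{lem:type-store-weakening}), guarantees that the residual expression and the residual store still type-check against the enlarged $\Sigma'$.
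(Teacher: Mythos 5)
Your proposal matches the paper's proof: the paper discharges this theorem exactly by unfolding the definition of $\vdash \Psi;\sigma;e$ and applying Lemma~\ref{lem:polc-i-preservation} at the unindexed case (where the $\pc\in H$ side condition is vacuous), with the store-typing existential and well-formedness bookkeeping handled just as you describe. This is the same route, so no further comparison is needed.
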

\begin{proofsketch}
By the definitions of  $\vdash \Psi; \sigma; e $ and Lemma~\ref{lem:polc-i-preservation}.
\end{proofsketch}

\subsection{Noninterference}
\label{app:polc:noninterference}
Finally, we present proofs for the Noninterference Theorem for \langname.
\begin{mathpar}
\inferrule*{
F; R\vdash \pol_1\leadsto\pol_2
\\ F; R\vdash \pol_2\leadsto\pol_3
}{
F; R\vdash \pol_1\leadsto\pol_3
}
\and
\inferrule*{
\pol_1\sqsubseteq\pol_2
\\ F; R\vdash \pol_2\leadsto\pol_3
}{
F; R\vdash \pol_1\leadsto\pol_3
}
\and
\inferrule*{
\lab_2{::}\bot \Leftarrow \lab_1{::}\top \in R
}{
F; R\vdash \lab_1::\lab_2::\pol\leadsto\lab_2::\pol
}
\and
\inferrule*{
F = F', f: (\dne)  [\pc'] (b\ \lab_1{::}\top \rightarrow b\ \lab_2{::}\bot)^{\rho_f}
}{
F; R\vdash \lab_1::\lab_2::\pol\leadsto\lab_2::\pol
}
\end{mathpar}

\[\inferrule*{
\forall \pol', 
F; R\vdash \pol\leadsto\pol', \pol'\not\sqsubseteq\pol_A
}{
\pol_A; F; R\vdash \pol\in H
}
\]

\begin{lem}\label{lem:polc-h-closed}
If $\pol_A; F; R\vdash \pol_1\in H$ and $\pol_1\sqsubseteq\pol_2$ then $\pol_A; F; R\vdash \pol_2\in H$.
\end{lem}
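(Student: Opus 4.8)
The plan is to unfold the definition of $\pol_2 \in H$ and reduce it directly to the hypothesis $\pol_1 \in H$; no induction on derivations is needed. By the defining rule of $H$-membership, $\pol_A; F; R\vdash \pol_2 \in H$ holds precisely when every $\pol'$ with $F; R\vdash \pol_2 \leadsto \pol'$ satisfies $\pol' \not\sqsubseteq \pol_A$. So I would fix an arbitrary $\pol'$ with $F; R\vdash \pol_2 \leadsto \pol'$ and aim to establish $\pol' \not\sqsubseteq \pol_A$, which suffices since $\pol'$ is arbitrary.

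The key step is that $\leadsto$ is closed under passing to a $\sqsubseteq$-smaller source: the second inference rule defining $\leadsto$ states that from $\pol_1 \sqsubseteq \pol_2$ together with $F; R\vdash \pol_2 \leadsto \pol'$ one derives $F; R\vdash \pol_1 \leadsto \pol'$. Hence, invoking our hypothesis $\pol_1 \sqsubseteq \pol_2$, every policy reachable by rewriting from $\pol_2$ is already reachable from $\pol_1$. Applying the hypothesis $\pol_A; F; R\vdash \pol_1 \in H$ to the rewrite $F; R\vdash \pol_1 \leadsto \pol'$ then yields $\pol' \not\sqsubseteq \pol_A$, completing the argument.

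I do not expect a genuine obstacle here: the statement is an immediate consequence of the fact that the $\leadsto$ relation was designed to be anti-monotone in its source argument, which is exactly what makes $H$-membership upward closed under $\sqsubseteq$. The only point worth a sentence of care is that the reasoning is insensitive to whether $\leadsto$ is reflexive or admits empty rewrite sequences, since we treat every rewrite target of $\pol_2$ uniformly and never single out the degenerate rewrite $\pol_2 \leadsto \pol_2$.
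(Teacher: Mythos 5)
Your proposal is correct: fixing an arbitrary $\pol'$ with $F;R\vdash \pol_2\leadsto\pol'$, the $\sqsubseteq$-precomposition rule for $\leadsto$ (instantiated with the hypothesis $\pol_1\sqsubseteq\pol_2$) yields $F;R\vdash \pol_1\leadsto\pol'$, and the hypothesis $\pol_A;F;R\vdash\pol_1\in H$ then gives $\pol'\not\sqsubseteq\pol_A$, which is exactly the defining condition for $\pol_2\in H$. The paper states this lemma without proof, and your direct unfolding of the definition is precisely the immediate argument the authors intend, so there is nothing to add.
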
 

\begin{lem} \label{lem:polc-low-value-same}
\begin{enumerate}
\item If $D;F;\Sigma; \cdot \vdash v : \tpint\ \pol$, and $\pol\notin H$,
then $\proj{v}{1}=\proj{v}{2}$.
\item If $D;F; \Sigma; \cdot; \bot \vdash v : \tpint\ \pol$, and $\pol\notin H$,
then $\proj{v}{1}=\proj{v}{2}$.
\end{enumerate} 
\end{lem}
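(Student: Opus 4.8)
The plan is to prove part~(1) by induction on the derivation of $D;F;\Sigma; \cdot \vdash v : \tpint\ \pol$, and then obtain part~(2) as an immediate consequence via Lemma~\ref{lem:val-no-pc}. The key structural observation is that the only way a value can contain a pair $\epair{\cdot}{\cdot}$ is through rule \rulename{P-T-V-Pair} (or \rulename{P-T-E-Pair} in the expression judgment), and the side conditions of that rule are incompatible with $\pol\notin H$.

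For part~(1) I would case on the last rule used. Rules \rulename{P-T-V-Loc}, \rulename{P-T-V-Fun}, and \rulename{P-T-V-Struct} assign a pointer, function, or record type, none of which equals $\tpint\ \pol$, so these cases are vacuous; \rulename{P-T-V-Var} is impossible since the variable context is empty. In the \rulename{P-T-V-Int} case $v=n$ and $\proj{n}{1}=n=\proj{n}{2}$ immediately. In the \rulename{P-T-V-Pair} case $v=\epair{v_1}{v_2}$ and the rule supplies a policy $\pol''$ with $\pol''\rhd\tpint\ \pol$ (hence $\pol''\sqsubseteq\pol$) and $\pol''\in H$; Lemma~\ref{lem:polc-h-closed} then forces $\pol\in H$, contradicting the hypothesis, so this case does not arise. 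In the \rulename{P-T-V-Sub} case we have $D;F;\Sigma;\cdot\vdash v:s'$ with $s'\leq\tpint\ \pol$; a routine subtyping inversion (the $\tpint$-analogue of Lemma~\ref{lem:subtyping-form}) gives $s'=\tpint\ \pol'$ with $\pol'\sqsubseteq\pol$, and the contrapositive of Lemma~\ref{lem:polc-h-closed} gives $\pol'\notin H$, so the induction hypothesis applies to the subderivation.

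Part~(2) follows by applying Lemma~\ref{lem:val-no-pc} to $D;F;\Sigma;\cdot;\bot\vdash v:\tpint\ \pol$ to obtain $D;F;\Sigma;\cdot\vdash v:\tpint\ \pol$ and invoking part~(1). Alternatively one can inspect the expression derivation directly: \rulename{P-T-E-Val} with $\pc=\bot$ yields the value judgment since $s\sqcup\bot=s$, \rulename{P-T-E-Pair} is excluded exactly as \rulename{P-T-V-Pair} above, and \rulename{P-T-E-Sub} is handled by the same subtyping inversion. The one place that needs slight care is the subtyping case: we must be sure that ``$\notin H$'' transfers from $\pol$ down to the smaller policy $\pol'$, which is precisely the monotonicity (upward-closure) content of Lemma~\ref{lem:polc-h-closed}, together with the small fact that $\tpint$ can only be a subtype of a policy-annotated $\tpint$. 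Everything else is a direct structural case analysis with no real obstacle.
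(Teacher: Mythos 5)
Your proposal is correct and matches the paper's approach: the paper only sketches this lemma as ``induction over the typing derivation of the value,'' and your case analysis (trivial non-pair cases, the \rulename{P-T-V-Pair} case ruled out via $\pol''\rhd\tpint\ \pol$ together with Lemma~\ref{lem:polc-h-closed}, and subtyping handled by inversion plus upward-closure of $H$) is exactly the intended elaboration. Part~(2) via Lemma~\ref{lem:val-no-pc} is also the natural route and introduces no gap.
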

\begin{proofsketch}
By induction over the typing derivation of the value. 
\end{proofsketch}

\begin{defn}[Equivalent substitution]
We define $D;F \vdash \delta_1\approx_H \delta_2 : \Gamma$ iff for all
$x\in\m{dom}(\Gamma)$, $D;F; \cdot; \cdot; \vdash \delta_i(x):
\Gamma(x)$ ($i\in\{1,2\}$) and $\delta_1(x)=\delta_2(x)$ if $\labof(\Gamma(x)) \notin H$.
\end{defn}

\noindent\framebox{$\Gamma\vdash \delta_1\bowtie\delta_2 =\delta$}
\begin{mathpar}
\inferrule*{ }{
\Gamma\vdash \cdot\bowtie\cdot=\cdot
}
\and
\inferrule*{\Gamma\vdash \delta_1\bowtie\delta_2=\delta
\\ \labof(\Gamma(x))\in H
}{
\Gamma\vdash \delta_1, x\mapsto v_1 \bowtie \delta_2, x\mapsto v_2=
\delta, x\mapsto \epair{v_1}{v_2}
}
\and
\inferrule*{\Gamma\vdash \delta_1\bowtie\delta_2=\delta
\\ \labof(\Gamma(x))\notin H
}{
\Gamma\vdash \delta_1, x\mapsto v_1 \bowtie \delta_2, x\mapsto v_2=
\delta, x\mapsto v_1
}
\end{mathpar}

\begin{lem}\label{lem:polc-pair-substitutions}
If $D;F \vdash \delta_1\approx_H \delta_2 : \Gamma$ and $\Gamma\vdash
\delta_1\bowtie\delta_2 =\delta$, then $\forall x\in\m{dom}(\Gamma)$, 
$D;F; \cdot; \cdot \vdash \delta(x) : \Gamma(x)$.
\end{lem}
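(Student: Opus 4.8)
The plan is to induct on the derivation of $\Gamma\vdash \delta_1\bowtie\delta_2 =\delta$, which is syntax-directed on the shape of $\Gamma$. In the base case $\Gamma=\cdot$ the domain is empty and there is nothing to prove. In each step case a single binding is peeled off, so $\Gamma=\Gamma',x{:}s$, $\delta_i=\delta_i',x\mapsto v_i$ for $i\in\{1,2\}$, and the subderivation is $\Gamma'\vdash\delta_1'\bowtie\delta_2'=\delta'$. For any $y\in\m{dom}(\Gamma')$ the goal follows from the induction hypothesis: the equivalence hypothesis $D;F\vdash\delta_1\approx_H\delta_2:\Gamma$ restricts to $D;F\vdash\delta_1'\approx_H\delta_2':\Gamma'$ immediately from the (pointwise) definition of equivalent substitutions, and the subderivation supplies the other premise. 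So the only real content is typing $\delta(x)$.

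If $\labof(s)\notin H$, the combining rule sets $\delta(x)=v_1$. The definition of $D;F\vdash\delta_1\approx_H\delta_2:\Gamma$ already gives $v_1$ typed at $s=\Gamma(x)$ (and, moreover, $v_1=v_2$); if that judgment carries a program counter, Lemma~\ref{lem:val-no-pc} strips it, yielding the value-typing judgment $D;F;\cdot;\cdot\vdash v_1:\Gamma(x)$, which is exactly the goal.

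If $\labof(s)\in H$, the combining rule sets $\delta(x)=\epair{v_1}{v_2}$, and we build a derivation ending in \rulename{P-T-V-Pair}. The equivalence hypothesis gives $D;F;\cdot;\cdot\vdash v_i:s$ for $i\in\{1,2\}$ (again via Lemma~\ref{lem:val-no-pc} if required). It remains to exhibit a policy $\pol$ with $\pol\rhd s$ and $\pol\in H$: take $\pol=\labof(s)$. Then $\pol\rhd s$ holds by reflexivity of $\sqsubseteq$ (and unconditionally when $s=\tunit$, where $\rhd$ imposes no constraint), and $\pol\in H$ is precisely the case hypothesis. Applying \rulename{P-T-V-Pair} gives $D;F;\cdot;\cdot\vdash\epair{v_1}{v_2}:\Gamma(x)$, as required.

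The one place calling for attention is this last case: one must notice that the two side conditions of \rulename{P-T-V-Pair} can always be discharged by instantiating the guarding policy with $\labof(s)$, so that exactly the $H$-labeled bindings — the ones on which $\delta_1$ and $\delta_2$ are permitted to disagree — merge into well-typed expression pairs. Everything else is routine bookkeeping with the definitions of $\approx_H$ and $\bowtie$, and a use of Lemma~\ref{lem:val-no-pc} to reconcile the program-counter-annotated typing in the definition of $\approx_H$ with the pc-free value typing demanded by the conclusion.
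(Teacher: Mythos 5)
Your proposal is correct and matches the paper's argument: the paper's proof is exactly an induction over the structure of $\Gamma$ (equivalently, over the syntax-directed derivation of $\Gamma\vdash\delta_1\bowtie\delta_2=\delta$), with the low case discharged directly by the definition of $\approx_H$ and the high case by \rulename{P-T-V-Pair}, instantiating the guarding policy with $\labof(\Gamma(x))$ just as you do.
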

\begin{proofsketch}
By induction over the structure of $\Gamma$. 
\end{proofsketch}

\begin{thm}[Noninterference]
~\\
  If $D;F; \cdot; \Gamma; \bot \vdash e : \tpint\ \pol$, let $H$ be the set of
  labels not-observable by an attacker with label $\pol_A$, given substitution
  $\delta_1$, $\delta_2$ s.t.
  $\delta_1\approx_H \delta_2 : \Gamma$, 
and $\pol\notin H$ and
$\Psi \vdash \emptyset\sepidx{} e\delta_1 \stepsto^*
\sigma_1\sepidx{} v_1$ and 
$\Psi \vdash \emptyset \sepidx{} e\delta_2 \stepsto^*
\sigma_2\sepidx{} v_2$, then $v_1=v_2$.
\end{thm}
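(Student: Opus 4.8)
The plan is to reuse the bracketed (paired) operational semantics set up above — this is the FlowML-style technique~\cite{Pottier:2002}. The idea is to fold the two runs of $e\delta_1$ and $e\delta_2$ into a \emph{single} run of an extended expression $e\delta$ that carries pairs $\epair{\cdot}{\cdot}$ exactly at the positions where the two runs may disagree (the images of the high variables), show this combined run is well-typed and terminates, and then observe that a well-typed extended value whose type is \emph{observable} (i.e.\ $\pol\notin H$) cannot be a pair, so its two projections coincide.

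First I would assemble the merged substitution. Given $F\vdash\delta_1\approx_H\delta_2:\Gamma$, take $\delta$ with $\Gamma\vdash\delta_1\bowtie\delta_2=\delta$, so $\delta(x)=\epair{\delta_1(x)}{\delta_2(x)}$ when $\labof(\Gamma(x))\in H$ and $\delta(x)=\delta_1(x)=\delta_2(x)$ otherwise. By Lemma~\ref{lem:polc-pair-substitutions} each $\delta(x)$ is well-typed at $\Gamma(x)$ — the pair case uses \rulename{P-T-V-Pair} with witness $\pol''=\labof(\Gamma(x))$, which guards $\Gamma(x)$ by reflexivity and lies in $H$ by hypothesis. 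Iterating the Substitution lemma~\ref{lem:polc-substitution} on $D;F;\cdot;\Gamma;\bot\vdash e:\tpint\ \pol$ then yields $D;F;\cdot;\cdot;\bot\vdash e\delta:\tpint\ \pol$, so the initial configuration $\emptyset\sepidx{}e\delta$ is well-typed and closed. I would also record that $\proj{\emptyset\sepidx{}e\delta}{i}=\emptyset\sepidx{}e\delta_i$ for $i\in\{1,2\}$: since $e$ is a core \langname term it projects to itself, and by Lemma~\ref{lem:proj-distributed} (distributivity of projection over substitution), iterated over the bindings of $\delta$, the claim reduces to $\proj{\delta(x)}{i}=\delta_i(x)$, which holds on both the high and the low variables.

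Next I would run the combined semantics and read off the result. By hypothesis $\emptyset\sepidx{}e\delta_i\stepsto^{n_i}\sigma_i\sepidx{}v_i$ with $v_i$ values, so combining with the previous paragraph and the Completeness theorem~\ref{thm:polc-pair-op-completeness} there exist $\sigma',v'$ with $\Psi\vdash\emptyset\sepidx{}e\delta\stepsto^*\sigma'\sepidx{}v'$ and $\proj{\sigma'\sepidx{}v'}{i}=\sigma_i\sepidx{}v_i$; in particular $\proj{v'}{i}=v_i$. By iterated Preservation~\ref{thm:polc-preservation} (the combined run stays at index $\bullet$ and $\pc=\bot$, and the steps taken inside a pair are at $\pc\in H$, so the side conditions of Lemma~\ref{lem:polc-i-preservation} are met throughout) the terminal value keeps its type: $D;F;\Sigma';\cdot;\bot\vdash v':\tpint\ \pol$ for some $\Sigma'$. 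Now apply Lemma~\ref{lem:polc-low-value-same} using $\pol\notin H$ to conclude $\proj{v'}{1}=\proj{v'}{2}$, hence $v_1=\proj{v'}{1}=\proj{v'}{2}=v_2$.

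I expect the routine effort to sit in the two steps above — verifying that the pair-carrying substitution is typeable and that typing survives the combined run, in particular the subcase where a declassification/endorsement call inlines $\erelab$ into a function body (handled by the \rulename{E-DE} and \rulename{E-ReLabel} cases of Lemma~\ref{lem:polc-i-preservation}, which is why the hypothesis that $e$ itself contains no raw relabeling operations is used). The genuinely load-bearing fact, and the only place the observability hypothesis enters, is Lemma~\ref{lem:polc-low-value-same}: a pair can be typed at $\tpint\ \pol$ only if some $\pol''\rhd\tpint\ \pol$ lies in $H$, and by upward closure of $H$ (Lemma~\ref{lem:polc-h-closed}) this forces $\pol\in H$, contradicting $\pol\notin H$. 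The remaining care is bookkeeping: the $H$ used to instantiate the extended type system must be precisely the set of labels the $\pol_A$-attacker cannot reach through the declassification/endorsement functions in $F$ (with $R=\emptyset$, since $e$ has no raw relabeling operations), and that set is upward closed, which is exactly what the preservation argument relies on.
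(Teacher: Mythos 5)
Your proposal follows essentially the same route as the paper's own proof: merge the two substitutions via $\bowtie$, type the merged term with Lemma~\ref{lem:polc-pair-substitutions} and the Substitution Lemma, run the paired semantics and invoke Completeness (Theorem~\ref{thm:polc-pair-op-completeness}) and Preservation (Theorem~\ref{thm:polc-preservation}), then conclude with Lemma~\ref{lem:polc-low-value-same}. The argument is correct, and your added commentary on where the observability hypothesis and the high-label side conditions enter matches the roles these lemmas play in the paper.
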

\begin{proof}

\begin{tabbing}
\\Let $\delta$ be the substitution from $\Gamma\vdash
\delta_1\bowtie\delta_2 =\delta$.
\\By Lemma~\ref{lem:polc-pair-substitutions}
\\~~\=(1)~~~\=$\forall x\in\m{dom}(\Gamma)$,  $D;F; \cdot; \cdot
\vdash \delta(x) : \Gamma(x)$.
\\By Substitution Lemma (Lemma~\ref{lem:polc-substitution})
\\\>(2)\> $D;F;\cdot;\bot\vdash e\circ\delta: \tpint\ \pol$ 
\\By $\Gamma\vdash \delta_1\bowtie\delta_2 =\delta$
\\\>(3)\> $\proj{e\circ\delta}{i} = e\delta_i$, $i\in\{1, 2\}$
\\By Completeness (Theorem~\ref{thm:polc-pair-op-completeness})
\\\>(4)\> $\Psi \vdash \emptyset\sepidx{} e\circ\delta \stepsto^*
\sigma\sepidx{} v$ and for all $i\in\{1, 2\}$, $\proj{v}{i}=v_i$
\\By Preservation (Theorem~\ref{thm:polc-preservation})
\\\>(5)\>  $D;F;\cdot;\bot\vdash v: \tpint\ \pol$ 
\\By Lemma~\ref{lem:polc-low-value-same}
\\\>(6)\> $v_1=v_2=\proj{v}{1}=\proj{v}{2}$
\end{tabbing}
\end{proof}

\section{Definitions and Proofs of Translations from annotated \minic to \minic via \langname} 
\label{app:translation}

\subsection{Mapping Annotated \minic to \langname}
We first list all the rules for mapping annotated \minic types to
\langname types. 
~\\
\framebox{$\inscast{a}=t$} \framebox{$\inscast{\beta}=s$}
\begin{mathpar}
\inferrule*{ 
 }{
  \inscast{\tunit}  = \tunit
}
\and
\inferrule*{ 
 }{
  \inscast{\tpint}  = \tpint\ \un
}
\and
\inferrule*{ 
  \inscast{\beta} =s
 }{
  \inscast{\tptr(\beta)}  = \tptr(s)\ \un
}
\and
\inferrule*{ 
 }{
  \inscast{T}  = T\ \un
}
\and
\inferrule*{ 
 }{
  \inscast{T\at\pol}  = T\ \pol
}
\and
\inferrule*{ 
 }{
  \inscast{\tpint\at \pol}  = \tpint\ \pol
}
\and
\inferrule*{ 
  \inscast{\beta} =s
 }{
  \inscast{\tptr(\beta)\at \pol}  = \tptr(s)\ \pol
}
\and
\inferrule*{
 \forall i\in[1,2],    \inscast{a_i}  = t_i\\
}{
\inscast{a_1 \rightarrow a_2} = [\bot](t_1\rightarrow t_2)
}
\and
\inferrule*{
 \forall i\in[1,2],    \inscast{a_i}  = t_i\\
}{
\inscast{(\dne)a_1 \rightarrow a_2} = (\dne)[\bot](t_1\rightarrow t_2)
}
\end{mathpar}

\framebox{$\inscast{D_a}=D$}
\begin{mathpar}
\inferrule*{
}{ \inscast{\cdot} = \cdot
 }
\and
\inferrule*{ 
 \forall i\in[1,k],    \inscast{a_i}  = t_i
 }{
  \inscast{D_a, T\mapsto \tstruct\ T\ \{a_1,\cdots,a_k\}}  = 
 \inscast{D_a}, T\mapsto \tstruct\ T\ \{t_1 \cdots,t_k\} 
}
\end{mathpar}

We write $\lv$ and $\lexp$ to denote labeled \langname values and
expressions respectively.
Values and expressions are mapped to labeled values and expressions to
facilitate the translation process later. 
\[
\begin{array}{lcll}
\textit{Labeled values} & \lv & \bnfdef & 
x@ s \bnfalt x @ (\dne) s \bnfalt n@ (b\ \pol) \bnfalt ()
\bnfalt   (T)\{\lv_1,\cdots, \lv_k\} @ (T\ \pol) 
\\ & & \bnfalt & 
f @ s \bnfalt f @ (\dne)\ s
\\
\textit{Labeled expressions} & \lexp & \bnfdef &
 \lv \bnfalt (\lexp_1 \bop \lexp_2)\bnfalt  (\lv\, \lexp)
\bnfalt (\elet\, x:s=\lexp_1\, \ein\, \lexp_2)
\bnfalt (\lv.i)   
\\
& & \bnfalt & \eif\, \lv\, \ethen\, \lexp_1\, \eelse \, \lexp_2
\bnfalt  \enew(\lexp)@ (\tptr(s)\ \pol) \bnfalt \lv\, :=\, \lexp \bnfalt *\lv
\bnfalt \erelab(\pol'\Leftarrow\pol) \lv
\end{array}
\]

Rules for mapping annotated \minic values to labeled \langname values
are as follows. 
~\\\noindent{\framebox{$D_a;F_a;\Gamma_a \vdash \inscast{v}\Rightarrow \lv$}}
\begin{mathpar}
\inferrule*[right=V-L-Unit]{
}{ 
  D_a;F_a;\Gamma_a\vdash \inscast{()} \Rightarrow\ ()
}
\and
\inferrule*[right=V-L-Int]{
}{ 
  D_a;F_a;\Gamma_a\vdash \inscast{n} \Rightarrow\ n @\tpint\ \un
}
\and
\inferrule*[right=V-L-Var]{
\Gamma_a(x) = \beta
}{ 
 D_a;F_a;\Gamma_a  \vdash \inscast{x} \Rightarrow\ x@\inscast{\beta}
}
\and
\inferrule*[right=V-L-Fun]{
}{ 
 D_a;F_a;\Gamma_a  \vdash \inscast{f} \Rightarrow\ f@\inscast{F(f)}
}
\and
\inferrule*[right=V-L-Struct]{
\forall i \in[1,n], D_a;F_a;\Gamma_a  \vdash \inscast{v_i} \Rightarrow \lv_i
}{ 
 D_a;F_a;\Gamma_a  \vdash \inscast{(T)\ \{v_1,\cdots,v_n\}}
 \Rightarrow\ (T)\{\lv_1,\cdots, \lv_n \}@T\ \un
}
\end{mathpar}

Next, we summarize rules for mapping annotated \minic expressions to labeled \langname expressions
below.
~\\ 
\noindent\framebox{$D_a;F_a;\Gamma_a; s\vdash \inscast{e} \Rightarrow \lexp$}

\begin{mathpar}
\inferrule*[right=L-Int]{
}{ 
D_a;F_a;\Gamma_a; t \vdash \inscast{n} \Rightarrow  n @ t
}
\and
\inferrule*[right=L-Var]{
\inscast{\Gamma_a(x)} = s
}{ 
 D_a;F_a;\Gamma_a ; s \vdash \inscast{x} \Rightarrow x @ s
}
\and
\inferrule*[right=L-Struct]{
 D_a(T) = \tstruct\ T\ \{\beta_1,\cdots, \beta_n\}
\\ \forall i\in[1,n], D_a;F_a;\Gamma_a ; \inscast{\beta_i} \vdash \inscast{v_i} \Rightarrow  \lv_i
}{ 
  D_a;F_a;\Gamma_a  ;\ T\ \pol\vdash \inscast{(T)\{v_1,\cdots,v_n\} } 
 \Rightarrow (T)\{\lv_1,\cdots,\lv_n\} @ (T\ \pol)
}
\and
\inferrule*[right=L-Field-U]{
D_a;F_a;\Gamma_a \vdash \inscast{v}\Rightarrow \lv
\\ \tpof(\lv) = T\ \pol
\\ D_a(T) =  (\tstruct\ T \{\beta_1,\cdots, \beta_n\})
\\ \forall i\in[1,n], \pol = \labof(\inscast{\beta_i})
}{ 
 D_a;F_a;\Gamma_a ; t  \vdash   \inscast{v.i} \Rightarrow \lv.i 
}
\and
\inferrule*[right=L-Field]{
D_a;F_a;\Gamma_a \vdash \inscast{v}\Rightarrow \lv
\\ \tpof(\lv) = T\ \pol
\\ D_a(T) =  (\tstruct\ T \{\beta_1,\cdots, \beta_n\})
\\ \exists i\in[1,n], \pol \neq \labof(\inscast{\beta_i})
}{ 
 D_a;F_a;\Gamma_a  ; t\vdash   \inscast{v.i} \Rightarrow
 \elet\ y:T\ \bot = \erelab(\bot\Leftarrow \pol)\ \lv\
\ein\ (y@ T\ \bot).i 
}
\and
\inferrule*[right=L-New]{
 D_a;F_a;\Gamma_a ; s\vdash   \inscast{e} \Rightarrow \lexp
}{ 
  D_a;F_a;\Gamma_a   ; (\tptr(s)\
  \pol)\vdash\inscast{\enew(e)} \Rightarrow \enew(\lexp)@(\tptr(s)\ \pol)
}
\and
\inferrule*[right=L-Deref]{
 D_a;F_a;\Gamma_a  \vdash   \inscast{v}  \Rightarrow \lv
\\ \tpof(\lv) = b\ \pol
}{ 
    D_a;F_a;\Gamma_a ; t  \vdash \inscast{*v} \Rightarrow 
(\elet\ y:b\ \bot = \erelab(\bot\Leftarrow \pol)\ \lv\ 
\ein\ *(y@b\ \bot)
}
\end{mathpar}
\begin{mathpar}
\inferrule*[right=L-Assign]{
 D_a;F_a;\Gamma_a  \vdash   \inscast{v}  \Rightarrow \lv
\\ \tpof(\lv) =\tptr(s)\ \pol
\\ D_a;F_a;\Gamma_a ; s\vdash e \Rightarrow \lexp
}{ 
  D_a;F_a;\Gamma_a ; t\vdash   \inscast{v:= e} \Rightarrow 
\elet\ y: \tptr(s)\ \bot = \erelab(\bot\Leftarrow \rho)\ \lv\ 
\ein\  y@\tptr(s)\ \bot :=\lexp 
}
\and
\inferrule*[right=L-App]{
D_a;F_a;\Gamma_a \vdash \inscast{v} \Rightarrow \lv
\\ \tpof(\lv) = [\bot](t_1\rightarrow t_2)^\bot
  \\  D_a;F_a;\Gamma_a ; t_1\vdash \inscast{e} \Rightarrow \lexp
}{ 
D_a;F_a;\Gamma_a ; t_2\vdash \inscast{v\ e} \Rightarrow \lv\;\lexp
}
\and
\inferrule*[right=L-App-DE]{
D_a;F_a;\Gamma_a \vdash \inscast{v_f} \Rightarrow \lv_f
\\ \tpof(\lv_f) = (\dne)[\bot](t_1\rightarrow t_2)^{\bot}
  \\  D_a;F_a;\Gamma_a \vdash \inscast{v_a} \Rightarrow \lv_a
}{ 
D_a;F_a;\Gamma_a ; t_2 \vdash \inscast{v_f\ v_a} 
\Rightarrow \lv_f\ \lv_a
}
\and
\inferrule*[right=L-Let]{
  D_a;F_a;\Gamma_a ; \inscast{\beta_1}\vdash   \inscast{e_1} \Rightarrow \lexp_1
 \\ D_a;F_a;\Gamma_a , x:\beta_1 ; t_2 \vdash  \inscast{e_2} \Rightarrow \lexp_2
}{ 
  D_a;F_a;\Gamma_a ; t_2\vdash  \inscast{\elet\ x:\beta_1 = e_1\ \ein\
    e_2}
\Rightarrow \elet\ x:\inscast{\beta_1} = \lexp_1\ \ein\ \lexp_2
}
\and
\inferrule*[right=L-If]{
D_a;F_a;\Gamma_a \vdash \inscast{v_1}\Rightarrow \lv_1
\\ \tpof(\lv_1) = \tpint\ \pol 
 \\ D_a;F_a;\Gamma_a ; t \vdash  \inscast{e_2} \Rightarrow \lexp_2
 \\ D_a;F_a;\Gamma_a ; t \vdash   \inscast{e_3}  \Rightarrow \lexp_3
}{ 
D_a;F_a;\Gamma_a ; t\vdash    \inscast{\eif\ v_1\ \ethen\  e_2\ \eelse\
  e_3}
\\\\\Rightarrow \elet\ x: \tpint\ \bot = (\erelab(\bot\Leftarrow\pol)\ \lv_1)\
 \ein\ \eif\ x@\tpint\ \bot\ \ethen\  \lexp_2\ \eelse\ \lexp_3
}
\end{mathpar}

The mapping of a function definition is as follows. To make sure
that programmers do not have to drastically change their programs, the
mapping takes care of relabeling so the parameter can be used at its
original type inside the function body. Similarly, the function body
is relabeled from the original type to the annotated type.
\[
\inferrule*{
 \inscast{a_1}= b_1\ \pol_1 
\\ \inscast{a_2}= b_2\ \pol_2 
\\\\D_a;F_a;\Gamma_a; b_2\ \pol_2\vdash \inscast{e[y/x]} \Rightarrow \lexp
}{
D_a;F_a;\Gamma_a \vdash \inscast{f(x): a_1\rightarrow a_2 = e} =
\\  f(x) = \elet\ y: t_1\ \un = \erelab(\un \Leftarrow \pol_1)\ x\
\\\\\ein\ \elet\ z: t_2\ \un = \lexp\ \ein\ \erelab (\pol_2\Leftarrow
\un)\ z
}
\]



\subsection{Translation from \langname to \minic}
We have two type translation functions, one that
does not take a type definition context as input and the one that does. The reason
is that when translating the annotated type definition context, we
need to generate new type definitions that are unknown at the
time of translation, which are mapped to $?$ as a result.  
~\\
\framebox{$\trans{t}=(\tau, D_\Delta)$}
\framebox{$\trans{s}=(\tau, D_\Delta)$}
\begin{mathpar}
\inferrule*{ 
\pol\in\{\un,\bot\}
 }{
  \trans{T\ \pol}  = (T, \cdot)
}
\and
\inferrule*{ 
\pol\notin\{\un,\bot\}
\\ T' = \genname(T,\pol)
 }{
  \trans{T\ \pol}  = (T', T'\mapsto T?)
}
\and
\inferrule*{ 
\pol\in\{\un,\bot\}
 }{
  \trans{\tpint\ \pol}  = (\tpint, \cdot)
}
\and 
\inferrule*{ 
\pol\notin\{\un,\bot\}\\
T = \genname(\tpint,\pol)
 }{
  \trans{\tpint\ \pol}  = (T, T\mapsto \tstruct\ T\ \{\tpint\})
}
\and 
\inferrule*{ 
\pol\in\{\un,\bot\}\\
  \trans{t} = (\tau, D')
 }{
  \trans{\tptr(t)\ \pol}  = (\tptr(\tau), D')
}
\and
\inferrule*{ 
\pol\notin\{\un,\bot\}\\
T = \genname(\tptr(t),\pol) \\
  \trans{t} = (\tau, D')
 }{
  \trans{\tptr(t)\ \pol}  = 
  (\tptr(\tau), (D', T\mapsto \tstruct\ T\ \{\tptr(\tau)\}))
}
\and
\inferrule*{
 \forall i\in[1,2],    \trans{t_i}  = (\tau_i, \tpdefctx_i)\\
}{
\trans{[\bot](t_1 \rightarrow t_2)^\bot} = (\tau_1\rightarrow\tau_2,\tpdefctx_1\cup\tpdefctx_2)
}
\end{mathpar}

\framebox{$\trans{t}_D=(\tau, D_\Delta)$}
\framebox{$\trans{s}_D=(\tau, D_\Delta)$}
\begin{mathpar}
\inferrule*{ 
\pol\in\{\un,\bot\}
 }{
  \trans{T\ \pol}_D  = (T, \cdot)
}
\and
\inferrule*{ 
\pol\notin\{\un,\bot\}
\\ T' = \genname(T,\pol)
\\ T\mapsto \tstruct\ T\ \{\tau_1,\cdots,\tau_n\} \in D
 }{
  \trans{T\ \pol}_D  = (T', T'\mapsto \tstruct\ T'\ \{\tau_1,\cdots,\tau_n\})
}
\and
\inferrule*{ 
\pol\in\{\un,\bot\}
 }{
  \trans{\tpint\ \pol}_D  = (\tpint, \cdot)
}
\and 
\inferrule*{ 
\pol\notin\{\un,\bot\}\\
T = \genname(\tpint,\pol)
 }{
  \trans{\tpint\ \pol}_D  = (T, T\mapsto \tstruct\ T\  \{\tpint\})
}
\and 
\inferrule*{ 
\pol\in\{\un,\bot\}\\
  \trans{t}_D = (\tau, D')
 }{
  \trans{\tptr(t)\ \pol}_D  = (\tptr(\tau), D')
}
\and
\inferrule*{ 
\pol\notin\{\un,\bot\}\\
T = \genname(\tptr(t),\pol) \\
  \trans{t}_D = (\tau, D')
 }{
  \trans{\tptr(t)\ \pol}_D  = 
  (\tptr(\tau), (D', T\mapsto \tstruct\ T\ \{\tptr(\tau)\}))
}
\and
\inferrule*{
 \forall i\in[1,2],    \trans{t_i}  = (\tau_i, \tpdefctx_i)_D\\
}{
\trans{[\bot](t_1 \rightarrow t_2)^\bot}_D = (\tau_1\rightarrow\tau_2,\tpdefctx_1\cup\tpdefctx_2)
}
\end{mathpar}

Translating the annotated type definition context needs two steps. The
first step generates new type definitions, which are not filled as
they themselves are being translated. In the second step, we fill
these undefined type definitions using the translated type definition context.
~\\\framebox{$[D] = (D'; D_\Delta)$}
\begin{mathpar}
\inferrule*{ 
 }{
 [\cdot] = (\cdot; \cdot)
}
\and
\inferrule*{ 
 \forall i\in[1,k],    \trans{t_i}  = (\tau_i, \tpdefctx_i)
\\ [D] = (D'; D_\Delta)
 }{
  [D, T\mapsto \tstruct\ T\ \{t_1,\cdots,t_k\}]  = 
 (D', T\mapsto \tstruct\ T\ \{\tau_1\cdots,\tau_k\}; \cup^k_{i=1} \tpdefctx_i\cup\Delta_\Delta)
}
\end{mathpar}

\framebox{$D\vdash \m{fill}(D_1) = D_2$}
\begin{mathpar}
\inferrule*{ 
 }{
D\vdash  \m{fill}(\cdot) = \cdot
}
\and
\inferrule*{ 
 }{
D\vdash  \m{fill}(D_1, T\mapsto \tstruct\ T\ \{\pi_1,\cdots,\pi_n\}) = 
 D\vdash \m{fill}(D_1), T\mapsto \tstruct\ T\  \{\pi_1,\cdots,\pi_n\}
}
\and
\inferrule*{ 
 D(T) = \tstruct\ T\  \{\pi_1,\cdots,\pi_n\}
 }{
D\vdash  \m{fill}(D_1, T'\mapsto \tstruct\ T?) = 
 D\vdash \m{fill}(D_1), T'\mapsto \tstruct\ T'\  \{\pi_1,\cdots,\pi_n\}
}
\end{mathpar}

\[
\inferrule*{
  [D] = (D'; D_\Delta)
\\ D' \vdash \m{fill}(D_\Delta) = D''
 }{
\trans{D} = D', D''}
\]

\begin{figure*}[t!]
\flushleft
\noindent\framebox{$ \trans{\lexp}  = e$}

\begin{mathpar}
\inferrule*[right=T-Var]{
}{ 
  \trans{x@ s}_D  = (x, \cdot)
}
\and
\inferrule*[right=T-Int]{
 \pol \in\{ \un, \bot\}
}{ 
 \trans{n@\tpint\ \pol}_D =  (n, \cdot)
}
\and
\inferrule*[right=T-Int-Pol]{
 \pol \notin\{ \un, \bot\}
\\\trans{\tpint\ \pol}_D = (T, D')
}{ 
 \trans{n@\tpint\ \pol}_D  = ((T)\{ n \}, D')
}
\and
\inferrule*[right=T-Struct]{
\forall i\in[1,n], \trans{\lv_i}_D = ( v_i, D_i)
\\ \trans{T\ \pol}_D = (T', D')
}{ 
  \trans{(T) \{\lv_1,\cdots,\lv_n\} @(T\ \pol)}_D  =
  ((T')\{v_1,\cdots,v_n\}, \cup_1^n D_i\cup D'
}
\and
\inferrule*[right=T-Field]{
\trans{\lv}_D  = (v, D')
}{ 
   \trans{\lv.i}_D  = (v.i, D')
}
\and
\inferrule*[right=T-New]{
\trans{\lexp}_D = (e, D')
\\\pol\in\{\un,\bot\}
}{ 
 \trans{\enew(\lexp)@\tptr(s)\ \pol}_D  = (\enew(e), D')
}
\and
\inferrule*[right=T-New-Pol]{
\trans{\lexp}_D = (e, D')
\\\pol\notin\{\un,\bot\}
\\\trans{\tptr(s)\ \pol}_D = T
}{ 
 \trans{\enew(\lexp)@\tptr(t)\ \pol}_D  = ((T)\{\enew(e)\}, D')
}
\and
\inferrule*[right=T-Deref]{
\trans{\lv}_D  = (v, D')
}{ 
   \trans{*\lv}_D = (*v, D')
}
\and
\inferrule*[right=T-Assign]{
\trans{\lv_1}_D  = (v_1, D_1)
\\\trans{\lexp_2}_D  = (e_2, D_2)
}{ 
  \trans{\lv_1:= \lexp_2}_D  = (v_1:=e_2, D_1\cup D_2)
}
\and
\inferrule*[right=T-Let]{
   \trans{\lexp_1}_D= (e_1, D_1)
 \\ \trans{\lexp_2}_D  = (e_2, D_2)
\\ \trans{t_1}_D = (\tau_1, D_3)
}{ 
   \trans{\elet\ x:t_1 = \lexp_1\ \ein\  \lexp_2}_D
= (\elet\ x:\tau_1 = e_1\ \ein\  e_2, D_1\cup D_2\cup D_3
}
\and
\inferrule*[right=T-If]{
   \trans{\lv_1}_D= (v_1, D_1)
 \\ \trans{\lexp_2}_D  = (e_2, D_2)
 \\ \trans{\lexp_3}_D  = (e_3, D_3)
}{ 
   \trans{\eif\ \lv_1\ \ethen\  \lexp_2\ \eelse\  \lexp_3}_D
= (\eif\ v_1\ \ethen\  e_2\ \eelse\  e_3, D_1\cup D_2\cup D_3)
}
\and
\inferrule*[right=T-App]{
\tpof(\lv) = [\pc](t_1\rightarrow t_2)^{\pol_f}
\\\trans{\lv}_D = (v, D_1)
\\\trans{\le}_D = (e, D_2)
}{ 
\trans{\lv\ \lexp}_D= (v\ e, D_1\cup D_2)
}
\and
\inferrule*[right=T-App-DE]{
\tpof(\lv_f) = (\dne)[\pc](t_1\rightarrow t_2)^{\pol_f}
\\\trans{\lv_f}_D = (v_f, D_f)
\\ \tpof(\lv_a) = b\ \pol
\\ \pol = \lab_1::\lab_2::\pol'
\\    \trans{\erelab(\lab_1::\top\Leftarrow \pol) \lv_a}_D = (e', D_1)
\\  \trans{\erelab(\lab_2::\pol'\Leftarrow \lab_2::\bot) (z@b\
  \lab_2::\bot)}_D= (e'' , D_2)
\\ \trans{t_1}_D = (\tau_1, D_3)
\\ \trans{t_2}_D = (\tau_2, D_4)
}{ 
\trans{\lv_f\ \lv_a}_D = 
 (\elet\ y:\tau_1 = e'\ \ein\ 
 \elet\ z: \tau_2 = v_f\ y\ \ein\ e'', D_f\cup D_1\cup D_2\cup D_3\cup D_4)
}
\end{mathpar}
\caption{Translation Rules}
\label{fig:translation}
\end{figure*}

\begin{figure*}[t!]
\begin{mathpar}
\inferrule*[right=T-ReLab-N1]{
\trans{\lv}_D  = (v, D_1)
\\ \tpof(\lv) = b\ \_
\\ b~\mbox{is not a struct type}
\\ \pol'\notin\{\bot,\un\} \\\pol \notin\{\bot,\un\}
\\ \trans{b\ \pol'}_D = (T, D_2)
}{ 
   \trans{\erelab(\pol'\Leftarrow\pol)\lv}_D =(\elet\ x = v.1\ \ein\
   (T)\{x\}, D_1\cup D_2)
}
\and
\inferrule*[right=T-ReLab-N2]{
\trans{\lv}_D  = (v, D_1)
\\ \tpof(\lv) = b\ \_
\\ b~\mbox{is not a struct type}
\\ \pol'\notin\{\bot,\un\} 
\\\pol \in\{\bot,\un\}
\\ \trans{b\ \pol'}_D = (T, D_2)
}{ 
   \trans{\erelab(\pol'\Leftarrow\pol)\lv}_D = ((T)\{v\}, D_1\cup D_2)
}
\end{mathpar}

\begin{mathpar}

\inferrule*[right=T-ReLab-N3]{
\trans{\lv}_D= (v, D_1)
\\ \tpof(\lv) = b\ \pol
\\ b~\mbox{is not a struct type}
\\ \pol\notin\{\bot,\un\} \\\pol'\in\{\bot,\un\}
}{ 
   \trans{\erelab(\pol'\Leftarrow\pol)\lv}_D = (v.1 D_1)
}
\and
\inferrule*[right=T-ReLab-same]{
\trans{\lv}_D  = (v, D_1)
\\ \labof(\lv) = b\ \pol
\\ \pol,\pol' \in\{\un,\bot\} 
}{ 
   \trans{\erelab(\pol'\Leftarrow\pol)\lv}_D = (v, D_1)
}
\and
\inferrule*[right=T-ReLab-Struct-Re]{
 \pol \notin\{\bot, \un\} ~\mbox{or}~\pol'\notin\{\bot,\un\}
\\ \tpof(\lv) =  T\ \pol 
\\ \trans{T\ \pol'}_D = (T', D_1)
\\    \trans{\lv}_D  = (v, D_2)
}{ 
\trans{\erelab(\pol'\Leftarrow\pol)\lv}_D =
(\elet\ x_1 = v.1\ \ein\ \cdots \elet\ x_n = v.n\ \ein\
(T')\{x_1,\cdots,x_n\}, D_1\cup D_2)
}
\end{mathpar}
\caption{Translation Rules}
\label{fig:translation-2}
\end{figure*}


\subsection{Correctness of the Translation}
\label{app:translation:correctness}
We present definitions, lemmas, and proofs for the correctness of our
translation algorithm. 

\begin{mathpar}
\inferrule*{ 
 }{
  \m{noBot}(\tunit)
}
\and
\inferrule*{ 
\pol\neq \bot
 }{
  \m{noBot}(\tpint\ \pol)
}
\and
\inferrule*{ 
  \m{noBot}(s)
\\\pol\neq \bot
 }{
  \m{noBot}(\tptr(s)\ \pol)
}
\and
\inferrule*{ 
 \pol\neq \bot
 }{
  \m{noBot}(T\at\pol)
}
\\
\inferrule*{
 \forall i\in[1,2],    \m{noBot}(t_i)
}{
\m{noBot}([\bot](t_1\rightarrow t_2)^\bot)
}
\and
\inferrule*{
 \forall i\in[1,2],    \m{noBot}(t_i) \\
}{
\m{noBot}((\dne)[\bot](t_1\rightarrow t_2)^\bot)
}
\end{mathpar}
\begin{lem}[Translation Pre-image Unique]\label{lem:trans-unique}
If  $\m{noBot}(s)$ and $\m{noBot}(s')$ and
$\m{fst}(\trans{s})=\m{fst}(\trans{s'})$ 
or $\m{fst}(\trans{s}_D)=\m{fst}(\trans{s'}_D)$ then $s = s'$.
\end{lem}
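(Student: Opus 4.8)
The lemma asserts that type translation is injective on the fragment of \langname types that contain no $\bot$ policy. The plan is to argue by structural induction on $s$, carried out in parallel on $s'$, with a case analysis on the common first component $\tau$ of $\trans{s}$ and $\trans{s'}$ (equivalently of $\trans{s}_D$ and $\trans{s'}_D$; the choice of $D$ affects only the discarded second component, so the two forms of the hypothesis can be treated uniformly).

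First I would pin down two properties of $\genname$ that its intended definition --- concatenating the string representations of $t$ and $\pol$, drawn from a reserved namespace --- is meant to supply: that $\genname$ is injective, so that $\genname(t_1,\pol_1)=\genname(t_2,\pol_2)$ forces $t_1=t_2$ and $\pol_1=\pol_2$; and that its range is disjoint from the pre-existing struct names and from every built-in type form, so that a generated name is never confused with a user name $T$, with $\tpint$, $\tunit$, a pointer, or an arrow. With these in hand, the outermost shape of $\tau$ determines both the outermost constructor of $s$ and its outermost policy, the one exception being that a bare built-in type or a non-generated name does not distinguish the $\un$-labelled from the $\bot$-labelled pre-image --- and this is precisely the ambiguity that $\m{noBot}(s)$ and $\m{noBot}(s')$ remove, forcing the $\un$ reading.

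The inductive cases are then routine. If $\tau$ is $\tpint$, $\tunit$, or a pre-existing name $T$, only the unlabelled clause could have fired, so $s$ and $s'$ are both $\tpint\ \un$, $\tunit$, or $T\ \un$ respectively. If $\tau$ is an arrow $\tau_1\rightarrow\tau_2$, then $s$ and $s'$ are both ordinary function types of the restricted form $[\bot](t_1\rightarrow t_2)^\bot$ --- the only shape the translation accepts, and the only function shape on which $\m{noBot}$ is defined --- with $\m{fst}(\trans{t_i})=\tau_i=\m{fst}(\trans{t_i'})$; since $\m{noBot}$ of a function passes to its argument and result, the induction hypothesis gives $t_i=t_i'$. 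If $\tau$ is a generated name $\genname(t_0,\pol_0)$, injectivity recovers the pair $(t_0,\pol_0)$ and, from the disjointness of the three generating clauses, which of $\tpint\ \pol_0$, $T\ \pol_0$, or $\tptr(t_0)\ \pol_0$ was translated; the first two determine $s$ outright, and the pointer case leaves a strictly smaller \langname type $t_0$ with $\m{noBot}(t_0)$ (inherited from $\m{noBot}(\tptr(t_0)\ \pol_0)$) to which the induction hypothesis applies once $\m{fst}(\trans{t_0})=\m{fst}(\trans{t_0'})$ has been read off.

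I expect the pointer case to be the only real obstacle. It is the one place where the recursion genuinely descends into a nested security type, so it is where the strength of the induction hypothesis and the preservation of $\m{noBot}$ under the pointer-content rule actually matter; and it is the case that is delicate to state correctly, since it relies on a \emph{labelled} pointer being translated to a freshly generated wrapper-struct name rather than to a bare pointer constructor, so that it cannot be conflated with the translation of the corresponding \emph{unlabelled} pointer. A minor bookkeeping point is that $(\dne)$ function types lie in the domain of $\m{noBot}$ but not in that of the type translation, so $\m{fst}(\trans{s})$ is undefined for them and they fall outside the hypothesis.
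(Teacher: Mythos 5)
Your overall strategy is the paper's own: the paper proves this lemma by induction over the structure of $s$ (it gives only that one-line sketch), and your case analysis, together with making explicit the injectivity and freshness-of-namespace assumptions on $\genname$ and the observation that $\m{noBot}$ is exactly what rules out the $\un$/$\bot$ ambiguity for bare translated types, is a reasonable fleshing-out of that induction. The base cases and the function case are fine as you describe them, and your remark that $(\dne)$ function types lie outside the domain of $\trans{\cdot}$ is correct.

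The gap is in the pointer case, and it is exactly where you yourself flag delicacy but then assert the wrong reading. You claim the argument "relies on a labelled pointer being translated to a freshly generated wrapper-struct name rather than to a bare pointer constructor." That is not what the paper's rules say: for $\pol\notin\{\un,\bot\}$ the rule for $\trans{\tptr(t)\ \pol}_D$ (and likewise $\trans{\tptr(t)\ \pol}$) returns $\tptr(\tau)$ as the \emph{first} component, with the generated name $T=\genname(\tptr(t),\pol)$ appearing only in the emitted definition context. Under the rules as printed, $\m{fst}(\trans{\tptr(t)\ \un})$ and $\m{fst}(\trans{\tptr(t)\ \lab{::}\top})$ are both $\tptr(\tau)$, and both types satisfy $\m{noBot}$ (which only excludes $\bot$, not $\un$), so your pointer case cannot recover the pointer's own policy and the induction step fails there --- indeed this is a counterexample to the lemma as literally stated, unless one reads the rule's first component as the generated name $T$, or restricts pointers themselves to carry only $\un$ (which is what the implementation does, since it only labels pointer contents). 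Your proof needs to either adopt and justify that corrected reading of the rule explicitly, or add the restriction; as written, the key fact you invoke in the one genuinely recursive case is not supplied by the definitions you are working from.
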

\begin{proofsketch}
By induction over the structure of $s$.
\end{proofsketch}

\begin{lem}\label{lem:nobot}
If $\beta$ does not include $\bot$, then $\m{noBot}(\inscast{\beta})$.
\end{lem}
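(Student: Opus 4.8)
The plan is to prove Lemma~\ref{lem:nobot} by a routine structural induction on the annotation $\beta$, reading off in each case the unique rule of $\inscast{\cdot}$ that applies and checking that its output satisfies the matching $\m{noBot}$ rule. I read the hypothesis "$\beta$ does not include $\bot$" as: every policy $\pol$ occurring in an annotation $T\at\pol$, $\tpint\at\pol$, or $\tptr(\cdot)\at\pol$ somewhere inside $\beta$ is syntactically distinct from the policy $\bot$. This hypothesis is clearly inherited by every sub-annotation, since the grammar of annotations only introduces $\bot$ in those three policy positions and stripping off the outermost annotation constructor does not alter the sub-annotations. The single observation that makes every case immediate is that the only label $\inscast{\cdot}$ ever attaches to a value-carrying type is either the programmer-supplied $\pol$ (which is $\neq\bot$ by hypothesis) or the distinguished label $\un=(\bot_S,\bot_I)$, which is syntactically distinct from the policy $\bot$; in either situation the side condition "$\pol\neq\bot$" appearing in the $\m{noBot}$ rules for $\tpint\ \pol$, $T\ \pol$, and $\tptr(s)\ \pol$ is discharged at once.

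Concretely, the base cases are $\beta\in\{\tunit,\tpint,T\}$ (and, degenerately, any fully unannotated type, for which every inserted label is $\un$) and $\beta\in\{\tpint\at\pol,T\at\pol\}$: in the first group $\inscast{\cdot}$ produces $\tunit$ or $b\ \un$, and in the second $b\ \pol$ with $\pol\neq\bot$, so the corresponding $\m{noBot}$ rule applies directly. The inductive cases are $\beta=\tptr(\beta')$, $\beta=\tptr(\beta')\at\pol$, $\beta=a_1\rightarrow a_2$, and $\beta=(\dne)a_1\rightarrow a_2$. For the pointer cases, $\inscast{\beta}=\tptr(s)\ \un$ (resp.\ $\tptr(s)\ \pol$) where $\inscast{\beta'}=s$; the induction hypothesis on $\beta'$ gives $\m{noBot}(s)$, and since the outer label is $\un$ or $\pol$, hence $\neq\bot$, the rule for $\tptr(s)\ \pol$ applies. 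For the two arrow cases, $\inscast{\beta}$ is the function type the arrow rules produce, whose pc and function label are both $\bot$ with $\inscast{a_i}=t_i$; here the $\m{noBot}$ rule for function types places no constraint on those $\bot$ labels and only requires $\m{noBot}(t_1)$ and $\m{noBot}(t_2)$, both supplied by the induction hypothesis on $a_1$ and $a_2$.

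There is no real obstacle here — the argument is a direct case analysis — but the one point that deserves to be stated carefully is the bookkeeping separating three things that can all look like a "bottom": the policy symbol $\bot$ that the hypothesis forbids in annotations, the distinguished unlabeled marker $\un$ that $\inscast{\cdot}$ puts on unannotated positions, and the $\bot$ that $\inscast{\cdot}$ puts in the pc and function-label slots. Once this distinction is made explicit, every case closes by an appeal to the induction hypothesis and the relevant $\m{noBot}$ rule; the lemma then combines with Lemma~\ref{lem:trans-unique} to guarantee that the composite translation is injective on types arising from well-formed (i.e.\ $\bot$-free) annotations.
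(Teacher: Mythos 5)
Your proof is correct and follows essentially the same route as the paper, which proves the lemma by structural induction on $\beta$ with the observation that the mapping only inserts $\bot$ in the pc/function-label positions, which the $\m{noBot}$ rules for function types do not constrain. Your explicit separation of the policy $\bot$, the marker $\un$, and the function-position $\bot$ is a welcome clarification of the paper's two-line sketch, but not a different argument.
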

\begin{proofsketch}
By induction over the structure of $\beta$. The translation rules does
not insert $\bot$ except for functions.
\end{proofsketch}

\begin{lem}[Value Translation Soundness]
\label{lem:val-translation-soundness}
If  $\ee::D_a;F_a;\Gamma_a\vdash\inscast{v}= \lv$,
$\tpof(\lv)=s$,
$\inscast{D_a} = D_l$, $\inscast{F_a} = F_l$, 
$\inscast{\Gamma_a} = \Gamma_l$,
$\trans{D_l} = D$,
$\trans{\Gamma_l}_D = (\Gamma, D_1)$, $\trans{F_l}_D = (F, D_2)$,
 $\trans{\lv}_D = (v', D_3)$,
and $D\cup D_1\cup D_2\cup D_3;F;\cdot;\Gamma\vdash v': \tau$
implies $D_l;F_l;\cdot;\Gamma_l \vdash \tmof(\lv): s$ and $\trans{s} = (\tau,\_)$.
\end{lem}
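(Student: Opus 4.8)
The plan is to prove the lemma by induction on the derivation $\ee$ of the value mapping $D_a;F_a;\Gamma_a\vdash\inscast{v}\Rightarrow\lv$, with a case analysis on its last rule (the rules are \rulename{V-L-Unit}, \rulename{V-L-Int}, \rulename{V-L-Var}, \rulename{V-L-Fun}, \rulename{V-L-Struct}; note that labeled values never contain relabelings, so no other shape arises). In each case I will read off the shape of $\lv$ and hence $s=\tpof(\lv)$, apply the matching translation rule(s) of Figure~\ref{fig:translation} to obtain $v'$ with $\trans{\lv}_D=(v',D_3)$, invert the assumed \minic typing $D\cup D_1\cup D_2\cup D_3;F;\cdot;\Gamma\vdash v':\tau$ (\minic typing is syntax-directed, so $\tau$ is forced), and then build the \langname typing $D_l;F_l;\cdot;\Gamma_l\vdash\tmof(\lv):s$ directly and check $\trans{s}=(\tau,\_)$. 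A standing observation that makes the last step go through is that $\trans{\cdot}$ and $\trans{\cdot}_D$ always agree on their first component (they differ only in whether generated struct definitions are filled in), so it suffices to match $\tau$ against $\m{fst}(\trans{s}_D)$.

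The leaf cases are routine. For \rulename{V-L-Unit} and \rulename{V-L-Int}, $\lv$ translates to $()$ resp.\ $n$, the \langname values are typed at $\tunit$ resp.\ $\tpint\ \un$, and $\trans{\tunit}$ resp.\ $\trans{\tpint\ \un}$ return $\tunit$ resp.\ $\tpint$, which is exactly what the \minic typing forces. For \rulename{V-L-Var} (and symmetrically \rulename{V-L-Fun}), $\lv=x@\inscast{\beta}$ with $\Gamma_a(x)=\beta$; since $\inscast{\Gamma_a}=\Gamma_l$ we get $\Gamma_l(x)=\inscast{\beta}=s$, so $x$ is typed at $s$ by \rulename{P-T-V-Var}; and $\trans{x@s}_D=(x,\cdot)$ forces $\tau=\Gamma(x)$, which by the definition of $\trans{\Gamma_l}_D$ equals $\m{fst}(\trans{s}_D)=\m{fst}(\trans{s})$.

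The interesting case is \rulename{V-L-Struct}, which I expect to be the main obstacle. Here $\lv=(T)\{\lv_1,\dots,\lv_n\}@T\ \un$ with sub-derivations $D_a;F_a;\Gamma_a\vdash\inscast{v_i}\Rightarrow\lv_i$, and \rulename{T-Struct} gives $v'=(T)\{v_1',\dots,v_n'\}$ with $\trans{\lv_i}_D=(v_i',D_i)$ and $\trans{T\ \un}_D=(T,\cdot)$. Inverting the \minic typing with \rulename{N-T-Struct} yields $\tau=T$, $D(T)=\tstruct\ T\ \{\tau_1,\dots,\tau_n\}$, and $D\cup\dots;F;\cdot;\Gamma\vdash v_i':\tau_i$; applying the induction hypothesis to each sub-derivation gives $D_l;F_l;\cdot;\Gamma_l\vdash\tmof(\lv_i):s_i$ with $s_i=\tpof(\lv_i)$ and $\trans{s_i}=(\tau_i,\_)$. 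On the \langname side, $D_l=\inscast{D_a}$, so if $D_a(T)=\tstruct\ T\ \{a_1,\dots,a_n\}$ then $D_l(T)=\tstruct\ T\ \{\inscast{a_1},\dots,\inscast{a_n}\}$; to apply \rulename{P-T-V-Struct} I must reconcile these field types with the $s_i$, i.e.\ show $s_i=\inscast{a_i}$. From the definition of $\trans{D_l}=D$ one reads off $\tau_i=\m{fst}(\trans{\inscast{a_i}}_D)$, while the induction hypothesis already gave $\tau_i=\m{fst}(\trans{s_i}_D)$.

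To finish the struct case I invoke the uniqueness lemmas. Both $\inscast{a_i}$ and $s_i$ satisfy $\m{noBot}$: the former by Lemma~\ref{lem:nobot} (annotations occurring in $D_a$ contain no $\bot$), and the latter because the value-mapping rules never introduce a $\bot$ policy --- an easy auxiliary induction on $\ee$, since each rule produces $\un$, a $\beta$-translation, or an $F$-translation, none of which is $\bot$. Hence Lemma~\ref{lem:trans-unique} gives $s_i=\inscast{a_i}$, so \rulename{P-T-V-Struct} types $\tmof(\lv)=(T)\{\tmof(\lv_1),\dots,\tmof(\lv_n)\}$ at $T\ \un$, and $\trans{T\ \un}=(T,\_)=(\tau,\_)$ closes the case. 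Apart from this matching-of-struct-fields step, which is exactly where \texttt{noBot}-preservation and pre-image uniqueness are essential, the proof is a mechanical unfolding of the translation and typing rules.
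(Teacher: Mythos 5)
Your proposal is correct and follows essentially the same route as the paper: induction on the mapping derivation, routine leaf cases, and in the \rulename{V-L-Struct} case an appeal to the induction hypothesis together with Lemma~\ref{lem:trans-unique} to identify the field types $s_i$ with the declared field types of $T$ in $D_l$. If anything, you are more explicit than the paper about discharging the $\m{noBot}$ side conditions (via Lemma~\ref{lem:nobot} and the observation that the value mapping never introduces $\bot$) and about why the two translation functions agree on first components, both of which the paper leaves implicit.
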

\begin{proof}
By induction over the structure of $\ee$.
\begin{description}
\item[Case:] $\ee$ ends in \rulename{V-L-Int} rule. 
\begin{tabbing}
By assumption: 
\\~~\=(1)~~~~\= $D_a;F_a;\Gamma_a\vdash \inscast{n} \Rightarrow\ n
@\tpint\ \un$
\\ By examining the translation rules, only \rulename{T-Int} applies
\\\> (2) \> $\trans{n @\tpint\ \un}_D = (n, \cdot)$,
\\ By typing rules 
\\\>(3)\> $D\cup D_1\cup D_2;F;\cdot;\Gamma\vdash n: \tpint$
\\ By typing rule \rulename{V-Int} 
\\\>(4)\> $D_l;F_l;\cdot;\Gamma_l \vdash n: \tpint\ \un$ 
\\ By type translation
\\\>(5)\> $\trans{\tpint\ \un}_D = (\tpint,\_)$
\end{tabbing}
\item[Case:] $\ee$ ends in \rulename{V-L-Var} rule. 
\begin{tabbing}
By assumption: 
\\~~\=(1)~~~~\= $D_a;F_a;\Gamma_a\vdash \inscast{x} \Rightarrow\ x
@ \inscast{\beta} $ and $\Gamma_a(x)=\beta$
\\ By examining the translation rules, only \rulename{T-Var} applies
\\\> (2) \> $\trans{x @\inscast{\beta} }_D = (x, \cdot)$,
\\ By typing rules 
\\\>(3)\> $D\cup D_1\cup D_2;F;\cdot;\Gamma\vdash x: \Gamma(x)$
\\ By typing rule \rulename{V-Var} 
\\\>(4)\> $D_l;F_l;\cdot;\Gamma_l \vdash x: \Gamma_l(x)$ 
\\ By assumption that $\inscast{\Gamma_a} = \Gamma_l$ and
   $\trans{\Gamma_l}_D = (\Gamma, D_1)$
\\\>(5)\> $\trans{\inscast{\beta}}_D=\trans{\inscast{\Gamma_a(x)}}_D
=\trans{\Gamma_l(x)}_D = (\Gamma(x),\_)$
\end{tabbing}
\item[Case:] $\ee$ ends in \rulename{V-L-Fun} rule. 
~\\This case can be proved similarly as the previous case.
\item[Case:] $\ee$ ends in \rulename{V-L-Struct} rule. 
\begin{tabbing}
By assumption: 
\\~~\=(1)~~~~\= $D_a;F_a;\Gamma_a\vdash \inscast{(T)\ \{v_1,\cdots,v_n\}}
 \Rightarrow\ (T)\{\lv_1,\cdots, \lv_n \}@T\ \un$
\\\>(2)\> and $\forall i \in[1,n], \ee_i::D_a;F_a;\Gamma_a  \vdash \inscast{v_i} \Rightarrow \lv_i$
\\ By examining the translation rules, only \rulename{T-Struct} applies
\\\> (3) \> $ \trans{(T) \{\lv_1,\cdots,\lv_n\} @(T\ \un)}_D  =
((T')\{v'_1,\cdots,v'_n\}, D_3)$
\\\>(4) \> and $\trans{\lv_i}_D =  (v'_i, D_{3i})$ and $\trans{T\ \un}
= (T', D_{3ii})$, and $D_3=D_{3i}\cup D_{3ii}$
\\By type translation rules
\\\>(5)\> $T'=T$ and $D_{3ii}=\cdot$
\\ By typing rules 
\\\>(6)\> $D\cup D_1\cup D_2\cup D_3;F;\cdot;\Gamma\vdash
(T)\{v'_1,\cdots,v'_n\}: T$
\\\>(7)\> and $(D\cup D_1\cup D_2\cup D_3)(T) =\tstruct\
T\{\tau_1,\cdots,\tau_n\} $,
\\\>(8)\> and $D\cup D_1\cup D_2\cup D_3;F;\cdot;\Gamma\vdash v'_i: \tau_i$
\\By $T\in \m{dom}(D_a)$ 
\\\>(9) \> $T\in \m{dom}(D)$
\\By I.H. on $\ee_i$
\\\>(10)\> $D_l;F_l;\cdot;\Gamma_l \vdash \tmof(\lv_i): s_i$ and
$\trans{s_i}_D = (\tau_i,\_)$
\\By well-formedness constraints,
\\\>(11)\> $D_l(T) = \tstruct\ T \{s'_1,\cdots,s'_n\}$
\\By $\trans{D_l} = D$ and $D(T) = \tstruct\ T\{\tau_1,\cdots,\tau_n\}
$
\\\>(12)\> $\trans{s'_i} = \tau_i$
\\By Lemma~\ref{lem:trans-unique}, (10) and (12)
\\\>(13)\> $s_i=s'_i$
\\By \rulename{V-Struct} and (10) and (13)
\\\>(14)\> $D_l;F_l;\cdot;\Gamma_l \vdash \tmof((T)
\{\lv_1,\cdots,\lv_n\} @(T\ \un)): T\ \un$
\\By type translation rules
\\\>(15)\> $\trans{T\ \un}_D = (T,\_)$
\end{tabbing} 

\end{description}
\end{proof}

\begin{lem}[Relabel translation is sound]\label{lem:relab-sound}
If $\tpof(\lv) = b\ \pol$,
$e=\m{fst}(\trans{\erelab(\pol'\Leftarrow\pol)\lv}_D)$,
$v=\m{fst}(\trans{\lv}_D)$, and 
$D',D;F;\Gamma\vdash v: \m{fst}(\trans{b\ \pol}_D)$
then $D;F;\Gamma\vdash e: \m{fst}(\trans{b\ \pol'}_D)$.
\end{lem}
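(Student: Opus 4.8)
The plan is to prove the lemma by case analysis on the last rule used to derive $\trans{\erelab(\pol'\Leftarrow\pol)\lv}_D = (e,\_)$. The applicable rules are \rulename{T-ReLab-N1}, \rulename{T-ReLab-N2}, \rulename{T-ReLab-N3}, \rulename{T-ReLab-Same}, and \rulename{T-ReLab-Struct-Re} (Figure~\ref{fig:translation-2}); which one fires is determined entirely by whether $\pol$ and $\pol'$ lie in $\{\un,\bot\}$ and by whether $b$ is a struct type, and since those conditions are precisely the mutually exclusive, exhaustive side conditions of the five rules, the case split is complete.

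Before the split I would record two routine facts about type translation, both by inspection of the rules in Figure~\ref{fig:trans-type}. First, if $\pol\in\{\un,\bot\}$ then $\m{fst}(\trans{b\ \pol}_D)$ is the policy-independent ``core'' translation of $b$ --- $\tpint$ for $b=\tpint$, $T$ for $b=T$ --- so it does not depend on $\pol$. Second, if $\pol\notin\{\un,\bot\}$ and $b=\tpint$ then $\trans{b\ \pol}_D = (T_\pol,\, T_\pol\mapsto\tstruct\ T_\pol\ \{\tpint\})$ with $T_\pol=\genname(b,\pol)$, whose single field has exactly the core type of $b$; and if $\pol\notin\{\un,\bot\}$ and $b=T$ is a struct then $\trans{T\ \pol}_D = (T',\, T'\mapsto\tstruct\ T'\ \{\tau_1,\dots,\tau_n\})$ with the same field types $\tau_i$ as $D(T)$. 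The consequence I actually need is that in every case the field type(s) of the ``source'' record $\trans{b\ \pol}_D$ coincide with the field type(s) of the ``target'' record $\trans{b\ \pol'}_D$; this is what makes the field-by-field repackaging type-check.

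With those facts in hand the cases are short. For \rulename{T-ReLab-Same} ($\pol,\pol'\in\{\un,\bot\}$), $e=v$ and the two core types coincide, so the hypothesis is the conclusion. For \rulename{T-ReLab-N2} ($\pol\in\{\un,\bot\}$, $\pol'$ meaningful, $b$ not a struct), $e=(T)\{v\}$ with $T=\m{fst}(\trans{b\ \pol'}_D)$ a one-field struct whose field type is the core type of $b$, which is exactly the type of $v$; apply \rulename{N-T-Struct}. For \rulename{T-ReLab-N3} ($\pol$ meaningful, $\pol'\in\{\un,\bot\}$, $b$ not a struct), $e=v.1$ where $v$ has the one-field struct type and $v.1$ has its field type, which equals $\m{fst}(\trans{b\ \pol'}_D)$; apply \rulename{N-T-Field}. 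For \rulename{T-ReLab-N1} ($\pol,\pol'$ both meaningful, $b$ not a struct), combine: $v.1$ is typed by \rulename{N-T-Field}, the repackaging $(T)\{x\}$ by \rulename{N-T-Struct}, and the surrounding binder by \rulename{N-T-Let}. For \rulename{T-ReLab-Struct-Re} ($b=T$, at least one of $\pol,\pol'$ meaningful), each projection $v.i$ is typed at $\tau_i$ by \rulename{N-T-Field} regardless of whether $v$'s type is $T$ or $\genname(T,\pol)$, then $(T')\{x_1,\dots,x_n\}$ is typed at $T'=\m{fst}(\trans{T\ \pol'}_D)$ by \rulename{N-T-Struct}, all wrapped in $n$ nested \rulename{N-T-Let} steps (with trivial context weakening, since $v$ does not mention the $x_i$).

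The main obstacle is bookkeeping, not depth. Every appeal to \rulename{N-T-Field} or \rulename{N-T-Struct} above needs the relevant generated record definition ($\genname(b,\pol)$, $\genname(b,\pol')$, or $T'$) to be in scope, whereas the lemma as stated types $v$ under $D',D$ but types $e$ under $D$. Making this precise means tracking that the type-definition set emitted together with $e$ --- namely $D_1\cup D_2$ in the relevant rule --- contains exactly those definitions, and then invoking admissibility of type-context weakening for \minic (extending $D$ to $D',D,D_1,D_2$ preserves typing, which is routine by induction, in the spirit of Lemma~\ref{lem:type-store-weakening}), so the conclusion holds in whatever ambient context the enclosing Translation Soundness (Typing) proof uses; I would restate the lemma with that union made explicit. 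A secondary point to handle carefully is the pointer subcase of the ``$b$ not a struct type'' rules: since Figure~\ref{fig:trans-type} returns $\tptr(\tau)$ rather than the generated record name for $\tptr(t)\ \pol$, the field-projection step of \rulename{T-ReLab-N1}/\rulename{T-ReLab-N3} does not literally apply; I would note that in the fragment our front-end produces, meaningful policies are never attached to a pointer type that is later relabeled (policies live on the pointer's content), so this case is vacuous, and flag the assumption explicitly.
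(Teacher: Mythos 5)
Your proposal is correct and follows essentially the same route as the paper, whose proof is only the one-line sketch ``by examining the translation rules for $\trans{\erelab(\pol'\Leftarrow\pol)\lv}_D$'': you carry out exactly that case analysis over \rulename{T-ReLab-N1}--\rulename{T-ReLab-Struct-Re}, resting on the same key observation that the generated record types for $b\ \pol$ and $b\ \pol'$ share their field types. Your extra care about the type-definition contexts ($D$ versus $D',D$ plus the emitted $D_1\cup D_2$) and the labeled-pointer edge case goes beyond what the paper records, but it fills in bookkeeping the sketch glosses over rather than departing from its approach.
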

\begin{proofsketch}
By examining  the translation rules for $\trans{\erelab(\pol'\Leftarrow\pol)\lv}_D$.
\end{proofsketch}

\begin{lem}\label{lem:relab-has-tp}
Given $e=\m{fst}(\trans{\erelab(\pol'\Leftarrow\pol)\lv}_D)$,
then
$D,D';F;\Gamma\vdash e: \tau$
implies exists $\tau'$ s.t.  $D,D';F;\Gamma\vdash \m{fst}(\trans{\lv}_D) : \tau'$.
\end{lem}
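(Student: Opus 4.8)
The plan is to prove this by case analysis on the last rule used to derive $\trans{\erelab(\pol'\Leftarrow\pol)\lv}_D$, that is, on which of the relabeling–translation rules (\rulename{T-ReLab-N1}, \rulename{T-ReLab-N2}, \rulename{T-ReLab-N3}, \rulename{T-ReLab-same}, \rulename{T-ReLab-Struct-Re}) produced $e$. In every case $e$ is built syntactically out of $v \equiv \m{fst}(\trans{\lv}_D)$, and $v$ occurs in $e$ in a position that the \minic typing rules require to be typeable. Since \minic's type system (Appendix~\ref{app:minic}) is syntax-directed and has no subsumption rule, inversion of the derivation of $D,D';F;\Gamma\vdash e:\tau$ down to that occurrence of $v$ immediately yields a type $\tau'$ for $v$. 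I would also note at the outset that the side conditions of these rules (whether $\pol,\pol'\in\{\bot,\un\}$, and whether the base type is a struct) partition the possibilities disjointly and exhaustively, so exactly one rule applies for given $\lv$, $\pol$, $\pol'$.

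Concretely: for \rulename{T-ReLab-same} we have $e = v$ verbatim, so $\tau' = \tau$ works. For \rulename{T-ReLab-N3}, $e = v.1$; inverting \rulename{N-T-Field} forces $D,D';F;\Gamma\vdash v : T''$ for some record type $T''$, giving $\tau'$. For \rulename{T-ReLab-N1} and \rulename{T-ReLab-Struct-Re}, $e$ is a nest of $\elet$-bindings whose first binding is $x_1 = v.1$ (and, in the struct case, whose further bindings are $x_j = v.j$); inverting \rulename{N-T-Let} forces the bound expression $v.1$ to be typeable, and then inverting \rulename{N-T-Field} again forces $v$ to have a record type. For \rulename{T-ReLab-N2}, $e = (T)\{v\}$; inverting \rulename{N-T-Struct} forces each field argument—in particular $v$—to be typeable, again producing $\tau'$. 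In all cases the type definitions needed to typecheck $v$ are already available in $D\cup D'$ since $\trans{\lv}_D$ was computed against $D$.

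I do not expect a genuine obstacle: this lemma is a routine structural inversion, used downstream only to discharge the hypothesis ``exists $\tau'$ with $D,D';F;\Gamma\vdash \m{fst}(\trans{\lv}_D):\tau'$'' that is then fed into Lemma~\ref{lem:relab-sound}. The only mild care is bookkeeping—checking that $\m{fst}(\trans{\lv}_D)$ is genuinely used in a typed (rather than discarded) position in each constructed $e$, and that the case split over the relabeling rules is exhaustive—both of which are immediate from Figures~\ref{fig:translation} and~\ref{fig:translation-2}.
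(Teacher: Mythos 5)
Your proposal is correct and matches the paper's own argument: the paper proves this lemma only as a one-line sketch (``by examining the translation rules for $\trans{\erelab(\pol'\Leftarrow\pol)\lv}_D$''), and your case-by-case inversion over \rulename{T-ReLab-N1}--\rulename{T-ReLab-Struct-Re} is exactly that examination, carried out in detail. Nothing further is needed.
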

\begin{proofsketch}
By examining the translation rules for $\trans{\erelab(\pol'\Leftarrow\pol)\lv}_D$.
\end{proofsketch}

\begin{thm}[Expression Translation Soundness]
If  $\ee::D_a;F_a;\Gamma_a; s\vdash\inscast{e}= \lexp$,
$\inscast{D_a} = D_l$, $\inscast{F_a} = F_l$, 
$\inscast{\Gamma_a} = \Gamma_l$,
$\trans{D_l} = D$,
$\trans{\Gamma_l}_D = (\Gamma, D_1)$, $\trans{F_l}_D = (F, D_2)$,
 $\trans{\lexp}_D = (e', D_3)$,
and $D\cup D_1\cup D_2\cup D_3;F;\cdot;\Gamma\vdash e': \tau$
implies $D_l;F_l;\cdot;\Gamma_l \vdash \tmof(\lexp): s$ and $\trans{s} = (\tau,\_)$
\end{thm}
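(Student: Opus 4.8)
\noindent\emph{Approach.} The plan is to prove this by induction on the derivation $\ee :: D_a;F_a;\Gamma_a; s \vdash \inscast{e} \Rightarrow \lexp$, splitting on the last mapping rule used. Throughout, the polC expression typing will be carried out under $\pc = \bot$: the mapping assigns every function the $\pc$-label $\bot$ and the judgment $D_a;F_a;\Gamma_a; s\vdash\inscast{e}\Rightarrow\lexp$ itself carries no $\pc$, so taking $\pc=\bot$ discharges every side condition of the form $\pc\rhd\pol$ or $\pc\sqsubseteq\pol$ in the polC rules immediately from $\bot\sqsubseteq\rho$. For sub-terms that are \emph{values} (mapped by $\inscast{v}\Rightarrow\lv$) I will invoke the Value Translation Soundness lemma; for sub-terms that are \emph{expressions} I will invoke the induction hypothesis. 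In each case the shape is the same: invert the miniC typing $D\cup D_1\cup D_2\cup D_3;F;\cdot;\Gamma\vdash e': \tau$ — miniC typing is syntax-directed and has no subsumption rule, so inversion cleanly produces miniC typings of the translated components of $e'$ (each, by weakening of the type-definition context, under a context containing the relevant union of the $D_i$'s) — then reassemble the polC typing of $\tmof(\lexp)$ with the polC rule that mirrors the mapping rule, and finally check $\trans{s}_D=(\tau,\_)$ from the type-translation rules.

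\noindent\emph{The structural cases.} For \rulename{L-Int}, \rulename{L-Var}, \rulename{L-Struct}, \rulename{L-New}, \rulename{L-Let}, \rulename{L-App}, and \rulename{L-Field-U} the argument is routine: invert the miniC typing, apply the Value lemma or the I.H.\ to the sub-derivations, and rebuild the polC typing with the matching rule (\rulename{P-T-E-Val}, \rulename{P-T-V-Struct}, \rulename{P-T-E-New}, \rulename{P-T-E-Let}, \rulename{P-T-E-App}, \rulename{P-T-E-Field}). The only subtlety is matching the miniC type $\tau$ with the target polC type: the type-translation rules send $b\ \un$ and $b\ \bot$ to the bare type $b$ and a meaningfully-labelled $b\ \pol$ to a freshly named record; by Lemma~\ref{lem:trans-unique} (applicable because, by Lemma~\ref{lem:nobot}, annotation-derived types contain no $\bot$) distinct meaningfully-labelled polC types translate to distinct miniC record names, which lets me conclude that the type recovered from the sub-typings is exactly the $s$ whose translation is $\tau$.

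\noindent\emph{The relabel cases.} The remaining cases introduce $\erelab$ operations — \rulename{L-Field}, \rulename{L-Deref}, \rulename{L-Assign}, and \rulename{L-If} insert them into $\lexp$ (so they survive into $\tmof(\lexp)$ and get re-typed in polC with \rulename{P-T-E-Relabel}), while \rulename{L-App-DE}'s translation \rulename{T-App-DE} inserts them only into $e'$. In all of them the bridge is the pair of relabel lemmas: Lemma~\ref{lem:relab-has-tp} lets me peel an inserted relabel off the miniC side to recover a miniC typing of $\trans{\lv}_D$, which I hand to the Value lemma to get $D_l;F_l;\cdot;\Gamma_l\vdash\tmof(\lv): b\ \pol$ with $\pol=\tpof(\lv)$ as recorded by the mapping premise; and Lemma~\ref{lem:relab-sound} tells me the miniC type carried by the translation of each inserted relabel is exactly $\m{fst}(\trans{b\ \pol'}_D)$ for the appropriate $\pol'$ (e.g.\ $\bot$ in \rulename{L-Field}, and $\lab_2{::}\pol'$ for the final relabel in the staged-release case). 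With these in hand, the polC typing is rebuilt by \rulename{P-T-E-Let} plus the matching elimination rule (\rulename{P-T-E-Field}, \rulename{P-T-E-Deref}, \rulename{P-T-E-Assign}, \rulename{P-T-E-If}), and in \rulename{L-App-DE} by \rulename{P-T-E-DE} from the recovered typings $\tmof(\lv_f): (\dne)[\bot](b\ \lab_1{::}\top\rightarrow b\ \lab_2{::}\bot)^{\bot}$ and $\tmof(\lv_a): b\ \lab_1{::}\lab_2{::}\pol'$; the corresponding $\trans{s}_D=(\tau,\_)$ then falls out of the type carried by the final relabel translation.

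\noindent\emph{Main obstacle.} I expect the hardest part to be the bookkeeping of the generated type-definition contexts: tracking how $D_1,D_2,D_3$ and the unions they accumulate through the translation rules are distributed over the sub-derivations, showing that each component of $e'$ still type-checks once the type-definition context is enlarged to the full $D\cup D_1\cup D_2\cup D_3$, and — intertwined with this — keeping the $\{\un,\bot\}$-collapse under control so that $T$, $T\ \un$, and $T\ \bot$ are not conflated (they all translate to $T$, but $\bot$ and $\un$ arise only from the canonicalizing relabels, so Lemmas~\ref{lem:relab-has-tp} and~\ref{lem:relab-sound} already isolate exactly those occurrences). The relabel-insertion cases concentrate both difficulties, so that is where the proof really has to do work; the structural cases should be essentially mechanical once the context-weakening and pre-image-uniqueness machinery is set up.
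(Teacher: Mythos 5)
Your proposal matches the paper's proof: both proceed by induction on the mapping derivation $\ee$, invert the syntax-directed \minic typing of the translated term, use the Value Translation Soundness lemma for value sub-terms and the I.H.\ for expression sub-terms, handle the inserted relabelings via Lemmas~\ref{lem:relab-has-tp} and~\ref{lem:relab-sound}, and pin down types with Lemma~\ref{lem:trans-unique}, reassembling with the corresponding \langname rules under $\pc=\bot$. The context-union bookkeeping you flag is exactly what the paper handles implicitly by carrying $D\cup D_1\cup D_2\cup D_3$ throughout, so your plan is sound and essentially identical to the paper's.
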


\begin{proof}
By induction over the structure of $\ee$.
\begin{description}
\item[Case:] $\ee$ ends in \rulename{L-Int} rule. 
\begin{tabbing}
By assumption: 
\\~~\=(1)~~~~\= $D_a;F_a;\Gamma_a; s\vdash \inscast{n} = n @ s$
\\ By examining the translation rules, there are two subcases
\\\> {\bf subcase i.} \rulename{T-Int} applies
\\\>~~\=(i2)~~~~\=$s=\tpint\ \un$ and $\trans{n @\tpint\ \un}_D = (n, \cdot)$,
\\\> By typing rules 
\\\>\>(i3)\> $D\cup D_1\cup D_2;F;\cdot;\Gamma\vdash n: \tpint$
\\\> By typing rule \rulename{P-T-V-Int} and \rulename{P-T-E-Val}
\\\>\>(i4)\> $D_l;F_l;\cdot;\Gamma_l;\bot \vdash n: \tpint\ \un$ 
\\\> By type translation
\\\>\>(i5)\> $\trans{\tpint\ \un}_D = (\tpint,\_)$
\\\> {\bf subcase ii.} \rulename{T-Int-Pol} applies
\\\>\> (ii2)\> $s=\tpint\ \pol$ and $\trans{n @\tpint\ \pol}_D =
((T)\{n\},D_e)$, $\trans{\tpint\ \pol}_D = (T, D_3)$
\\\> By (ii2) and type translation rules 
\\\>\>(ii3)\> $D_3 = T\mapsto \tstruct\ T\{\tpint\}$
\\\> By typing rules 
\\\>\>(ii4)\> $D\cup D_1\cup D_2\cup D_3;F;\cdot;\Gamma\vdash (T)\{n\}: T$
\\\> By typing rule \rulename{P-T-V-Int} and \rulename{P-T-E-Val}
\\\>\>(ii5)\> $D_l;F_l;\cdot;\Gamma_l;\bot \vdash n: \tpint\ \pol$ 
\\\> By type translation
\\\>\>(ii6)\> $\trans{\tpint\ \pol}_D = (T,\_)$
\end{tabbing}

\item[Case:] $\ee$ ends in \rulename{L-Var} rule. 
\begin{tabbing}
By assumption: 
\\~~\=(1)~~~~\= $D_a;F_a;\Gamma_a; s\vdash \inscast{x} \Rightarrow\ x
@s $ and $\inscast{\Gamma_a(x)}=s$
\\ By examining the translation rules, only \rulename{T-Var} applies
\\\> (2) \> $\trans{x @s }_D = (x, \cdot)$,
\\ By typing rules 
\\\>(3)\> $D\cup D_1\cup D_2;F;\cdot;\Gamma\vdash x: \Gamma(x)$
\\ By typing rule \rulename{P-T-V-Var} and \rulename{P-T-E-Val}
\\\>(4)\> $D_l;F_l;\cdot;\Gamma_l;\bot \vdash x: \Gamma_l(x)$ 
\\ By assumption that $\inscast{\Gamma_a} = \Gamma_l$ and
   $\trans{\Gamma_l}_D = (\Gamma, D_1)$
\\\>(5)\> $\trans{s}_D=\trans{\inscast{\Gamma_a(x)}}_D
=\trans{\Gamma_l(x)}_D = (\Gamma(x),\_)$
\end{tabbing}

\item[Case:] $\ee$ ends in \rulename{L-Struct} rule. 
\begin{tabbing}
By assumption: 
\\~~\=(1)~~~~\= $D_a;F_a;\Gamma_a; T\ \pol \vdash \inscast{(T)\ \{v_1,\cdots,v_n\}}
 \Rightarrow\ (T)\{\lv_1,\cdots, \lv_n \}@T\ \pol$
\\\>(2) \> $D_a(T) =\tstruct\ T\{\beta_1,\cdots,\beta_n\}$
\\\>(3)\> and $\forall i \in[1,n], \ee_i::D_a;F_a;\Gamma_a;\inscast{\beta_i} \vdash \inscast{v_i} \Rightarrow \lv_i$
\\ By examining the translation rules, only \rulename{T-Struct} applies
\\\> (4) \> $ \trans{(T) \{\lv_1,\cdots,\lv_n\} @(T\ \pol)}_D  =
((T')\{v'_1,\cdots,v'_n\}, D_3)$
\\\>(5) \> and $\trans{\lv_i}_D =  (v'_i, D_{3i})$ and $\trans{T\
  \pol}_D = (T', D')$
\\By type translation rules
\\\>(6)\> $T'=\genname(T,\pol)$ and $D' = T'\mapsto \tstruct\ T' \{
\tau_1,\cdots, \tau_n\}$ and $D(T) = \tstruct\ T \{\tau_1,\cdots, \tau_n\}$
\\ By typing rules 
\\\>(7)\> $D\cup D_1\cup D_2\cup D_3;F;\cdot;\Gamma\vdash
(T')\{v'_1,\cdots,v'_n\}: T'$
\\\>(8)\> and $(D\cup D_1\cup D_2\cup D_3)(T') =\tstruct\
T'\{\tau_1,\cdots,\tau_n\} $,
\\\>(9)\> and $D\cup D_1\cup D_2\cup D_3;F;\cdot;\Gamma\vdash v'_i: \tau_i$
\\By I.H. on $\ee_i$
\\\>(10)\> $D_l;F_l;\cdot;\Gamma_l;\bot \vdash \tmof(\lv_i): \inscast{\beta_i}$ and
$\trans{\inscast{\beta_i}}_D = (\tau_i,\_)$
\\By \rulename{P-T-V-Struct} and \rulename{P-T-E-Val} and (6) and (10)
\\\>(11)\> $D_l;F_l;\cdot;\Gamma_l;\bot \vdash \tmof((T)
\{\lv_1,\cdots,\lv_n\} @(T\ \pol)): T\ \pol$
\\By (5)
\\\>(12)\> $\trans{T\ \pol}_D = (T',\_)$
\end{tabbing} 

\item[Case:] $\ee$ ends in \rulename{L-Field-U} rule. 
\begin{tabbing}
By assumption: 
\\~~\=(1)~~~~\= $D_a;F_a;\Gamma_a; s \vdash \inscast{v.i}
 \Rightarrow\ \lv.i$
\\\>(2) \> $\ee'::D_a;F_a;\Gamma_a \vdash \inscast{v}\Rightarrow \lv$
and $\tpof(\lv) = T\ \pol$
\\\>(3)\> and $D_a(T) =  (\tstruct\ T \{\beta_1,\cdots,
\beta_n\})$
\\\>(4)\> and $\forall i\in[1,n], \pol = \labof(\inscast{\beta_i})$
\\By examining the translation rules, only \rulename{T-Field} applies 
\\\>(5)\> $\trans{\lv.i}_D =  (v'.i, D_3)$  and $\trans{\lv}_D =  (v', D_3)$
\\By assumption  and typing rules
\\\>(6)\> $D\cup D_1\cup D_2\cup D_3;F;\cdot;\Gamma\vdash v'.i : \tau_i$
\\By inversion of (6)
\\\>(7)\> $D\cup D_1\cup D_2\cup D_3;F;\cdot;\Gamma\vdash v' : T'$ 
\\\>(8)\> and   $(D\cup D_1\cup D_2\cup D_3)(T') = \tstruct\ T'\{\tau_1,\cdots,\tau_n\}$
\\By Lemma~\ref{lem:val-translation-soundness} on $\ee'$, (5) and (8)
\\\>(9)\> $D_l;F_l;\cdot;\Gamma_l \vdash \tmof(\lv): T\ \pol$ and $\trans{T\ \pol}_D = (T', \_)$
\\By (9), \rulename{P-T-E-Val}
\\\>(10)\> $D_l;F_l;\cdot;\Gamma_l;\bot \vdash \tmof(\lv): T\ \pol$ 
\\By (3) and $\inscast{D_a}=D_l$
\\\>(11)\> $D_l(T) = \tstruct\ T\ \{\inscast{\beta_1},\cdots,\inscast{\beta_n}\}$
\\By (10), (11), and \rulename{P-T-E-Field}
\\\>(12)\>$D_l;F_l;\cdot;\Gamma_l;\bot \vdash \tmof(\lv.i):
\inscast{\beta_i}\join\pol$ 
\\By (4) and (12)
\\\> (13) \>$D_l;F_l;\cdot;\Gamma_l;\bot \vdash \tmof(\lv.i):
\inscast{\beta_i}$ 
\\By (9) and $\trans{D_l} = D$ 
\\\>(14)\> $D(T) = \tstruct\ T\{\tau_1,\cdots,\tau_n\}$ and $D(T') =
\tstruct\ T'\{\tau_1,\cdots,\tau_n\}$
\\By (11) and (14)
\\\>(15)\> $\trans{\inscast{\beta_i}}_D = (\tau_i,\_)$
\end{tabbing}

\item[Case:] $\ee$ ends in \rulename{L-Field} rule. 
\begin{tabbing}
By assumption: 
\\~~\=(1)~~~~\= $D_a;F_a;\Gamma_a; s \vdash \inscast{v.i}
 \Rightarrow\ \lexp$ and $\lexp=\elet\ y:T\ \bot = \erelab(\bot\Leftarrow \pol)\ \lv\
\ein\ (y@ T\ \bot).i $
\\\>(2) \> $\ee'::D_a;F_a;\Gamma_a \vdash \inscast{v}\Rightarrow \lv$
and $\tpof(\lv) = T\ \pol$
\\\>(3)\> and $D_a(T) =  (\tstruct\ T \{\beta_1,\cdots,
\beta_n\})$
\\By examining the translation rules, only \rulename{T-Let} applies 
\\\>(4)\> $\trans{\lexp}_D =  (e', D'_3\cup D''_3)$  and 
     $e' = \elet\ y:T = e_1\ \ein\ y.i$, 
\\\>(5)\> and    $\trans{\erelab(\bot\Leftarrow \pol)\ \lv}_D = (e_1,
D_3)$
\\By assumption and typing rules 
\\\>(6)\>  $D\cup D_1\cup D_2\cup D_3;F;\cdot;\Gamma\vdash e':\tau_i$
\\By inversion of (6)
\\\>(7)\>  $D\cup D_1\cup D_2\cup D_3;F;\cdot;\Gamma\vdash e_1: T$
\\\>(8)\>  $D\cup D_1\cup D_2\cup D_3;F;\cdot;\Gamma,y:T \vdash y.i:
\tau_i$ and $$
\\\>(9)\> $D_l(T) = \tstruct\ T\ \{\inscast{\beta_1},\cdots,\inscast{\beta_n}\}$
\\By (9) \rulename{P-T-E-Field}
\\\>(10)\>$D_l;F_l;\cdot;\Gamma_l, y:T\ \bot;\bot \vdash y.i: \inscast{\beta_i}$ 
\\By (9) and $\trans{D_l} = D$ 
\\\>(11)\> $D(T) = \tstruct\ T\{\tau_1,\cdots,\tau_n\}$ 
\\By (9) and (11)
\\\>(12)\> $\trans{\inscast{\beta_i}}_D = (\tau_i,\_)$
\\By Lemma~\ref{lem:relab-has-tp} and (7)
\\\>(13)\>  $D\cup D_1\cup D_2\cup D_3;F;\cdot;\Gamma\vdash
\m{fst}(\trans{\lv}_D): \tau'$
\\By Lemma~\ref{lem:val-translation-soundness} on $\ee'$, (5), (7),
(13)
\\\>(14)\> $D_l;F_l;\cdot;\Gamma_l \vdash \tmof(\lv): T\ \pol$
\\By (14), \rulename{P-T-E-Relab}
\\\>(15)\> $D_l;F_l;\cdot;\Gamma_l;\bot \vdash
  \tmof(\erelab(\bot\Leftarrow \pol)\ \lv): T\ \bot$ 
\\\> By \rulename{P-T-E-Let}, (15), (10),
\\\>(16)\> $D_l;F_l;\cdot;\Gamma_l;\bot \vdash \lexp: \inscast{\beta_i}$ 
\end{tabbing}
\item[Case:] $\ee$ ends in \rulename{L-New} rule. 
\begin{tabbing}
By assumption: 
\\~~\=(1)~~~~\= $D_a;F_a;\Gamma_a   ; (\tptr(s)\
  \pol)\vdash\inscast{\enew(e)} \Rightarrow \enew(\lexp)@(\tptr(s)\ \pol)$
\\\>(2) \> $\ee'::  D_a;F_a;\Gamma_a ; s\vdash   \inscast{e} \Rightarrow \lexp $
\\By examining the translation rules, there are two subcases
\\\>{\bf Subcase i:}  \rulename{T-New} applies 
\\\>~~\= (i3)~~~~\=$\pol\in\{\un,\bot\}$, 
 $\trans{\enew(\lexp)@(\tptr(s)\ \pol)}_D =  (\enew(e'), D_3)$  and $\trans{\lexp}_D =  (e', D_3)$
\\\>By assumption  and typing rules
\\\>\>(i4)\> $D\cup D_1\cup D_2\cup D_3;F;\cdot;\Gamma\vdash \enew(e') : \tptr(\tau)$
\\\>By inversion of (i4)
\\\>\>(i5)\> $D\cup D_1\cup D_2\cup D_3;F;\cdot;\Gamma\vdash e' : \tau$ 
\\\>By I.H. on $\ee'$, (i5) and (i3)
\\\>\>(i6)\> $D_l;F_l;\cdot;\Gamma_l;\bot \vdash \tmof(\lexp): s$ 
   and $\trans{s}_D = (\tau, \_)$
\\\>By (i6), \rulename{P-T-E-New}
\\\>\>(i7)\> $D_l;F_l;\cdot;\Gamma_l;\bot \vdash \tmof(\enew(\lexp)):
\tptr(s)\ \pol$ 
\\\>\>(i8)\> $\trans{\tptr(s)\ \pol}_D = (\tptr(\tau),\_)$
\\\>{\bf Subcase ii:}  \rulename{T-New-Pol} applies 
\\\>~~\= (ii3)~~~~\=$\pol\notin\{\un,\bot\}$, 
 $\trans{\enew(\lexp)@(\tptr(s)\ \pol)}_D =  (T\{\enew(e')\}, D_3)$ 
\\\>\>(ii4)\>  and $\trans{\lexp}_D =  (e', D_3)$ and $\trans{\tptr(s)\ \pol} = T$
\\\>By assumption  and typing rules
\\\>\>(ii5)\> $D\cup D_1\cup D_2\cup D_3;F;\cdot;\Gamma\vdash (T)\{\enew(e')\} : T$
\\\>By inversion of (i5)
\\\>\>(ii6)\> $D\cup D_1\cup D_2\cup D_3;F;\cdot;\Gamma\vdash e' :
\tau$ 
\\\>\>(ii7)\> and $(D\cup D_1\cup D_2\cup D_3)(T) = \tstruct\
T\{\tptr(\tau)\}$ and $\trans{s}_D = (\tau,\_)$
\\\>By I.H. on $\ee'$, (ii6) and (ii4)
\\\>\>(ii8)\> $D_l;F_l;\cdot;\Gamma_l;\bot \vdash \tmof(\lexp): s$ 
   and $\trans{s}_D = (\tau, \_)$
\\\>By (ii8), \rulename{P-T-E-New}
\\\>\>(ii9)\> $D_l;F_l;\cdot;\Gamma_l;\bot \vdash \tmof(\enew(\lexp)):
\tptr(s)\ \pol$ 
\end{tabbing}

\item[Case:] $\ee$ ends in \rulename{L-Deref} rule. 
\begin{tabbing}
By assumption: 
\\~~\=(1)~~~~\= $D_a;F_a;\Gamma_a; s \vdash \inscast{*v} 
 \Rightarrow\ \lexp$ and $\lexp=\elet\ y:b\ \bot = \erelab(\bot\Leftarrow \pol)\ \lv\ 
\ein\ *(y@b\ \bot) $
\\\>(2) \> $\ee'::D_a;F_a;\Gamma_a \vdash \inscast{v}\Rightarrow \lv$
and $\tpof(\lv) = b\ \pol$
\\By examining the translation rules, only \rulename{T-Let} applies 
\\\>(3)\> $\trans{\lexp}_D =  (e', D'_3\cup D''_3)$  and 
     $e' = \elet\ y:b = e_1\ \ein\ *y$, 
\\\>(4)\> and    $\trans{\erelab(\bot\Leftarrow \pol)\ \lv}_D = (e_1,
D_3)$
\\By assumption and typing rules 
\\\>(5)\>  $D\cup D_1\cup D_2\cup D_3;F;\cdot;\Gamma\vdash e':\tau$
\\By inversion of (5)
\\\>(6)\>  $D\cup D_1\cup D_2\cup D_3;F;\cdot;\Gamma\vdash e_1: \tau_y$
\\\>(7)\> $\tau_y = \tptr(\tau)$, and $\trans{b}_D=(\tau_y,\_)$
\\\>(8)\>  $D\cup D_1\cup D_2\cup D_3;F;\cdot;\Gamma,y:\tau_y \vdash *y:
\tau$ 
\\By (7)
\\\>(9)\> $b=\tptr(s)$ and $\trans{s}_D=(\tau,\_)$
\\By \rulename{P-T-E-Deref}
\\\>(10)\>$D_l;F_l;\cdot;\Gamma_l, y:b\ \bot;\bot \vdash *y: s$ 
\\By Lemma~\ref{lem:relab-has-tp}, (4) and (6)
\\\>(13)\>  $D\cup D_1\cup D_2\cup D_3;F;\cdot;\Gamma\vdash
\m{fst}(\trans{\lv}_D): \tau'$
\\By Lemma~\ref{lem:val-translation-soundness} on $\ee'$, (5), (7),
and (13)
\\\>(14)\> $D_l;F_l;\cdot;\Gamma_l \vdash \tmof(\lv): b\ \pol$
\\By (14), \rulename{P-T-E-Relab}
\\\>(15)\> $D_l;F_l;\cdot;\Gamma_l;\bot \vdash
  \tmof(\erelab(\bot\Leftarrow \pol)\ \lv): b\ \bot$ 
\\ By \rulename{P-T-E-Let}, (15),(9) (10),
\\\>(16)\> $D_l;F_l;\cdot;\Gamma_l;\bot \vdash \lexp: s$ 

\end{tabbing}
%
%
\item[Case:] $\ee$ ends in \rulename{L-Assign} rule. 
\begin{tabbing}
By assumption: 
\\~~\=(1)~~~~\= $D_a;F_a;\Gamma_a; s \vdash \inscast{v:=e} 
 \Rightarrow\ \lexp$ 
\\\>\> and $\lexp=\elet\ y: \tptr(s)\ \bot = \erelab(\bot\Leftarrow \rho)\ \lv\ 
\ein\  y@\tptr(s)\ \bot :=\lexp_2 $
\\\>(2) \> $\ee'::D_a;F_a;\Gamma_a \vdash \inscast{v}\Rightarrow \lv$
and $\tpof(\lv) = \tptr(s)\ \pol$
\\\>(3)\> $\ee'':: D_a;F_a;\Gamma_a ; s\vdash e \Rightarrow \lexp_2$
\\By examining the translation rules, only \rulename{T-Let} applies 
\\\>(4)\> $\trans{\lexp}_D =  (e', D'_3\cup D''_3)$  and 
     $e' = \elet\ y:\tau_y = e_1\ \ein\ y:= e_2$, 
\\\>(5)\> and    $\trans{\erelab(\bot\Leftarrow \pol)\ \lv}_D = (e_1,
D'_3)$ 
\\\>(6)\> and    $\trans{\lexp_2}_D = (e_2, D''_3)$, 
\\\>(7)\>and $\trans{\tptr(s)\ \bot}_D=(\tau_y,\_)$, 
\\By assumption and typing rules 
\\\>(8)\>  $D\cup D_1\cup D_2\cup D_3;F;\cdot;\Gamma\vdash e':\tau'$
\\By inversion of (7)
\\\>(9)\>  $D\cup D_1\cup D_2\cup D_3;F;\cdot;\Gamma\vdash e_1: \tau_y$
\\\>(10)\> $\tau_y = \tptr(\tau)$, and $\tau'=\tunit$,
\\\>(11)\>  $D\cup D_1\cup D_2\cup D_3;F;\cdot;\Gamma,y:\tau_y \vdash e_2: \tau$ 
\\By (7) and (10)
\\\>(12)\> $\trans{s}_D=(\tau,\_)$
\\\> By I.H. on $\ee''$, (3), (6), (11)
\\\>(13)\> $D_l;F_l;\cdot;\Gamma_l;\bot \vdash \tmof(\lexp_2): s$
\\\> By \rulename{T-Assign}, (13), 
\\\>(14)\>$D_l;F_l;\cdot;\Gamma_l, y:\tau_y\ \bot;\bot \vdash
y@\tptr(s)\ \bot :=\lexp_2 :\tunit$ 
\\By Lemma~\ref{lem:relab-has-tp}, (5) and (9)
\\\>(15)\>  $D\cup D_1\cup D_2\cup D_3;F;\cdot;\Gamma\vdash
\m{fst}(\trans{\lv}_D): \tau'$
\\By Lemma~\ref{lem:val-translation-soundness} on $\ee'$, (2), (15),
\\\>(16)\> $D_l;F_l;\cdot;\Gamma_l \vdash \tmof(\lv): \tptr(s)\ \pol$
\\By (16), \rulename{P-T-E-Relab}
\\\>(17)\> $D_l;F_l;\cdot;\Gamma_l;\bot \vdash
  \tmof(\erelab(\bot\Leftarrow \pol)\ \lv): \tptr(s)\ \bot$ 
\\ By \rulename{P-T-E-Let}, (17) (14),
\\\>(18)\> $D_l;F_l;\cdot;\Gamma_l;\bot \vdash \lexp: \tunit$ 

\end{tabbing}
%
%
\item[Case:] $\ee$ ends in \rulename{L-If} rule. 
\begin{tabbing}
By assumption: 
\\~~\=(1)~~~~\= $D_a;F_a;\Gamma_a; s \vdash 
\inscast{\eif\ v_1\ \ethen\  e_2\ \eelse\  e_3} 
 \Rightarrow\ \lexp$ 
\\\>\> and $\lexp=\elet\ x: \tpint\ \bot = (\erelab(\bot\Leftarrow\pol)\ \lv_1)\
 \ein\ \eif\ x@\tpint\ \bot\ \ethen\  \lexp_2\ \eelse\ \lexp_3$
\\\>(2) \> $\ee'::D_a;F_a;\Gamma_a \vdash \inscast{v}\Rightarrow \lv$
and $\tpof(\lv) = \tpint\ \pol$
\\\>(3)\> $\ee'':: D_a;F_a;\Gamma_a ; s\vdash \inscast{e_2}
\Rightarrow \lexp_2$
\\\>(4)\> $\ee''':: D_a;F_a;\Gamma_a ; s\vdash \inscast{e_3} \Rightarrow \lexp_3$
\\By examining the translation rules, only \rulename{T-Let} applies 
\\\>(5)\> $\trans{\lexp}_D =  (e', D'_3\cup D''_3)$  and 
     $e' = \elet\ x:\tpint= e_1\ \ein\ \eif\ x\ \ethen\  e'_2\ \eelse\ e'_3$, 
\\\>(6)\> and    $\trans{\erelab(\bot\Leftarrow \pol)\ \lv}_D = (e_1,
D'_3)$ 
\\\>(7)\> and  $\trans{\lexp_2}_D = (e'_2, D''_3)$,   and  $\trans{\lexp_3}_D = (e'_3, D'''_3)$, 
\\By assumption and typing rules 
\\\>(8)\>  $D\cup D_1\cup D_2\cup D_3;F;\cdot;\Gamma\vdash e':\tau$
\\By inversion of (8)
\\\>(9)\>  $D\cup D_1\cup D_2\cup D_3;F;\cdot;\Gamma\vdash e_1: \tpint$
\\\>(10)\>  $D\cup D_1\cup D_2\cup D_3;F;\cdot;\Gamma,x:\tpint\vdash
e'_2: \tau$ 
\\\>(11)\>  $D\cup D_1\cup D_2\cup D_3;F;\cdot;\Gamma,x:\tpint \vdash e'_3: \tau$ 
\\ By I.H. on $\ee''$, (3), (7), (10)
\\\>(12)\> $D_l;F_l;\cdot;\Gamma_l;\bot \vdash \tmof(\lexp_2): s$
 and $\trans{s}_D = (\tau, \_)$
\\ By I.H. on $\ee'''$, (4), (7), (11)
\\\>(13)\> $D_l;F_l;\cdot;\Gamma_l;\bot \vdash \tmof(\lexp_3): s$
\\ By \rulename{T-If}, (12), (13)
\\\>(14)\>$D_l;F_l;\cdot;\Gamma_l, x:\tpint\ \bot;\bot
  \vdash \tmof(\eif\ x@\tpint\ \bot\ \ethen\  \lexp_2\ \eelse\ \lexp_3) :s$
\\By (5) there are two subcases
\\\> {\bf Subcase $\pol\in\{\bot,\un\}$}
\\\> \rulename{T-ReLab-Same} applies
\\\>~~\=(i1)~~~~\= 
    $e_1=v_1$  and $\trans{\lv}_D=(v_1, D_3)$
\\\>By Lemma~\ref{lem:val-translation-soundness} on $\ee'$, (6),
and (i11)
\\\>\>(i2)\> $D_l;F_l;\cdot;\Gamma_l \vdash \tmof(\lv): \tpint\ \pol$
\\\>By (i2), \rulename{P-T-E-Val}
\\\>\>(i3)\> $D_l;F_l;\cdot;\Gamma_l;\bot \vdash \tmof(\lv): \tpint\ \pol$ 
\\\>\>(i4)\> $D_l;F_l;\cdot;\Gamma_l;\bot \vdash
  \tmof(\erelab(\bot\Leftarrow \pol)\ \lv): \tpint\ \bot$ 
\\\> By \rulename{P-T-E-Let}, (i4), (15),
\\\>\>(i5)\> $D_l;F_l;\cdot;\Gamma_l;\bot \vdash \lexp: s$ 
\\\> {\bf Subcase $\pol\notin\{\bot,\un\}$}
\\\>\rulename{T-ReLab-N3} applies 
\\\>\>(ii1) $e_1=v'.1$   and $\trans{\lv}_D=(v', D_3)$
\\\> By inversion of (9)
\\\>\>(ii2)\> 
  $D\cup D_1\cup D_2\cup D_3;F;\cdot;\Gamma\vdash v': T'$, 
\\\>\>(ii3)\> $(D\cup D_1\cup D_2\cup D_3)(T') = \tstruct\ T'\{\tpint\}$
\\\>By Lemma~\ref{lem:val-translation-soundness} on $\ee'$, (2), (ii3),
and (ii1)
\\\>\>(ii4)\> $D_l;F_l;\cdot;\Gamma_l \vdash \tmof(\lv): \tpint\ \pol$
 and $\trans{\tptr(s)\ \pol}_D = (T', D'_3)$
\\\>By (ii4), \rulename{P-T-E-Val}
\\\>\>(ii5)\> $D_l;F_l;\cdot;\Gamma_l;\bot \vdash \tmof(\lv): \tpint\ \pol$ 
\\\>\>(ii6)\> $D_l;F_l;\cdot;\Gamma_l;\bot \vdash
  \tmof(\erelab(\bot\Leftarrow \pol)\ \lv): \tpint\ \bot$ 
\\\> By \rulename{P-T-E-Let}, (ii6) (15),
\\\>\>(ii7)\> $D_l;F_l;\cdot;\Gamma_l;\bot \vdash \lexp: s$ 
\end{tabbing}
%
%
\item[Case:] $\ee$ ends in \rulename{L-Let} rule. 
\begin{tabbing}
By assumption: 
\\~~\=(1)~~~~\= $D_a;F_a;\Gamma_a; s \vdash 
\inscast{\elet\ x:\beta_1 = e_1\ \ein\
    e_2} \Rightarrow\ \lexp$ 
\\\>\> and $\lexp=\elet\ x:\inscast{\beta_1} = \lexp_1\ \ein\ \lexp_2$
\\\>(2) \> $\ee':: D_a;F_a;\Gamma_a ; \inscast{\beta_1}\vdash
\inscast{e_1} \Rightarrow \lexp_1$ 
\\\>(3)\> $\ee'':: D_a;F_a;\Gamma_a , x:\beta_1 ; t_2 \vdash  \inscast{e_2} \Rightarrow \lexp_2$
\\By examining the translation rules, only \rulename{T-Let} applies 
\\\>(4)\> $\trans{\lexp}_D =  (e', D'_3\cup D''_3)$  and 
     $e' = \elet\ x:\tau= e'_1\ \ein\ e'_2\ $
\\\>(5)\> $\trans{\inscast{\beta_1}}_D= (\tau,\_)$
\\\>(6)\> and  $\trans{\lexp_1}_D = (e'_1, D'_3)$,   and  $\trans{\lexp_2}_D = (e'_2, D''_3)$, 
\\By assumption and typing rules 
\\\>(7)\>  $D\cup D_1\cup D_2\cup D_3;F;\cdot;\Gamma\vdash e':\tau'$
\\By inversion of (7)
\\\>(8)\>  $D\cup D_1\cup D_2\cup D_3;F;\cdot;\Gamma\vdash e'_1: \tau$
\\\>(9)\>  $D\cup D_1\cup D_2\cup D_3;F;\cdot;\Gamma,x:\tau\vdash
e'_2: \tau'$ 
\\ By I.H. on $\ee'$, (2), (6), (8)
\\\>(10)\> $D_l;F_l;\cdot;\Gamma_l;\bot \vdash \tmof(\lexp_1): \inscast{\beta_1}$
 and $\trans{\inscast{\beta_1}}_D = (\tau, \_)$
\\ By I.H. on $\ee'''$, (3), (6), (9)
\\\>(11)\> $D_l;F_l;\cdot;\Gamma_l, x: \inscast{\beta_1};\bot \vdash \tmof(\lexp_2): s$
\\ By \rulename{T-Let}, (10), (11)
\\\>(12)\>$D_l;F_l;\cdot;\Gamma_l;\bot \vdash \lexp:s$
\end{tabbing}
%
%
\item[Case:] $\ee$ ends in \rulename{L-App} rule. 
\begin{tabbing}
By assumption: 
\\~~\=(1)~~~~\= $D_a;F_a;\Gamma_a ; t_2\vdash \inscast{v\ e}
\Rightarrow \lv\ \lexp$
\\\>(2) \> $\ee'::D_a;F_a;\Gamma_a \vdash \inscast{v}\Rightarrow \lv$
and $\tpof(\lv) = [\bot](t_1\rightarrow t_2)^\bot$
\\\>(3)\> $\ee'':: D_a;F_a;\Gamma_a ; t_1\vdash e \Rightarrow \lexp$
\\By examining the translation rules, only \rulename{T-App} applies 
\\\>(4)\> $\trans{\lv\ \lexp}_D =  (v'\; e', D'_3\cup D''_3)$  and 
\\\>(5)\> and    $\trans{\lv}_D = (v', D'_3)$ 
\\\>(6)\> and    $\trans{\lexp}_D = (e', D''_3)$, 
\\By assumption and typing rules 
\\\>(7)\>  $D\cup D_1\cup D_2\cup D_3;F;\cdot;\Gamma\vdash v'\;e':\tau_2$
\\By inversion of (7)
\\\>(8)\>  $D\cup D_1\cup D_2\cup D_3;F;\cdot;\Gamma\vdash v': \tau_1\rightarrow\tau_2$
\\\>(9)\>  $D\cup D_1\cup D_2\cup D_3;F;\cdot;\Gamma \vdash e': \tau_1$ 
\\By I.H. on $\ee''$, (6), (9)
\\\>(10)\> $D_l;F_l;\cdot;\Gamma_l;\bot \vdash \tmof(\lexp): t_1$ and
$\trans{t_1}_D = (\tau_1,\_)$
\\By Lemma~\ref{lem:val-translation-soundness} on $\ee'$, (5), (8)
\\\>(11)\> $D_l;F_l;\cdot;\Gamma_l \vdash \tmof(\lv):
[\bot](t_1\rightarrow t_2)^\pol$
and $\trans{[\bot](t_1\rightarrow t_2)^\pol}_D= \tau_1\rightarrow\tau_2$
\\By (11), \rulename{P-T-E-Val}
\\\>(12)\> $D_l;F_l;\cdot;\Gamma_l;\bot \vdash \tmof(\lv): [\bot](t_1\rightarrow t_2)^\bot$
\\ By \rulename{P-T-E-App}, (10), (12),
\\\>(13)\> $D_l;F_l;\cdot;\Gamma_l;\bot \vdash \lv\;\lexp: t_2$ 
\\By (11)
\\\>(14)\> $\trans{t_2}_D= (\tau_2,\_)$
\end{tabbing}

%
%
\item[Case:] $\ee$ ends in \rulename{L-App-De} rule. 
\begin{tabbing}
By assumption: 
\\~~\=(1)~~~~\= $D_a;F_a;\Gamma_a ; t_2 \vdash \inscast{v_f\ v_a} 
\Rightarrow \lv_f\ \lv_a$ 
\\\>(2) \> $\ee'::D_a;F_a;\Gamma_a \vdash \inscast{v_f}\Rightarrow \lv_f$
and $\tpof(\lv_f) = (\dne)[\bot](t_1\rightarrow t_2)^\bot$
\\\>(3)\> $\ee'':: D_a;F_a;\Gamma_a \vdash v_a \Rightarrow \lv_a$
\\By examining the translation rules, only \rulename{T-App-De} applies 
\\\>(4)\> $\trans{\lv_f\ \lv_a}_D =  (e', D'_3\cup D''_3\cup D'''_3\cup D''''_3\cup D'''''_3)$  and 
     $e' = \elet\ y:\tau_1 = e_a\ \ein\  \elet\ z: \tau_2 = v'_f\ y\ \ein\
     e_2$
\\\>(5)\> $\tpof(\lv_a) = b\ \pol$ and  $\pol = \lab_1::\lab_2::\pol'$
\\\> (6)\> 
 $\trans{\erelab(\lab_1::\top\Leftarrow \pol) \lv_a} = (e_a,D'_3)$
\\\>(7)\> $\trans{\erelab(\lab_2::\pol'\Leftarrow \lab_2::\bot) (z@b\ \lab_2::\bot)}= (e_2,D''_3)$ 
\\ \>(8)\>$\trans{t_1} = (\tau_1, D'''_3)$ and $\trans{t_2} = (\tau_2,
D''''_4)$
\\\>(9)\> and    $\trans{\lv_f}_D = (v'_f, D'''''_3)$
\\By assumption, 
\\\>(10)\>  $D\cup D_1\cup D_2\cup D_3;F;\cdot;\Gamma\vdash e':\tau'$
\\By inversion of (10)
\\\>(11)\>  $D\cup D_1\cup D_2\cup D_3;F;\cdot;\Gamma\vdash e_a: \tau_1$
\\\>(12)\>  $D\cup D_1\cup D_2\cup D_3;F;\cdot;\Gamma,y:\tau_1 \vdash
v'_f\; y: \tau_2$ 
\\\>(13)\>  $D\cup D_1\cup D_2\cup D_3;F;\cdot;\Gamma,y:\tau_1,
z:\tau_2 \vdash e_2: \tau'$ 
\\By inversion of (12)
\\\>(14) \>$D\cup D_1\cup D_2\cup D_3;F;\cdot;\Gamma,y:\tau_1 \vdash
v'_f:\tau_1\rightarrow \tau_2$ 
\\By  Lemma~\ref{lem:val-translation-soundness} on $\ee'$, and
(14)
\\\>(15)\> $D_l;F_l;\cdot;\Gamma_l;\bot \vdash \tmof(\lv_f):
(\dne)[\bot](t_1\rightarrow t_2)^{\bot}$ 
\\By Lemma~\ref{lem:relab-has-tp}, (5), and (11)
\\\>(16) \>$D\cup D_1\cup D_2\cup D_3;F;\cdot;\Gamma\vdash
\m{fst}(\trans{\lv_a}_D): \tau'$ and $\tau'=\m{fst}(\trans{b\ \pol}_D)$
\\By Lemma~\ref{lem:val-translation-soundness} 
\\\>(17)\> $D_l;F_l;\cdot;\Gamma_l \vdash \tmof(\lv_a): b\ \pol$ and 
\\By Lemma~\ref{lem:relab-sound} on (6) (11) (16)
\\\>(18)\> $\tau_1=\m{fst}(\trans{b\ \lab_1::\top}_D)$
\\By Lemma~\ref{lem:trans-unique} on (8) (18)
\\\>(19)\> $t_1 =b\ \lab_1::\top $
\\By Lemma~\ref{lem:relab-has-tp} and (7)
\\\>(20)\>$D\cup D_1\cup D_2\cup D_3;F;\cdot;\Gamma\vdash
\m{fst}(\trans{z@b\ \lab_2::\bot}_D): \tau''$ and
$\tau''=\m{fst}(\trans{z@b\ \lab_2::\bot}_D)$
\\By Lemma~\ref{lem:trans-unique} on and $z$ has type $\tau_2$ (8) (20)
\\\>(21)\> $t_2 =b\ \lab_2::\bot $
\\By \rulename{P-T-E-App-De}
\\\>(22)\> $D_l;F_l;\cdot;\Gamma_l;\bot \vdash \tmof(\lv_f\ \lv_a): b\
\lab_2::\pol'$ 
\end{tabbing}
\end{description}
\end{proof}


\end{document}